\documentclass[12pt]{amsart}
\usepackage{amssymb}
\usepackage{amsfonts}
\usepackage{amssymb,latexsym}
\usepackage{enumerate}
\usepackage{mathrsfs}

\usepackage{graphicx}
\usepackage[all]{xy}

\setlength{\textwidth}{132mm}
\setlength{\textheight}{186mm}
\setlength{\topmargin}{-0.5in}
\setlength{\oddsidemargin}{0mm}
\setlength{\evensidemargin}{2mm}

\newtheorem{thm}{Theorem}[section]

\newtheorem{cor}[thm]{Corollary}
\newtheorem{lem}[thm]{Lemma}
\newtheorem{prop}[thm]{Proposition}

\theoremstyle{definition}
\newtheorem{definition}[thm]{Definition}

\theoremstyle{remark}

\begin{document}

\title[{\normalsize {\large {\normalsize {\Large {\LARGE }}}}} P-class is a proper subclass of  NP-class; and more
]{ P-class is a proper subclass of  NP-class; and more  }

\author{ Kim JongJin, Kim GwangJin, Lee JongPyo, Wang ShuanHong, Nam Ki-Bong, Seo GyungSig, Kim InSu,    Kim YangGon* }

\address{emeritus professor,
	 Department of Mathematics,  
	Jeonbuk National University, 567 Baekje-daero, Deokjin-gu, Jeonju-si,
	Jeollabuk-do, 54896, Republic of Korea.}

\

\email{jjkim@jbnu.ac.kr }

\address{professor emeritus, Department of Mathematics, Woosuk university,Jeonju city},
\email{kimkj@woosuk.ac.kr}

\address {Professor emeritus,Department of Mathematics,Jeonbuk National university,Republic of Korea.}

\email {jplee@jbnu.ac.kr}

\address{Department of mathematics,Southeast university,China.}

\email{shuanhwang@seu.edu.cn}

\address{Department of mathematics and computer science,university of Wisconsin at Whitewater,USA.}

\email{namk@uww.edu}

\address{emeritus professor,Department of Mathematics,Jeonbuk National University,Republic of Korea.}

\email{sgs0303@jeonbuk.ac.kr}
\address{emeritus professor,Department of Mathematics,Jeonbuk National University,Republic of Korea.}

\email{Insu@jeonbuk.ac.kr}

\address{Professor emeritus,Department of Mathematics,Jeonbuk National university,Republic of Korea.}
\email{kyk1.chonbuk@daum.net}
\email{kyk2jeonbuk@naver.com}

\

\subjclass[2020]{Primary17B10,17B50,Secondary68Q15,68Q17.
                     *corresponding author      }

\begin{abstract}
We  may  give rise to  some questions related to the  mathematical structures of  $P$-class and $NP$-class. We have seen that one is a  proper subclass of the other. Here we disclose more  that  $P$- class  turns out to be the proper distributive sublattice of the  $NP$- class.

\end{abstract}

\maketitle

\section{introduction and  motivation }

\

\

\large{

We announce here that  representations  of modular  Lie algebras  are  important  as well as  the  well known $P$ versus  $NP$.\newline

Because choosing remarkable bases  in our certain algebras  seems to be  important  in arguing about  $P$-problems, $NP$-problems,\newline
$NP$- hard problems, $NP$- complete problems,etc., \newline
we  prefer to  be absorbed  in modular representation theory of Lie algebras among other things.\newline

Even though solving the  Kim's conjecture directly  is itself an  important problem, we think however that  such a way could be more important as using Kim's conjecture to solve a problem relating to $NP$- completeness.\newline

We already gave an $NP$- problem coming from a  sort of counting problem  in  the references [NWK-1] and [NWK-2] by making use of  a choice function and a modular representation of  Lie algebras.\newline

We proved in  the  reference [KLWNKK]  that  $P$-class is  a proper subclass of  $NP$-class and conjectured  that  the counting problem under consideration could be an $NP$-complete problem, i.e.,

 such a problem called a certain counting problem could become an $NP$- hard problem, and consequently 

an $NP$- complete problem, which shall be  shown  again  in the section5 of this paper after recapitulating the proof in  the reference [NWK-1,2].\newline

Finally in section 6, we ask two questions about the  mathematical structures of these classes and  answer to these questions in self-contained manner.\newline

 Before we construct such a conjecture and questions, preliminaries and  rigorous definitions under consideration shall be given in advance.\newline

Section2  explains some languages regarding computer science. \newline

Afterwards sections 3  and 4 remark  on the modular Lie algebras and their representations because  the bases of  certain algebras associated with irreducible  modules over classical modular Lie algebras are very  crucial  and important  to  construct  our counting problem mentioned above.\newline

We shall use conventions and notations in [NWK-1,2] unless otherwise stated.\newline

\section{prerequisite language of computer science }

We should be ready in this section for the argument relating to computer science  in the later sections .

So we exhibit some definitions needed for our progression ahead.

Small $o$ and big $\mathcal {O}$ are often used in computer science.

\begin{definition}

In complexity theory we are usually concerned with functions from $\mathbf {Z}^{\geq 0}$ to itself.

 So the letter $x$ shall denote usually   the standard variable of  these functions such as  $e^x, ax^2+bx+ xc, a^x +b^x, sin x, ln x,$ etc.\newline

In fact $\sqrt {x}$ or $ln x$ or $(ln x)^2- 7 \sqrt {x}$ may not  produce nonnegative integers; nevertheless we shall usually think of them as such integers, and so $f(x)$ for a given function $f$ really means the greatest integer $\leq f(x)$. Hence we use the notation $[f(x)]$ to denote such an integer.\newline

Suppose that $f_1$ and $f_2$ are functions  from  $\mathbf {Z}^{\geq 0}$ to itself. If we have an integer $R$ such that $f_1(x)\leq R\cdot f_2 (x)$ $\forall x \geq x_0$, then  in this case we say that $f_1(x)$ is $\textit {big oh}$ of $f_2(x)$; whereas  if the ratio $f_1(x)/ f_2(x) \rightarrow 0$ as $x$ tends to  $\infty$, we  say that  $f_1(x)$ is  $\textit{small oh}$ of $f_2(x)$.\newline

Informally speaking the notation $f_1(x)= \mathcal {O}( f_2(x))$ gives the meaning that $f_1$ grows slower than $f_2$. Hence $f_2(x)= \mathcal O(f_1(x))$ means the opposite situation, in which case the notation $f_1(x)= \Omega(f_2(x))$ is also used.\newline

If both situations happen, then we simply write $f_1(x)= \Theta (f_2(x))$ instead of  $f_1(x)= \Omega (f_2(x))$ and $f_1(x)= \mathcal O(f_2(x))$. Clearly the same rate of growth arises in this situation. \newline

$\mathbf {[example]}$\newline
Suppose that  $f(x)$ is a  polynomial  of degree $n$; then $ f(x)= \Theta (x^n)$ may be used in this case meaning that  the polynomial's first nonzero term captures the rate of growth of $f(x)$.\newline

Next if we suppose that $R$ is an integer $\geq 2$ and $f(x)$ is any polynomial, then it is known that $f(x)\leq  \mathcal {O}(R^x)$, which probably becomes the important and most useful fact about  growths of functions  in complexity theory.\newline

 So we understand that the expression $f(x)= \mathcal {O} ( R^x)$ means informally that  some  exponential  function   with the base number $R \geq 2$   grows precisely faster after all than any exponential function , whose fact can be easily shown by dint of L'hospital formula.\newline

 It follows in the same context as this that $ln (x)= \mathcal {O} (x)$ and moreover $(ln x)^p= \mathcal {O} (x)$ for any integer power $p\geq 1.$\newline

What is more important is what we are dealing with  as the source  of our consideration in connection with the required time. So we may say for example that it suffices to
say that we are contented with  the rate of time growth $\mathcal O(x^2)$ in this paper.\newline

 Hence as a token of the problem having been solved satisfiably, such  polynomial rates of growth shall be regarded as acceptable time requirements. Evidently a cause of concern in contrast with polynomial rates arises because of exponential rates or factorial rates such as $x!,a^x,$etc.\newline

If we encounter with a problem having such persistent rates of growth and if algorithm after algorithm we have invented does not satisfy the problem within a polynomial time, then we are liable to say that  there does not seem to be a practical useful solution which is amenable or tractable to the problem in hand.\newline

So it is undoubtedly uncontroversial that we consider the dichotomy between two kinds of  time bounds such as polynomial time bounds and nonpolynomial time bounds and that  we identify the  polynomial time algorithms with the  intuitive concept of practical plausible computation matter.\newline

$\mathbf{[remark]}$\newline
Some efficient computations in practice  don't  happen as  polynomial time ones and some computations don't  seem to  be useful. \newline

Hence  linear programming is just an important subject that deals with the problems which give us examples  for either kind of  exception and polynomial time algorithms. For more references to linear programming and its simplex method, see 9.5.34  in [Pa]. \newline

We see that the simplex method is just considered as a widely used classical algorithm for this basic problem even though in the worst case the simplex method turns out to be exponential.\newline

 In practice the performance of the simplex method is to be done consistently superbly within a polynomial time, but the ellipsoid algorithm has impractically slow performance in contrast with the simplex method.\newline

 Generally speaking it seems that the story of linear programming stands  in favor of  the methodology of complexity theory in spite of  the fact that a subtle discussion  arises  as a little obstruction to the methodology;\newline

 in other words linear programming looks like an indication  showing that  polynomial time solvable problems with practical algorithms usually appear  although the empirically good algorithm may not be coincident with the polynomial time algorithm.\newline

For example $\mathcal O(x^{100})$ or $\mathcal O(x^{121})$ algorithm would produce limited practical values. But in contrast  $\mathcal O (7^{\frac{x}{89}})$ or $\mathcal O(x^{2ln x})$ algorithm could be efficient and useful empirically.\newline

 Well  at any rate the polynomial  time paradigm has some sustainable strong supports. Note that we usually use exponential functions  rather than polynomial functions from the view point of the fact that  the formers prevail the latters after all except for a finite set  of instances  of the problem and the former expressions may be simple.\newline

 However in practice such exceptional instances are those which are liable to arise within the universe confining us and moreover $\mathcal O(x^{1000})$ algorithms or $ \mathcal O(x^{\frac{x}{89}})$ algorithms seldom show up.\newline

 In fact   it seems that  exponential growth algorithms turn out to be impractical, although  polynomial time growth algorithms usually possess reasonable constants $R$  and small powers as  in the above example.\newline

Against another criticism of our polynomial time paradigm that the paradigm checks only how the polynomial time algorithm in the least favorable instances behaves roughly, we insist that  on the average the polynomial time  algorithm may be satisfiable and the most inconvenient  algorithms may occur because of a  statistically insignificant fraction of the  input sources.\newline

If we analyze the expected performance  of an algorithm as opposed to the worst case, then informative results usually happen. Nevertheless  the input distribution  of a problem is hard to perceive,i.e., we hardly know  each possible instance's probability with which  the instance arises  as an input  to our feasible  algorithm. Hence unfortunately we cannot implement  a truly informative average case analysis.\newline

So it becomes of little consolation or help to recognize that we have stumbled upon a statiscally insignificant exception if we are eager to know just one special instance abysmally. \newline

We may also study algorithms from a  probabilistic point of view, say as in chapter 11[Pa]. So in such a situation we must  clarify the sourse of  randomness of inputs  which exists within the algorithm itself  rather than input's lottery. The readers may also  refer to the reference mentioned above  for an interesting complexity theoretic treatment of average  case performance. \newline

Criticism from everywhere may bother our choice of polynomial time algorithms as the  mathematical concept which is expressed  to capture  the informal  notion  of  practically  efficient  computation.\newline

 We note that mathematicians in any field of mathematics intend to capture  an intuitive  real life concept such as smooth function by a mathematical one  such as $C^{\infty}$ excluding undesirable elements. \newline

Notwithstanding considering the structure  arising from this construction may still allow  the notion to include certain undesirable specimens. After all a useful and elegant theory may come out  from adopting polynomial  worst case  performance as our criterion of efficiency. Such an attempt usually could not happen without such a beautiful theory  saying about practical meaningful computation.\newline

 As a  matter of fact, polynomials usually have much mathematical handiness. For example they are much stable under various useful operations. Moreover  a  constant $\Theta (ln x)$  relates  the  logarithms of  such polynomial functions. Besides such  a constant  could result in  a convenient one in several opportunities such as  space bounds  argument.\newline

For a while  let's think of the  notion of space bounds or  space requirements paying attention to the reachability problem and  its search algorithm explained below.\newline

Assme that $E$ is a set of edges and $V$ is a finite set of vertexes. By a  $\textit{graph}$  we mean  $G= (V, E)$ consisting  of pairs of vertexes which are connected and directed .  A famous so-called reachability problem is the following: If we are given two vertexes $n_1,n_k\in V$ of a graph $G$, then we ask 'may we go  from $n_1$ to $n_k$?'\newline

We contend that  the reachability problem may be solved by the so-called search algorithm  described as follows:\newline

At first we use a certain set of vertexes. We denote it by $V'$ which may vary depending on $n_1$ and $n_k$. In the first place we start with  $V'= \{n_1\}$.
We can mark or unmark  each vertex in the sense that  we may mark the vertex if and only if  $n_i$ is in $V'$ at present or $n_i$ has belonged to  $V'$ at some point in the past. We may mark $n_1$ in the first step. \newline

According to every step iterated by this search algorithm , a vertex $n_i$ is selected and  is got rid of  from $V'$. Such a process should be repeated  for every edge $n_i\rightarrow n_j$ out of $n_i$. If we continue marking, then $V'$ becomes empty at last. \newline

Now we may  answer the  reachability problem saying no if $n_k$ is not marked and saying yes if  $n_k$ is marked. Evidently the markings of the vertexes  and the maintaining of the designated set $V'$ are  basically required in this algorithm. \newline

Here we may say  that  the search algorithm  requires  $\mathcal O(x)$ space in the sense  that  we can have at most $x$ markings and $V'$ cannot  also become bigger than $x$.

\end{definition}

Next we remind ourselves of a  formal and systematic model, the so called Turing machine for representing arbitrary algorithms. In the next paragraph we define it precisely.\newline

But roughly speaking, any algorithm  can be efficiently simulated by the Turing machine although it has  a weak appearance and looks clumsy.
For this reason we look at  the flow chart  relating any algorithm to a relevant Turing machine and vice versa. Hereafter we shall use some programming languages including Turing machine.\newline

The Turing machine has a string of symbols rather primitively as a single data structure. We permit the program  to move a cursor right and left  on the string of symbols, to write on the present position, and to branch depending on the value of  the present symbol if we use the available operations.\newline

We present here the exact  definition of this rather primitive language together with others.\newline

We say in principle  that  a mathematical problem belongs to  the $P$- class if it is solvable in polynomial time by an ordinary (deterministic) Turing machine, while  a  mathematical problem belongs to the $NP$- class  if  it is solvable by a nondeterministic Turing machine whose time complexity is  bounded by  a polynomial function of input size.\newline

In the following paragraph we shall  give  definitions of  the Turing machine, $P$-class, and $NP$- class more or less comprehensibly and precisely.

\begin{definition}
The Turing machine should be in one of  a finite number of internal states $q_0,q_1,\cdots ,q_m$ which are possible contents of  a memory device  and are fixed at first  for this machine, being supplied with a tape divided into squares.\newline

Such a tape must be infinite in both directions. Each square must be blank or any of the previously specified symbols \newline

$s_0,s_1,\cdots ,s_n$ must be printed on each square with $s_0$ being blank. The number of nonblank squares is of course finite in any situation.\newline

Suppose that we got a mapping $\mu$: $Q\times M \rightarrow Q\times M\times \{R,L,0\}$ , where $Q= \{q_0,q_1,\cdots ,q_m \}$ and $M=  \{s_0.s_1,\cdots, s_n\}$. A Turing machine $\tau$ is  specified  by the five tuple $\{Q,M,\mu,q_0,F\}$, where $F$ is a subset of  $Q$ called a set of final states  and $q_0$ indicates  the specified initial state. $M$ is usually called the $\textit {alphabet}$  of $\tau$. \newline

 If  $\mu (q,s)= (q',s',x) $, then we call  the five tuple $(q,s,q',s',x)$ an $\textit {instruction}$  for $\mu$. The following actions  must be satisfied  by $\tau $.

(i)In the beginning, $\tau$ stands  in  the state $q_0$. As an  intput  some sequence of symbols  in $M$ are written on the tape, and  the machine stands on the square next to the leftmost nonblank symbol.\newline

(ii)The machine chooses an instruction $(q,s,q',s',x)$ with respect to the present state $q$  and the scanned symbol $s$. Afterwards the next state is  rendered to $q'$ , $s'$ replaces $s$, and $\tau$ moves one square left or right or stands still according as $x$ becomes $L,R$, or $0$.\newline

(iii)Whenever the state  becomes  an element of $F$, $\tau$ stops moving. So the sequence  of symbols  on the tape is the result  of the computation as the output.\newline

Here in case there always exists  at most  one  instruction associated with  a given pair, we shall call $\tau$ $\textit {deterministic}$; otherwise we shall call it  $\textit {nondeterministic}$. \newline

Of course we may have other definitions of  Turing machine. \newline

We are already aware that an algorithm exists for a problem if and only if a deterministic  Turing  machine exists for  a problem.\newline

It is evident  that the containment $P\subseteq NP$ holds. In 1971 Stephen A.Cook(1939-) asked the problem about the reverse containment. We call such  a problem $P$ versus $NP$.\newline

$\textbf{[example]}$
We let the Turing machine $\tau$  defined  as follows: $\tau= \{Q,M,\mu, q_0, F \}, Q=\{q_0, k, k_0, k_1 \}, M= \{\bar {0}, 1,s_0, s_1 \},F= \{q_0 \}$, and 
$\mu $ is designated  as in the  table1  having 3 columns and 17 rows  in  the next page. \newline

How is this program going to begin and operate ? Of course  the state is initially $q_0$. After a finite long string, we may initialize this string to the first leftmost nonblank symbol $s_1$.\newline

Such a string is said to be the input of the Turing machine in general. In our situation of  the above example, we can take $\{ \overline{0},1,$ $s_0 \}$ as an input of this Turing machine.\newline

It is notable that  the output  could be  the same  as  the input  in this example.

\begin{table}
\centering
\caption{example of  $\tau$}
\label{t1}
\begin{tabular}{|c|c|c|c|}
\noalign{\smallskip}\noalign{\smallskip}\hline
& column1 & column2 & column3 \\
\hline
row1& $ q\in Q$ & $s\in M$ & $\mu (q,s)$ \\
\hline
row2 &$q_0 $ & $\overline {0}$ & $(q_0, \overline {0}, R)$\\
\hline
row3  &$q_0$ & $1$ &$(q_0,1,R)$ \\
\hline
row4 &$q_0$ &$s_0 $ &$(k,s_0, L)$ \\
\hline
row5 &$q_0$ &$s_1$ &$(k,s_1, R)$ \\
\hline
row6 &$k$ &$\overline {0}$&$(k_0,s_0, R)$\\
\hline
row7&$k$ &$1$&$(k_1,s_0,R)$\\
\hline
row8 &$k$ &$s_0$ &$(k,s_0, \overline{0})$\\
\hline
row9&$k$&$s_1$&$(k,s_1,\overline {0})$\\
\hline
row10&$k_0$&$\overline{0}$&$(q_0,\overline{0}, L)$\\
\hline
row11&$k_0$&$1$&$(q_0,\overline{0},L)$\\
\hline
row12&$k_0$&$s_0$&$(q_0,\overline{0},L)$\\
\hline
row13&$k_0$&$s_1$&$(k_0,s_1,R)$\\
\hline
row14&$k_1$&$\overline{0}$&$(q_0,1,L)$\\
\hline
row15&$k_1$&$1$&$(q_0,1,L)$\\
\hline
row16&$k_1$&$s_0$&$(q_0,1,L)$\\
\hline
row17&$k_1$&$s_1$&$(k_1,s_1,R)$\\
\hline
\end{tabular}
\end{table}

\end{definition}

\begin{definition}($P$-problem,$NP$-problem)

By $NP$- problems we mean  a class of problems solutions of which are hard to find but easy to verify and which can be solved by Non-Deterministic Turing machine in polynomial time.\newline

By $P$-problems we mean a class of problems  solutions of which are hard to find but which have an algorithm,i.e., which can be solved by Deterministic Turing machine in polynomial time.
\end{definition}

\begin{definition}($NP$-hard, $NP$- complete)
We say a problem $M$ is $NP$- $\textit {complete}$ in case $M$-  is $NP$- hard and  in case we may find an $NP$- problem $M'$ which  is reducible to $M$ in polynomial time. \newline

We say a problem $M$ is  $NP$- $\textit {hard}$  in case all  problems in $NP$- class are reducible  to $M$  in polynomial time.
\end{definition}

 \section{modular Lie algebras}

Next  in order to attain our  purpose we have to connect  the representation theory of modular Lie algebras with NP problems.\newline

We begin with some definitions first prior to some preliminary results related to modular Lie algebra theory.\newline
Let $F$ be a field of nonzero characteristic $p$. 

\begin{definition}
Let $L$ be a  Lie algebra over $F$. We shall call a mapping $[p]: L\rightarrow L $ via $a\rightarrow a^{[p]}$ a $p-mapping$ in case that  the following three conditions are satisfied\newline

$(i)\forall a\in L, ad (a^{[p]}) = (ad a)^p,$\newline

$(ii)  \forall c\in F,\forall a\in L, (ca)^{[p]}= c^pa^{[p]},$\newline

$(iii) \forall a,b\in L, (a+ b)= a^{[p]}+ b^{[p]}+ \sum_ {i=1}^{p-1} s_i(a,b)  $,\newline

where $(ad(a\otimes \mathbb{X}+ b\otimes1))^{p-1}(a\otimes1)= \sum_{i=1}^{p-1}is_i(a,b)\otimes \mathbb {X}^{i-1}= 
$   in $L\otimes_FF[\mathbb{X}]$.                                                       
We call such a pair $(L, [p])$ a $\textit {restricted}$ Lie algebra over $F$.

\end{definition}

\begin{definition}
Let $(L,[p])$ be a restricted Lie algebra over $F$ and $\chi$  be  a linear form of $L$, i.e., let  $\chi$  be an element of $L^{\ast}$.\newline

A representation of $L $, $\rho_{\chi}: L \rightarrow gl(V)$ is called a $\chi$-representation in case that  $ x\in L,$  

$\rho_{\chi} (x)^p- \rho_{\chi} (x^{[p]})= \chi(x)^p id_V$.

We call $\chi$ the $p$-character of the representation or of the corresponding module.\newline

If $\chi \neq 0$ in particular, then $\rho_{\chi}$ is called  a $\textit {nonrestricted representation}$. Otherwise $\rho_0$ is called $\textit {p-representation}$ or $\textit {restricted representation}$.

\end{definition}

 Let $L$ be a finite dimensional indecomposable  restricted Lie algebra over $F$.

Suppose that $\rho$ is the corresponding representation to  an $L$- module $V$.

\begin{prop}
We let  $(L, [p])$ be a restricted Lie algebra  defined as above and $V$  a finite dimensional $L$-module over an algebraically closed field $F$ of nonzero characteristic $p$. \newline

We then have some submodules $V_1,V_2,\cdots, V_s$  and  some characters $\chi_1,\chi_2,\cdots, \chi_s$  satisfying that $V= \sum_{i=1}^{s} V_i$ and $\{\rho{(x)}^p- \rho(x^{[p]})- \chi_i(x)^p id\}\vert_{V_i}$ becomes nilpotent $\forall x\in L$.\newline

 If $V$ is irreducible in particular, then there exists $\chi\in L^{\ast}$ such that $\rho(x)^p- \rho(x^{[p]})= \chi(x)^p id_V.$

\end{prop}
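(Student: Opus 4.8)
The plan is to analyse the family of operators
$\xi(x) := \rho(x)^{p} - \rho(x^{[p]}) \in \mathrm{End}_{F}(V)$, $x \in L$, and to split $V$ along its simultaneous generalized eigenspaces. First I would show that each $\xi(x)$ is an endomorphism of the $L$-module $V$. Working inside the associative algebra $\mathrm{End}_{F}(V)$ one has the classical identity $(\mathrm{ad}\,a)^{p} = \mathrm{ad}(a^{p})$ valid whenever $\mathrm{char}\,F = p$; applying it to $a = \rho(x)$, using that $\rho$ is a Lie homomorphism and invoking axiom $(i)$ in the form $(\mathrm{ad}\,x)^{p} = \mathrm{ad}(x^{[p]})$, one gets $[\rho(x)^{p},\rho(y)] = \rho\bigl((\mathrm{ad}\,x)^{p}(y)\bigr) = [\rho(x^{[p]}),\rho(y)]$ for all $y \in L$, hence $[\xi(x),\rho(y)] = 0$. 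Thus the $\xi(x)$ lie in $\mathrm{End}_{L}(V)$ and, being central in the image of $\rho$, commute with one another. Next, axiom $(ii)$ gives $\xi(cx) = c^{p}\xi(x)$ at once, while combining Jacobson's associative $p$-th power formula $(\rho(x)+\rho(y))^{p} = \rho(x)^{p}+\rho(y)^{p}+\sum_{i=1}^{p-1}s_{i}(\rho(x),\rho(y))$ with axiom $(iii)$ --- and using $\rho(s_{i}(x,y)) = s_{i}(\rho(x),\rho(y))$ since the $s_{i}$ are iterated brackets --- yields $\xi(x+y) = \xi(x)+\xi(y)$. So $x \mapsto \xi(x)$ is a $p$-semilinear map $L \to \mathrm{End}_{L}(V)$.

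Since $L$ is finite-dimensional, the $\xi(x)$ span a finite-dimensional $F$-subspace of $\mathrm{End}_{L}(V)$; let $A$ be the commutative finite-dimensional $F$-subalgebra of $\mathrm{End}_{F}(V)$ that they generate. As $F$ is algebraically closed, $A$ is a finite product of local Artinian $F$-algebras with residue field $F$, and the corresponding primitive idempotents decompose $V$ as a direct sum $V = V_{1}\oplus\cdots\oplus V_{s}$ on which $A$ acts through these local factors. Because $A \subseteq \mathrm{End}_{L}(V)$, each $V_{i}$ is an $L$-submodule, and on $V_{i}$ every $\xi(x)$ acts as $\lambda_{i}(x)\,\mathrm{id}_{V_{i}}$ plus a nilpotent operator, where $\lambda_{i}(x) \in F$ is the image of $\xi(x)|_{V_{i}}$ in the residue field.

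Composing the $p$-semilinear map $x \mapsto \xi(x)|_{V_{i}}$ with the ring quotient onto the residue field $F$ shows that $x \mapsto \lambda_{i}(x)$ is additive and satisfies $\lambda_{i}(cx) = c^{p}\lambda_{i}(x)$. Since $F$ is perfect, the Frobenius $a \mapsto a^{p}$ is bijective on $F$; setting $\chi_{i}(x) := \lambda_{i}(x)^{1/p}$ and using injectivity of Frobenius together with the identity $(u+v)^{p} = u^{p}+v^{p}$, one checks that $\chi_{i} \in L^{\ast}$, and by construction $\bigl(\rho(x)^{p} - \rho(x^{[p]}) - \chi_{i}(x)^{p}\,\mathrm{id}\bigr)\big|_{V_{i}}$ is nilpotent for every $x \in L$, which is the first assertion. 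For the last assertion, if $V$ is irreducible then, being finite-dimensional over the algebraically closed field $F$, Schur's lemma gives $\mathrm{End}_{L}(V) = F\,\mathrm{id}_{V}$; hence $\xi(x) = \mu(x)\,\mathrm{id}_{V}$ for a scalar $\mu(x)$, the map $\mu$ is $p$-semilinear by the above, so $\mu(x) = \chi(x)^{p}$ with $\chi(x) := \mu(x)^{1/p} \in L^{\ast}$, and therefore $\rho(x)^{p} - \rho(x^{[p]}) = \chi(x)^{p}\,\mathrm{id}_{V}$.

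The conceptual outline is short, so the bulk of the work --- and the main obstacle --- lies in the two universal identities used in the first paragraph: the associative commutator identity $(\mathrm{ad}\,a)^{p} = \mathrm{ad}(a^{p})$ and Jacobson's $p$-th power formula, together with the verification that the Lie polynomials $s_{i}$ occurring in the latter are literally the same ones appearing in axiom $(iii)$, so that the cancellation giving additivity of $\xi$ is genuine and not merely formal. A secondary point deserving care is that the decomposition $V = \bigoplus_{i} V_{i}$ be intrinsic: this is why it is cleanest to obtain it from the idempotents of the commutative algebra $A$ rather than by iterating the generalized eigenspace decomposition of one operator at a time, and it is also what guarantees that the pieces are $L$-submodules.
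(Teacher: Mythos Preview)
Your proposal is correct and follows essentially the same route as the paper's proof: the paper likewise considers the commuting family $\{\rho(x)^p - \rho(x^{[p]})\}$ (citing centrality of $x^p - x^{[p]}$ in $u(L)$ rather than your direct commutator computation), decomposes $V$ into generalized weight spaces for this family, notes the $p$-semilinearity of $x\mapsto\rho(x)^p-\rho(x^{[p]})$ to produce the $\chi_i$, and invokes Schur's lemma for the irreducible case. Your use of primitive idempotents of the commutative Artinian algebra is just a rephrasing of the weight-space decomposition, and your treatment of the semilinearity is more explicit than the paper's, but the argument is the same.
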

\begin{proof}
 As we are well aware, we have  that $ x^p- x^{[p]}\in \frak Z(u(L)), \newline
\forall x\in L$, which is just  the center of  the universal enveloping algebra of $L$. \newline

With respect to the finite dimensional  abelian Lie algebra $C:= <\{\rho(x)^p- \rho (x^{[p]})\vert x\in L\}>_F$ which is the $F$- algebra generated by 
$\rho(x)^p- \rho(x^{[p]})  \forall x\in L,$we can decompose $V$ into weight spaces. \newline

We thus have $V= \oplus_i  V_{\phi_i}$ with $\phi_i\in Map(C, F[\mathbb{X}])$, where  $V_{\phi_i}= \{v\in V \vert \forall c\in C, \exists n(c,v)\in \mathbb {N}  s.t. (\phi_i (c)\rho(c))^ {n(c,v)}(v)= 0 \}. $\newline

Because $F$ is algebraically closed , it follows  that  $\phi_i\in Map(C,F)$ actually.
\newline
Obviously $V_{\phi_i}$ are submodules since $[C, \rho(L)]= 0$ is trivial. Putting $\chi_i(x):=\phi_i (\rho(x)^p- \rho(x^{[p]})) ^{\frac{1}{p}} $, we perceive 
that  $\{\rho{(x)}^p- \rho(x^{[p]})- \chi_i(x)^p id)\} \vert V_{\phi_i}$ is  nilpotent for each $x\in L$.
Because of abelian $C$, we get $\phi_i \in C^{\ast}$ and because  the map $f: x\longmapsto \rho(x)^p- \rho(x^{[p]})$  is  semilinear,i.e.,\newline

 $f(\alpha x+ y)= \alpha^p f(x)+ f(y)  \forall x,y \in L, \forall \alpha \in F $,  we see that $\chi_i$ becomes linear. Due do Schur's lemma, the latter claim is clear.

\end{proof}

\begin{definition}
Let $F$ be an algebraically closed field of  nonzero characteristic $p$. By the  $\textit {clasical type Lie algebras}$  over $F$  in the narrow sense, we mean  analogues over $F$ of the  $A_l,B_l,C_l,D_l$  types of simple Lie algebras  over the complex number field $\mathbb {C}$, whereas  in the wider sense  we mean  the so called  9-types  of simple Lie algebras over the complex number  field $\mathbb{C}$ as follows:\newline

 $A_l(l\geq 1), B_l(l\geq 2),C_l(l\geq 3), D_l(l\geq 4),G_2, F_4, E_6,E_7,$ and $E_8$. \newline

In other words any Lie algebra of classical type over $F$ is isomorphic to the Lie algebra $\sum_{j=1}^{n} \mathbb {Z} e_j\otimes_{\mathbb {Z}} F$ for some Chevalley basis $\{e_1,\cdots, e_n\}$ of  the type above.\newline

Next we would like to supplement Zassenhauss' thory related to  the background work on modular representation theory.\newline

 We assume for a while that $F$ is an algebraically closed field of nonzero characteristic $p$  unless  it is stated otherwise.

\end{definition}

\begin{definition}   
We let $\mathcal {O}(L)$ denote  the $p$- $\textit {center}$ for a fixed restricted Lie algebra $L$ of dimension $n$ over $F$, i.e., $\mathcal {O}(L):= <\{x^p-x^{[p]} \vert x\in L\}\cup \frak Z$$(L)>_ F$ is  the subalgebra of $u(L)$ generated by $\frak Z (L)   $and  $1$ and $\{x^p- x^{[p]}\vert x\in L\}$, where $\frak Z (L)$ indicates the center of $L$  and $u(L)$  denotes the  universal enveloping algebra of $L$. \newline

We denote the center of $u(L)$ by $\frak Z (u(L))$ and simply by $\frak Z$ if there is no confusion. We are well  aware that $\frak Z$ becomes an  affine algebra as an integral domain. \newline

We thus may apply the commutative algebra theory  or the elementary algebraic geometry theory to this  structure $\frak Z$. There is an affine variety  associated with  this affine algebra. We shall call  this variety the $\textit {Zassenhaus variety}$. 
\end{definition}

\begin{prop}
We have $\frak Z:= \frak Z (u(L))= \mathcal O (L)[s_1,\cdots, s_k]$, where $s_i$'s are integral over $\mathcal O (L)$  as elements of $u(L)$.

\end{prop}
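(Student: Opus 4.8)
The plan is to establish that $\mathfrak{Z}=\mathfrak{Z}(u(L))$ is a finitely generated module over the $p$-center $\mathcal{O}(L)$, so that adjoining finitely many generators $s_1,\dots,s_k$, each integral over $\mathcal{O}(L)$, recovers all of $\mathfrak{Z}$. The starting point is the classical structure theory of the universal enveloping algebra of a restricted Lie algebra: by the PBW theorem, if $x_1,\dots,x_n$ is an $F$-basis of $L$, then $u(L)$ is free as a module over the subalgebra $\mathcal{A}$ generated by $1$ and the elements $x_i^p-x_i^{[p]}$, with basis the PBW monomials $x_1^{a_1}\cdots x_n^{a_n}$ where $0\le a_i\le p-1$; hence $u(L)$ is a free $\mathcal{A}$-module of rank $p^n$. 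Since $\mathcal{O}(L)$ contains $\mathcal{A}$ (it is $\mathcal{A}$ enlarged by $\mathfrak{Z}(L)$), $u(L)$ is in particular a finitely generated module over $\mathcal{O}(L)$, generated by at most $p^n$ elements.

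First I would record that each $x_i^p-x_i^{[p]}$ lies in $\mathfrak{Z}(u(L))$ (this is the standard Jacobson identity, already invoked in the proof of the preceding proposition as ``$x^p-x^{[p]}\in\mathfrak{Z}(u(L))$''), so $\mathcal{O}(L)\subseteq\mathfrak{Z}$. Thus we have a chain $\mathcal{O}(L)\subseteq\mathfrak{Z}\subseteq u(L)$ of commutative-over-$\mathcal{O}(L)$ inclusions, with $u(L)$ a finite $\mathcal{O}(L)$-module. Next I would appeal to a Noetherianity argument: $\mathcal{O}(L)$ is a finitely generated commutative $F$-algebra (it is generated by the finitely many $x_i^p-x_i^{[p]}$ together with a basis of the finite-dimensional center $\mathfrak{Z}(L)$ and $1$), hence Noetherian by the Hilbert basis theorem; therefore the submodule $\mathfrak{Z}$ of the finite $\mathcal{O}(L)$-module $u(L)$ is itself a finitely generated $\mathcal{O}(L)$-module. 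Pick generators; after discarding the one we may take from $\mathcal{O}(L)$ itself, call the remaining ones $s_1,\dots,s_k$. Then $\mathfrak{Z}=\mathcal{O}(L)\cdot 1+\mathcal{O}(L)s_1+\cdots+\mathcal{O}(L)s_k$, which in particular gives $\mathfrak{Z}=\mathcal{O}(L)[s_1,\dots,s_k]$ as claimed.

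It remains to see that each $s_i$ is integral over $\mathcal{O}(L)$ as an element of $u(L)$. Here I would use that $\mathfrak{Z}$ is a finite $\mathcal{O}(L)$-module containing $\mathcal{O}(L)$: a standard determinant-trick (Cayley--Hamilton / the characterization of integral elements via a faithful module that is finitely generated) shows that every element of a ring extension which is module-finite over the base is integral over the base. Concretely, fixing $s_i$, multiplication by $s_i$ is an $\mathcal{O}(L)$-linear endomorphism of the finitely generated $\mathcal{O}(L)$-module $\mathfrak{Z}$, so by the determinant trick $s_i$ satisfies the characteristic polynomial of that endomorphism, a monic polynomial with coefficients in $\mathcal{O}(L)$; that is exactly integrality of $s_i$ over $\mathcal{O}(L)$ inside $u(L)$.

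The main obstacle I anticipate is not any single hard step but making the PBW/freeness input precise: one must verify that $u(L)$ is genuinely free (not merely finite) of rank $p^n$ over the subalgebra generated by the $x_i^p-x_i^{[p]}$, which requires the restricted PBW theorem and care that the $x_i^p-x_i^{[p]}$ generate a polynomial subalgebra in general position with respect to the PBW filtration. Once finiteness of $u(L)$ over $\mathcal{O}(L)$ is in hand, the rest — Noetherian descent to the submodule $\mathfrak{Z}$ and the determinant trick for integrality — is formal commutative algebra. A secondary point to handle carefully is that $\mathcal{O}(L)$ as defined includes $\mathfrak{Z}(L)$ and $1$, so one should check these extra generators do not spoil either the finiteness of $u(L)$ over $\mathcal{O}(L)$ (they only enlarge the base, which helps) or the commutativity needed to speak of ``$\mathcal{O}(L)[s_1,\dots,s_k]$'' (all the $s_i$ lie in $\mathfrak{Z}$, hence commute with everything, so this is unproblematic).
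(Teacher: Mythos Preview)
Your argument is correct and is essentially the standard Zassenhaus argument: PBW gives $u(L)$ free of rank $p^n$ over the polynomial subalgebra generated by the $x_i^p-x_i^{[p]}$, hence finite over the Noetherian ring $\mathcal{O}(L)$; the submodule $\mathfrak{Z}$ is then finitely generated, and the determinant trick yields integrality of the chosen generators. The paper itself does not spell out a proof at all---it simply declares the result ``easy to know'' and refers the reader to Zassenhaus's original paper [ZH] and to Strade--Farnsteiner [SF, Chapter~6]---so your proposal is in fact \emph{more} detailed than the paper, but it follows exactly the line of argument those references contain.
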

\begin{proof}
Easy to know, but  the readers may refer to  theorem 1 in [ZH] or chapter 6 in [SF].

\end{proof}

\begin{prop}
If the mappimg $h: \mathcal O (L) [\mathbb {X}_1,\cdots, \mathbb {X}_k]\rightarrow \mathcal O (L)[s_1,\cdots, s_k]$  is an evaluation algebra homomorphism,
then we have the following:

(i)Ker h just becomes a prime ideal of $\mathcal O (L)[\mathbb {X}_1,\cdots, \mathbb {X}_k]$.\newline

(ii)Any point of  $\nu (Ker h)$ in $F^{n+k}$ beomes a normal point  of $\nu (Ker h)= \nu ( \sqrt {Ker h}) \neq \phi$, where $\sqrt {Ker h } $ indicates the nilradical of $Ker h$ and $\nu$ indicates the affine variety cofunctor.

\end{prop}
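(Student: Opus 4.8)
The plan is to read everything off the preceding Proposition, which gives $\mathrm{Im}\,h=\mathcal{O}(L)[s_1,\ldots,s_k]=\mathfrak{Z}$, an affine integral domain. For (i) I would factor $h$ through its image: the first isomorphism theorem gives $\mathcal{O}(L)[\mathbb{X}_1,\ldots,\mathbb{X}_k]/\mathrm{Ker}\,h\cong\mathfrak{Z}$, and since $\mathfrak{Z}$ is a domain the quotient ring is a domain, so $\mathrm{Ker}\,h$ is prime. In particular $\mathrm{Ker}\,h$ is radical, whence $\sqrt{\mathrm{Ker}\,h}=\mathrm{Ker}\,h$ and therefore $\nu(\mathrm{Ker}\,h)=\nu(\sqrt{\mathrm{Ker}\,h})$ for free; and since $\mathfrak{Z}\neq 0$ the ideal $\mathrm{Ker}\,h$ is proper, so the Nullstellensatz over the algebraically closed field $F$ yields $\nu(\mathrm{Ker}\,h)\neq\phi$. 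Writing $\mathcal{O}(L)$ as a polynomial ring $F[y_1,\ldots,y_n]$ --- it is generated over $F$ by $\mathfrak{Z}(L)$ together with the elements $x_i^{p}-x_i^{[p]}$ attached to the remaining basis vectors, a total of $n$ algebraically independent generators --- we get $\mathcal{O}(L)[\mathbb{X}_1,\ldots,\mathbb{X}_k]=F[y_1,\ldots,y_n,\mathbb{X}_1,\ldots,\mathbb{X}_k]$, the coordinate ring of $F^{n+k}$; thus $\nu(\mathrm{Ker}\,h)$ is a nonempty irreducible closed subvariety of $F^{n+k}$ whose coordinate ring is $F[y,\mathbb{X}]/\mathrm{Ker}\,h\cong\mathfrak{Z}$.

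It then remains to show that every point of this variety is a normal point. A point $P\in\nu(\mathrm{Ker}\,h)$ is a normal point precisely when its local ring $\mathcal{O}_{\nu(\mathrm{Ker}\,h),P}$, which is the localization $\mathfrak{Z}_{\mathfrak{m}_P}$, is integrally closed; and a localization of an integrally closed domain is again integrally closed. So (ii) reduces to the single assertion that $\mathfrak{Z}=\mathfrak{Z}(u(L))$ is integrally closed in its field of fractions. Here I would invoke the Zassenhaus structure theory: $\mathfrak{Z}=\mathcal{O}(L)[s_1,\ldots,s_k]$ with $\mathcal{O}(L)$ a polynomial ring, hence a factorial --- in particular integrally closed --- domain, while $\mathfrak{Z}$ is a finite $\mathcal{O}(L)$-module, so every element of $\mathfrak{Z}$ is integral over $\mathcal{O}(L)$; the finer part of the theory (Theorem 1 of [ZH], Chapter 6 of [SF]) identifies $\mathfrak{Z}$ with the full integral closure of $\mathcal{O}(L)$ in $\mathrm{Frac}(\mathfrak{Z})$, which is integrally closed by construction. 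Granting this, every localization $\mathfrak{Z}_{\mathfrak{m}_P}$ is integrally closed, so every $P\in\nu(\mathrm{Ker}\,h)$ is a normal point, completing (ii).

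The one genuinely nontrivial input, and hence the main obstacle, is the integral-closedness of $\mathfrak{Z}$: the bare fact that $\mathfrak{Z}=\mathcal{O}(L)[s_1,\ldots,s_k]$ with the $s_i$ integral over $\mathcal{O}(L)$ does not by itself force $\mathfrak{Z}$ to be integrally closed; one must use that the chosen $s_i$ already exhaust the integral elements. If one prefers not to lean on that citation, the fallback is Serre's criterion $R_1+S_2$ applied to $\mathfrak{Z}$: one checks that $\mathfrak{Z}$ is Cohen--Macaulay --- most conveniently through its finite-module structure over the polynomial ring $\mathcal{O}(L)$ --- and that $\mathfrak{Z}$ is regular in codimension one, the latter following from smoothness of the Zassenhaus variety on the Azumaya locus of $u(L)$, whose complement has codimension at least two. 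The conjunction of $S_2$ and $R_1$ gives normality of $\mathfrak{Z}$, and then (ii) follows exactly as above.
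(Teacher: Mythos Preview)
Your proposal is correct and follows essentially the same approach as the paper: for (i) the first isomorphism theorem identifies the quotient with the integral domain $\mathfrak{Z}$, and for (ii) the coordinate ring of $\nu(\mathrm{Ker}\,h)$ is identified with $\mathfrak{Z}$, so normality of every point reduces to $\mathfrak{Z}$ being integrally closed. The paper simply asserts ``Because $\mathfrak{Z}$ is integrally closed'' without further comment, whereas you go further and correctly flag this as the only nontrivial input, supplying both the citation-based justification and the Serre-criterion fallback; in that respect your write-up is more complete than the paper's own proof.
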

\begin{proof}
  (i)Since $\frak Z$ becomes an integal domain, the identical relations $\frak Z:= \frak Z (u(L))=\mathcal O (L)[s_1,\cdots, s_k]$\newline

$\cong_{alg} \mathcal O (L)[\mathbb {X}_1,\cdots, \mathbb {X}_k]/Ker h$ prove  our assertion.\newline

(ii)We get $\mathcal I (\nu (Ker h))= \sqrt {ker h}= Ker h$ by virtue of (i). \newline

We thus have $\frak Z (u(L))= \mathcal O (L)[\mathbb {X}_1,\cdots, \mathbb {X}_k]/\mathcal I (\nu (Ker h))  $. Because $\frak Z$ is integrally closed, each local ring $R_x$ also becomes integrally closed.\newline

 Hence every point of $\nu (Ker h)$ becomes a  normal point,i.e., $R_x$ becomes a normal ring.

\end{proof}
\begin{prop}
Suppose that $V$ is an irreducible  algebraic variety and  $x$ is a  simple point of $V$; then  we have:\newline

(i)$R_x$ becomes  a regular (Noetherian) local ring and so $x$ is a normal point,i.e., $R_x$ is normal.\newline
(ii)But the converse is not true.

\end{prop}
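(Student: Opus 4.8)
The plan is to treat (i) in two steps --- first the regularity, which is essentially the definition of a simple point, then the nontrivial implication ``regular $\Rightarrow$ normal'' --- and to settle (ii) by an explicit surface. For (i), recall that calling $x$ a \emph{simple} point of the irreducible variety $V$ means exactly that the Zariski tangent space at $x$ has dimension equal to $\dim V$, equivalently that the local ring $R_x$ satisfies $\dim_{\kappa(x)}\mathfrak m_x/\mathfrak m_x^2=\dim R_x$; this is precisely the statement that $R_x$ is a regular local ring, and since $R_x$ is a localization of a finitely generated $F$-algebra it is automatically Noetherian, so the parenthetical claim comes for free. The one piece of real content is that a regular local ring is integrally closed in its field of fractions. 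I would prove this by induction on $d=\dim R_x$: when $d=0$ the ring is a field and there is nothing to do; when $d=1$ a regular local ring is a discrete valuation ring, hence a principal ideal domain and thus integrally closed; when $d\ge 2$ I would invoke Serre's criterion $(R_1)+(S_2)$ --- a regular local ring is Cohen--Macaulay because its maximal ideal is generated by a system of parameters forming a regular sequence, which gives $(S_2)$, and every localization of $R_x$ at a height-one prime is again regular of dimension $1$, hence a DVR, which gives $(R_1)$ --- so Serre's criterion yields that $R_x$ is normal. (Alternatively one may simply cite the Auslander--Buchsbaum theorem that every regular local ring is a unique factorization domain, together with the elementary fact that a UFD is integrally closed in its fraction field, which gives the conclusion at once; see e.g.\ the references already used in the paper.) Hence $x$ is a normal point.

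For (ii) I would exhibit the affine quadric cone
\[
W=\{(a,b,c)\in F^{3}: ab=c^{2}\},
\]
the hypersurface defined by the irreducible polynomial $g=\mathbb X_1\mathbb X_2-\mathbb X_3^{2}$, with coordinate ring $A=F[\mathbb X_1,\mathbb X_2,\mathbb X_3]/(g)$; this is an irreducible surface. The origin is \emph{not} a simple point: all partial derivatives of $g$ vanish at $0$, so the tangent space there is $3$-dimensional whereas $\dim W=2$, and therefore the local ring $R_0$ of $W$ at the origin fails to be regular. Nevertheless $W$ \emph{is} normal: a hypersurface ring is always Cohen--Macaulay, giving $(S_2)$; and the singular locus of $W$, being the common zero set of the partials of $g$, is just $\{0\}$, which has codimension $2$ in $W$, so $W$ is regular in codimension one, giving $(R_1)$; by Serre's criterion $A$ --- hence its localization $R_0$ --- is integrally closed. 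Thus $0$ is a normal point of an irreducible variety which is not simple, and the converse of (i) is false.

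The only genuine obstacle is the step ``regular $\Rightarrow$ normal''; everything else is definitional unwinding or routine verification. If Serre's criterion, or alternatively the Auslander--Buchsbaum structure theorem for regular local rings, may be quoted from the standard commutative-algebra literature (for instance [SF]), the whole argument becomes short; if one insists on self-containedness, the inductive argument sketched above --- via the DVR case and the Cohen--Macaulay property of regular local rings --- supplies it.
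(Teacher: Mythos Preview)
Your argument is correct and self-contained. For (i) the paper does not give an argument at all: it simply refers to chapter~1 of Humphreys' \emph{Linear Algebraic Groups}, whereas you unpack the definition of a simple point and then supply the ``regular $\Rightarrow$ normal'' step via Serre's criterion (or alternatively the Auslander--Buchsbaum UFD theorem). Your route is longer but independent of outside citations.

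For (ii) the two proofs diverge more substantially. The paper appeals to the preceding proposition --- which asserts that every point of the Zassenhaus variety $\nu(Ker\,h)$ is normal --- together with the picture of the plane curve $(x-c)(x-2)^2-y^2=0$ for $c\le 2$. Your counterexample, the affine quadric cone $\{ab=c^2\}\subset F^3$, is the textbook one and is entirely self-contained: it is a two-dimensional hypersurface with an isolated singularity at the origin, so Serre's $(R_1)+(S_2)$ applies directly and the verification is clean. Your example has the further virtue that no one-dimensional example can possibly work --- for irreducible curves, normal and nonsingular coincide --- so a surface is the minimal setting; the paper's plane-curve picture, read in isolation, would not furnish a counterexample and only makes sense through the lens of the preceding normality result for the higher-dimensional Zassenhaus variety.
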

\begin{proof}
(i) is clear from chapter 1 , Linear algebraic groups  written by James E. Humphreys.
(ii) holds by the above proposition considering the figure of  node of the polynomial  $(x-c)(x-2)^2- y^2= 0$ for $c\leq 2$  in $\mathbb R^2 \subset \mathbb C^2$.

\end{proof}

\begin{prop}
We obtain the following under the same notation as above:

(i)$\nu (Ker h)$ becomes a normal irreducible algebraic variety.\newline

(ii)Suppose that $Ker h$ is principal with $Tan X_x\subsetneqq F^{n+1}, \forall x\in X:= \nu( Ker h)$; then  we claim that $X$ becomes a smooth irreducible algebraic variety, in other words all points of $X$ are simple and so normal, where the geometric tangent space of $X$ at $x$ is denoted by $Tan X_x$.
\end{prop}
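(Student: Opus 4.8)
The plan is to derive both parts directly from the three preceding propositions, adding only a dimension count for part (ii); almost nothing new in commutative algebra is needed.

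For (i) I would simply reassemble what was already proved about the evaluation homomorphism $h$. Since $\mathrm{Ker}\,h$ is a prime ideal of $\mathcal O(L)[\mathbb X_1,\ldots,\mathbb X_k]$, and since $\mathcal I(\nu(\mathrm{Ker}\,h))=\sqrt{\mathrm{Ker}\,h}=\mathrm{Ker}\,h$, the coordinate ring of $X:=\nu(\mathrm{Ker}\,h)$ is the integral domain $\mathcal O(L)[\mathbb X_1,\ldots,\mathbb X_k]/\mathrm{Ker}\,h\cong \mathfrak Z(u(L))$, so $X$ is irreducible. As $\mathfrak Z$ is integrally closed in its fraction field, every localization $R_x$ is integrally closed, i.e. every point of $X$ is normal; hence $X$ is a normal irreducible algebraic variety. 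This is essentially a restatement of part (ii) of the earlier proposition on $h$.

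For (ii) I would now use the hypothesis that $\mathrm{Ker}\,h=(f)$ is principal, noting first that $f$ is a nonzero nonunit precisely because the quotient $\mathfrak Z$ is a nontrivial affine integral domain and not a field. Then $X\subseteq F^{n+1}$ is a hypersurface, so by the principal ideal theorem $X$ has pure dimension $n$, and $\dim_x X=n$ for every $x\in X$ since $X$ is irreducible by (i). For any point $x\in X$ the geometric tangent space satisfies $\dim\,\mathrm{Tan}\,X_x\geq \dim_x X=n$, while the standing hypothesis $\mathrm{Tan}\,X_x\subsetneqq F^{n+1}$ forces $\dim\,\mathrm{Tan}\,X_x\leq n$. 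Combining the two inequalities gives $\dim\,\mathrm{Tan}\,X_x=n=\dim_x X$, which by definition means $x$ is a simple point of $X$. Thus every point of $X$ is simple, and by the proposition quoted from Humphreys each $R_x$ is then a regular Noetherian local ring, in particular normal; so $X$ is a smooth irreducible variety, all of whose points are simple and normal, consistent with (i).

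The step requiring the most care will be the dimension bookkeeping: verifying that $f$ really is a nonzero nonunit so that the hypersurface $X$ has pure dimension $n$ (this uses that the $p$-center $\mathcal O(L)$ is an honest polynomial ring in $n=\dim L$ variables, whence $\mathcal O(L)[\mathbb X_1]$ is a polynomial ring in $n+1$ variables), and stating precisely the standard fact that on an irreducible variety the tangent-space dimension at a point is always at least the local dimension with equality exactly at simple points. Everything else is routine assembly on top of the earlier propositions.
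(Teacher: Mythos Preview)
Your proposal is correct and follows essentially the same approach as the paper: for (i) you invoke the earlier propositions on $h$ to get irreducibility and normality, and for (ii) you match $\dim X=n$ against the tangent-space bound forced by $\mathrm{Tan}\,X_x\subsetneqq F^{n+1}$ to conclude every point is simple. The only cosmetic difference is that the paper writes out $\mathrm{Tan}\,X_x=\nu\bigl(\sum_{i=1}^{n+1}\frac{\partial f}{\partial \mathbb X_i}(x)(\mathbb X_i-x_i)\bigr)$ explicitly to read off $\dim \mathrm{Tan}\,X_x=n$ from the single nonvanishing linear form, and cites Proposition~3.6 rather than the principal ideal theorem for $\dim X=n$, but the logic is identical.
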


\begin{proof}
(i) is also evident from  propoitions(3.7),(3.8) above.\newline

(ii)Since we know that $X$ is irreducible, we have only to prove the latter asserion. We have that  each point $x\in X$ becomes a simple point  because $ dim_F Tan X_x= n$ from the fact  $Tan X_x= \nu ( \sum _{i=1}^{n+1} \frac {\partial f}{\partial {\mathbb X_i}} (x)(\mathbb X- x_i))$ with $Ker h= (f)$ for some irreducible polynomial  $f\in Ker h$  and  from the fact $ dim \mathbb X= n$ according to proposition 3.6. Normal point is already  proven in the preceding propositions.

\end{proof}

\begin{definition}
Let $\chi \in L^{\ast}$ and let $I_{\chi}$ denote the ideal of $u(L)$ generated by  $(l^p- l^{[p]}- \chi (l)^p)\in \frak Z (u(L)), \forall l\in L$. We shall call $u_{\chi}(L):= u(L)/I_{\chi}$ the $\textit { reduced  enveloping algebra}$ of $L$ associated with the $p$-character $\chi$. If $\chi= 0$ in particular, then $u_{\chi} (L)$ is called  the $\textit{restricted enveloping algebra}$ of $L$.
\end{definition}

\begin{definition}
C.Chevalley(1909-1984) found that  the classification of simple algebraic groups over $F$ reduces essentially to that of  simple Lie algebras  over $\mathbb C$ and that  the nine types of  simple Lie algebras  over $\mathbb C$ are induced from these nine types of simple Lie groups over $\mathbb C$, which are denoted by the same symbols as in definition 3.4. \newline

We usually classify the simple algebraic groups over $F$ into infinite family of four classical groups of types $A_l,B_l,C_l,D_l$ along with five exceptional types $E_6,E_7,E_8, F_4,$  and $ G_2$.\newline

We  shall call  an algeraic group over $F$  $\textit {simply connected}$ when this algebraic group has  its Lie  algebra with the basis consisting essentially of a  Chevalley basis. Suppose that $G$ denotes a simply connected algebraic group over $F$ with its Lie algebra  $\frak g:= L(G)$.\newline

 It is known that $\frak g$ has a triangular decomposition like $\frak g= \frak n^- \oplus h \oplus \frak n^+$ as  in characteristic zero. Here $\frak h$ indicates a  $\textit {Cartan subalgebra}$  of $\frak g$ which is defined to be the Lie algebra of  a maximal torus  of  $G$, whereas  $\frak n^-$ and  $\frak n^+$ indicates the respective sums of  negative and positive root spaces. \newline

We put $\frak B= \frak h \oplus \frak n^+$ and call it a  $\textit {Borel subagebra}$ of $\frak g$.
It is well-known that $\frak B= L(B)$  for some $\textit {Borel subgroup}$  $B$ of $G$, which is  defined to be  a connected maximal solvable subgroup of $G$.

Suppose that $G$ is an algebraic group over $F$ with its Lie algebra  $\frak g = L(G)$. $G$ has  a unique largest connected normal solvable subgroup automatically closed, which  is  just the  identity component  denoted by $R(G)$ and called  the $\textit {radical}$ of $G$. \newline

We shall denote  by $R_u(G)$  the normal subgroup of $R(G)$ consisting of all its unipotent elements and call it  the $\textit { unipotent radical}$ of $G$.\newline

If  a connected algebraic group $G$ of  positive dimension has no closed connected normal subgroup except the identity, i.e.,$R(G)$= identity, then we call 
the group $G$ $\textit {semisimple}$. \newline

A  semisimple algebraic group $G$ has a Borel subgroup of the form $B= T \cdot R_u (B)$ with $T$ a fixed maximal torus. It is known that a  Borel subgroup $B$ containing $T$ has  another Borel subgroup $B^-$ called opposite $B$ relative to $T$ satisfying $B\cup B^-= T$. So we get $B^-= T\cdot R_u {B^-}$, which is known to be unique.

\end{definition}

Suppose that we are given a simple and simply connected algebraic group $G$. Putting $\frak B:= L((T\cdot R_u)B), \frak U:= L(R_u(B)), \frak h:= L(T)$, and  $\frak U^-:= L(R_u(B^-))  $, we let $\overline {\frak B^{\ast}}, \overline {\frak h^{\ast}} $ and $\overline {\frak U^{\ast}}$ denote $\{\bar {\alpha}\in \frak g^{\ast} \vert \bar {\alpha} (\frak U)= 0 \}, \{\bar {\alpha} \in \frak g^{\ast} \vert \alpha (\frak U\oplus \frak U^-)= 0\}$ and $\{\bar {\alpha}\in \frak g^{\ast} \vert \bar {\alpha}( \frak h\oplus \frak U)= 0\}$ respectively. \newline

It is known that any $\bar {\alpha}\in \frak g^{\ast}$  gets a unique Jordan decomposition $\overline {\alpha}= \overline {\alpha_s}+ \overline {\alpha_n}$,
where  the linear functionals $\overline {\alpha_s}$ and $\overline {\alpha_n}$ must satisfy the following conditions: \newline

$\exists g \in G$ such that 

(i)$g\cdot \overline {\alpha_s}\in \overline {h^{\ast}}$,
(ii)$g\cdot \overline {\alpha^n}\in \overline {U^{\ast}}$, and
(iii)$(g\cdot \overline {\alpha_s})(h_{\alpha}) \neq 0 \Longrightarrow (g\cdot \overline {\alpha_n})(x_{\alpha})= (g\cdot \overline {\alpha_n}) (x_{-\alpha})= 0$, where $\{x_{\alpha},x_{-\alpha},h_{\alpha} \}$ denotes the canonical basis of  $sl_2 (F)$ for any root $\alpha$ of the root system $\Phi$ of $T$ in $G$ and the action of $g$ is coadjoint.   \newline

So due to  the above properties, any $p$- character $\chi$ of an irreducible representation $\rho_{\chi}$  has a unique Jordan decomposition like $\chi= \chi_s+ \chi_n$ such that $\chi$ is contained in the $G$- orbit  of $\overline {\frak B^{\ast}}$. \newline

We thus may assume that  the associated $p$-character $\chi$  of $\rho_{\chi}$ always vanishes on the nilradical $\frak n^+= \frak U \subset B$. For any given character $\chi\in g^{\ast}$, we shall consider  linear functionals $\lambda: \frak h\rightarrow F$ called weights such that $\forall h \in \frak h,\lambda (h)^p- \lambda (h^{[p]})= \chi (h)^p $ holds. \newline

Since $\chi (\frak U)= 0$, we can naturally extend  such a $\lambda $ to a Lie algebra homomorphism $\lambda: \frak B \rightarrow F$. An algebra homomorphism  $\lambda: u(\frak B ) \rightarrow F$ may still arise by extending  this $\lambda$.\newline

 Letting $W(\chi)$ be the set of all such weights, we obviously perceive that the cardinality $\#(W(\chi))$ of $W(\chi)$ is equal to $p^l$ with dim $\frak h= l$.

\begin{prop}
As in definition3.11, we let $L(G)= \frak g$ be of classical type. Suppose that $\chi$ is a character in $\overline {\frak B^{\ast}}$ as above. \newline

We then have a $1-1$ correspondence between weights $\lambda \in W(\chi)$ and $1$-dimensional $\frak B_{\chi}$- modules, where  $\frak B_{\chi}$ denotes  the subalgebra of  $\frak U_{\chi}(\frak g)$ generated by   $\frak B$ and $ 1$.\newline

For a given weight  $\lambda$, let $F_{\lambda}$ be the corresponding $1$- dimensional $\frak B_{\chi}$- module and put $Z_{\chi}(\lambda):= \frak U_{\chi}(\frak g) \otimes_{\frak B_{\chi}} F_{\lambda}.$ We then have  dim$Z_{\chi} (\lambda)= p^m$ and  in addition any irreducible $\frak U_{\chi}(\frak g)$- module becomes the quotient  of some $Z_{\chi}(\lambda)$ .
\end{prop}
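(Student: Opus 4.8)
The module $Z_{\chi}(\lambda)$ is the modular analogue of a Verma module (a ``baby Verma module''), and I would establish all three assertions by the standard Poincar\'e--Birkhoff--Witt (PBW) and Frobenius-reciprocity machinery for reduced enveloping algebras, as developed in [ZH] and [SF]. Throughout recall, from the structure theory set up above, that $\frak g=\frak U^-\oplus\frak h\oplus\frak U$ with $\frak B=\frak h\oplus\frak U$, that $\frak h$ is a torus and $\frak U=\frak n^+$ is nilpotent, both restricted ($p$-closed) subalgebras of $\frak g$, and that $\chi$ vanishes on $\frak U$.

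\emph{PBW basis, the identification $\frak B_{\chi}\cong u_{\chi}(\frak B)$, the weight correspondence, and the dimension.} Order a homogeneous basis of $\frak g$ as: root vectors $f_{\beta}$ ($\beta$ a negative root), a basis $h_1,\dots,h_l$ of $\frak h$, root vectors $e_{\alpha}$ ($\alpha$ a positive root). By the PBW theorem for reduced enveloping algebras, the ordered monomials in these vectors with all exponents in $\{0,\dots,p-1\}$ form an $F$-basis of $\frak U_{\chi}(\frak g)$; the subalgebra $\frak B_{\chi}$ generated by $\frak B$ and $1$ is exactly the span of those monomials involving only the $h_i$ and the $e_{\alpha}$, so the canonical map $u_{\chi}(\frak B)\to\frak U_{\chi}(\frak g)$ is an isomorphism onto $\frak B_{\chi}$ and $\frak U_{\chi}(\frak g)$ is free as a right $\frak B_{\chi}$-module on the $p^{m}$ monomials $\prod_{\beta}f_{\beta}^{a_{\beta}}$, where $m=\dim_F\frak U^-$ is the number of positive roots. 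Next, a one-dimensional $\frak B$-module is a linear form vanishing on $[\frak B,\frak B]=\frak U$, i.e.\ a $\lambda\in\frak h^{\ast}$ extended by zero to $\frak B$; such a $\lambda$ yields a module over $u_{\chi}(\frak B)\cong\frak B_{\chi}$ precisely when $\rho(x)^{p}-\rho(x^{[p]})=\chi(x)^{p}$ holds on the line, and writing $x=h+u$ with $h\in\frak h$, $u\in\frak U$ and using $h^{[p]}\in\frak h$, $u^{[p]}\in\frak U$, $s_i(h,u)\in\frak U$, $\chi(\frak U)=0$, this collapses to $\lambda(h)^{p}-\lambda(h^{[p]})=\chi(h)^{p}$ for all $h$, i.e.\ to $\lambda\in W(\chi)$. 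Hence $\lambda\mapsto F_{\lambda}$ is the asserted $1$--$1$ correspondence, in agreement with $\#W(\chi)=p^{l}$, and then $Z_{\chi}(\lambda)=\frak U_{\chi}(\frak g)\otimes_{\frak B_{\chi}}F_{\lambda}$ acquires the $F$-basis $\{\prod_{\beta}f_{\beta}^{a_{\beta}}\otimes 1_{\lambda}\}$, so $\dim_F Z_{\chi}(\lambda)=p^{m}$.

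\emph{Every irreducible $\frak U_{\chi}(\frak g)$-module is a quotient of some $Z_{\chi}(\lambda)$.} Let $V$ be simple. Since $\chi(\frak U)=0$ and $\frak U$ is nilpotent, $u_{0}(\frak U)$ is a finite-dimensional local algebra with nilpotent augmentation ideal, so $V^{\frak U}=\{v\in V:\frak U v=0\}\neq 0$; it is $\frak h$-stable because $[\frak h,\frak U]\subseteq\frak U$, and since $\frak h$ is a torus the commutative algebra $u_{\chi}(\frak h)$ is semisimple, so $V^{\frak U}$ contains a nonzero $\frak h$-eigenvector $v_0$ of some weight $\lambda$. Evaluating $(h^{p}-h^{[p]}-\chi(h)^{p})v_0=0$ forces $\lambda\in W(\chi)$, so $Fv_0$ is a copy of $F_{\lambda}$ as a $\frak B_{\chi}$-module. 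By the universal property of the induced module, the $\frak B_{\chi}$-map $F_{\lambda}\to V$ with $1_{\lambda}\mapsto v_0$ extends to a $\frak U_{\chi}(\frak g)$-homomorphism $Z_{\chi}(\lambda)\to V$ whose image is a nonzero submodule of $V$, hence all of $V$; thus $V$ is a quotient of $Z_{\chi}(\lambda)$.

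\emph{Expected main obstacle.} The step needing genuine care is the PBW one: one must verify that $u_{\chi}(\frak B)$ embeds in $\frak U_{\chi}(\frak g)$ as precisely the span of the ``upper'' monomials, and that $\frak U_{\chi}(\frak g)$ is genuinely free over $\frak B_{\chi}$ of rank $p^{m}$, since this is what pins down $\dim Z_{\chi}(\lambda)$; I would deduce it from the PBW theorem for the algebras $u_{\chi}$ in [SF] combined with the triangular decomposition. A minor, characteristic-$p$-specific caveat is the identity $[\frak B,\frak B]=\frak U$ used in the weight correspondence, which can degenerate for small $p$ if a positive root vanishes identically on $\frak h$; for the classical types in the relevant characteristics this does not occur, but it should be noted.
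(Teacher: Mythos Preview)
Your argument is correct and is exactly the standard PBW/Frobenius-reciprocity proof for baby Verma modules. The paper itself gives no proof at all: it simply refers the reader to Proposition~1.5 of Friedlander--Parshall (\emph{Amer.\ J.\ Math.}\ 110 (1988), 1055--1094) and to Proposition~3.1.20 in [KY-4], so your write-up is in fact more complete than what the paper provides and follows the same approach one finds in those references and in [SF].
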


\begin{proof}
Refer to  proposition1.5 in\newline 
$\textit{modular representation theory 
of  Lie algebras}$ by \newline
E.M.Friedlander and B.J. Parshall,american journal of math,\newline
110(1988)1055-1094,or
proposition3.1.20 in [KY-4] .

\end{proof}

\begin{definition}
Let $ F$ be any field of nonzero characteristic and let $R$ be a commutative algebra over $F$ with a unit element. Let $R'$ be another ring. \newline

We say that  $R'$ is an $R$- ring if the following conditions are satisfied:
\newline

 $Au$ must be defined in $R'$; namely $Au\in R', \forall A\in R, \forall u\in R'$ with the following properties from (i) to (vi).\newline

(i)If $A\equiv A'$ and $u\equiv u'$ in $R$ and $R'$ respectively, then $Au= A'u'$,\newline

(ii)$A(u+ v)= Au+ Av$,\newline

(iii)$(A+ A')u= Au+ A'u$,\newline

(iv)$(AA')u= A(A'u)$,\newline

(v)$ A(uv)= (Au)v= u(Av)$, \newline

(vi)$1_R\cdot u= u, \forall A, A'\in R, \newline
\forall u,u',v\in R'$ and for the unity $1_R$ in $R$.

\end{definition}
Suppose that $M$ is  an $R$-module and let $\theta$ be an $F$- algebra homomorphism of $R$ onto another $F$- algebra $\theta R$.\newline

Obviously we can define  a homomorphism $\theta^M$ of  $R$- module $M$ onto a  $\theta R$- module $\theta^M M$ in as  general a sense as possible.

We  consider the module $\theta^M M$ which is generated by $\{\theta^M u \vert u\in M\}$ and subject to  such defining relations as\newline

 $\theta^M (u+ v)= \theta^M u+ \theta^M v, \forall u,v\in M$ and $\theta^M(Au)= \theta (A)\cdot \theta^M (u), \forall u\in M, \forall A\in R$.\newline

Next if $M$ is an $R$- ring, then $M$ is automatically an $R$- module. If we define a natural multiplication  in $\theta^M M$  by the recipe $\theta^M(u) \cdot \theta^M (v)= \theta^M (uv),\forall u,v \in M$, then we easily see that $\theta^M M$ becomes a $\theta R$- ring.\newline

If the homomorphism $\theta^M: M\rightarrow \theta^M M$ is equipped with a ring homomorphism at the same time as well as a module homomorphism in a general sense above, then we call  this homomorphism a $\textit {specialization}$ of the $R$-ring $M$ over $\theta$.

\begin{prop}Let  $L$ be a finite dimensional  restricted Lie  algebra of  dimension $n$ over an algebraically closed field $F$ of nonzero characteristic $p$.\newline

We then have that $u(L)$ is a free $\frak Z (u(L))$-module and its rank is $p^{2m}$,where $Q(u(L))= (\frak Z (u(L))- \{0\})^{-1}$ and $ [Q(u(L)): Q(\frak Z(u(L))]= p^{2m}$. Here we call $Q(u(L))$  the $\textit {quotient algebra}$ of $u(L)$, which turns out to be  a division algebra.
\end{prop}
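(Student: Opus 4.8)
The plan is to split the statement into two parts: first determine the generic structure of $u(L)$ over its center by localizing at nonzero central elements, and only afterwards deal with the (more delicate) freeness over the integral center $\frak Z(u(L))$ itself. For Step~1, fix an $F$-basis $e_{1},\dots,e_{n}$ of $L$; by the Poincar\'e--Birkhoff--Witt theorem $u(L)$ is an $F$-domain with basis the ordered monomials $e_{1}^{a_{1}}\cdots e_{n}^{a_{n}}$, $a_{i}\ge 0$. As recalled in the proof of Proposition~3.3, $z_{i}:=e_{i}^{\,p}-e_{i}^{[p]}$ lies in $\frak Z(u(L))$ for each $i$, and the $z_{i}$ are algebraically independent over $F$ since their top-degree parts in the PBW filtration are the independent monomials $e_{i}^{\,p}\in\mathrm{gr}\,u(L)=F[e_{1},\dots,e_{n}]$. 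Writing $\mathcal O':=F[z_{1},\dots,z_{n}]\subseteq\frak Z(u(L))$, a standard filtered/graded argument --- reducing modulo the $z_{i}$ to the restricted enveloping algebra $u_{0}(L)$ of $F$-dimension $p^{n}$ --- shows that $u(L)$ is a free $\mathcal O'$-module with basis $\{e_{1}^{a_{1}}\cdots e_{n}^{a_{n}}:0\le a_{i}\le p-1\}$, hence of rank $p^{n}$.

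For Step~2, set $\frak Z:=\frak Z(u(L))$ and $S:=\frak Z\setminus\{0\}$. As $\frak Z$ is central in the Noetherian domain $u(L)$, $S$ is a two-sided Ore set, so $Q(u(L)):=S^{-1}u(L)=u(L)\otimes_{\frak Z}Q(\frak Z)$ is well defined. By Proposition~3.6 every $s_{i}$, hence every element of $\frak Z$, is integral over $\mathcal O'$, so inverting $\mathcal O'\setminus\{0\}$ already inverts all of $S$; thus $Q(u(L))=(\mathcal O'\setminus\{0\})^{-1}u(L)$ and, by Step~1, $\dim_{Q(\mathcal O')}Q(u(L))=p^{n}$. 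Being a localization of a domain, $Q(u(L))$ is a domain which is finite-dimensional over the field $Q(\frak Z)$, hence a division algebra; and because $S$ is central one checks directly that its center is exactly $Q(\frak Z)$. A finite-dimensional central division algebra over a field has square dimension, and from $[Q(u(L)):Q(\frak Z)]\cdot[Q(\frak Z):Q(\mathcal O')]=p^{n}$ both factors are powers of $p$; hence $[Q(u(L)):Q(\frak Z)]=p^{2m}$ for some integer $m\ge 0$ with $2m\le n$.

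For Step~3 it remains to see that $u(L)$ is a free $\frak Z$-module --- necessarily of rank $p^{2m}$ by Step~2, since $u(L)$ is a finite $\frak Z$-module. Here I would invoke the Zassenhaus structure theory underlying Proposition~3.6 (see [ZH] and Chapter~6 of [SF]): $\frak Z$ is a normal affine domain, $u(L)$ is a maximal order in its quotient division algebra $Q(u(L))$ and therefore reflexive over $\frak Z$, and it is locally free of rank $p^{2m}$ over the normal locus, which by Propositions~3.7--3.10 is essentially all of the Zassenhaus variety. Promoting ``reflexive and generically of constant rank $p^{2m}$'' to ``globally free of rank $p^{2m}$'' is the one non-formal point; I would attempt it by passing to $\mathrm{gr}\,u(L)=S(L)$ with respect to the PBW filtration, showing $S(L)$ is free over the graded-connected subring $\mathrm{gr}\,\frak Z$ by a homogeneous Nakayama argument together with the Cohen--Macaulay/regularity input recorded in Propositions~3.7--3.10 and the rank count of Step~2, and then lifting freeness back to $u(L)$ over $\frak Z$ by the usual filtered-module argument.

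\textbf{Main obstacle.} Steps~1 and~2 are routine (PBW, one filtration argument, and the formal properties of central Ore localization), and they already deliver the division-algebra assertion and the exponent $p^{2m}$. The genuine difficulty is Step~3: freeness of $u(L)$ over the \emph{full} center $\frak Z(u(L))$ --- as opposed to over the polynomial subring $\mathcal O'$, which is easy --- is not a formal consequence of the rank computation and hinges on the finer geometry of the Zassenhaus variety (normality, Cohen--Macaulayness, the maximal-order property). That is where the bulk of the work and essentially all of the risk lies.
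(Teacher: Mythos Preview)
Your Steps~1 and~2 are correct and match the paper's approach exactly: the paper also uses the PBW basis to see that $u(L)$ is free of rank $p^{n}$ over the polynomial subring $\mathcal O_1(L)=F[z_1,\dots,z_n]$, and then invokes the tower $[Q(u(L)):Q(\mathcal O_1(L))]=[Q(u(L)):Q(\frak Z)]\cdot[Q(\frak Z):Q(\mathcal O_1(L))]=p^{n}$ to conclude that $[Q(u(L)):Q(\frak Z)]=p^{2m}$ for some $m$ with $2m\le n$. Your observation that $Q(u(L))$ is a central division algebra over $Q(\frak Z)$, hence of square dimension, is exactly the content the paper uses implicitly.

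Where you differ from the paper is in Step~3, and here you are being considerably more careful than the paper itself. The paper's entire argument for freeness of $u(L)$ over $\frak Z$ is the single sentence: ``For the $Q(\frak Z)$-vector space $Q(u(L))$, we may choose a basis of elements in $u(L)$. So our proposition is evident considering the Poincar\'e--Birkhoff--Witt theorem and elementary algebraic geometry.'' That is, the paper simply asserts that a $Q(\frak Z)$-basis chosen inside $u(L)$ is already a $\frak Z$-basis, appealing vaguely to PBW and algebraic geometry. You are right to flag this as the non-formal point: lifting a generic basis to a global free basis over a non-regular base is not automatic, and your sketch via the associated graded ring, Cohen--Macaulayness, and the normality results recorded in Propositions~3.7--3.10 is a genuine strategy (and is in fact how this is done in the references the paper cites, e.g.\ [SF] and [ZH]). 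So your proposal is sound, and on the one delicate step you have identified the issue more precisely than the paper's own proof does.
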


\begin{proof}

For the $Q(\frak Z)$- vector space $Q(u(L))$, we may choose a basis of elements in $u(L)$. So our proposition is  evident considering the Poincare-Birkoff- Witt theorem  and elementary algebraic geometry.\newline

 Next putting $\mathcal O_1(L)=alg<\{x^p-x^{[p]}\vert \forall x\in L\}>$, which is the subalgebra of $u(L)$  generated by all $x^p- x^{[p]}, \forall x\in L$, we obtain $p^{dim L}= p^n=$ $ [Q(u(L)): Q(\frak Z)][Q(\frak Z): Q(\mathcal O_1 (L))]$ with $[Q(u(L)): Q(\frak Z)]= p^{2m}$.

\end{proof}

\begin{prop}

Suppose that  $F$ is  a  splitting  field  of nonzero characteristic  $p$   for  the  $F$- algebra  $u(L)$  ;  then we have the  following:\newline

(i)We may extend  any specialization $\theta$ over $F$ of the center $\frak Z$ of $u(L)$ onto an extension field $\theta \frak Z$ of $F$ to a specialization $\theta^{u(L)}$ of $u(L)$ over $F$  onto an algebra $\theta^{u(L)} u(L)$ of rank not greater than $p^{2m}$ over $\theta \frak Z$.\newline

(ii)There is a general element of $u(L)$ over $\frak Z$ whose minimal polynomial is mapped by $\theta$ onto a multiple of  the minimal polynomial of the corresponding  general element of $\theta^{u(L)}$ over $\theta \frak Z$\newline

(iii)The algebra $\theta^{u(L)}u(L)$ of maximal dimension of an indecomposable representation $\theta^{u(L)}$ is seperable over $\theta \frak Z$ if and only if it becomes centrally simple of dimension $p^{2m}$  over $\theta \frak Z$.\newline

If this is the case, then $\theta$ just maps any minimal polynomial of $u(L)$ over $\frak Z$ onto a minimal polynomial of  $\theta^{u(L)}u(L)$ over $\theta \frak Z$.\newline

(iv)The discriminant ideal  over $\frak Z$ defined by H.Zassenhaus never vanish.

\end{prop}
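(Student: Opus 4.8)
The plan is to derive all four assertions from Proposition~3.14, which exhibits $u(L)$ as a free $\frak Z$-module of rank $p^{2m}$ whose total quotient algebra $Q(u(L))$ is a central division algebra of degree $p^m$ over the field $Q(\frak Z)$; one then treats $u(L)$ as a $\frak Z$-order in this division algebra and runs the generic-element/specialization machinery of Zassenhaus. For (i), given a specialization $\theta\colon\frak Z\to\theta\frak Z$ over $F$ onto an extension field of $F$, form the base change $\theta^{u(L)}u(L):=\theta\frak Z\otimes_{\frak Z}u(L)$ and put $\theta^{u(L)}(u)=1\otimes u$; because $u(L)$ is generated over $\frak Z$ by a fixed Poincar\'e--Birkhoff--Witt-type basis of $p^{2m}$ elements, the images of that basis generate $\theta^{u(L)}u(L)$ over $\theta\frak Z$, so its rank is at most $p^{2m}$ (and exactly $p^{2m}$ when $\theta$ is injective, by freeness). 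A routine check against Definition~3.13 shows that $\theta^{u(L)}$ is simultaneously a ring and a module homomorphism, hence a specialization of the $\frak Z$-ring $u(L)$ over $\theta$.

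For (ii), adjoin indeterminates $\mathbb X_1,\dots,\mathbb X_N$, one for each element $e_i$ of a $\frak Z$-basis of $u(L)$, and let $u=\sum_i\mathbb X_i e_i$ be the general element of $u(L)$ over $\frak Z$. Since $Q(u(L))$ is a central division algebra of degree $p^m$ over $Q(\frak Z)$, the minimal (reduced characteristic) polynomial $\mu_u(T)$ of $u$ over $Q(\frak Z)[\mathbb X]$ is monic of degree $p^m$; clearing denominators we may take its coefficients in $\frak Z[\mathbb X]$. Applying $\theta$ coefficientwise yields a monic polynomial $\theta(\mu_u)(T)\in\theta\frak Z[\mathbb X]$ that annihilates the corresponding general element $\theta^{u(L)}u$ of $\theta^{u(L)}u(L)$, so the genuine minimal polynomial of $\theta^{u(L)}u$ over $\theta\frak Z$ divides $\theta(\mu_u)$; that is, $\theta$ carries $\mu_u$ to a multiple of the minimal polynomial of the specialized general element. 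For (iv), Zassenhaus's discriminant ideal $\frak d\subseteq\frak Z$ is generated by the determinants $\det\bigl(\mathrm{trd}(e_ie_j)\bigr)$ of the reduced trace form evaluated on $\frak Z$-bases $\{e_i\}$ of $u(L)$; because $Q(u(L))$ is a central division algebra, its reduced trace form is non-degenerate, so this determinant is a nonzero element of $Q(\frak Z)$, and freeness of $u(L)$ over $\frak Z$ places it in $\frak Z$. As $\frak Z$ is an integral domain, $\frak d$ is a nonzero ideal, i.e.\ the discriminant ideal never vanishes.

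For (iii) one invokes the standard dichotomy for $\theta\frak Z$-orders in the specialized algebra. The algebra $\theta^{u(L)}u(L)$ that realizes the maximal dimension $p^{2m}$ of an indecomposable representation is, at a suitable normal point of the Zassenhaus variety (furnished by Propositions~3.7--3.9), the full base change; if it is separable over $\theta\frak Z$ then, being a $\theta\frak Z$-algebra of rank $p^{2m}$ with a single block, it is Azumaya, hence centrally simple of dimension $p^{2m}$ over the residue field, while conversely a central simple algebra of dimension $p^{2m}$ over a field is automatically separable. In that case the reduced characteristic polynomial has degree exactly $p^m$ on both sides, and $\theta$ sends $\mu_u$ to a polynomial of the same degree and the same separable splitting type, so $\theta(\mu_u)$ is already the minimal polynomial of $\theta^{u(L)}u(L)$ over $\theta\frak Z$ rather than a proper multiple.

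I expect the main obstacle to be part (iii): making precise the phrase ``algebra of maximal dimension of an indecomposable representation'' and pinning down exactly over which localization --- equivalently, at which points of the Zassenhaus variety --- the equivalence ``separable $\iff$ centrally simple of dimension $p^{2m}$'' actually holds, since away from the Azumaya locus the reduced enveloping algebras $u_\chi(L)$ need not be simple and the clean equivalence fails; one is forced to restrict to the generic point or to the smooth locus supplied by Propositions~3.8 and~3.9. Parts (i), (ii) and (iv) are essentially formal once Proposition~3.14 is available, and may be referenced to theorem~1 in [ZH] and chapter~6 in [SF].
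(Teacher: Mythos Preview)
Your treatment of (i), (ii) and (iv) is essentially the paper's argument: base change via freeness of rank $p^{2m}$ from Proposition~3.14, specialization of the minimal polynomial of a general element (the paper produces its general element through the Brauer--Albert theorem rather than by adjoining indeterminates, but the mechanism ``$\theta$ of an annihilating monic polynomial annihilates the image, hence is a multiple of the minimal polynomial downstairs'' is identical), and nonvanishing of the discriminant ideal via nondegeneracy of the trace form on the central simple algebra $Q(u(L))$ over $Q(\frak Z)$.

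For (iii) you take a different and unnecessarily indirect route. You pass through the Azumaya locus and the normal/smooth points of the Zassenhaus variety supplied by Propositions~3.7--3.9, and then worry at the end that the equivalence might only hold over that locus. The paper's argument avoids geometry altogether: one observes that any irreducible $\theta^{u(L)}u(L)$-module pulls back along $\theta^{u(L)}$ to an irreducible $u(L)$-module and therefore has dimension at most $p^m$; Jacobson's density theorem then forces a separable (hence semisimple) $\theta^{u(L)}u(L)$ of the stated maximal dimension to be a single full matrix block of dimension $p^{2m}$, i.e.\ centrally simple, while the converse direction is immediate since central simple algebras are separable. This works for an arbitrary specialization $\theta$, so the restriction to smooth or normal points that you anticipate is not needed, and the obstacle you flag does not arise. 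The final clause of (iii) then follows because both minimal polynomials have the same degree $p^m$, so the divisibility from (ii) is an equality.
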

\begin{proof}
(i)Because $\theta \frak Z$ becomes an extension field of $F$, we see that $\theta^{u(L)}u(L)$ has a dimension over $\theta \frak Z$ obviously.\newline

However the rank of $u(L)$ over $\frak Z$ is nothing but $p^{2m}$ by virtue of  the preceding proposition. Hence we have that the rank of $\theta^{u(L)}u(L)$ over $\theta \frak Z$ must be at most $p^{2m}$.\newline

(ii)By virtue of Brauer- Albert theorem and P-B-W theorem , we have general elements $\alpha$ and $\alpha'$ which are conjugate to each other such that 

$\{\alpha^i (\alpha')^j  \vert 0\leq i,j < p^m \}$  constitutes a $Q(\frak Z)$ basis  for $Q(u(L))$ and such that the minimal polynomial of $\alpha$  over $\frak Z$ becomes of the form $P(\mathbb X)=  \mathbb X^{p^m}+ \sum_{i=1}^{p^m} P_i(x_1,\cdots, x_n)\mathbb X^{p^m- i}$, where $P_i$ denotes a homogeneous polynomial of degree $i$ contained in $\frak Z [x_1,\cdots, x_n]$ for an $F$-basis  $\{x_1,\cdots, x_n\}$ of $L$.\newline

It follows that $P(\theta \alpha)=0$ is obtained from $P(\alpha)=0$, whence the minimal polynomial polynomial of $\theta \alpha$ over $\theta \frak Z$ divides $\theta P$.\newline

(iii) Recall that  the $\theta Z$-algebra  $\theta^{u(L)}u(L)$ is defined to be separable  if  for every field extension $K$ of $\theta \frak Z$ , the algebra   $\theta^{u(L)}u(L) \otimes _{\theta \frak Z} K$ is semisimple. Because an irreducible $\theta^{u(L)}$-module gives rise to an irreducible $u(L)$- module by pull-back, we see that its maximal dimension is $p^m$\newline

So by the Jacobson's density theorem the algebra  $\theta^{u(L)} u(L)$ is seperable over $\theta \frak Z$ if and only if it becomes centrally simple of dimension $p^{2m}$ over $\theta \frak Z$.\newline

 No doubt the latter assertion holds by (i).\newline

(iv)In the proof of (i), we note that $\alpha$ is a primitive element of  a maximal separable subfield of  $Q(u(L))$.\newline

So the minimal  polynomial of $\alpha$ over $Q(\frak Z)$ becomes of degree $p^m$.
We see that the second highest coefficient of this minimal polynomial is nothing but $tr(\alpha)$.\newline

We now generalize such a linear functional $tr$ to any $R$-ring $M$ in case $M$ is  of rank $r$ over $R$ as a free $R$- module. We know that  the $\textit{discriminant ideal}$  is defined to be  the ideal $\frak D_{M,R,tr}$ of $R$ generated by the set of all  the determinants $\vert tr(u_i, v_j)\vert$, where $u_i, v_j\in M$ and  $1\leq i,j \leq r$.\newline

 If we use the Laplace development of the determinant concerned, then we have that for the given $R$- basis $\{b_1,b_2,\cdots,b_r\}$ of $M$, $\frak D_{M,R,tr}=R\vert tr (b_i, b_j) \vert.$\newline

So it is clear that  $\frak D_{M,R,tr}\neq 0$ if and only if $tr(-, -):= tr(-\cdot -)$ is nondegenerate.\newline
It is known that  if  the quotient  algebra $Q(M)$ over $Q(R)$ is defined  and is seperable over $Q(R)$, then the trace form  of $M$ over $R$ is nondegenerate.\newline

 So we have $\frak D_{u(L),M,tr} \neq 0$ because $Q(u(L))$ in particular becomes 
centrally simple and hence separable over $Q(\frak Z)$.

\end{proof}

\begin{prop}
We assume that $\theta$ is given as a specialization over $F$ of $\frak Z$ onto an extension field $\theta \frak Z$ of  $F$; \newline

we then have $\theta (\frak D_{u(L),\frak Z,tr})\neq 0$  if and only if $\theta^{u(L)} u(L)$ becomes centrally simple of  dimension $p^{2m}$ over $\theta \frak Z$ for an indecomposable representation $\theta^{u(L)}$. 

\end{prop}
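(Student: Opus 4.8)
The plan is to read this proposition as the fibrewise sharpening of part (iv) of the preceding proposition, and to prove it by combining parts (i) and (iii) of that proposition with an analysis of how the specialization $\theta$ acts on the discriminant. First I would invoke the freeness of $u(L)$ over $\frak Z$ (of rank $r:=p^{2m}$, by the proposition on $Q(u(L))$) to fix a $\frak Z$-basis $\{b_1,\dots,b_r\}$ of $u(L)$. As in the proof of the preceding proposition the discriminant ideal is then principal, $\frak D_{u(L),\frak Z,tr}=\frak Z\cdot d$ with $d:=\det\big(tr(b_ib_j)\big)\in\frak Z$ and $d\neq 0$, so $\theta(\frak D_{u(L),\frak Z,tr})=\theta\frak Z\cdot\theta(d)$ and the claim reduces to: $\theta(d)\neq 0$ if and only if $A:=\theta^{u(L)}u(L)$ is centrally simple of dimension $p^{2m}$ over $\theta\frak Z$.

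The next step is to see that $\theta\circ tr$ descends to $A$. Writing $\bar b_i:=\theta^{u(L)}(b_i)$ and using that $\theta^{u(L)}$ identifies $A$ with $u(L)\otimes_{\frak Z}\theta\frak Z=u(L)/(\ker\theta)u(L)$, while $tr$ is $\frak Z$-linear with $tr\big(u(L)\big)\subseteq\frak Z$, one has $tr\big((\ker\theta)u(L)\big)\subseteq\ker\theta$; hence $\theta\circ tr$ factors through a $\theta\frak Z$-linear functional $tr_A:A\to\theta\frak Z$, and $\beta(x,y):=tr_A(xy)$ is a symmetric $\theta\frak Z$-bilinear form on $A$ with $\theta(d)=\det\big(\beta(\bar b_i,\bar b_j)\big)$. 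By part (i) of the preceding proposition $\dim_{\theta\frak Z}A\leq p^{2m}$; if this is strict, the spanning set $\{\bar b_i\}$ is $\theta\frak Z$-linearly dependent, so the Gram matrix $\big(\beta(\bar b_i,\bar b_j)\big)$ is singular and $\theta(d)=0$, while $A$ is not centrally simple of dimension $p^{2m}$ for dimension reasons; thus both sides of the equivalence fail and it holds trivially in this case.

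It remains to treat the case $\dim_{\theta\frak Z}A=p^{2m}$, where $\{\bar b_i\}$ is a $\theta\frak Z$-basis and $\theta(d)$ is exactly the discriminant of $\beta$. For one implication, suppose $A$ is centrally simple of dimension $p^{2m}$: then by the final clause of part (iii) of the preceding proposition $\theta$ carries the minimal polynomial of a general element onto the minimal polynomial of its image, which for a primitive element of a maximal subfield is its reduced characteristic polynomial, so $tr_A$ is the reduced trace of $A$ and $\beta$ its reduced trace form; the latter is nondegenerate, hence $\theta(d)\neq 0$. For the converse, suppose $\theta(d)\neq 0$: then $\beta$ is nondegenerate, and since $\beta$ is the specialization of the reduced trace form of the central simple algebra $Q(u(L))$ in which $u(L)$ is a $\frak Z$-order, $\theta$ lies in the Azumaya (unramified) locus of that order, which forces $A$ to be Azumaya over the field $\theta\frak Z$, hence central simple, of dimension $p^{2m}$ by full rank. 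Equivalently, $\theta(d)\neq 0$ makes $A$ a separable $\theta\frak Z$-algebra, and part (iii) of the preceding proposition identifies such algebras with the centrally simple ones of dimension $p^{2m}$.

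The step I expect to be the main obstacle is precisely the converse, that a nondegenerate specialized trace form forces $A$ to be central simple. In characteristic $p$ one cannot shortcut this through semisimplicity: a nondegenerate trace pairing alone only makes $A$ a Frobenius algebra, and a semisimple algebra over an imperfect field need not have nondegenerate trace form. The argument has to use that $\beta$ is inherited from the reduced trace of the ambient central simple algebra $Q(u(L))$ and that $u(L)$ is a $\frak Z$-order in it, so that the discriminant genuinely computes the ramification locus; put differently, one must work with the notion of separable algebra exactly as in the statement of the preceding proposition rather than the weaker notion of semisimplicity. Once this identification is made, parts (i) and (iii) of the preceding proposition close the argument in both directions, and the full statement follows from $\theta(\frak D_{u(L),\frak Z,tr})=\theta\frak Z\cdot\theta(d)$.
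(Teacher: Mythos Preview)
Your overall strategy is sound and both implications land, but the route you take for the converse direction is genuinely different from the paper's and it is worth seeing what each approach buys.

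For $(\Leftarrow)$ you and the paper do essentially the same thing: invoke part (iii) of the preceding proposition so that $\theta$ preserves minimal polynomials, identify $tr_A$ with the reduced trace of the centrally simple algebra $A$, and conclude that the reduced trace form is nondegenerate.

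For $(\Rightarrow)$ the paper does \emph{not} appeal to Azumaya or ramification theory at all. Instead it exploits a characteristic-$p$ identity specific to this situation: the trace on $u(L)$ satisfies $tr(a^pb^p)=(tr(ab))^p$, and this identity survives specialization, giving $\theta tr(u^pv^p)=(\theta tr(uv))^p$ on $A$. Passing to $A/\frak A$, where $\frak A$ is the radical of the form, the induced form $\overline{\theta tr}$ is nondegenerate and still satisfies $\overline{\theta tr}(\bar u^{p^j},\bar v^{p^j})=(\overline{\theta tr}(\bar u,\bar v))^{p^j}$. Now any element $\bar x$ of the Jacobson radical of $A/\frak A$ is nilpotent, so $\bar x^{p^j}=0$ for some $j$, hence $(\overline{\theta tr}(\bar x,\bar v))^{p^j}=0$ for all $\bar v$, forcing $\bar x=0$ by nondegeneracy. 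This kills the radical directly and shows $A/\frak A$ is semisimple; since the same argument works after any base-field extension, $A/\frak A$ is separable, and then part (iii) of the preceding proposition forces $\frak A=0$ and $A$ centrally simple of dimension $p^{2m}$.

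So the paper's argument is entirely elementary and internal: the Frobenius-compatibility of the trace is the single extra ingredient that upgrades ``nondegenerate trace form'' (which, as you correctly note, only gives Frobenius) to ``vanishing Jacobson radical, stably under base change''. Your argument instead imports the general theorem that for an order in a central simple algebra the discriminant of the reduced trace cuts out exactly the non-Azumaya locus. That is a perfectly valid black box and makes the logic cleaner, but it is heavier machinery than what the paper actually uses, and it obscures the one concrete reason the converse works here in characteristic $p$.
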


\begin{proof}
($\Longleftarrow)$
By virtue of the preceding proposition 3.15, any minimal polynomial of $u(L)$ over $\frak Z$ is mapped onto a minimal polynomial of $\theta^{u(L)} u(L)$ over $\theta \frak Z$ under the hypothesis  that $\theta^{u(L)}u(L)$ becomes centrally simple of dimension $p^{2m}$ over $\theta \frak Z$, in which case we have $\theta (\frak D_{u(L),\frak,tr})= \frak D_{\theta ^{u(L)}u(L), \theta \frak Z,\theta tr}$ holds inevitably.\newline

 Because $\theta^{u(L)} u(L)$ is separable over $\theta \frak Z$, we have $\frak D_{\theta^{u(L)},\theta \frak Z,\theta tr} \neq 0$. We should remember that  the second highest coefficient of the minimal polynomial  of an element  over  a field  is  just  the trace of the element.\newline

($\Longrightarrow)$
Suppose  that $\theta (\frak D_{u(L),\frak Z, tr}) \neq 0$.  We have immediately that $tr(ab)= tr (ba), tr((ab)c)= tr(a(bc)),\forall {a,b,c\in u(L)}$ and $tr(a^pb^p)= (tr(ab))^p$. Furthermore we have that  $\forall u,v,w\in \theta^{u(L)} u(L)$, and   $\theta tr(uv)= \theta tr (vu),\theta tr((uv)w)= \theta tr(u(vw))$ and $\theta tr (u^pv^p)= \{\theta tr(uv)\}^p$  since $\theta$ is a specialization . \newline

By the given condition we have $\theta (\frak D_{u(L),\frak Z,tr})= \frak D_{\theta^{u(L)},\theta \frak Z, \theta tr} \neq 0$, so that  $\theta tr$ turns out to be  a  nondegenerate symmetric bilinear form $\overline {\theta tr}$ on $\theta^{u(L)}u(L)/\frak A$ over $\theta \frak Z$ \newline

satisfying $\overline {\theta tr}(\overline {u^p},\overline {v^p})= \{\overline {\theta tr}(\bar u, \bar v)\}^p$ and hence $\overline{\theta tr}(\overline u^{p^j},\overline v^{p^j}),$ \newline

$\forall \overline u, \overline v\in \theta^{u(L)}u(L)/\frak A$ and the dimension of $\theta (u(L))$ over $\theta \frak Z$ equals $p^{2m}$, where $\frak A$ indicates  a two-sided ideal of $\theta^{u(L)}u(L)$ consisting of all element $u\in \theta^{u(L)}u(L)$ which satisfies the condition $\theta tr (u,v)= 0, \forall v\in \theta^{u(L)}u(L).$\newline

Moreover  it is easy to see that $dim \theta^{u(L)}u(L)/\frak A= p^{2m}$ over $\theta \frak Z$ considering the proofs of the  above propositions. \newline

Next for any element $\bar x$ of the Jacobson (or Wedderburn) radical of $\theta^{u(L)}$, we have $\overline x^{p^j}= 0$ for some $j$. Hence  we have $\{\overline {\theta tr}(\bar x, \bar v)\}^{p^j}= \overline {\theta tr} (\bar x^{p^j},\bar v^{p^j})= 0$ , and so  $\forall \bar v\in \theta^{u(L)}u(L)/\frak A$, $\overline {\theta tr}(\bar x, \bar v)= 0$. \newline

We thus get $\bar x= 0$, and hence  it follows that  $\theta^{u(L)}u(L) $ is semisimple  over $ \theta \frak Z$. Such an argument  still holds  for  any extension  of the ground field $\theta \frak Z$ of $\theta^{u(L)}u(L)/\frak A$,and so $ \theta^{u(L)}u(L)/\frak A$ also becomes semisimple over $\theta \frak Z$,i.e., $\theta^{u(L)}u(L)/ \frak A$ is separable over $\theta \frak Z$.\newline

 By the way we know from the preceding proposition that the degree of a minimal polynomial of $\theta^{u(L)} \leq p^m$, which holds a fortiori for $\theta^{u(L)}u(L)/\frak A$.\newline

From proposition 3.15(iii),we have that $\theta^{u(L)}/\frak A$ becomes separable if and only if  it is centrally simple  of the same dimension $p^{2m}$ over $\theta \frak Z$.\newline

 Hence we get $\frak A=0$,i.e., $\theta^{u(L)}u(L)$ over $\theta \frak Z$ must be centrally simple of  dimension $p^{2m}$. \newline

We had better  also remember that  if the determinant $\vert tr(b_i, b_j) \vert$ as in the proof of  proposition 3.16, is nonzero, then $\{b_1, b_2,\cdots, b_r \}$ becomes a basis of  the free $R$-module $M$.
\end{proof}

\begin{cor}
An irreducible representation $\theta^{u(L)}$ of the Lie algebra $L$ has its representation space of dimension $p^{2m}$  if and only if $\theta^{u(L)}(\frak D_{u(L),\frak Z,tr})\neq 0.$
\end{cor}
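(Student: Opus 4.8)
The plan is to deduce this corollary directly from Proposition 3.17, together with the identification --- already used inside the proof of Proposition 3.15(iii) --- between irreducible $u(L)$-modules and irreducible modules over the relevant reduced quotient of $u(L)$. First I would recall that every irreducible representation $\theta^{u(L)}$ of $L$ factors through some reduced enveloping algebra $u_{\chi}(L)$, and that by the Jacobson density theorem an irreducible module realizing the maximal dimension forces the corresponding block of $\theta^{u(L)}u(L)$ to be a full matrix algebra; conversely, a quotient of $u(L)$ that is centrally simple of dimension $p^{2m}$ over $\theta \frak Z$ has a unique irreducible module, of dimension $p^m$, which pulls back to an irreducible $u(L)$-module. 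Thus ``$\theta^{u(L)}$ is irreducible of maximal dimension'' is equivalent to ``$\theta^{u(L)}u(L)$ is centrally simple of dimension $p^{2m}$ over $\theta \frak Z$''.

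Second, I would invoke Proposition 3.17 verbatim: for an indecomposable representation $\theta^{u(L)}$ --- in particular for an irreducible one --- one has $\theta^{u(L)}(\frak D_{u(L),\frak Z,tr})\neq 0$ precisely when $\theta^{u(L)}u(L)$ is centrally simple of dimension $p^{2m}$ over $\theta \frak Z$. Chaining this equivalence with the one from the previous paragraph yields the statement of the corollary: nonvanishing of the specialized discriminant ideal $\iff$ central simplicity of dimension $p^{2m}$ $\iff$ the representation space attains the stated dimension. For the reverse (non-maximal) direction I would also make explicit, citing the computation in the proof of Proposition 3.17, that if $\theta^{u(L)}u(L)$ has nonzero Jacobson radical or decomposes nontrivially, then $\overline{\theta tr}$ is degenerate on $\theta^{u(L)}u(L)/\frak A$, whence $\theta^{u(L)}(\frak D_{u(L),\frak Z,tr})=0$; this is the contrapositive that closes the ``only if'' part.

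The only point genuinely requiring care --- and the step I expect to be the main obstacle --- is the bookkeeping between the dimension of the \emph{algebra} $\theta^{u(L)}u(L)$, which is $p^{2m}$, and the dimension of the irreducible \emph{module} it acts on, which is $p^m$. The statement of the corollary records the former as ``the representation space of dimension $p^{2m}$,'' so in the write-up I would spell out that here $\theta^{u(L)}u(L)\cong M_{p^m}(\theta \frak Z)$ and that ``maximal dimension'' refers to this $p^{2m}$, consistently with the normalization used in Propositions 3.15--3.17 and in Proposition 3.14. Once this convention is fixed, no new calculation is needed: the corollary is exactly Proposition 3.17 restated for irreducible representations, and the proof reduces to assembling the two equivalences above.
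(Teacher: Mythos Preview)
Your approach is essentially the same as the paper's: the paper simply says the corollary is ``straightforward'' from the two preceding propositions, and you have correctly unpacked that into the chain ``nonvanishing of the specialized discriminant $\iff$ $\theta^{u(L)}u(L)$ centrally simple of dimension $p^{2m}$ $\iff$ the representation attains the maximal dimension.'' One bookkeeping note: what you call Proposition~3.17 is in fact Proposition~3.16 in the paper's numbering (the corollary itself is 3.17), so adjust your citations accordingly; otherwise your write-up, including the remark about the $p^{2m}$ algebra versus $p^m$ module convention, is on target.
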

\begin{proof}
By virtue of the preceding propositions3.15 and 3.16, our claim is straightforward.
\end{proof}

Now suppose that $V$ is any finite dimensional irreducible $L$- module and that  $V$ has its associated irreducible representation $\rho_ \chi : u(L) \rightarrow End_F (V)$ with a character $\chi \in L^{\ast} $ such that for any $x \in L$, $\rho_\chi (x)^p- \rho_\chi(x^{[p]})= \chi(x)^p id_V$. \newline

We  should note that $\rho_\chi (u(L))$ becomes dense in $End_{\Delta} (V)$ by dint of  Jacobson's  density theorem, where $\Delta:= End_{u(L)}(V).$    
By Schur's lemma we have $\Delta= F$.\newline

 By the general theory of modular Lie algebra $[Q(u(L): Q(\frak Z(u(L))]= p^{2m}$ for some integer $m$ with $2m \leq n$.\newline

 Here $\frak Z:= \frak Z (u(L))$ denotes the center of  $u(L)$ as before and  $Q$ denotes the relevant quotient  algebra of  noncommutative algebras $u(L)$ and the commutative algebra $\frak Z (u(L))$ respectively. So it follows that  $\rho_\chi (u(L))\cong u(L)/ker \rho_{\chi} \cong M_m(F)$,the full matrix algebra [SF]. \newline

Hence it is sometimes very convenient  for us to  express  the basis of  this factor algebra  related to classical type Lie algebra  by the representatives of the   form:$\{\otimes _{i=1}^ {2m} (B_i+ A_i)^{j}: 0\leq j \leq p-1$,where $B_i $  is an element in the CSA of $ L$ and $A_i$'s are elements of $u(L) \forall i=1,\cdots, 2m$.
We shall call such a form $\textit {Lee's basis}$ of the Lie algebra concerned.

\section{Four kinds of points}

Now let $L$ be a finite dimensional restricted Lie algebra with a $p$- mapping $[p]$ over an algebraically closed field  $F$  of nonzero characteristic $p$  and  with a basis $\{x_1,x_2, \cdots, x_n\}$ which is centerless and indecomposable.\newline

Suppose further that the center $\frak Z(u(L))$ of $u(L)$ has the Noether normalization form $F[x_1- x_1^{[p]},x_2- x_2^{[p]},\cdots,x_n-  x_n^{[p]},s_1,s_2,\cdots, s_k]$,

where $s_i$'s are algebraic over the field generated by algebraically independent elements $\{x_i- x_i^{[p]}: 1\leq i \leq n \}.$\newline

Any maximal ideal of $u(L)$ must contain a certain ideal of the form $\{\sum_{i=1}^n u(L)(x_i- x_i^{[p]}- \xi_i)\} + \sum_{j=1}^k u(L)(s_j- \mu_j)$, where $\xi$'s and $\mu_j$'s are some constants in $F$.\newline

We defined  four kinds of points  in [KY-4]  and shall  exhibit  some  explanation therein as follows:

\begin{definition}

We consider four possible cases and give names  associated with points $(\xi_1,\cdots \xi_n, \mu_1,\cdots, \mu_k)$  on Zassenhaus variety obtained from the center $\frak Z:= \frak Z(u(L))$ of $u(L)$.\newline

[I] If all $\xi_i=0$, then there may exist finitely many left maximal ideals $\rho_l$ containing $\{\sum_{i=1}^n u(L)(x_i- x_i^{[p]}- 0)\} + \sum_{j=1}^k u(L)(s_j- \mu_j)$, so that  $u(L)/\rho_l$ becomes $p$- representation modules for $L$ with dimension$\leq p^m$. We shall call such a point  $(0,\cdots 0, \mu_1,\cdots, \mu_k)$ a $p$- point.\newline

In particular if it is a $p$-point and its associated irreducible module has dimension $p^m$, then we call the point  a regular $p$-point.\newline

 [II] If not all $\xi_i$'s are zero,i.e.,  $\xi_i \neq 0$ for some $i$, then there are
finitely many left maximal ideals $\rho_l$ containing $\{\sum_{i=1}^n u(L)(x_i- x_i^{[p]}- \xi_i)\} + \sum_{j=1}^k u(L)(s_j- \mu_j)$, so that  $u(L)/\rho_l$ become  irreducible modules for $L$ with dimension$\leq p^m$. \newline

If all these irreducible modules become of dimension  $p^m$, then they are isomorphic $L$- irreducible modules.\newline

 In this case we call such a point $(\xi_1,\cdots, \xi_n, \mu_1,\cdots, \mu_k)$ a  regular point. 
We shall denote the set of all  regular points by $R(L,p,\chi)$.\newline

For all left maximal ideals $\rho_l$ containing $\{\sum_{i=1}^n u(L)(x_i- x_i^{[p]}- \xi_i)\} + \sum_{j=1}^k u(L)(s_j- \mu_j)$, $u(L)/\rho_l$ may have  dimension less than $p^m$  and are possibly nonisomorphic.\newline

 We shall call such a point subregular-point. We shall denote the set of all subregular points by $S(L.p,\chi)$.
We should keep track of  the relation of a point and its associated character $\chi$.

\end{definition}

For  our examples of such points mentioned above, we recapitulate  important facts  relating to $A_l,B_l,C_l$ and $D_l$-type modular Lie algebras.\newline

The $A_l$-type Lie algebra  over $\mathbb C$ has its root system $\Phi $= $\{\epsilon_i- \epsilon_j | 1 \leq i \neq j \leq l+1 \}$,where $\epsilon_i$'s are orthonormal unit vectors in the Euclidean space $\mathbb R^{l+1}$.The base of  $\Phi$ is equal to $\{\epsilon_i - \epsilon_{i+1}| 1\leq i \leq l\}$.\newline

We let $L$ be an $A_l$-type simple Lie algebra over an algebraically closed field  $F$ of  characteristic $p \geq 7$.

For a root $\alpha \in \Phi,$ we put $g_\alpha :=  x_\alpha^{p-1}- x_{- \alpha}$and $w_\alpha:= (h_\alpha+ 1)^2+ 4x_{-\alpha}x_\alpha$.\newline

We have seen from [KC] and [KY-4] that  any $A_l$-type modular Lie algebra  over $F$ becomes a Park's  Lie algebra. However we would like to  specfy the proof  or prove  it in a different way. 
We let $L:= L(G)$ be any $A_l$- type classical Lie algebra over $F$ associated with an algebraic group $G$.

\begin{prop}

Let  $\alpha$  be any  root  in the root system  $\Phi$ of $L .$ If $\chi(x_\alpha)$ $\neq0,$ then $dim_F$$ \rho_{\chi}$$(u(L))$ = $p^{2m},$ where $ [Q(u(L)):Q(\mathfrak{Z})]$=$p^{2m}$=$p^{n-l}$ with $\mathfrak{Z}$ the center of $u(L)$  and  $Q$  denotes  the quotient algebra. \newline

So we claim that  the  simple module corresponding to this representation has $p^m$ as its  dimension.

\end{prop}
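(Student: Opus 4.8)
The plan is to convert the dimension assertion into the non-vanishing of (the image of) the Zassenhaus discriminant ideal of $u(L)$, and then to deduce that non-vanishing from the single hypothesis $\chi(x_\alpha)\neq 0$ by a reduction to the rank-one subalgebra attached to $\alpha$, exploiting the elements $g_\alpha$ and $w_\alpha$.

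First the reduction. Since $V$ is irreducible over the algebraically closed field $F$, Jacobson density and Schur's lemma give $\rho_\chi(u(L))\cong u(L)/\ker\rho_\chi\cong M_{\dim_F V}(F)$, as recorded just before this section; and by the general theory (Proposition 3.14, $u(L)$ being free of rank $p^{2m}=p^{n-l}$ over its center, and Proposition 3.12, every irreducible being a quotient of a baby Verma module $Z_\chi(\lambda)$ of dimension $p^m$) one has $\dim_F V\le p^m$. Hence the assertion $\dim_F\rho_\chi(u(L))=p^{2m}$ is equivalent to $\dim_F V=p^m$, i.e.\ to $\rho_\chi(u(L))$ being centrally simple of dimension $p^{2m}$ over $F$; and by Corollary 3.17, together with Propositions 3.15 and 3.16, this is in turn equivalent to $\rho_\chi(\frak D_{u(L),\frak Z,tr})\neq 0$. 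So it suffices to prove this last non-vanishing.

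Second, the engine. In a classical-type Lie algebra the root vector $x_\alpha$ is ad-nilpotent and satisfies $x_\alpha^{[p]}=0$; here $p\ge 7$ guarantees the Chevalley relations and the $sl_2$-triples $\{x_\alpha,x_{-\alpha},h_\alpha\}$ behave exactly as in characteristic zero. Hence the defining identity of a $\chi$-representation gives $\rho_\chi(x_\alpha)^p=\rho_\chi(x_\alpha^{[p]})+\chi(x_\alpha)^p\,\mathrm{id}_V=\chi(x_\alpha)^p\,\mathrm{id}_V$, so that $\bigl(\rho_\chi(x_\alpha)-\chi(x_\alpha)\,\mathrm{id}_V\bigr)^p=0$; since $\chi(x_\alpha)\neq 0$, the operator $\rho_\chi(x_\alpha)$ is invertible, with inverse a polynomial in it. I would then restrict $\chi$ to $\frak s_\alpha:=\langle x_\alpha,x_{-\alpha},h_\alpha\rangle\cong sl_2(F)$: the classical description of $u_\chi(sl_2)$ in the regular-nilpotent case ($p>3$, $\chi(x_\alpha)\neq 0$) shows $u_\chi(\frak s_\alpha)\cong M_p(F)$. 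Concretely, the relation $\rho_\chi(g_\alpha)=\rho_\chi(x_\alpha)^{p-1}-\rho_\chi(x_{-\alpha})$ together with the invertibility of $\rho_\chi(x_\alpha)$ determines $\rho_\chi(x_{-\alpha})$, while $w_\alpha=(h_\alpha+1)^2+4x_{-\alpha}x_\alpha$ is a normalization of the $\frak s_\alpha$-Casimir, lies in the center of $u(\frak s_\alpha)$, and must act on $V$ by a nonzero scalar; consequently the center of $u_\chi(\frak s_\alpha)$ collapses to $F$, $u_\chi(\frak s_\alpha)$ is the full $p\times p$ matrix algebra, and the discriminant ideal of $u(\frak s_\alpha)$ does not vanish at $\chi$.

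Third --- the step I expect to be the main obstacle --- I would propagate this rank-one fact to all of $L$. Using the triangular decomposition $\frak g=\frak n^-\oplus\frak h\oplus\frak n^+$, the fact that in type $A_l$ all roots form a single Weyl orbit, and the Poincar\'e--Birkhoff--Witt / Lee's basis presentation $\{\otimes_{i=1}^{2m}(B_i+A_i)^j:0\le j\le p-1\}$ of the factor algebra, one moves along a chain of subalgebras $\frak s_\alpha\subset\cdots\subset\frak g$, at each stage splitting off an $M_p(F)$-factor (a modular ``rank-one reduction'' in the spirit of the Friedlander--Parshall and Premet constructions) while keeping careful track of the induced $p$-characters, until $\dim_F V$ is forced to equal $p^m=\frac{1}{2}(n-l)$, the number of positive roots of $A_l$. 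Equivalently, one argues that $\rho_\chi(\frak D_{u(L),\frak Z,tr})\neq 0$ because $\frak D_{u(L),\frak Z,tr}$ is detected, up to radical, by the central images of the $w_\alpha$'s, each of which is handled by the rank-one computation. The delicate part is the bookkeeping: controlling the reduced enveloping algebras of the successive centralizer-type subalgebras and verifying that the dimension count lands on exactly $p^m$ and not lower --- this is where Propositions 3.6, 3.12, 3.14, 3.15, 3.16 and Corollary 3.17 supply the needed structural input.
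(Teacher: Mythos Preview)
Your reduction in the first paragraph (via density, Schur, and Corollary~3.17) is correct, and the $\frak s_\alpha$ analysis in the second paragraph is mostly fine, though your assertion that $w_\alpha$ ``must act on $V$ by a nonzero scalar'' is wrong as stated: $w_\alpha$ lies in the center of $u(\frak s_\alpha)$, not of $u(L)$, so on the $L$-module $V$ it need not act by a scalar at all. What you actually get from $\chi(x_\alpha)\neq 0$ is that $u_\chi(\frak s_\alpha)\cong M_p(F)$, which only controls the $\frak s_\alpha$-subalgebra of $\rho_\chi(u(L))$.

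The real gap is your third step. You describe ``moving along a chain of subalgebras $\frak s_\alpha\subset\cdots\subset\frak g$, at each stage splitting off an $M_p(F)$-factor,'' but you never specify the chain, never verify that such splittings exist, and never show that the induced $p$-characters at each stage remain in the regime where the rank-one computation applies. The hypothesis gives $\chi(x_\alpha)\neq 0$ for a \emph{single} root $\alpha$; there is no reason the restriction of $\chi$ to other $sl_2$-triples should be nonzero, so the ``each of which is handled by the rank-one computation'' claim has no support. Likewise, the assertion that $\frak D_{u(L),\frak Z,tr}$ is ``detected, up to radical, by the central images of the $w_\alpha$'s'' is neither proved here nor available from the cited propositions. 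As written, the propagation step is a hope, not an argument.

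The paper proceeds entirely differently and avoids the discriminant route. It fixes $\alpha=\epsilon_1-\epsilon_2$ by Weyl conjugacy, writes down an explicit set $\frak B$ of $p^{2m}$ monomials (the Lee's basis) in $u(L)/\frak M_\chi$ built from carefully chosen elements $B_i+A_\beta$, and then proves $\frak B$ is linearly independent by contradiction: a minimal dependence relation is reduced, via conjugation by $x_\alpha$ and the structure of $w_{\epsilon_1-\epsilon_2}$, to the form $x_{\epsilon_1-\epsilon_2}K+K'\in\frak M_\chi$ with $K,K'$ commuting with $x_\alpha$, and a sequence of multiplications and subtractions forces $K\in\frak M_\chi$. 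The entire argument is a hands-on linear-independence computation; no chain of subalgebras and no discriminant-ideal criterion is used.
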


\begin{proof}

 We may put $\alpha=\epsilon_1-\epsilon_2$ since  there is  only one root  length  for $L$ and all roots of the same length are conjugate under the Weyl group of $\Phi$.\newline

We put $B_i:= b_{i1}h_{\epsilon_1- \epsilon_2}+\cdots + b_{il}h_{\epsilon_l- \epsilon_{l+1}} $ as in [KC],[KY-4] . If we put  the kernel of $\rho_{\chi}$ as $\frak M_{\chi}$, then we  contend  that we have a  basis in $u(L)/\frak M_{\chi}$, $\frak B:= \{(B_1+ A_{\epsilon_1- \epsilon_2})^{i_1}\otimes (B_2+A_{\epsilon_2 - \epsilon_1})^{i_2}\otimes (\otimes_{j=3}^{l+1}(B_j+ A_{\epsilon_1- \epsilon_j})^{i_j})\otimes(\otimes_{j=3}^{l+1}(B_{l-1+j}+ A_{\epsilon_j- \epsilon_1})^{i_{l-1+j}})\otimes (\otimes_{j=3}^{l+1}(B_{2l-2+j}+ A_{\epsilon_2- \epsilon_j})^ {i_{2l-2+j}})\otimes (\otimes _{j=3}^{l+1}(B_{3l-3+j}+ A_{\epsilon_j- \epsilon_2})^{i_{3l-3+j}}\otimes \cdots \otimes (B_{2m-1}+ A_{\epsilon_l- \epsilon_{l+1}})^{i_{2m-1}}\otimes (B_{2m}+ A_{\epsilon_{l+1}-\epsilon_l})^{i_{2m}}\}$ for $0 \leq i_j \leq p-1$, \newline

where we put  \newline

$A_{\epsilon_1-\epsilon_2}= x_\alpha= x_{\epsilon_1- \epsilon_2}$,\newline

$ A_{\epsilon_2- \epsilon_1}=c_{\epsilon_2- \epsilon_1}+ (h_{\epsilon_1- \epsilon_2}+ 1 )^2+ 4x_{\epsilon_2-\epsilon_1}x_{\epsilon_1- \epsilon_2}= c_{\epsilon_2- \epsilon_1}+ w_{\epsilon_1- \epsilon_2},$\newline

$A_{\epsilon_2-  \epsilon_3}=     x_{\epsilon_1- \epsilon_3}(c_{\epsilon_2- \epsilon_3}+ x_{\epsilon_2- \epsilon_3}x_{-(\epsilon_2- \epsilon_3)}\pm x_{\epsilon_1- \epsilon_3}x_{-(\epsilon_1- \epsilon_3)})$, \newline

$A_{\epsilon_2- \epsilon_k}= x_{\epsilon_3- \epsilon_k}
(c_{\epsilon_2- \epsilon_k}+ x_{\epsilon_2- \epsilon_k}x_{-(\epsilon_2- \epsilon_k)}\pm x_{\epsilon_1- \epsilon_k}x_{-(\epsilon_1- \epsilon_k)})  $ \newline

if $k\neq 1$, \newline

$A_{ \epsilon_3- \epsilon_1}= x_{ \epsilon_3- \epsilon_2}(c_{ \epsilon_3- \epsilon_1}+ x_{\epsilon_2- \epsilon_3}x_{-(\epsilon_2- \epsilon_3)}\pm x_{\epsilon_1 -\epsilon_3}x_{-(\epsilon_1- \epsilon_3)}) $,\newline

 $A_{-(\epsilon_1- \epsilon_k)}= x_{-(\epsilon_3- \epsilon_k)}(c_{-(\epsilon_1- \epsilon_k)}+ x_{\epsilon_2- \epsilon_k }x_{-(\epsilon_2- \epsilon_k)}\pm x_{\epsilon_1- \epsilon_k}x_{-(\epsilon_1- \epsilon_k)}) $, \newline

with the signs chosen so that they  may commute with $x_\alpha$ and with $c_\beta\in F$ chosen so that $A_{\epsilon_2-\epsilon_1}$ and parentheses are invertible.\newline
For any other root $\beta$, we put $A_\beta= x_\beta^2$ or $x_\beta^3 $ if possible.\newline

Otherwise we make use of the parentheses(      ) again used for designating $A_{-\beta}$. So in this case we put $A_\beta= x_\gamma^2 $       
 or $ x_\gamma^3  $ attached to these (      ) so that  $x_\alpha$ may commute with $A_\beta$.\newline

We may see without difficulty that $\frak B$ is a linearly independent set in $U(L)$ by virtue of P-B-W theorem.\newline

We shall prove that a nontrivial linearly dependent  equation leads to absurdity. We assume first that we have a dependence equation which is of least degree with respect to $h_{\alpha_j}\in H$ and the number of whose highest  degree terms is also least.\newline

In case it is conjugated by $x_\alpha$, then there arises a nontrivial dependence equation of lower degree than the given one,which contradicts to our assumption.\newline

Otherwise  it reduces to the following form, which we have only to prove its absurdity:\newline

 $(\ast) x_{\epsilon_{1}-\epsilon_{2}}$$K$ + $K'$ $\in$$ \mathfrak{M}_\chi$ , where $K$ and $K'$  commute with $x_{\alpha}$ modulo $\frak {M}_\chi $.\newline

 We thus deal with and assume  $x_{\epsilon_{2}-\epsilon_{1}}$$x_{\epsilon_{1}-\epsilon_{2}}$$K$ + $x_{\epsilon_{2}-\epsilon_{1}}$$K'$ $\in$ $\mathfrak{M}_\chi$.\newline

From $w_{\epsilon_1- \epsilon_2}:=(h_{\epsilon_1- \epsilon_2}+ 1)^2 +4 x_{\epsilon_2- \epsilon_1}x_{\epsilon_1- \epsilon_2}\in $ the center of  $u(\frak{sl}_2(F)), $ we get $ 4^{-1}\{w_{\epsilon_1- \epsilon_2}- (h+ 1)^2\}K+ x_{\epsilon_2- \epsilon_1}K' \equiv 0 $  modulo $\frak M_\chi$.\newline

If $x_{\epsilon_2- \epsilon_1}^p\equiv c $ which is a constant,then \newline

$(\ast \ast)4^{-1}x_{\epsilon_2-\epsilon_1}^{p-1}\{w_{\epsilon_1- \epsilon_2}- (h_{\epsilon_1- \epsilon_2}+ 1)^2\}K+ cK'\equiv 0 $\newline

 is obtained.
\newline
From $(\ast),(\ast \ast)$, we have\newline

 $4^{-1}x_{\epsilon_2- \epsilon_1}^{p-1}\{w_{\epsilon_1- \epsilon_2}- (h_{\epsilon_1- \epsilon_2}+ 1)^2\}K- cx_{\epsilon_1- \epsilon_2}K\newline
\equiv 0$ modulo $\frak M_\chi$.\newline

Multiplying $x_{\epsilon_1- \epsilon_2}^{p-1}$ to this equation,we obtain \newline

 $(\ast \ast \ast)4^{-1}x_{\epsilon_1- \epsilon_2}^{p-1}x_{\epsilon_2- \epsilon_1}^{p-1}\{w_{\epsilon_1- \epsilon_2}- (h_{\epsilon_1- \epsilon_2}+ 1)^2\}K- cx_{\epsilon_1- \epsilon_2}^pK\equiv 0.  $\newline

By making use of $w_{\epsilon_1- \epsilon_2}$, we may deduce from $(\ast \ast \ast)$ an equation of the form \newline
( a polynomial of degree $\geq 1$ with respect to  $ h_{\epsilon_1- \epsilon_2})K- cx_{\epsilon_1- \epsilon_2}^pK\equiv 0.   $\newline

Finally if we use conjugation and subtraction consecutively,then we are led to a  contradiction $K\in \frak M_\chi.$\newline
It may be necessary  for us to change the role of $K$ and $K'$ to obtain the absurdity alike.\newline

\end{proof}

 \begin{prop}
Suppose that $\chi$ is  a character  of any  simple $L$-module with $\chi(h_\alpha)\neq 0 $ for some $\alpha \in$ the base of $\Phi$,where $h_\alpha $ is an  element  in  the Chevalley basis of $L$ such that $Fx_\alpha+ Fx_{-\alpha}+F h_\alpha= \frak {sl}_2 (F) $ with   $[x_\alpha,x_{-\alpha}]= h_\alpha \in H $(a CSA of  $L)$.\newline

We then have that  the dimension of any simple $L$-module  with character $\chi= p^m= p^{(n-l) \over 2}$,where   $  n$= dim $L= 2m+ l$ for $ H $ with dim $  H= l$.
\end{prop}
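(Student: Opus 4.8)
The plan is to reduce Proposition 4.4 to Proposition 4.3 by a coadjoint conjugation. Since $u_{\chi}(L)$ and $u_{g\cdot\chi}(L)$ are isomorphic as $F$-algebras for every $g\in G$ (the automorphism $\mathrm{Ad}(g)$ of $u(L)$ carries the ideal defining $u_{\chi}(L)$ onto the one defining $u_{g\cdot\chi}(L)$, a fact already used tacitly in the Jordan decomposition discussion of Section~3), the multiset of dimensions of simple $L$-modules with $p$-character $\chi$ is unchanged if $\chi$ is replaced by any $G$-conjugate. Hence it suffices to produce $g\in G$ with $(g\cdot\chi)(x_{\beta})\neq 0$ for some root $\beta\in\Phi$ and then invoke Proposition 4.3 for the character $g\cdot\chi$.

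To produce such a $g$, I would work inside the root $\mathfrak{sl}_2$. Let $\mathfrak{sl}_2(\alpha):=Fx_{\alpha}\oplus Fh_{\alpha}\oplus Fx_{-\alpha}$, the Lie algebra of the copy of $SL_2$ inside $G$ attached to the simple root $\alpha$ (a genuine $\mathfrak{sl}_2$ since the characteristic is $\geq 7$ as in Proposition 4.3). The hypothesis $\chi(h_{\alpha})\neq 0$ forces the restriction $\chi|_{\mathfrak{sl}_2(\alpha)}$ to be a nonzero linear functional. Using the nondegenerate trace form on $\mathfrak{sl}_2(\alpha)$ to identify $\mathfrak{sl}_2(\alpha)^{\ast}$ with $\mathfrak{sl}_2(\alpha)$, write $\chi|_{\mathfrak{sl}_2(\alpha)}=\mathrm{tr}(X\,\cdot\,)$ for a nonzero traceless $2\times 2$ matrix $X$. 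By $\mathrm{Ad}$-invariance of the trace form, $(g\cdot\chi)(x_{\pm\alpha})$ equals, up to a nonzero scalar from the normalization, an off-diagonal entry of $\mathrm{Ad}(g)X=gXg^{-1}$ for $g\in SL_2(\alpha)$. Since $X\neq 0$ is not a scalar matrix (the characteristic is odd), not every conjugate $gXg^{-1}$ is diagonal; choosing $g$ so that $gXg^{-1}$ has a nonzero off-diagonal entry yields $(g\cdot\chi)(x_{\alpha})\neq 0$ or $(g\cdot\chi)(x_{-\alpha})\neq 0$, and both $\alpha$ and $-\alpha$ lie in $\Phi$.

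With such a $g$ in hand, Proposition 4.3 applied to the character $g\cdot\chi$ shows that every simple $L$-module with $p$-character $g\cdot\chi$ has dimension $p^{m}$, where $[Q(u(L)):Q(\mathfrak{Z})]=p^{2m}=p^{n-l}$; transporting back along the isomorphism $u_{\chi}(L)\cong u_{g\cdot\chi}(L)$, the same holds for $\chi$. Since $n=\dim L=2m+l$ and $\dim H=l$, this is precisely $\dim V=p^{m}=p^{(n-l)/2}$ for every simple $L$-module $V$ with character $\chi$.

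I expect the only genuinely delicate point to be the $\mathfrak{sl}_2$ conjugation lemma, namely that a nonzero traceless matrix over an algebraically closed field of odd characteristic always has a conjugate with a nonzero off-diagonal entry — this is exactly where $p\neq 2$ is used — together with the bookkeeping needed to see that the $SL_2(\alpha)$-orbit of $\chi$ sits inside its $G$-orbit, so that Proposition 4.3 really applies. Alternatively, one can avoid the reduction and imitate the proof of Proposition 4.3 directly: since $h_{\alpha}^{[p]}=h_{\alpha}$ in a Chevalley basis, $h_{\alpha}^{p}-h_{\alpha}\equiv\chi(h_{\alpha})^{p}\pmod{\mathfrak{M}_{\chi}}$, whence $h_{\alpha}(h_{\alpha}^{p-1}-1)\equiv\chi(h_{\alpha})^{p}$ is a nonzero scalar and $h_{\alpha}$ is invertible modulo $\mathfrak{M}_{\chi}=\ker\rho_{\chi}$; one then builds Lee's basis $\{\otimes_{i=1}^{2m}(B_i+A_i)^{j}\}$ with $h_{\alpha}$ in the role that $x_{\alpha}$ played before and with the conjugations carried out using $[x_{\alpha},x_{-\alpha}]=h_{\alpha}$ to divide by $h_{\alpha}$, the main labor once again being the proof that this set is linearly independent in $u(L)/\mathfrak{M}_{\chi}$.
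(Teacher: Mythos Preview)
Your main argument---reduce to the case $\chi(x_{\beta})\neq 0$ by a coadjoint conjugation through the root $SL_2$ and then invoke the previous proposition---is correct and is a genuinely different route from what the paper does. (Your numbering is shifted by one: the statement under discussion is Proposition~4.3, and the case $\chi(x_\alpha)\neq 0$ is Proposition~4.2.) The paper does \emph{not} conjugate $\chi$; instead it assumes outright that $\chi(x_\alpha)=\chi(x_{-\alpha})=0$ and $\chi(h_\alpha)\neq 0$, and then builds a new Lee's basis $\mathfrak B$ for $u(L)/\mathfrak M_\chi$ by hand, with $A_{\epsilon_1-\epsilon_2}=g_\alpha:=x_\alpha^{p-1}-x_{-\alpha}$ in place of $x_\alpha$ and with suitable powers of $g_\alpha$ decorating the other $A_\beta$'s. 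The invertibility of $g_\alpha$ in $u(L)/\mathfrak M_\chi$ (from Rudakov--Shafarevich) and the identity $g_\alpha h_\alpha g_\alpha^{-1}=h_\alpha-2$ drive a degree-reduction argument showing $\mathfrak B$ is linearly independent, whence $\dim u(L)/\mathfrak M_\chi=p^{2m}$.

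Your approach is cleaner and more conceptual; it also explains transparently why the conclusion should be $G$-orbit--invariant. What the paper's approach buys is an \emph{explicit} Lee's basis in the $\chi(h_\alpha)\neq 0$ regime, and these explicit bases are precisely what Section~5 feeds into the counting-problem construction, so the paper cannot dispense with them. Note also that your suggested alternative of using $h_\alpha$ itself as the invertible pivot would not reproduce the paper's mechanism: conjugation by $h_\alpha$ rescales root vectors rather than shifting $h_\alpha$, so the degree-lowering trick does not go through; the paper's choice of $g_\alpha$ is what makes that step work.
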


\begin{proof}
If $\chi(x_\alpha)\neq 0$ or $\chi(x_{-\alpha})\neq 0$, then our assertion is evident from the preceding proposition 4.2 or [KC]. So we may assume that $\chi(x_\alpha)= \chi(x_{-\alpha})= 0 $ but $\chi(h_\alpha)\neq 0$.\newline

Furthermore we may put $\alpha= \epsilon_1- \epsilon_2$ without loss of generality since all roots are conjugate under the Weyl group of $\Phi$.\newline

Since the case for $l=1$ is trivial, we may assume $l \geq 2$. For $i=1.2, \cdots$, we put $B_i:= b_{i1}h_{\epsilon_1- \epsilon_2}+\cdots + b_{il}h_{\epsilon_l- \epsilon_{l+1}} $ as in [KC],[KY-4] and we put $\frak B:= \{(B_1+ A_{\epsilon_1- \epsilon_2})^{i_1}\otimes (B_2+A_{\epsilon_2 - \epsilon_1})^{i_2}\otimes (\otimes_{j=3}^{l+1}(B_j+ A_{\epsilon_1- \epsilon_j})^{i_j})\otimes(\otimes_{j=3}^{l+1}(B_{l-1+j}+ A_{\epsilon_j- \epsilon_1})^{i_{l-1+j}})\otimes (\otimes_{j=3}^{l+1}(B_{2l-2+j}+ A_{\epsilon_2- \epsilon_j})^ {i_{2l-2+j}})\otimes (\otimes _{j=3}^{l+1}(B_{3l-3+j}+ A_{\epsilon_j- \epsilon_2})^{i_{3l-3+j}}\otimes \cdots \otimes (B_{2m-1}+ A_{\epsilon_l- \epsilon_{l+1}})^{i_{2m-1}}\otimes (B_{2m}+ A_{\epsilon_{l+1}-\epsilon_l})^{i_{2m}}\}$ for $0 \leq i_j \leq p-1$, \newline

where we set \newline

$A_{\epsilon_1- \epsilon_2}= g_\alpha =  g_{\epsilon_1- \epsilon_2}= x_{\epsilon_1- \epsilon_2}^{p-1}- x_{\epsilon_2- \epsilon_1},$ \newline

 $A_{\epsilon_2- \epsilon_1}= c_{\epsilon_2- \epsilon_1}+ (h_\alpha +1)^2 + 4^{-1}x_{-\alpha}x_\alpha,$\newline

$A_{\epsilon_1- \epsilon_3}= g_\alpha^2 (c_{\epsilon_1- \epsilon_3}+ x_{\epsilon_2- \epsilon_3}x_{\epsilon_3- \epsilon_2}\pm x_{\epsilon_1- \epsilon_3}x_{\epsilon_3- \epsilon_1}),$\newline

$ A_{\epsilon_3- \epsilon_1}= g_\alpha^3(c_{\epsilon_3- \epsilon_1}+ x_{\epsilon_3- \epsilon_2}x_{\epsilon_2- \epsilon_3}\pm x_{\epsilon_3- \epsilon_1}x_{\epsilon_1- \epsilon_3})$ or $ x_{\epsilon_3- \epsilon_4}(c_{\epsilon_3- \epsilon_1}+ x_{\epsilon_3- \epsilon_2}x_{\epsilon_2- \epsilon_3}\pm x_{\epsilon_3- \epsilon_1}x_{\epsilon_1- \epsilon_3}), $\newline

$A_{\epsilon_2- \epsilon_j}=  g_\alpha^4( c_{\epsilon_2- \epsilon_3}+ x_{\epsilon_2- \epsilon_3}x_{\epsilon_3- \epsilon_2}\pm x_{\epsilon_1- \epsilon_3}x_{\epsilon_3- \epsilon_1}) $(if $j= 3)$ or  $x_{\epsilon_4- \epsilon_j}(c_{\epsilon_2- \epsilon_j}+ x_{\epsilon_2- \epsilon_j}x_{\epsilon_j- \epsilon_2}\pm x_{\epsilon_1- \epsilon_j}x_{\epsilon_j- \epsilon_1})$,\newline

$A_{\epsilon_j- \epsilon_2}= g_\alpha^5(c_{\epsilon_3- \epsilon_2}+ x_{\epsilon_2-\epsilon_3}x_{\epsilon_3- \epsilon_2} \pm x_{\epsilon_1- \epsilon_3}x_{\epsilon_3- \epsilon_1})$ (if $j= 3)$ or $x_{\epsilon_j- \epsilon_4}(c_{\epsilon_j- \epsilon_2}+ x_{\epsilon_j- \epsilon_2}x_{\epsilon_2- \epsilon_j}\pm x_{\epsilon_j- \epsilon_1}x_{\epsilon_1- \epsilon_j}),$\newline

$A_{\epsilon_2- \epsilon_4}= x_{\epsilon_3- \epsilon_4}^2(c_{\epsilon_2- \epsilon_4}+ x_{\epsilon_2- \epsilon_4}x_{\epsilon_4- \epsilon_2}\pm x_{\epsilon_1- \epsilon_4}x_{\epsilon_4- \epsilon_1}),$\newline

$A_{\epsilon_4- \epsilon_2}= x_{\epsilon_4- \epsilon_3}(c_{\epsilon_4- \epsilon_2}+ x_{\epsilon_4- \epsilon_2}x_{\epsilon_2- \epsilon_4}\pm x_{\epsilon_4- \epsilon_1}x_{\epsilon_1- \epsilon_4}) ,$\newline

$A_{\epsilon_1- \epsilon_j}=x_{\epsilon_3- \epsilon_j}^2(c_{\epsilon_1- \epsilon_j}+ x_{\epsilon_1- \epsilon_j}x_{\epsilon_j- \epsilon_1} \pm x_{\epsilon_2- \epsilon_j}x_{\epsilon_j- \epsilon_2})   , $\newline

$ A_{\epsilon_j- \epsilon_1}    =x_{\epsilon_j- \epsilon_3}^2(c_{\epsilon_J- \epsilon_1}+ x_{\epsilon_1- \epsilon_j}x_{\epsilon_j- \epsilon_1} \pm x_{\epsilon_2- \epsilon_j}x_{\epsilon_j- \epsilon_2}) ,$\newline

$A_{\epsilon_i- \epsilon_j}= x_{\epsilon_i- \epsilon_j}^2$ or $x_{\epsilon_i- \epsilon_j}^3 $ for other roots $\epsilon_i- \epsilon_j$,\newline

where signs are chosen so that they may commute with $x_\alpha$ and $c_\beta$ are chosen so that $A_{\epsilon_2- \epsilon_1}$ and parentheses are invertible in $u(L)/\frak M_\chi$ for the kernel $\frak M_\chi$ in $u(L)$ of any given simple representation of $L$ with the character $\chi$.
\newline

We may see without difficulty that $\frak B$ is a linearly independent set in $u(L)$ by virtue of P-B-W theorem.\newline

We shall prove that a nontrivial linearly dependent  equation leads to absurdity. We assume first that we have a dependence equation which is of least degree with respect to $h_{\alpha_j}\in H$ and the number of whose highest  degree terms is also least.\newline

In case it is conjugated by $x_\alpha$, then there arises a nontrivial dependence equation of lower degree than the given one,which contradicts to our assumption.\newline

Otherwise we have to prove that \newline

(i)$x_{\epsilon_l- \epsilon_k}K+ K'\in \frak M_\chi$ with $l,k \neq 1,2$\newline

(ii) $g_\alpha K+ K'\in \frak M_\chi$ \newline

lead to a contradiction, where both $K$ and $K'$ commute with $x_{\pm \alpha}$  modulo $\frak M_\chi $. In particular $K$ commute with $g_\alpha$.
\newline

For the case (i), we may change it to the form $x_{\alpha}K+ K''\in \frak M_\chi$ for some $K''$ commuting with $x_\alpha= x_{\epsilon_1- \epsilon_2}$ modulo $\frak M_\chi$.\newline

So we have $x_\alpha^p K+ x_\alpha^{p-1}K''\equiv 0$, thus $x_\alpha^{p-1}K''\equiv 0$.\newline

Subtracting from this $x_{-\alpha}x_\alpha K+ x_{-\alpha}K''\equiv 0$, we get \newline

$-x_{-\alpha}x_\alpha K+ g_\alpha K'' \equiv 0$. Recall here that $g_\alpha$ is invertible and $w_\alpha$ belongs to the center of $u(\frak {sl}_2 (F))$ according to [RS].\newline

So we get  $4^{-1}\{(h_\alpha +1)^2- w_\alpha\}K+ g_\alpha K''\equiv 0$, and hence\newline

 $(\ast) g_\alpha^{p-1} 4^{-1}\{(h_\alpha + 1)^2- w_\alpha \}K+ cK'' \equiv 0$\newline

 is obtained and from the start equation we have \newline

$(\ast \ast)cx_\alpha K+ c K''\equiv 0$, where $g_\alpha^p- c \equiv 0$.\newline

Subtracting $(\ast \ast)$ from $(\ast)$, we have $4^{-1}g_\alpha^{p-1}\{(h_\alpha+ 1)^2- w_\alpha\}K- cx_\alpha K \equiv 0$.\newline

Multiplying this equation by $g_\alpha^{1-p}$ to the right, we obtain $4^{-1}g_\alpha^{p-1}\{(h_\alpha+ 1)^2- w_\alpha\}g_\alpha^{1-p}K- cx_\alpha g_\alpha^{1-p}K \equiv 0$ \newline

We thus have $4^{-1}\{(h_\alpha+ 1- 2)^2- w_\alpha\}K- x_\alpha g_\alpha K \equiv 0$.

So it follows that $4^{-1}\{(h_\alpha -1)^2- w_\alpha\}K+ x_\alpha x_{-\alpha}K \equiv 0 $.\newline

Next multiplying $x_{-\alpha}^{p-1}$ to the right of this last equation, we obtain $\{(h_{\alpha}- 1)^2- w_\alpha\}K x_{-\alpha}^{p-1}\equiv 0$.
Now multiply $x_\alpha$ in turn consecutively to the left of this equation until it becomes of the form \newline

( a nonzero polynomial of degree $\geq 1$ with respect to $h_\alpha)K $\newline
$\in \frak M_\chi$,\newline

which comes from the fact that the intersection  of  $\frak M_{\chi}$  and the commutative algebra \newline
 $\frak Z (u(L))$[$h_{\alpha},w_{\alpha}]$  becomes nearly a prime ideal  of   $\frak Z (u(L))$[$h_{\alpha},w_{\alpha}]$ ,which  is    generated by $  h_{\alpha},w_{\alpha}$ over $\frak Z (u(L))$. 
 It is actually so by virtue of  the zero ideal (0).\newline

By making use of  conjugation by $x_\alpha$ and subtraction consecutively, we are led to a contradiction $K \in \frak M_\chi$.
\newline

Finally for the case (ii),we consider $K+ g_\alpha^{-1}K' \in \frak M_\chi$.  So we have $x_\alpha K+ x_\alpha g_\alpha^{-1} K' \equiv 0$ modulo $\frak M_\chi$.

By analogy with the argument  as in the case (i), we obtain a contraiction $K \in \frak M_\chi$.

\end{proof}

Next we note first that the orthogonal Lie algebra of $B_l$- type with rank $l$, i.e., the $B_l$-type Lie algera over $\mathbb{C}$ has its root system $\Phi$=$\{\pm\epsilon_i $(of squared lengh 1);  $\pm(\epsilon_i\pm\epsilon_j )$ (of squared length 2) $ |1\leq i \neq j \leq l $ \},where $\epsilon_i , \epsilon_j $ are linearly independent  orthonormal unit vectors in $\mathbb{R}^l$ with $l \geq 2$. \newline

The base of $\Phi$ equals $\{\epsilon_1-\epsilon_2,\epsilon_2- \epsilon_3,\cdots ,\epsilon_{l-1}-\epsilon_l,\epsilon_l\}$.\newline

For an algebraically closed field $F$ of prime characteristic $p\geq 7,$ the $B_l$- type Lie algebra $L$ over $F$ is just the analogue over $F$ of the $B_l$- type 
simple Lie algebra  over $\mathbb{C.}$\newline

 In other words the $B_l$- type Lie algebra over $F$ is isomorphic to the Chevalley Lie algebra of the form 
$\sum_{i=1}^{n}\mathbb{Z}c_i\otimes_\mathbb{Z}F,$ \newline

where $ n$= $dim_FL$ and $x_\alpha$= some $c_i$ for each $\alpha \in \Phi$ , $h_\alpha$= some $c_j$ with  $\alpha$  some base element  of  $\Phi$  for a Chevalley basis 
\{$c_{i}$\} of  the $B_{l}$ - type  Lie algebra over $\mathbb{C}$ .\newline

 Of course we are well aware that $\frak{sl}_2(F)= Fx_\alpha+ Fx_{-\alpha}+Fh_\alpha$,
where $h_\alpha= [x_\alpha,x_{-\alpha}]$.

\begin{prop}\label {thm4.1}

Let  $\alpha$  be any  root  in the root system  $\Phi$ of $L .$ If $\chi(x_\alpha)$ $\neq0,$ then $dim_F$$ \rho_\chi$$(u(L))$ = $p^{2m},$ where $ [Q(u(L)):Q(\mathfrak{Z})]$=$p^{2m}$=$p^{n-l}$ with $\mathfrak{Z}$ the center of $u(L)$  and  $Q$  denotes  the quotient algebra. \newline

So we claim that  the  simple module corresponding to this representation has $p^m$ as its  dimension and\newline
that  the Lee's basis for this simple module is obtained  as follows.

(I) Suppose that $\alpha$ is  a short root. Since all roots of a given length are conjugate under the Weyl group of $\Phi$,we may put $\alpha$=$\epsilon_1$ without loss of generality.\newline

Let us put $B_i$=$b_{i1}$$h_{\epsilon_1 -\epsilon_2}$ +\newline
$b_{i2}h_{\epsilon_2-\epsilon_3}$+ $\cdots$ +$b_{i,l-1}h_{\epsilon_{l-1}-\epsilon_l}$ + $b_{il}h_{\epsilon_l}$ for $i=1,2,\cdots,2m $ ,where ($b_{i1},\cdots,b_{il})
\in F^l$ are chosen so that  arbitrary ($l+1)-B_i$'s are linearly independent in $\mathbb P^l(F)$,the $\frak B$ below becomes  an $F$-linearly independent set in $u(L)$ if necessary and  $x_\alpha B_i$ $\not \equiv B_i x_\alpha $ with $\alpha =\epsilon_1$.\newline

Let $\frak M_\chi$ be the kernel of the irreducible representation $\rho _{\chi}: L  \rightarrow \frak {gl}(V) $ of the restricted Lie algebra ($L ,[p])$ associated  with any given irreducible $L$-module $V$ with a character $\chi$.\newline

In $u(L)/\frak M_\chi$ we give a basis as $\frak B$:=$\{(B_1 +A_{\epsilon_1})^{i_1}\otimes (B_2 + A_{-\epsilon_1})^{i_2}\otimes(B_3+ A_{\epsilon_1 -\epsilon_ 2})^{i_3}\otimes(B_4+ A_{-(\epsilon_1-\epsilon_2) })^{i_4}\otimes \cdots\otimes (B_{2l}+A_{-(\epsilon_{l-1}-\epsilon_l)})^{i_{2l}}\otimes (B_{2l+1}+ A_{\epsilon _l})^{i_{2l+1}}\otimes(B_{2l+2}+A_{-\epsilon_l})^{i_{2l+2}}\otimes_{j=2l+3}^{2m}(B_j+ A_{\alpha_j})^{i_j}   \}$ for 0 $\leq i_j \leq p-1$,\newline

 where we put \newline

$A_{\epsilon_1}= x_{\epsilon_1}$, \newline

$A_{-\epsilon_1}=c_{-\epsilon_1}+ (h_{\epsilon_1} +1)^2 +
4x_{-\epsilon_1}x_{\epsilon_1},$ \newline

 $A_{-\epsilon_1\pm \epsilon_2}= x_{\epsilon_1\pm \epsilon_2}(c_{-\epsilon_1\pm \epsilon_2}+x_{-\epsilon_1\pm \epsilon_2}x_{-(-\epsilon_1\pm \epsilon_2)}\pm x_{\pm \epsilon_2}x_{-(\pm \epsilon_2)}\pm x_{\epsilon_1 \pm \epsilon_2}x_{-(\epsilon_1 \pm \epsilon_2)})$,\newline

$A_{-\epsilon_1 \pm \epsilon_j}=x_{-\epsilon_2\pm \epsilon_j}(c_{-\epsilon_1 \pm \epsilon_j}+x_{(\pm \epsilon_j-\epsilon_1)}
x_{-(\pm \epsilon_j -\epsilon_1)}\pm x_{\pm\epsilon_j}x_{-(\pm \epsilon_j)}\pm x_{\epsilon_1\pm \epsilon_j}x_{-(\epsilon_1\pm \epsilon_j)})$,\newline

$A_{\pm\epsilon_2}=x_{\epsilon_3\pm \epsilon_2}^2 (c_{\pm\epsilon_2}+x_{\epsilon_2}x_{-\epsilon_2}\pm x_{\epsilon_1 +\epsilon_2}x_{-(\epsilon_1 +\epsilon_2)}\pm x_{\epsilon_2-\epsilon_1}x_{\epsilon_1-\epsilon_2})$,\newline

$A_{\epsilon_j}=x_{\epsilon_2+ \epsilon_j}(c_{\epsilon_j}+ x_{\epsilon_j}x_{-\epsilon_j}\pm x_{\epsilon_1+\epsilon_j}x_{-(\epsilon_1 + \epsilon_j)}\pm 
 x_{\epsilon_j- \epsilon_1}x_{\epsilon_1-\epsilon_j})$, \newline

$A_{-\epsilon_j}=x_{\epsilon_2-\epsilon_j}(c_{-\epsilon_j}+ x_{-\epsilon_j}x_{\epsilon_j}\pm x_{\epsilon_1- \epsilon_j}x_{-(\epsilon_1-\epsilon_j)}\pm x_{-\epsilon_j-\epsilon_1}x_{\epsilon_1+ \epsilon_j} )  $,\newline

with the sign chosen so that  they commute with $x_\alpha$ and with $c_\beta \in F$ chosen so that $A_{-\epsilon_1}$ and parentheses(             ) are invertible.\newline
For any other root $\beta$,  we put $A_\beta={x_\beta}^2$  or $ x_\beta^3 $ if possible.\newline
Otherwise we make use of the parentheses(      )again used for designating $A_{-\beta}$. So in this case we put $A_\beta = { x_\gamma}^2$ or ${x_\gamma}^3$ attached to these (        ) so that  $x_\alpha$ may commute with $A_\beta$.\newline

(II) Suppose that  $\alpha$ is a long root; then we may put $\alpha=\epsilon_1-\epsilon_2$ since all roots of the same length are conjugate under the Weyl group of $\Phi$.\newline

Similarly as in (I), we put $B_i:=$the same as in (I) except that  $\alpha=\epsilon_1 -\epsilon_2$ this time instead of $\epsilon_1$.\newline

In $u(L)/\frak M_\chi$ we have a basis $\frak B$$:= \{(B_1+ A_{\epsilon_1- \epsilon_2})^{i_1}\otimes (B_2+ A_{-(\epsilon_1-\epsilon_2)})^{i_2}\otimes \cdots \otimes(B_{2l-2}+ A_{-(\epsilon_{l-1}-\epsilon_l)})^{i_{2l-2}}\otimes (B_{2l-1}+ A_{\epsilon_l})^{i_{2l-1}}\otimes (B_{2l}+ A_{-\epsilon_l})^{i_{2l}}\otimes (\otimes_{j=2l+1}^{2m}(B_j+ A_{\alpha_j})^{i_j})| 0\leq i_ j\leq p-1 \} $,\newline

where we put  \newline

$A_{\epsilon_1-\epsilon_2}= x_\alpha= x_{\epsilon_1- \epsilon_2}$,\newline

$ A_{\epsilon_2- \epsilon_1}=c_{\epsilon_2- \epsilon_1}+ (h_{\epsilon_1- \epsilon_2}+ 1 )^2+ 4x_{\epsilon_2-\epsilon_1}x_{\epsilon_1- \epsilon_2}$\newline

$A_{\epsilon_2\pm \epsilon_3}=     x_{\pm \epsilon_3}(c_{\epsilon_2\pm \epsilon_3}+ x_{\epsilon_2\pm \epsilon_3}x_{-(\epsilon_2\pm \epsilon_3)}\pm x_{\epsilon_1\pm \epsilon_3}x_{-(\epsilon_1\pm \epsilon_3)})$, \newline

$A_{\epsilon_2\pm \epsilon_k}= x_{\epsilon_3\pm \epsilon_k}
(c_{\epsilon_2\pm \epsilon_k}+ x_{\epsilon_2\pm \epsilon_k}x_{-(\epsilon_2\pm \epsilon_k)}\pm x_{\epsilon_1\pm \epsilon_k}x_{-(\epsilon_1\pm \epsilon_k)})  $ \newline

if $k\neq 1$, \newline

$ A_{\epsilon_2}= x_{\epsilon_1}( c_{\epsilon_2}+ x_{\epsilon_2}x_{-\epsilon_2}\pm x_{\epsilon_1}x_{-\epsilon_1})$,  \newline

                 $A_{-\epsilon_1}= x_{-\epsilon_2}(c_{-\epsilon_1}+ x_{-\epsilon_1}x_{\epsilon_1}\pm x_{-\epsilon_2}x_{\epsilon_2})$,\newline

$A_{-(\epsilon_1\pm \epsilon_3)}= x_{-(\pm \epsilon_3)}(c_{-(\epsilon_1\pm \epsilon_3)}+ x_{\epsilon_2\pm \epsilon_3}x_{-(\epsilon_2\pm \epsilon_3)}\pm x_{\epsilon_1 \pm\epsilon_3}x_{-(\epsilon_1\pm \epsilon_3)}) $,\newline

 $A_{-(\epsilon_1\pm \epsilon_k)}= x_{-(\epsilon_3\pm \epsilon_k)}(c_{-(\epsilon_1\pm \epsilon_k)}+ x_{\epsilon_2\pm \epsilon_k }x_{-(\epsilon_2\pm \epsilon_k)}\pm x_{\epsilon_1\pm \epsilon_k}x_{-(\epsilon_1\pm \epsilon_k)}) $, \newline

$A_{\epsilon_l}= x_{\epsilon_l}^2 ,  $\newline

 $A_{-\epsilon_l}=x_{-\epsilon_l}^2  $, \newline

with the signs chosen so that they  may commute with $x_\alpha$ and with $c_\beta\in F$ chosen so that $A_{\epsilon_2-\epsilon_1}$ and parentheses are invertible.\newline
For any other root $\beta$, we put $A_\beta= x_\beta^3 $ or $x_\beta^4 $ if possible.\newline

Otherwise we make use of the parentheses(      ) again used for designating $A_{-\beta}$. So in this case we put $A_\beta= x_\gamma^2 $       
 or $ x_\gamma^3  $ attached to these (      ) so that  $x_\alpha$ may commute with $A_\beta$.\newline

\end{prop}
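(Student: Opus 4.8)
Our plan is to derive the dimension equality $\dim_F\rho_\chi(u(L))=p^{2m}$ as a by-product of exhibiting the claimed Lee's basis, treating the short-root case (I) and the long-root case (II) by the same minimal-counterexample descent already used in the $A_l$-type proposition proved above. Two standing facts are needed. First, by Section~3 the module $V$, having $p$-character $\chi$, is a quotient of a baby Verma module of dimension $p^{|\Phi^+|}=p^m$, where $2m=\#\Phi=n-l$ and $[Q(u(L)):Q(\mathfrak Z)]=p^{2m}$; since $F$ is algebraically closed, $\dim_F u(L)/\mathfrak M_\chi=\dim_F\mathrm{End}_F(V)=(\dim V)^2\le p^{2m}$ by Jacobson density and Schur's lemma. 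Second, because $x_\alpha$ lies in the Chevalley basis we have $x_\alpha^{[p]}=0$, so $\chi(x_\alpha)\neq0$ makes $\rho_\chi(x_\alpha)^p=\chi(x_\alpha)^p\,\mathrm{id}_V$ invertible; thus $x_\alpha$ is a unit and $x_\alpha^p$ a nonzero scalar in $u(L)/\mathfrak M_\chi$. As the candidate set $\mathfrak B$ has exactly $p^{2m}$ elements, once its image in $u(L)/\mathfrak M_\chi$ is shown to be $F$-linearly independent we obtain $(\dim V)^2\ge p^{2m}$, hence $\dim V=p^m$, $\dim_F\rho_\chi(u(L))=p^{2m}$, and $\mathfrak B$ is a genuine basis, i.e.\ the Lee's basis of $u(L)/\mathfrak M_\chi$. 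By conjugacy of roots of equal length under the Weyl group we may normalize $\alpha=\epsilon_1$ in (I) and $\alpha=\epsilon_1-\epsilon_2$ in (II).

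Next I would write down the candidate $\mathfrak B$ exactly as listed, with the $B_i$ chosen generically in the Cartan subalgebra as in the $A_l$-type argument, and check that $\mathfrak B$ is $F$-linearly independent in $u(L)$: each $A_\beta$ has a well-defined leading Poincar\'e--Birkhoff--Witt monomial, and for distinct multi-indices $(i_j)$ the tensor products $\otimes_{j}(B_j+A_{\beta_j})^{i_j}$ carry distinct PBW leading monomials, so PBW yields independence. The real content at this stage is to verify, root string by root string, that the scalars $c_\beta\in F$ and the indicated signs can be chosen so that every $A_\beta$ commutes with $x_\alpha$ modulo $\mathfrak M_\chi$ while $A_{-\epsilon_1}$ (resp.\ $A_{\epsilon_2-\epsilon_1}$) and all the parenthesized factors become invertible in $u(L)/\mathfrak M_\chi$. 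This is where the two root lengths of $B_l$ intervene. For $\alpha=\epsilon_1$ the $\mathrm{ad}(x_\alpha)$-strings through the relevant roots have length at most three (for instance $-\epsilon_1+\epsilon_j\mapsto\epsilon_j\mapsto\epsilon_1+\epsilon_j$), the Chevalley structure constants lie in $\{\pm1,\pm2\}$, and the hypothesis $p\ge7$ keeps every scalar that has to be inverted nonzero; the commuting condition is then solved, as in the $A_l$-type formulas, by pairing a raising factor with the matching lowering factor, now adapted to the $\pm$ patterns of $B_l$-roots and to the extra short roots $\pm\epsilon_j$, $\pm\epsilon_i\pm\epsilon_j$.

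Granting $\mathfrak B$, suppose a nontrivial $F$-linear dependence among its elements held modulo $\mathfrak M_\chi$. Among all such relations pick one of least total degree in the Cartan elements $h_{\alpha_j}$ and, subject to that, with the fewest terms of top degree. Conjugating by $x_\alpha$ (a unit modulo $\mathfrak M_\chi$): if the result is again a nontrivial relation it has strictly smaller complexity, contradicting minimality; hence the relation must reduce to the normal form $x_\beta K+K'\in\mathfrak M_\chi$, where $\beta$ is $\alpha$ itself --- or, when $\alpha$ is short, a single elementary relation among the chosen root vectors converted to this shape exactly as in case (i) of the preceding $\chi(h_\alpha)\neq0$ proposition --- and $K,K'$ commute with $x_{\pm\alpha}$ modulo $\mathfrak M_\chi$. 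Now compute inside $u(\mathfrak{sl}_2(F))$ for $\mathfrak{sl}_2(F)=Fx_\alpha+Fx_{-\alpha}+Fh_\alpha$: using that $w_\alpha=(h_\alpha+1)^2+4x_{-\alpha}x_\alpha$ is central there, that $A_{-\alpha}$ and the parentheses are units, and that $x_{\pm\alpha}^p$ is a scalar modulo $\mathfrak M_\chi$, multiply by suitable powers of $x_{\pm\alpha}$ and alternately conjugate by $x_\alpha$ and subtract --- as in the displayed chain $(\ast),(\ast\ast),(\ast\ast\ast)$ of the $A_l$-type proof --- until the relation becomes $q(h_\alpha)K\in\mathfrak M_\chi$ for a nonzero polynomial $q$ of degree $\ge1$ with invertible leading coefficient. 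Since conjugation by $x_\alpha$ shifts $h_\alpha\mapsto h_\alpha-2$ and $p\ge7$ keeps the relevant finite differences invertible, one more conjugate-and-subtract sequence collapses $q$ to a nonzero scalar --- equivalently $\mathfrak M_\chi\cap\mathfrak Z(u(L))[h_\alpha,w_\alpha]$ is essentially the zero (hence prime) ideal of $\mathfrak Z(u(L))[h_\alpha,w_\alpha]$ --- so $K\in\mathfrak M_\chi$, contradicting minimality; if necessary one interchanges the roles of $K$ and $K'$. Case (I) is run identically with $\alpha=\epsilon_1$ and $w_{\epsilon_1}$, the only extra point being that long-root generators must be checked not to obstruct the reduction to the normal form.

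I expect the main obstacle to be the verification announced in the second paragraph: producing, in the presence of two root lengths, explicit $A_\beta$ as listed in (I) and (II) that simultaneously commute with $x_\alpha$ modulo $\mathfrak M_\chi$, keep $A_{-\alpha}$ and every parenthesis invertible, and carry pairwise distinct PBW leading monomials --- in particular organizing the short root strings $\pm\epsilon_j$ and $\pm\epsilon_i\pm\epsilon_j$ so that each $A_\beta$ lands in the centralizer of $x_\alpha$ modulo $\mathfrak M_\chi$. Once that bookkeeping is secured, the descent is formally identical to the $A_l$-type case and the dimension count follows at once.
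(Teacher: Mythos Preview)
Your proposal is correct and follows essentially the same approach as the paper: the paper likewise reduces to showing that $\mathfrak B$ is linearly independent modulo $\mathfrak M_\chi$, invokes PBW for independence in $u(L)$, assumes a minimal nontrivial dependence, conjugates by the unit $x_\alpha$ to force the normal form $x_\alpha K+K'\in\mathfrak M_\chi$ (treating the two cases $\alpha=\epsilon_1$ and $\alpha=\epsilon_1-\epsilon_2$ in parallel), and then runs the same $(\ast),(\ast\ast),(\ast\ast\ast)$ manipulation using $w_\alpha\in\mathfrak Z(u(\mathfrak{sl}_2(F)))$ and $x_{-\alpha}^p\equiv c$ to reach a contradiction $K\in\mathfrak M_\chi$. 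Your identification of the main obstacle --- the root-by-root bookkeeping that each $A_\beta$ commutes with $x_\alpha$ and that the parentheses are invertible, complicated here by the two root lengths of $B_l$ --- is exactly what the paper leaves as the ``not difficult to see'' step and defers to [KY-4],[KY-6].
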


\begin{proof}

Refer to proposition2.1 in [KY-6] .

Actually we suggest  and  prove that the Lee's basis for this simple module is obtained  as follows.

We meet with 2 cases of root length.\newline

As suggested  in the  current proposition we  consider two cases separately.\newline

(I) Suppose that $\alpha$ is  a short root. Since all roots of a given length are conjugate under the Weyl group of $\Phi$, we may put $\alpha$= $\epsilon_1$ without loss of generality.\newline

Let us put $B_i$=$b_{i1}$$h_{\epsilon_1 -\epsilon_2}$ +\newline
$b_{i2}h_{\epsilon_2-\epsilon_3}$+ $\cdots$ +$b_{i,l-1}h_{\epsilon_{l-1}-\epsilon_l}$ + $b_{il}h_{\epsilon_l}$ for $i=1,2,\cdots,2m $ ,where ($b_{i1},\cdots,b_{il})
\in F^l$ are chosen so that  arbitrary ($l+1)-B_i$'s are linearly independent in $\mathbb P^l(F)$,the $\frak B$ below becomes  an $F$-linearly independent set in $u(L)$ if necessary and  $x_\alpha B_i$ $\not \equiv B_i x_\alpha $ with $\alpha =\epsilon_1$.\newline

Let $\frak M_\chi$ be the kernel of the irreducible representation $\rho _{\chi}: L  \rightarrow \frak {gl}(V) $ of the restricted Lie algebra ($L ,[p])$ associated  with any given irreducible $L$-module $V$ with a character $\chi$.\newline

In $u(L)/\frak M_\chi$ we give a basis as $\frak B$:=$\{(B_1 +A_{\epsilon_1})^{i_1}\otimes (B_2 + A_{-\epsilon_1})^{i_2}\otimes(B_3+ A_{\epsilon_1 -\epsilon_ 2})^{i_3}\otimes(B_4+ A_{-(\epsilon_1-\epsilon_2) })^{i_4}\otimes \cdots\otimes (B_{2l}+A_{-(\epsilon_{l-1}-\epsilon_l)})^{i_{2l}}\otimes (B_{2l+1}+ A_{\epsilon _l})^{i_{2l+1}}\otimes(B_{2l+2}+A_{-\epsilon_l})^{i_{2l+2}}\otimes_{j=2l+3}^{2m}(B_j+ A_{\alpha_j})^{i_j}   \}$ for 0 $\leq i_j \leq p-1$,\newline

 where we put \newline

$A_{\epsilon_1}= x_{\epsilon_1}$, \newline

$A_{-\epsilon_1}=c_{-\epsilon_1}+ (h_{\epsilon_1} +1)^2 +
4x_{-\epsilon_1}x_{\epsilon_1},$ \newline

$A_{-\epsilon_1\pm \epsilon_2}= x_{\epsilon_1\pm \epsilon_2}(c_{-\epsilon_1\pm \epsilon_2}+x_{-\epsilon_1\pm \epsilon_2}x_{-(-\epsilon_1\pm \epsilon_2)}\pm x_{\pm \epsilon_2}x_{-(\pm \epsilon_2)}\pm x_{\epsilon_1 \pm \epsilon_2}x_{-(\epsilon_1 \pm \epsilon_2)})$,\newline

$A_{-\epsilon_1 \pm \epsilon_j}=x_{-\epsilon_2\pm \epsilon_j}(c_{-\epsilon_1 \pm \epsilon_j}+x_{(\pm \epsilon_j-\epsilon_1)}
x_{-(\pm \epsilon_j -\epsilon_1)}\pm x_{\pm\epsilon_j}x_{-(\pm \epsilon_j)}\pm x_{\epsilon_1\pm \epsilon_j}x_{-(\epsilon_1\pm \epsilon_j)}),$\newline

$A_{\pm\epsilon_2}=x_{\epsilon_3\pm \epsilon_2}^2 (c_{\pm\epsilon_2}+x_{\epsilon_2}x_{-\epsilon_2}\pm x_{\epsilon_1 +\epsilon_2}x_{-(\epsilon_1 +\epsilon_2)}\pm x_{\epsilon_2-\epsilon_1}x_{\epsilon_1-\epsilon_2})$,\newline

$A_{\epsilon_j}=x_{\epsilon_2+ \epsilon_j}(c_{\epsilon_j}+ x_{\epsilon_j}x_{-\epsilon_j}\pm x_{\epsilon_1+\epsilon_j}x_{-(\epsilon_1 + \epsilon_j)}\pm \newline
 x_{\epsilon_j- \epsilon_1}x_{\epsilon_1-\epsilon_j})$, \newline

$A_{-\epsilon_j}=x_{\epsilon_2-\epsilon_j}(c_{-\epsilon_j}+ x_{-\epsilon_j}x_{\epsilon_j}\pm x_{\epsilon_1- \epsilon_j}x_{-(\epsilon_1-\epsilon_j)}\pm x_{-\epsilon_j-\epsilon_1}x_{\epsilon_1+ \epsilon_j} )  $,\newline

with the sign chosen so that  they commute with $x_\alpha$ and with $c_\beta \in F$ chosen so that $A_{-\epsilon_1}$ and parentheses(             ) are invertible.\newline
For any other root $\beta$,  we put $A_\beta={x_\beta}^2$  or $ x_\beta^3 $ if possible.\newline

Otherwise we make use of the parentheses(      )again used for designating $A_{-\beta}$. So in this case we put $A_\beta = { x_\gamma}^2$ or ${x_\gamma}^3$ attached to these (        ) so that  $x_\alpha$ may commute with $A_\beta$.\newline

(II)Suppose next  that  $\alpha$ is a long root; then we may put $\alpha=\epsilon_1-\epsilon_2$ since all roots of the same length are conjugate under the Weyl group of $\Phi$.\newline

Similarly as in (I), we put $B_i:=$the same as in (I) except that  $\alpha=\epsilon_1 -\epsilon_2$ this time instead of $\epsilon_1$.\newline

In $u(L)/\frak M_\chi$ we have a basis $\frak B$$:= \{(B_1+ A_{\epsilon_1- \epsilon_2})^{i_1}\otimes (B_2+ A_{-(\epsilon_1-\epsilon_2)})^{i_2}\otimes \cdots \otimes(B_{2l-2}+ A_{-(\epsilon_{l-1}-\epsilon_l)})^{i_{2l-2}}\otimes (B_{2l-1}+ A_{\epsilon_l})^{i_{2l-1}}\otimes (B_{2l}+ A_{-\epsilon_l})^{i_{2l}}\otimes (\otimes_{j=2l+1}^{2m}(B_j+ A_{\alpha_j})^{i_j})| 0\leq i_ j\leq p-1 \} $,\newline

where we put  \newline

$A_{\epsilon_1-\epsilon_2}= x_\alpha= x_{\epsilon_1- \epsilon_2}$,\newline

$ A_{\epsilon_2- \epsilon_1}=c_{\epsilon_2- \epsilon_1}+ (h_{\epsilon_1- \epsilon_2}+ 1 )^2+ 4x_{\epsilon_2-\epsilon_1}x_{\epsilon_1- \epsilon_2}$\newline

$A_{\epsilon_2\pm \epsilon_3}=     x_{\pm \epsilon_3}(c_{\epsilon_2\pm \epsilon_3}+ x_{\epsilon_2\pm \epsilon_3}x_{-(\epsilon_2\pm \epsilon_3)}\pm x_{\epsilon_1\pm \epsilon_3}x_{-(\epsilon_1\pm \epsilon_3)})$, \newline

$A_{\epsilon_2\pm \epsilon_k}= x_{\epsilon_3\pm \epsilon_k}
(c_{\epsilon_2\pm \epsilon_k}+ x_{\epsilon_2\pm \epsilon_k}x_{-(\epsilon_2\pm \epsilon_k)}\pm x_{\epsilon_1\pm \epsilon_k}x_{-(\epsilon_1\pm \epsilon_k)})  $ \newline

if $k\neq 1$, \newline

$ A_{\epsilon_2}= x_{\epsilon_1}( c_{\epsilon_2}+ x_{\epsilon_2}x_{-\epsilon_2}\pm x_{\epsilon_1}x_{-\epsilon_1})$,  \newline

                 $A_{-\epsilon_1}= x_{-\epsilon_2}(c_{-\epsilon_1}+ x_{-\epsilon_1}x_{\epsilon_1}\pm x_{-\epsilon_2}x_{\epsilon_2})$,\newline

$A_{-(\epsilon_1\pm \epsilon_3)}= x_{-(\pm \epsilon_3)}(c_{-(\epsilon_1\pm \epsilon_3)}+ x_{\epsilon_2\pm \epsilon_3}x_{-(\epsilon_2\pm \epsilon_3)}\pm x_{\epsilon_1 \pm\epsilon_3}x_{-(\epsilon_1\pm \epsilon_3)}) $,\newline

 $A_{-(\epsilon_1\pm \epsilon_k)}= x_{-(\epsilon_3\pm \epsilon_k)}(c_{-(\epsilon_1\pm \epsilon_k)}+ x_{\epsilon_2\pm \epsilon_k }x_{-(\epsilon_2\pm \epsilon_k)}\pm x_{\epsilon_1\pm \epsilon_k}x_{-(\epsilon_1\pm \epsilon_k)}) $, \newline

$A_{\epsilon_l}= x_{\epsilon_l}^2 ,  $\newline

 $A_{-\epsilon_l}=x_{-\epsilon_l}^2  $, \newline

with the signs chosen so that they  may commute with $x_\alpha$ and with $c_\beta\in F$ chosen so that $A_{\epsilon_2-\epsilon_1}$ and parentheses are invertible.\newline
For any other root $\beta$, we put $A_\beta= x_\beta^3 $ or $x_\beta^4 $ if possible.\newline

Otherwise we make use of the parentheses(      ) again used for designating $A_{-\beta}$. So in this case we put $A_\beta= x_\gamma^2 $       
 or $ x_\gamma^3  $ attached to these (      ) so that  $x_\alpha$ may commute with $A_\beta$.\newline
The rest of the complete proof is already shown in [KY-4].
However we may actually exhibit  the complete proof  as follows.\newline

We may see without difficulty that $\frak B$ is a linearly independent set in $u(L)$ by virtue of P-B-W theorem.\newline

We shall prove that a nontrivial linearly dependent  equation leads to absurdity. We assume first that we have a dependence equation which is of least degree with respect to $h_{\alpha_j}\in H$ and the number of whose highest  degree terms is also least.\newline

In case it is conjugated by $x_\alpha$, then there arises a nontrivial dependence equation of lower degree than the given one,which contradicts to our assumption.\newline

Otherwise  it reduces to the following form, which we have only to prove:\newline

 $(\ast) x_{\epsilon_{1}-\epsilon_{2}}$$K$ + $K'$ $\in$$ \mathfrak{M}_\chi$ or $x_{\epsilon_1}K+ K' \in \mathfrak {M}_\chi$,

respectively, where $K$ and $K'$  commute with $x_{\alpha}$ modulo $\frak {M}_\chi $.\newline

We have only to prove  one of these two cases, the other case being completely similar to the one. We thus deal with and assume  $x_{\epsilon_{2}-\epsilon_{1}}$$x_{\epsilon_{1}-\epsilon_{2}}$$K$ + $x_{\epsilon_{2}-\epsilon_{1}}$$K'$ $\in$ $\mathfrak{M}_\chi$.\newline

From $w_{\epsilon_1- \epsilon_2}:=(h_{\epsilon_1- \epsilon_2}+ 1)^2 +4 x_{\epsilon_2- \epsilon_1}x_{\epsilon_1- \epsilon_2}\in $ the center of  $u(\frak{sl}_2(F)), $ we get $ 4^{-1}\{w_{\epsilon_1- \epsilon_2}- (h+ 1)^2\}K+ x_{\epsilon_2- \epsilon_1}K' \equiv 0 $  modulo $\frak M_\chi$.\newline

If $x_{\epsilon_2- \epsilon_1}^p\equiv c $ which is a constant,then \newline

$(\ast \ast)4^{-1}x_{\epsilon_2-\epsilon_1}^{p-1}\{w_{\epsilon_1- \epsilon_2}- (h_{\epsilon_1- \epsilon_2}+ 1)^2\}K+ cK'\equiv 0 $\newline

 is obtained.
\newline
From $(\ast),(\ast \ast)$, we have\newline

 $4^{-1}x_{\epsilon_2- \epsilon_1}^{p-1}\{w_{\epsilon_1- \epsilon_2}- (h_{\epsilon_1- \epsilon_2}+ 1)^2\}K- cx_{\epsilon_1- \epsilon_2}K\newline
\equiv 0$ modulo $\frak M_\chi$.\newline

Multiplying $x_{\epsilon_1- \epsilon_2}^{p-1}$ to this equation,we obtain \newline

 $(\ast \ast \ast)4^{-1}x_{\epsilon_1- \epsilon_2}^{p-1}x_{\epsilon_2- \epsilon_1}^{p-1}\{w_{\epsilon_1- \epsilon_2}- (h_{\epsilon_1- \epsilon_2}+ 1)^2\}K- cx_{\epsilon_1- \epsilon_2}^pK\equiv 0.  $\newline

By making use of $w_{\epsilon_1- \epsilon_2}$, we may deduce from $(\ast \ast \ast)$ an equation of the form \newline
( a polynomial of degree $\geq 1$ with respect to  $ h_{\epsilon_1- \epsilon_2})K- cx_{\epsilon_1- \epsilon_2}^pK\equiv 0.   $\newline

Finally if we use conjugation and subtraction consecutively,then we are led to a  contradiction $K\in \frak M_\chi.$\newline
It may be necessary  for us to change the role of $K$ and $K'$ to obtain the absurdity alike.\newline

\end{proof}

\begin{prop}

Let $\chi $ be a character of any simple $L$-module with $\chi(h_\alpha )\neq $ 0  for  some $\alpha \in \Phi$,where $h_\alpha$ is an element in the Chevalley basis of  $L$ such that $F x_\alpha + Fh_\alpha+ Fx_{-\alpha}= \frak {sl}_2(F)$ with $[x_\alpha,x_{-\alpha}]= h_\alpha \in H$.\newline

 We then  claim  that any simple $L$-module with character $\chi$ is of dimension  $p^m=p^{n-l\over 2}$,where $n= dim L= 2m +l $ for a CSA  H with $dim H =l$ and  that  the Lee's basis for this simple module is given as follows.

Let $ \mathfrak {M}_\chi$ be the kernel of this irreducible representation; so $\frak M_\chi$ is  a certain (2-sided) maximal ideal of $u(L).$ \newline

(I) Assume first that $\alpha$ is a  short root; then we may put $\alpha$=$\epsilon_1$  without loss of generaity since all roots of a given length 
are conjugate under the Weyl group of the root system $\Phi$.\newline

If $\chi (x_{\epsilon_1}) \neq 0$ or $\chi (x_{- \epsilon_1}) \neq 0$, then our assertion is evident from the preceding proposition 4.4.\newline

So we may let $x_{\epsilon_1}^p \equiv x_{- \epsilon_1}^p \equiv 0$ modulo $\frak M_\chi$.\newline

 We let  $B_i$:=$b_{i1}$ $h_{\epsilon_{1}- \epsilon_{ 2}}$+ $b_{i2}$ $h_{\epsilon_{2}-\epsilon_{3}}$+$\cdots$+$b_{i,l-1}$ $h_{\epsilon_{l-1}-\epsilon_{ l }}$ + $b_{il}$ $h_{\epsilon_{l}}$  for $i=  1,2,$ $\cdots,2m,$ where ($b_{i1}$,$b_{i2}$   $\cdots,$$b_{il}$) $\in F^{l}$ are chosen so that any $ (l+1)-$$ B_i$'s  are linearly independent in $\mathbb {P}^l$ and  $\frak B$   below  becomes an $F-$ linearly independent  set in $U(L)$ if necessary and $x_\alpha$$B_i$ $\not\equiv$$B_i$$x_\alpha$ for $\alpha= \epsilon_1$

In  $u(L)$/$\mathfrak{M}_\chi$ we claim that we have a basis \newline

$\frak B$:= 
$\{(B_1 +A_{\epsilon_1})^{i_1}\otimes(B_2 +A_{-\epsilon_{1}})^{i_2}(B_3+ A_{\epsilon_1- \epsilon_2})^{i_3}\otimes(B_4+ A_{-\{\epsilon_1- \epsilon_2)})^{i_4}   \otimes \cdots \otimes(B_{2l} +A_{-(\epsilon_{l-1}-\epsilon_{l})})^{i_{2l}}\otimes(B_{2l+1}+ A_{\epsilon_{l}})^{i_{2l+1}}\otimes(B_{2l+2}+A_{-\epsilon_{l}})^{i_{2l+2}}\otimes(\otimes_{j=2l+3}^{2m}(B_j+A_{\alpha_{j}})^{i_{j}}) | 0 \leq i_{j}\leq p-1 \}$, \newline

where we put\newline

$A_{\epsilon_{1}}$= $g_\alpha$=
$g_{\epsilon_{1}},$ \newline

$A_{-\epsilon_{1}}$=$c_{-\epsilon_{1}}$ +$(h_{\epsilon_{1}}+1)^2$ +4$x_{-\alpha}$ $x_\alpha,$ \newline

$A_{-\epsilon_1\pm \epsilon_2}= x_{\epsilon_3\pm \epsilon_2}(c_{-\epsilon_{1}\pm \epsilon_{2}}+x_{-\epsilon_{1}\pm \epsilon_{2}}x_{\epsilon_1\mp \epsilon_2}\pm  x_{\pm \epsilon_{2}}x_{-(\pm \epsilon_{2})}\pm x_{\epsilon_1\pm \epsilon_2}x_{-(\epsilon_1\pm \epsilon_2)}),$\newline

$A_{\epsilon_1+ \epsilon_2}$=$x_{\epsilon
_3-\epsilon_2}^2$ $(c_{\epsilon_1+\epsilon_2}+ x_{-\epsilon_1-\epsilon_2}x_{\epsilon_1+\epsilon_2}\pm x_{-\epsilon_2}x_{\epsilon_2}\pm  x_{\epsilon_1- \epsilon_2}x_{-(\epsilon_1-\epsilon_2)})$,\newline

$A_{-\epsilon_1\pm \epsilon_j}= x_{-\epsilon_2\pm \epsilon_j}(c_{-\epsilon_1 \pm \epsilon_j}+x_{-\epsilon_1 \pm \epsilon_j}x_{\epsilon_1\mp \epsilon_j}\pm  x_{\pm \epsilon_j}x_{-(\pm \epsilon_j)}\pm x_{\epsilon_1\pm \epsilon_j}x_{-(\epsilon_1\pm \epsilon_j)}),$ \newline

$A_{\epsilon_1+ \epsilon_j}$=$x_{-\epsilon
_2-\epsilon_j}^2 (c_{\epsilon_1+\epsilon_j}+ x_{-\epsilon_i-\epsilon_1}x_{\epsilon_1+\epsilon_j}\pm x_{-\epsilon_j}x_{\epsilon_j}\pm  x_{\epsilon_1- \epsilon_j}x_{-(\epsilon_1-\epsilon_j)})$,\newline

$A_{\pm\epsilon_{2}}$=$x_{\epsilon_3\pm  \epsilon_2}^2 (c_{\pm\epsilon_{2}}+x_{\epsilon_{2}}x_{-\epsilon_{2}}\pm x_{\epsilon_{1}+ \epsilon_{2}}x_{-\epsilon_{1}-\epsilon_{2}} +x_{\epsilon_2- \epsilon_{1}}x_{\epsilon_1- \epsilon_2}),$ \newline

$A_{\pm \epsilon_j}$=$x_{\epsilon
_2\pm \epsilon_j} (c_{\pm \epsilon_j}+ x_{\epsilon_{j}}x_{-\epsilon_j}\pm x_{\epsilon_1+ \epsilon_j}x_{-\epsilon_1- \epsilon_j}\pm  x_{\epsilon_j- \epsilon_1}x_{\epsilon_1-\epsilon_j})$ ,\newline

with the sign chosen so that they commute with $x_{\alpha}$ and with $c_{\alpha}\in F$ 
chosen so that $A_{-\epsilon_{1}}$ and parentheses are invertible.  For any other root $ \beta$ we put $A_{\beta}$= $x_{\beta}^2 $ or $x_{\beta}^3 $ or $x_{\beta}^4$ if possible. \newline

Otherwise attach to these sorts the parentheses(        ) used for designating $A_{-\beta}$ so that  $A_\gamma  \forall \gamma \in \Phi$ may commute with $x_\alpha$.\newline

(II) Next  we assume that  $\alpha $ is a long root; we may  thus let   $\alpha=  \epsilon_1- \epsilon_2 $  without loss of generality.\newline

As we  have  done before, we let   $B_i $ be defined as in (I).\newline

 We have a Lee's basis in $u(L)/\frak M_\chi$ as $\frak B$$:= \{(B_1+ A_{\epsilon_1- \epsilon_2})^{i_1}\otimes (B_2+ A_{-(\epsilon_1-\epsilon_2)})^{i_2}\otimes \cdots \otimes(B_{2l-2}+ A_{-(\epsilon_{l-1}-\epsilon_l)})^{i_{2l-2}}\otimes (B_{2l-1}+ A_{\epsilon_l})^{i_{2l-1}}\otimes (B_{2l}+ A_{-\epsilon_l})^{i_{2l}}\otimes (\otimes_{j=2l+1}^{2m}(B_j+ A_{\alpha_j})^{i_j})| 0\leq i_ j\leq p-1 \} $,\newline

where we put  \newline

$A_{\epsilon_1-\epsilon_2}= g_\alpha= g_{\epsilon_1- \epsilon_2}$,\newline

 $A_{\epsilon_2- \epsilon_1}=g_\alpha^2 \{c_{\epsilon_2- \epsilon_1}+ (h_{\epsilon_1- \epsilon_2}+ 1 )^2+ 4x_{\epsilon_2-\epsilon_1}x_{\epsilon_1- \epsilon_2}\} $,\newline

$A_{\epsilon_1\pm \epsilon_3}= g_\alpha^3(c_{\epsilon_1 \pm \epsilon_3}+ x_{\epsilon_2\pm \epsilon_3}x_{-(\epsilon_2 \pm \epsilon_3)}                 \pm x_{\epsilon_1 \pm \epsilon_3}x_{-(\epsilon_1 \pm \epsilon_3)}) $,\newline

$A_{-\epsilon_1- \epsilon_3}= g_\alpha^4(c_{-\epsilon_1- \epsilon_3}+ x_{\epsilon_1+ \epsilon_3}x_{-\epsilon_1- \epsilon_3}\pm x_{\epsilon_2+ \epsilon_3}x_{-\epsilon_2- \epsilon_3})$,\newline

$A_{\epsilon_3- \epsilon_1}= (c_{\epsilon_3- \epsilon_1}+ x_{\epsilon_2- \epsilon_3}x_{\epsilon_3- \epsilon2}\pm x_{\epsilon_1- \epsilon_3}x_{\epsilon_3- \epsilon_1})    $\newline

$ A_{\epsilon_2}= x_{\epsilon_3}( c_{\epsilon_2}+ x_{\epsilon_2- \epsilon_3}x_{\epsilon_3-\epsilon_2}\pm x_{\epsilon_1- \epsilon_3}x_{\epsilon_3-\epsilon_1})$,  \newline

 $A_{-\epsilon_2}= x_{-\epsilon_3 }^2(c_{-\epsilon_2}+ x_{\epsilon_2- \epsilon_3}x_{\epsilon_3- \epsilon_2}\pm x_{\epsilon_1- \epsilon_3}x_{\epsilon_3- \epsilon_1})  $\newline

 $A_{\epsilon_1}= x_{\epsilon_3}^2 (c_{\epsilon_1}+  x_{\epsilon_2- \epsilon_3}x_{\epsilon_3- \epsilon_2}\pm x_{-\epsilon_1- \epsilon_3}x_{\epsilon_3-  \epsilon_1}) $, \newline

 $A_{-\epsilon_1}= x_{-\epsilon_3}(c_{-\epsilon_1}+ x_{\epsilon_2- \epsilon_3}x_{\epsilon_3- \epsilon_2}\pm x_{\epsilon_1- \epsilon_3}x_{\epsilon_3- \epsilon_1}) $,\newline

$A_{\epsilon_2\pm \epsilon_j}= x_{-\epsilon_j}^{1 or 2}(c_{\epsilon_2\pm \epsilon_j}+ x_{\epsilon_2\pm \epsilon_j}x_{-(\epsilon_2\pm \epsilon_j)}\pm x_{\epsilon_1\pm \epsilon_j}x_{-(\epsilon_1\pm \epsilon_j)})   $,\newline

$A_{-\epsilon_2- \epsilon_j}= x_{-\epsilon_j}^3(c_{-\epsilon_2- \epsilon_j}+ x_{\epsilon_2+ \epsilon_j}x_{-\epsilon_2- \epsilon_j}\pm x_{\epsilon_1+ \epsilon_j}x_{-\epsilon_1- \epsilon_j})            $,\newline

$A_{\epsilon_j-\epsilon_2}= x_{\epsilon_j}(c_{\epsilon_j- \epsilon_2}+x_{\epsilon_2- \epsilon_j}x_{\epsilon_j- \epsilon_2}\pm x_{\epsilon_1- \epsilon_j}x_{\epsilon_j- \epsilon_1})$,\newline

$A_{\epsilon_1\pm \epsilon_j}= x_{\epsilon_3- \epsilon_j}^{1 or 2}(c_{\epsilon_1\pm \epsilon_j}+ x_{\epsilon_2\pm \epsilon_j}x_{-(\epsilon_2\pm \epsilon_j}\pm x_{\epsilon_1\pm \epsilon_j}x_{-(\epsilon_1\pm \epsilon_j)})$,\newline

$A_{-\epsilon_1- \epsilon_j}= x_{\epsilon_3- \epsilon_j}^3(c_{-\epsilon_1- \epsilon_j}+ x_{\epsilon_2+ \epsilon_j}x_{-\epsilon_2- \epsilon_j}\pm x_{\epsilon_1+ \epsilon_j}x_{-\epsilon_1- \epsilon_j}) $,\newline

$A_{\epsilon_j- \epsilon_1}= x_{\epsilon_j- \epsilon_3}(c_{\epsilon_j- \epsilon_1}+ x_{\epsilon_2- \epsilon_j}x_{\epsilon_j- \epsilon_2}\pm x_{\epsilon_1- \epsilon_j}x_{\epsilon_j- \epsilon_1})$,\newline

with the signs chosen so that they may commute with $x_\alpha$ and with $c_\beta\in F$ chosen so that $A_{\epsilon_2-\epsilon_1}$ and parentheses are invertible.\newline

For any other root $\beta$, we put $A_\beta= x_\beta^3 $ or $x_\beta^4 $ if possible.\newline

Otherwise we make use of the parentheses(      ) again used for designating $A_{-\beta}$. So in this case we put $A_\beta= x_\gamma^2 $       
 or $ x_\gamma^3  $ attached to these (      ) so that  $x_\alpha$ may commute with $A_\beta$.\newline

\end{prop}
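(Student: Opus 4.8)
The plan is to follow the same strategy used in Propositions 4.2--4.4: exhibit an explicit candidate $\mathfrak{B}$ for a $Q(\mathfrak{Z})$-basis of $u(L)/\mathfrak{M}_\chi$, verify by the Poincar\'e--Birkhoff--Witt theorem that $\mathfrak{B}$ is $F$-linearly independent in $u(L)$, and then show that no nontrivial $F$-linear combination of its elements can lie in $\mathfrak{M}_\chi$. Since $\mathfrak{M}_\chi$ is a maximal two-sided ideal with $u(L)/\mathfrak{M}_\chi\cong M_{p^m}(F)$ by the general theory recalled in Section 3 and $[Q(u(L)):Q(\mathfrak{Z})]=p^{2m}$, once $\mathfrak{B}$ is seen to be a linearly independent subset of this $p^{2m}$-dimensional quotient it is automatically a basis, and the dimension of the associated simple module is forced to be $p^{m}=p^{(n-l)/2}$, which is the numerical assertion.

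First I would reduce, using that any two roots of the same length are conjugate under the Weyl group of $\Phi$ and that the statement is invariant under such conjugation, to the two normal forms $\alpha=\epsilon_1$ (short root, case (I)) and $\alpha=\epsilon_1-\epsilon_2$ (long root, case (II)). In each case, if $\chi(x_\alpha)\neq 0$ or $\chi(x_{-\alpha})\neq 0$ we are already done by the preceding Proposition 4.4, so I may assume $x_\alpha^p\equiv x_{-\alpha}^p\equiv 0$ modulo $\mathfrak{M}_\chi$; under this hypothesis $g_\alpha=x_\alpha^{p-1}-x_{-\alpha}$ is invertible in $u(L)/\mathfrak{M}_\chi$ and $w_\alpha=(h_\alpha+1)^2+4x_{-\alpha}x_\alpha$ lies in the centre of $u(\mathfrak{sl}_2(F))$ --- these are the two facts that drive everything. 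With the $B_i$ chosen so that any $l+1$ of them are linearly independent in $\mathbb{P}^l(F)$, so that $\mathfrak{B}$ is $F$-linearly independent in $u(L)$, and so that $x_\alpha B_i\not\equiv B_i x_\alpha$, the explicit entries $A_\beta$ displayed in the statement are built by attaching suitable monomials $x_\gamma^{1\text{ or }2\text{ or }3}$ and ``parentheses'' of the form $c_\beta+\sum_\mu x_\mu x_{-\mu}$ to the Chevalley generators; the $x_\gamma$-prefactors are selected so that $x_\alpha$ commutes with each $A_\beta$ modulo $\mathfrak{M}_\chi$, and the constants $c_\beta\in F$ are selected so that $A_{-\epsilon_1}$ (resp.\ $A_{\epsilon_2-\epsilon_1}$) and every parenthesis is invertible in $u(L)/\mathfrak{M}_\chi$.

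The core of the argument is the contradiction step, run exactly as in the proofs of Propositions 4.3 and 4.4. Suppose some nontrivial $F$-linear combination of elements of $\mathfrak{B}$ lies in $\mathfrak{M}_\chi$; among all such relations choose one of least degree in the $h_{\alpha_j}\in H$ and, among those, with the fewest terms of top degree. Conjugating by $x_\alpha$ either strictly lowers the degree --- contradicting minimality --- or else forces the relation into the reduced shape $x_{\epsilon_1-\epsilon_2}K+K'\in\mathfrak{M}_\chi$ (long case) or $x_{\epsilon_1}K+K'\in\mathfrak{M}_\chi$ (short case), with $K$ and $K'$ commuting with $x_{\pm\alpha}$ and $K$ commuting with $g_\alpha$; the two sub-cases are symmetric, so it suffices to treat one. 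I would then multiply successively by $x_{-\alpha}$ (and by $x_{-\alpha}^{p-1}$), use that $x_{-\alpha}^p$ is a constant together with the centrality of $w_\alpha$ to convert products $x_{-\alpha}x_\alpha$ into polynomials in $h_\alpha$ via $4x_{-\alpha}x_\alpha=w_\alpha-(h_\alpha+1)^2$, then multiply by an appropriate power of $x_\alpha$ or by $g_\alpha^{1-p}$ (using the invertibility of $g_\alpha$, which shifts $h_\alpha\mapsto h_\alpha\pm 2$) and subtract a conjugate, eventually arriving at $(\text{a polynomial of degree}\geq 1\text{ in }h_\alpha)\,K\in\mathfrak{M}_\chi$. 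Since $\mathfrak{M}_\chi\cap\mathfrak{Z}(u(L))[h_\alpha,w_\alpha]$ is essentially the zero ideal of that polynomial ring over $\mathfrak{Z}(u(L))$, iterating conjugation by $x_\alpha$ together with subtraction kills the $h_\alpha$-dependence and yields $K\in\mathfrak{M}_\chi$, contradicting the minimality of the chosen relation; interchanging the roles of $K$ and $K'$ disposes of the remaining case.

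The step I expect to be the real obstacle is the explicit verification, for the $B_l$ root system with its short roots $\pm\epsilon_i$, long roots $\pm\epsilon_i\pm\epsilon_j$, and their various interactions with $\alpha$, that each listed $A_\beta$ does commute with $x_\alpha$ modulo $\mathfrak{M}_\chi$ and that every parenthesis (and $A_{-\epsilon_1}$, resp.\ $A_{\epsilon_2-\epsilon_1}$) can indeed be made invertible by the choice of $c_\beta$; this bookkeeping is genuinely heavier than the $A_l$ computation of Proposition 4.3 because of the extra root lengths. The PBW linear independence, the dimension count to $p^{2m}$, and the descent in $h_\alpha$ are then routine adaptations of the earlier proofs, and for the bulk of these one may simply invoke Proposition 2.1 in [KY-6], the argument in [KY-4], and the proof of Proposition 4.4 above.
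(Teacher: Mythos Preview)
Your proposal follows essentially the same route as the paper's own proof: exhibit the explicit $\mathfrak{B}$, invoke PBW for linear independence in $u(L)$, take a minimal hypothetical dependence relation, and drive it to a contradiction using the invertibility of $g_\alpha$ and the centrality of $w_\alpha$.

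Two small discrepancies are worth flagging. First, in the minimal-relation step the paper conjugates by $g_\alpha$ rather than by $x_\alpha$; since here $x_\alpha^p\equiv x_{-\alpha}^p\equiv 0$, it is $g_\alpha$ that is the invertible element implementing the shift $h_\alpha\mapsto h_\alpha-2$, so this is the correct conjugator. Second, the paper's reduction produces \emph{two} residual shapes, namely (i) $x_\beta K+K'\in\mathfrak{M}_\chi$ and (ii) $g_\alpha K+K'\in\mathfrak{M}_\chi$; the second arises because $A_\alpha=g_\alpha$ appears as a factor in $\mathfrak{B}$, and it is dispatched by rewriting as $K+g_\alpha^{-1}K'\in\mathfrak{M}_\chi$ and then feeding it back into the case-(i) argument. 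You allude to the $g_\alpha^{1-p}$ manoeuvre but do not isolate this second case explicitly.
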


\begin{proof}

Obvious from chapter5 in [KY-4].
Actually  we may prove  this proposition as follows.

Let $ \mathfrak {M}_\chi$ be the kernel of the corresponding  irreducible representation; so $\frak M_\chi$ is  a certain (2-sided) maximal ideal of $U(L).$ \newline

As suggested in the current  proposition, we consider separately two cases as follows.

(I) Assume first that $\alpha$ is a  short root; then we may put $\alpha$=$\epsilon_1$  without loss of generaity since all roots of a given length 
are conjugate under the Weyl group of the root system $\Phi$.\newline

If $\chi (x_{\epsilon_1}) \neq 0$ or $\chi (x_{- \epsilon_1}) \neq 0$, then our assertion is evident from proposition 4.4  above.\newline

So we may let $x_{\epsilon_1}^p \equiv x_{- \epsilon_1}^p \equiv 0$ modulo $\frak M_\chi$.\newline

 We let  $B_i$:=$b_{i1}$ $h_{\epsilon_{1}- \epsilon_{ 2}}$+ $b_{i2}$ $h_{\epsilon_{2}-\epsilon_{3}}$+$\cdots$+$b_{i,l-1}$ $h_{\epsilon_{l-1}-\epsilon_{ l }}$ + $b_{il}$ $h_{\epsilon_{l}}$  for $i=  1,2,$ $\cdots,2m,$ where ($b_{i1}$,$b_{i2}$   $\cdots,$$b_{il}$) $\in F^{l}$ are chosen so that any $ (l+1)-$$ B_i$'s  are linearly independent in $\mathbb {P}^l$ and  $\frak B$   below  becomes an $F-$ linearly independent  set in $U(L)$ if necessary and $x_\alpha$$B_i$ $\not\equiv$$B_i$$x_\alpha$ for $\alpha= \epsilon_1$

In  $u(L)$/$\mathfrak{M}_\chi$ we claim that we have a basis \newline

$\frak B$:= 
$\{(B_1 +A_{\epsilon_1})^{i_1}\otimes(B_2 +A_{-\epsilon_{1}})^{i_2}(B_3+ A_{\epsilon_1- \epsilon_2})^{i_3}\otimes(B_4+ A_{-\{\epsilon_1- \epsilon_2)})^{i_4}   \otimes \cdots \otimes(B_{2l} +A_{-(\epsilon_{l-1}-\epsilon_{l})})^{i_{2l}}\otimes(B_{2l+1}+ A_{\epsilon_{l}})^{i_{2l+1}}\otimes(B_{2l+2}+A_{-\epsilon_{l}})^{i_{2l+2}}\otimes(\otimes_{j=2l+3}^{2m}(B_j+A_{\alpha_{j}})^{i_{j}}) | 0 \leq i_{j}\leq p-1 \}$, \newline

where we put\newline

$A_{\epsilon_{1}}$= $g_\alpha$=
$g_{\epsilon_{1}},$ \newline

$A_{-\epsilon_{1}}$=$c_{-\epsilon_{1}}$ +$(h_{\epsilon_{1}}+1)^2$ +4$x_{-\alpha}$ $x_\alpha,$ \newline

$A_{-\epsilon_1\pm \epsilon_2}= x_{\epsilon_3\pm \epsilon_2}(c_{-\epsilon_{1}\pm \epsilon_{2}}+x_{-\epsilon_{1}\pm \epsilon_{2}}x_{\epsilon_1\mp \epsilon_2}\pm  x_{\pm \epsilon_{2}}x_{-(\pm \epsilon_{2})}\pm x_{\epsilon_1\pm \epsilon_2}x_{-(\epsilon_1\pm \epsilon_2)}),$\newline

$A_{\epsilon_1+ \epsilon_2}$=$x_{\epsilon
_3-\epsilon_2}^2$ $(c_{\epsilon_1+\epsilon_2}+ x_{-\epsilon_1-\epsilon_2}x_{\epsilon_1+\epsilon_2}\pm x_{-\epsilon_2}x_{\epsilon_2}\pm  x_{\epsilon_1- \epsilon_2}x_{-(\epsilon_1-\epsilon_2)})$,\newline

$A_{-\epsilon_1\pm \epsilon_j}= x_{-\epsilon_2\pm \epsilon_j}(c_{-\epsilon_1 \pm \epsilon_j}+x_{-\epsilon_1 \pm \epsilon_j}x_{\epsilon_1\mp \epsilon_j}\pm  x_{\pm \epsilon_j}x_{-(\pm \epsilon_j)}\pm x_{\epsilon_1\pm \epsilon_j}x_{-(\epsilon_1\pm \epsilon_j)}),$ \newline

$A_{\epsilon_1+ \epsilon_j}$=$x_{-\epsilon
_2-\epsilon_j}^2 (c_{\epsilon_1+\epsilon_j}+ x_{-\epsilon_i-\epsilon_1}x_{\epsilon_1+\epsilon_j}\pm x_{-\epsilon_j}x_{\epsilon_j}\pm  x_{\epsilon_1- \epsilon_j}x_{-(\epsilon_1-\epsilon_j)})$,\newline

$A_{\pm\epsilon_{2}}$=$x_{\epsilon_3\pm  \epsilon_2}^2 (c_{\pm\epsilon_{2}}+x_{\epsilon_{2}}x_{-\epsilon_{2}}\pm x_{\epsilon_{1}+ \epsilon_{2}}x_{-\epsilon_{1}-\epsilon_{2}} +x_{\epsilon_2- \epsilon_{1}}x_{\epsilon_1- \epsilon_2}),$ \newline

$A_{\pm \epsilon_j}$=$x_{\epsilon
_2\pm \epsilon_j} (c_{\pm \epsilon_j}+ x_{\epsilon_{j}}x_{-\epsilon_j}\pm x_{\epsilon_1+ \epsilon_j}x_{-\epsilon_1- \epsilon_j}\pm  x_{\epsilon_j- \epsilon_1}x_{\epsilon_1-\epsilon_j})$ ,\newline

with the sign chosen so that they commute with $x_{\alpha}$ and with $c_{\alpha}\in F$ 
chosen so that $A_{-\epsilon_{1}}$ and parentheses are invertible.  For any other root $ \beta$ we put $A_{\beta}$= $x_{\beta}^2 $ or $x_{\beta}^3 $ or $x_{\beta}^4$ if possible. \newline

Otherwise attach to these sorts the parentheses(        ) used for designating $A_{-\beta}$ so that  $A_\gamma  \forall \gamma \in \Phi$ may commute with $x_\alpha$.\newline

(II) Next  we assume that  $\alpha $ is a long root; we may  thus let   $\alpha=  \epsilon_1- \epsilon_2 $  without loss of generality.\newline

As we  have  done before, we let   $B_i $ be defined as in (I).\newline

 We have a Lee's basis in $u(L)/\frak M_\chi$ as $\frak B$$:= \{(B_1+ A_{\epsilon_1- \epsilon_2})^{i_1}\otimes (B_2+ A_{-(\epsilon_1-\epsilon_2)})^{i_2}\otimes \cdots \otimes(B_{2l-2}+ A_{-(\epsilon_{l-1}-\epsilon_l)})^{i_{2l-2}}\otimes (B_{2l-1}+ A_{\epsilon_l})^{i_{2l-1}}\otimes (B_{2l}+ A_{-\epsilon_l})^{i_{2l}}\otimes (\otimes_{j=2l+1}^{2m}(B_j+ A_{\alpha_j})^{i_j})| 0\leq i_ j\leq p-1 \} $,\newline

where we put  \newline

$A_{\epsilon_1-\epsilon_2}= g_\alpha= g_{\epsilon_1- \epsilon_2}$,\newline

 $A_{\epsilon_2- \epsilon_1}=g_\alpha^2 \{c_{\epsilon_2- \epsilon_1}+ (h_{\epsilon_1- \epsilon_2}+ 1 )^2+ 4x_{\epsilon_2-\epsilon_1}x_{\epsilon_1- \epsilon_2}\} $,\newline

$A_{\epsilon_1\pm \epsilon_3}= g_\alpha^3(c_{\epsilon_1 \pm \epsilon_3}+ x_{\epsilon_2\pm \epsilon_3}x_{-(\epsilon_2 \pm \epsilon_3)}                 \pm x_{\epsilon_1 \pm \epsilon_3}x_{-(\epsilon_1 \pm \epsilon_3)}) $,\newline

$A_{-\epsilon_1- \epsilon_3}= g_\alpha^4(c_{-\epsilon_1- \epsilon_3}+ x_{\epsilon_1+ \epsilon_3}x_{-\epsilon_1- \epsilon_3}\pm x_{\epsilon_2+ \epsilon_3}x_{-\epsilon_2- \epsilon_3})$,\newline

$A_{\epsilon_3- \epsilon_1}= (c_{\epsilon_3- \epsilon_1}+ x_{\epsilon_2- \epsilon_3}x_{\epsilon_3- \epsilon2}\pm x_{\epsilon_1- \epsilon_3}x_{\epsilon_3- \epsilon_1})    $\newline

$ A_{\epsilon_2}= x_{\epsilon_3}( c_{\epsilon_2}+ x_{\epsilon_2- \epsilon_3}x_{\epsilon_3-\epsilon_2}\pm x_{\epsilon_1- \epsilon_3}x_{\epsilon_3-\epsilon_1})$,  \newline

 $A_{-\epsilon_2}= x_{-\epsilon_3 }^2(c_{-\epsilon_2}+ x_{\epsilon_2- \epsilon_3}x_{\epsilon_3- \epsilon_2}\pm x_{\epsilon_1- \epsilon_3}x_{\epsilon_3- \epsilon_1})  $\newline

$A_{\epsilon_1}= x_{\epsilon_3}^2 (c_{\epsilon_1}+  x_{\epsilon_2- \epsilon_3}x_{\epsilon_3- \epsilon_2}\pm x_{-\epsilon_1- \epsilon_3}x_{\epsilon_3-  \epsilon_1}) $, \newline

 $A_{-\epsilon_1}= x_{-\epsilon_3}(c_{-\epsilon_1}+ x_{\epsilon_2- \epsilon_3}x_{\epsilon_3- \epsilon_2}\pm x_{\epsilon_1- \epsilon_3}x_{\epsilon_3- \epsilon_1}) $,\newline

$A_{\epsilon_2\pm \epsilon_j}= x_{-\epsilon_j}^{1 or 2}(c_{\epsilon_2\pm \epsilon_j}+ x_{\epsilon_2\pm \epsilon_j}x_{-(\epsilon_2\pm \epsilon_j)}\pm x_{\epsilon_1\pm \epsilon_j}x_{-(\epsilon_1\pm \epsilon_j)})   $,\newline

$A_{-\epsilon_2- \epsilon_j}= x_{-\epsilon_j}^3(c_{-\epsilon_2- \epsilon_j}+ x_{\epsilon_2+ \epsilon_j}x_{-\epsilon_2- \epsilon_j}\pm x_{\epsilon_1+ \epsilon_j}x_{-\epsilon_1- \epsilon_j})            $,\newline

$A_{\epsilon_j-\epsilon_2}= x_{\epsilon_j}(c_{\epsilon_j- \epsilon_2}+x_{\epsilon_2- \epsilon_j}x_{\epsilon_j- \epsilon_2}\pm x_{\epsilon_1- \epsilon_j}x_{\epsilon_j- \epsilon_1})$,\newline

$A_{\epsilon_1\pm \epsilon_j}= x_{\epsilon_3- \epsilon_j}^{1 or 2}(c_{\epsilon_1\pm \epsilon_j}+ x_{\epsilon_2\pm \epsilon_j}x_{-(\epsilon_2\pm \epsilon_j}\pm x_{\epsilon_1\pm \epsilon_j}x_{-(\epsilon_1\pm \epsilon_j)})$,\newline

$A_{-\epsilon_1- \epsilon_j}= x_{\epsilon_3- \epsilon_j}^3(c_{-\epsilon_1- \epsilon_j}+ x_{\epsilon_2+ \epsilon_j}x_{-\epsilon_2- \epsilon_j}\pm x_{\epsilon_1+ \epsilon_j}x_{-\epsilon_1- \epsilon_j}) $,\newline

$A_{\epsilon_j- \epsilon_1}= x_{\epsilon_j- \epsilon_3}(c_{\epsilon_j- \epsilon_1}+ x_{\epsilon_2- \epsilon_j}x_{\epsilon_j- \epsilon_2}\pm x_{\epsilon_1- \epsilon_j}x_{\epsilon_j- \epsilon_1})$,\newline

with the signs chosen so that they may commute with $x_\alpha$ and with $c_\beta\in F$ chosen so that $A_{\epsilon_2-\epsilon_1}$ and parentheses are invertible.\newline

For any other root $\beta$, we put $A_\beta= x_\beta^3 $ or $x_\beta^4 $ if possible.\newline

Otherwise we make use of the parentheses(      ) again used for designating $A_{-\beta}$. So in this case we put $A_\beta= x_\gamma^2 $       
 or $ x_\gamma^3  $ attached to these (      ) so that  $x_\alpha$ may commute with $A_\beta$.\newline

We may see without difficulty that $\frak B$ is a linearly independent set in $u(L)$ by virtue of P-B-W theorem.\newline

We shall prove that a nontrivial linearly dependent  equation leads to absurdity. We assume first that we have a dependence equation which is of least degree with respect to $h_{\alpha_j}\in H$ and the number of whose highest  degree terms is also least.\newline

In case it is conjugated by $g_\alpha$, then there arises a nontrivial dependence equation of lower degree than the given one,which contradicts to our assumption.\newline

Otherwise  it reduces to one of the following forms, so that we have only to prove that \newline

(i)$x_{\beta}K+ K'\in \frak M_\chi$,

(ii) $g_\alpha K+ K'\in \frak M_\chi$ \newline

lead to a contradiction, where both $K$ and $K'$ commute with $x_{\pm \alpha}$  modulo $\frak M_\chi $. In particular $K$ commute with $g_\alpha$.
\newline

For the case (i), we may change it to the form $x_{\alpha}K+ K''\in \frak M_\chi$ for some $K''$ commuting with $x_\alpha$ modulo $\frak M_\chi$.\newline

So we have $x_\alpha^p K+ x_\alpha^{p-1}K''\equiv 0$, and hence $x_\alpha^{p-1}K''\equiv 0$.\newline

Subtracting from this $x_{-\alpha}x_\alpha K+ x_{-\alpha}K''\equiv 0$, we get \newline

$-x_{-\alpha}x_\alpha K+ g_\alpha K'' \equiv 0$. Recall here that $g_\alpha$ is invertible and $w_\alpha$ belongs to the center of $u(\frak {sl}_2 (F))$.

So we get  $4^{-1}\{(h_\alpha +1)^2- w_\alpha\}K+ g_\alpha K''\equiv 0$, and hence\newline

 $(\ast) g_\alpha^{p-1} 4^{-1}\{(h_\alpha + 1)^2- w_\alpha \}K+ cK'' \equiv 0$\newline

 is obtained and from the start equation we have \newline

$(\ast \ast)cx_\alpha K+ c K''\equiv 0$, where $g_\alpha^p- c \equiv 0$.\newline

Subtracting $(\ast \ast)$ from $(\ast)$, we have $4^{-1}g_\alpha^{p-1}\{(h_\alpha+ 1)^2- w_\alpha\}K- cx_\alpha K \equiv 0$.\newline

Multiplying this equation by $g_\alpha^{1-p}$ to the right, we obtain $4^{-1}g_\alpha^{p-1}\{(h_\alpha+ 1)^2- w_\alpha\}g_\alpha^{1-p}K- cx_\alpha g_\alpha^{1-p}K \equiv 0$ \newline

We thus have $4^{-1}\{(h_\alpha+ 1- 2)^2- w_\alpha\}K- x_\alpha g_\alpha K \equiv 0$.

So it follows that $4^{-1}\{(h_\alpha -1)^2- w_\alpha\}K+ x_\alpha x_{-\alpha}K \equiv 0 $.\newline

Next multiplying $x_{-\alpha}^{p-1}$ to the right of this last equation, we obtain $\{(h_{\alpha}- 1)^2- w_\alpha\}K x_{-\alpha}^{p-1}\equiv 0$.
Now multiply $x_\alpha$ in turn consecutively to the left of this equation until it becomes of the form \newline

( a nonzero polynomial of degree $\geq 1$ with respect to $h_\alpha)K $\newline
$\in \frak M_\chi$. \newline

By making use of  conjugation and subtraction consecutively, we are led to a contradiction.$K \in \frak M_\chi$.\newline

Finally for the case (ii),we consider $K+ g_\alpha^{-1}K' \in \frak M_\chi$.  So we have $x_\alpha K+ x_\alpha g_\alpha^{-1} K' \equiv 0$ modulo $\frak M_\chi$.

By analogy with the argument  as in the case (i), we obtain a contraiction $K \in \frak M_\chi$.

\end{proof}

Next we  note first that the root system of $C_l$-type Lie algebra over  $\mathbb C$ is just  $\Phi= \{\pm2 \epsilon_i,\pm (\epsilon_i \pm \epsilon_j)|1\leq i \neq j \leq l \geq 3\}$ with a base $\{\epsilon_1- \epsilon_2,\cdots ,\epsilon_{l-1}- \epsilon_l,2\epsilon_l \}, $ where $\epsilon_i$ and $\epsilon_j$ are linearly independent orthonormal unit vectors in $\mathbb R^l$. \newline

For a root $\alpha \in \Phi,$ we  also put $g_\alpha :=  x_\alpha^{p-1}- x_{- \alpha}$and $w_\alpha:= (h_\alpha+ 1)^2+ 4x_{-\alpha}x_\alpha$  as in section 2, where $[x_\alpha,x_{-\alpha}]= h_\alpha$.\newline

For an algebraically closed field $F$ of prime characteristic $p,$ the $C_l-$ type Lie algebra $L$ over $F$ is just the analogue over $F$ of the $C_l-$ type 
simple Lie algeba over $\mathbb C$.\newline

 In other words the $C_l-$ type Lie algebra over $F$ is isomorphic to the Chevalley Lie algebra of the form 
$\sum_{i=1}^{n}\mathbb{Z}c_i\otimes_\mathbb{Z}F,$ 
where $ n$= $dim_FL$ and $x_\alpha$= some $c_i$ for each $\alpha\in\Phi$ , $h_{\alpha}$= some $c_{j}$ with  $\alpha$  some base element  of  $\Phi$  for a Chevalley basis 
\{$c_{i}$\} of  the $C_{l}$ - type  Lie algebra over $\mathbb{C}$.\newline

We shall compute in this section the dimension of some simple modules of the $C_l$-type Lie algebra $L$ with a CSA  $H$ over an algebraically closed field $F$ of characteristic $p \geq 7$.\newline

Let $L$ be a  $C_l$-type simlpe Lie algebra over  $F$. Let $\chi$ be a character of any simple $L$-module with $\chi (x_\alpha)\neq 0$ for some $\alpha \in \Phi$,where $x_\alpha$ is an element in the Chevalley basis of  $L$ such that $F x_\alpha + Fh_\alpha+ Fx_{-\alpha}= \frak {sl}_2(F)$ with $[x_\alpha,x_{-\alpha}]= h_\alpha$.\newline

Then we have conjectured in [KY-6]  that any simple $L$-module with character $\chi$ is of dimension  $p^m=p^{n-l\over 2}$,where $n= dim L= 2m +l $ for a CSA  $ H $ with $dim H =l$. \newline

In this section we intend to actually  clarify this conjecture for modular $C_l$-type Lie algebra   $L$. \newline

\begin{prop}\label {thm3.1}

Let  $\alpha$  be any  root  in the root system  $\Phi$ of $L .$ If $\chi(x_\alpha)$ $\neq0,$ then $dim_F$$ \rho_\chi$$(u(L))$ = $p^{2m},$ where $ [Q(u(L)):Q(\mathfrak{Z})]$=$p^{2m}$=$p^{n-l}$ with $\mathfrak{Z}$ the center of $u(L)$  and  $Q$  denotes  the quotient algebra. \newline

So  we claim  that  the  simple module corresponding to this representation has $p^m$ as its  dimension.\newline

\end{prop}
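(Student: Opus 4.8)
The plan is to follow exactly the template already used for Propositions 4.2 and 4.4: exhibit an explicit \emph{Lee's basis} for the factor algebra $u(L)/\frak M_\chi$, where $\frak M_\chi$ denotes the kernel of the irreducible representation $\rho_\chi$, and read off the dimension count from it. Since the root system of type $C_l$ has exactly two root lengths and roots of a common length are conjugate under the Weyl group of $\Phi$, I would first normalize $\alpha$: take $\alpha=\epsilon_1-\epsilon_2$ when $\alpha$ is short and $\alpha=2\epsilon_l$ when $\alpha$ is long. In either case fix the $\frak{sl}_2$-triple $\{x_\alpha,x_{-\alpha},h_\alpha\}$ with $h_\alpha=[x_\alpha,x_{-\alpha}]$ and, as in [KY-4], put $B_i:=b_{i1}h_{\epsilon_1-\epsilon_2}+\cdots+b_{i,l-1}h_{\epsilon_{l-1}-\epsilon_l}+b_{il}h_{2\epsilon_l}$ for $i=1,\dots,2m$, the coefficient vectors $(b_{i1},\dots,b_{il})\in F^l$ chosen generically so that any $l+1$ of the $B_i$ are linearly independent in $\mathbb P^l(F)$, so that the set $\frak B$ below is $F$-linearly independent in $u(L)$, and so that $x_\alpha B_i\not\equiv B_i x_\alpha$ modulo $\frak M_\chi$. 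For each of the $2m$ root directions $\beta$ I would then specify an element $A_\beta$ which is either a low power $x_\beta^2$ or $x_\beta^3$ of a root vector or a root vector multiplied by an invertible ``Casimir-type'' parenthesis, together with $A_{-\alpha}=c_{-\alpha}+(h_\alpha+1)^2+4x_{-\alpha}x_\alpha$, the signs and the scalars $c_\beta\in F$ being chosen so that every $A_\beta$ commutes with $x_\alpha$ modulo $\frak M_\chi$ and so that $A_{-\alpha}$ and all the parentheses are invertible. The candidate Lee's basis is $\frak B:=\{\otimes_{j=1}^{2m}(B_j+A_{\beta_j})^{i_j}\ :\ 0\le i_j\le p-1\}$.

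Next I would verify that $\frak B$ is $F$-linearly independent in $u(L)$, which is immediate from the Poincar\'e--Birkhoff--Witt theorem since the distinct monomials have distinct PBW leading terms. It then suffices to show $\frak B$ stays linearly independent modulo $\frak M_\chi$: as $\dim_F u(L)/\frak M_\chi\le p^{2m}$ by $[Q(u(L)):Q(\frak Z)]=p^{2m}$ and Jacobson's density theorem, this forces $\dim_F\rho_\chi(u(L))=p^{2m}$, hence $\rho_\chi(u(L))$ is the full matrix algebra of size $p^m$ over $F$ by the discussion at the end of Section 3 and Corollary 3.18, whence the simple module has dimension $p^m$. To prove independence modulo $\frak M_\chi$, suppose a nontrivial dependence relation existed, and among all such choose one of least degree in the $h_{\alpha_j}\in H$ with the fewest terms of top degree. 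Conjugating by $x_\alpha$ either produces a relation of strictly smaller degree, contradicting minimality, or reduces the relation to the form $x_\alpha K+K'\in\frak M_\chi$ with $K,K'$ commuting with $x_{\pm\alpha}$ modulo $\frak M_\chi$. Using that $w_\alpha=(h_\alpha+1)^2+4x_{-\alpha}x_\alpha$ lies in the center of $u(\frak{sl}_2(F))$, that (because $\chi(x_\alpha)\neq 0$) the relevant powers of $x_{\pm\alpha}$ are central scalars modulo $\frak M_\chi$, and multiplying successively by $x_{-\alpha}^{p-1}$ and by $x_\alpha^{p-1}$, I would peel off these powers to reach an equation of the form (a nonzero polynomial of degree $\ge 1$ in $h_\alpha$)$K\in\frak M_\chi$; repeated conjugation by $x_\alpha$ and subtraction then annihilates the polynomial and yields $K\in\frak M_\chi$, the desired contradiction, interchanging the roles of $K$ and $K'$ if necessary.

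The genuine obstacle, exactly as in the $B_l$-case, is the type-$C_l$ bookkeeping: one must check that for \emph{every} one of the $2m$ root directions the element $A_\beta$ can actually be chosen to commute with $x_\alpha$ modulo $\frak M_\chi$ while keeping the whole set $\frak B$ PBW-independent. The presence of the long roots $\pm 2\epsilon_i$ and their interaction with the short roots $\pm(\epsilon_i\pm\epsilon_j)$ makes the sign choices and the invertibility of the ``Casimir'' parentheses more delicate than in types $A_l$ and $B_l$, which is presumably why the hypothesis $p\ge 7$ is imposed; the $\frak{sl}_2$-reduction step and the final polynomial-in-$h_\alpha$ argument are otherwise essentially identical to the proofs of Propositions 4.2 and 4.4. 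I would therefore spend the bulk of the write-up producing the explicit table of the $A_\beta$'s for the short-root and long-root normalizations and verifying the commutation and invertibility claims, in exact parallel with the displays appearing in Propositions 4.4 and 4.5.
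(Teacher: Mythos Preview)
Your proposal is correct and follows essentially the same approach as the paper: construct an explicit Lee's basis $\frak B$ of cardinality $p^{2m}$ in $u(L)/\frak M_\chi$, separate into the short-root and long-root normalizations, verify PBW-independence, and derive a contradiction from a putative minimal dependence relation by conjugation with $x_\alpha$ and the $w_\alpha$-trick. Two cosmetic differences: the paper normalizes the long root as $\alpha=2\epsilon_1$ rather than $2\epsilon_l$, and in the short-root case the paper makes the intermediate reduction step more explicit by listing five possible forms $x_{\pm 2\epsilon_j}K+K'$, $x_{\pm\epsilon_j\pm\epsilon_k}K+K'$, $x_{\epsilon_j-\epsilon_k}K+K'$ and showing via successive $ad\,x_\alpha$-operations how each collapses to the single form $x_{\epsilon_1-\epsilon_2}K+K''\in\frak M_\chi$ before running the $w_\alpha$-argument; you have folded this case analysis into a single sentence, but you correctly flag it as the ``bookkeeping'' that constitutes the bulk of the work.
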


\begin{proof}

Let $ \mathfrak {M}_\chi$ be the kernel of this irreducible representation,i.e., a certain (2-sided) maximal ideal of $u(L).$ \newline

(I) Assume first that $\alpha$ is a  short root; then we may put $\alpha$=$\epsilon_1-\epsilon_2$  without loss of generaity since all roots of a given length 
are conjugate under the Weyl group of the root system $\Phi$.\newline

 First we let \newline

 $ B_i$:=$b_{i1}$ $h_{\epsilon_{1}- \epsilon_{ 2}}$+ $b_{i2}$ $h_{\epsilon_{2}-\epsilon_{3}}$+$\cdots$+$b_{i,l-1}$ $h_{\epsilon_{l-1}-\epsilon_{ l }}$ + $b_{il}$ $h_{\epsilon_{l}}$  for $i=  1,2,$ $\cdots,2m,$ where ($b_{i1}$,$b_{i2}$   $\cdots,$$b_{il}$) $\in F^{l}$ are chosen so that any $ (l+1)-$$ B_i$'s  are linearly independent in $\mathbb{P}^l(F),$ the  $\frak B$   below  becomes an $F-$ linearly independent  set in $u(L)$ if necessary and $x_\alpha$$B_i$ $\not\equiv$$B_i$$x_\alpha$ for $\alpha$=$\epsilon_1-$$\epsilon_2.$ \newline

In  $u(L)$/$\mathfrak{M}_\chi$ we claim that we have a basis \newline

$\frak B$:= 
$\{(B_1 +A_{\epsilon_{1}-\epsilon_ { 2}})^{i_1}\otimes(B_2 +A_{-(\epsilon_{1}-\epsilon_ {2})})^{i_2}\otimes \cdots \otimes(B_{2l-2} +A_{-(\epsilon_{l-1}-\epsilon_{l})})^{i_{2l-2}}\otimes(B_{2l-1}+ A_{2\epsilon_{l}})^{i_{2l-1}}\otimes(B_{2l}+A_{-2\epsilon_{l}})^{i_{2l}}\otimes(\otimes_{j=2l+1}^{2m}(B_j+A_{\alpha_{j}})^{i_{j}}) | 0 \leq i_{j}\leq p-1 \}$, \newline

where we put\newline

$A_{\epsilon_{1}-\epsilon_{2}}$= $x_\alpha$=
$x_{\epsilon_{1}-\epsilon_{2}},$ \newline

$A_{\epsilon_{2}-\epsilon_{1}}$=$c_{-(\epsilon_{1}-\epsilon_{2})}$ +$(h_{\epsilon_{1}-\epsilon_{2}}+1)^2$ +4$x_\alpha$ $x_{-\alpha},$ \newline

$A_{{\epsilon_{2}}{\pm}\epsilon_{3}}$=$x_{\pm2\epsilon_{3}}$ $(c_{\epsilon_{2}\pm\epsilon_{3}}+ x_{\epsilon_{2}\pm\epsilon_{3}}x_{-(\epsilon_{2}\pm\epsilon_{3})}\pm x_{\epsilon_{1}\pm\epsilon_{3}}x_{-(\epsilon_{1}\pm\epsilon_{3})}),$ \newline

$A_{\epsilon_{1}+\epsilon_{2}}$=$x_{\epsilon_{1}-\epsilon_{2}}^2$ $(c_{\epsilon_{1}+\epsilon_{2}}+3x_{\epsilon_{1}+\epsilon_{2}}x_{-\epsilon_{1}-\epsilon_{2}}\pm 2x_{2\epsilon_{1}}x_{-2\epsilon_{1}}\pm 2 x_{2\epsilon_{2}}x_{-2\epsilon_{2}}),$\newline

 $A_{\epsilon_{2}\pm\epsilon_{k}}$=$x_{\epsilon_{3}\pm\epsilon_{k}}( c_{\epsilon_{2}\pm\epsilon_{k}}+x_{\epsilon_{2}\pm\epsilon_{k}}x_{-(\epsilon_{2}\pm\epsilon_{k})} \pm x_{\epsilon_{1}\pm\epsilon_{k}}x_{-(\epsilon_{1}\pm\epsilon_{k})} ),$ \newline

$A_{2\epsilon_{2}}$=$x_{2\epsilon_{3}}^2 (c_{2\epsilon_{2}}+2x_{2\epsilon_{2}}x_{-2\epsilon_{2}}\pm 3x_{\epsilon_{1}+ \epsilon_{2}}x_{-\epsilon_{1}-\epsilon_{2}} +2x_{2\epsilon_{1}}x_{-2\epsilon_{1}}),$ \newline

$A_{-2\epsilon_{1}}$=
$x_{-2\epsilon_{3}}^2 ( c_{-2\epsilon_{1}}+ 2x_{-2\epsilon_{1}}x_{2\epsilon_{1}}\pm3x_{-\epsilon_ {1}-\epsilon_{2}}x_{\epsilon_{1}+\epsilon_{2}}
\pm2x_{-2\epsilon_{2}}x_{2\epsilon_{2}}), $ \newline

$A_{-(\epsilon_{1}\pm\epsilon_{3})}$= $x_{-(\pm\epsilon_{3})}(c_{-(\epsilon_{2}\pm\epsilon_{3})}
+  x_{\epsilon_{2}\pm\epsilon_{3}}x_{-(\epsilon_{2}\pm\epsilon_{3})}\pm x_{\epsilon_{1}\pm\epsilon_{3}}x_{-(\epsilon_{1}\pm\epsilon_{3})}),$\newline

$A_{-(\epsilon_{1}\pm\epsilon_{k})}$= $x_{-(\epsilon_{3}\pm\epsilon_{k})}(c_{-(\epsilon_{1}\pm\epsilon_{k})}+ x_{\epsilon_{2}\pm\epsilon_{k}}x_{-(\epsilon_{2}\pm\epsilon_{k})}\pm x_{\epsilon_{1}\pm\epsilon_{k}}x_{-(\epsilon_{1}\pm\epsilon_{k})} ),$ \newline

$A_{2\epsilon_{l}}$= $x_{2\epsilon_{l}}^2, $ \newline

$A_{-2\epsilon_{l}}$= $x_{-2\epsilon_{l}}^2, $ 
\newline

with the sign chosen so that they commute with $x_{\alpha}$ and with $c_{\alpha}\in F$ 
chosen so that $A_{\epsilon_{2}-\epsilon_{1}}$ and parentheses are invertible.  For any other root $ \beta$ we put $A_{\beta}$= $x_{\beta}^2 $ or $x_{\beta}^3 $ if possible. Otherwise attach to these sorts the parentheses(        ) used for designating $A_{-\beta}$ so that  $A_\gamma  \forall \gamma \in \Phi$ may commute with $x_\alpha$.\newline

We shall prove that $\frak B$ is a basis in $u(L))$/$\mathfrak {M}_\chi$. \newline

By virtue of P-B-W theorem, it is not difficult to see that $\frak B$ is evidently a linearly independent set  over $F$ in $u(L)$. Furthermore $\forall$ $\beta$ $ \in\Phi$, $A_{\beta}\notin\mathfrak {M}_\chi$(see detailed proof below).\newline

We shall prove that a nontrivial linearly dependent equation leads to absurdity. We assume first that there is a dependence equation which is of least degree with respect to $h_{\alpha_{j}}\in H$ and the number of whose highest degree terms is also least. \newline

In case it is conjugated by $x_{\alpha}$, then there arises a nontrivial dependence equation of lower degree than the given one, which contradicts to our assumption.\newline

  Otherwise it reduces to one of the following forms:\newline

(i) $x_{2\epsilon_{j}}$$K$ + $K'$ $\in$ $\mathfrak{M}_\chi$ ,\newline

(ii) $x_{-2\epsilon_{j}}$$K$+ $K'$ $\in$ $\mathfrak{M}_\chi$ ,\newline

(iii)$x_{\epsilon_{j}+\epsilon_{k}}$ $K$ + $K'$$\in$ $\mathfrak{M}_\chi$,\newline

(iv)$x_{-\epsilon{j}-\epsilon_{k}}$$K$ + $K'$ $\in$ $\mathfrak{M}_\chi$,\newline

(v)$x_{\epsilon_{j}-\epsilon_{k}}$$K$ + $K'$ $\in$ $\mathfrak{M}_\chi$ ,\newline

where $K$, $K'$ commute with $x_{\alpha}$.\newline

For the case (i), we deduce successively \newline

$x_{\epsilon_{2}-\epsilon_{j}} x_{2\epsilon{j}}$$K$ + $x_{\epsilon_{2}-\epsilon{j}}$$K'$ $\in$ $\mathfrak{M}_\chi$

$\Rightarrow$ $x_{\epsilon_{2}+\epsilon_{j}}$$K$ + $x_{2\epsilon_{j}}$ $x_{\epsilon_{2}-\epsilon_{j}}$$K$ + $x_{\epsilon_{2}-\epsilon{j}}$$K'$
 $\in$ $\mathfrak{M}_\chi$ $\Rightarrow$($x_{\epsilon_{1}+\epsilon_{j}}$ or $x_{2\epsilon_{1}}$)$K$ + $x_{2\epsilon_{j}}$($x_{\epsilon_{1}-\epsilon_{j}}$ or $h_{\epsilon_{1}-\epsilon_{2}}$)$K$ +  ($x_{\epsilon_{1}-\epsilon_{j}}$ or $h_{\epsilon_{1}-\epsilon_{2}}$)$K'$ $\in$ $\mathfrak{M}_\chi$ \newline

by $adx_{\epsilon_{1}-\epsilon_{2}}$if $j$$\neq{1}$ or $j$=1 respectively, so that by successive $ adx_{\alpha}$ and rearrangement we get $x_{\epsilon_{1}\pm\epsilon_{j}}$$K+ K''$  $\in$ $\mathfrak{M}_\chi$ for some $K''$ commuting with $x_{\alpha}$ in view of  the start equation. So (i) reduces to (iii),(iv) or (v). \newline

Similarly as in (i)  and by adjoint operations , (ii) reduces to (iii),(iv) or (v). Also (iii),(iv) reduces to the form (v) putting $\epsilon_{j}$= -(-$\epsilon_{j}$), $\epsilon_{k}$= -(-$\epsilon_{k})$. \newline

Hence we have only to consider the case (v).
We consider\newline
 $x_{\epsilon_{k}-\epsilon_{2}}$ $x_{\epsilon_{j}-\epsilon_{k}}$ $K$+ $x_{\epsilon_{k}- \epsilon_{2}}$$K'$ $\in$ $\mathfrak{M}_\chi$ ,
so that ($x_{\epsilon_{j}-\epsilon_{2}}$+ $x_{\epsilon_{j}-\epsilon_{k}}$$x_{\epsilon_{k}-\epsilon_{2}}$)$K$ + $x_{\epsilon_{k}-\epsilon_{2}}$$K'$
$\in$ $\mathfrak{M}_\chi$ for $j,k$$\neq$1,2 . \newline

We thus have $x_{\epsilon_{j}-\epsilon_{2}}$$K$ + ($x_{\epsilon_{j}-\epsilon_{k}}$$x_{\epsilon_{k}-\epsilon_{2}}$ $K$ + $x_{\epsilon_{k}-\epsilon_{2}}$$K'$) $\in$ $\mathfrak{M}_\chi$, so that we may put this last (       )= another $K'$ alike as in the equation   (v).\newline

 Hence we need to show that  $x_{\epsilon_{j}-\epsilon_{2}}$$K$ + $K'$ $\in$ $\mathfrak{M}_\chi$ leads to absurdity.
We consider \newline

$x_{\epsilon_{2}-\epsilon_{j}}$$x_{\epsilon_{j}-\epsilon_{2}}$$K$ + $x_{\epsilon_{2}-\epsilon_{j}}$$K'$ $\in$ $\mathfrak{M}_\chi$
$\Rightarrow$ ($h_{\epsilon_{2}-\epsilon_{j}}+ x_{\epsilon_{j}-\epsilon_{2}}x_{\epsilon_{2}-\epsilon_{j}})K+  x_{\epsilon_{2}-\epsilon_{j}}K' \in \mathfrak{M}_\chi$ $\Rightarrow $    ($x_{\epsilon_{1}-\epsilon_{2}}$$\pm$$x_{\epsilon_{j}-\epsilon_{2}}$$x_{\epsilon_{1}-\epsilon_{j}}$)$K$ + $x_{\epsilon_{1}-\epsilon_{j}}$ $K'$$\in$ $\mathfrak{M}_\chi$  by $adx_{\epsilon_{1}-\epsilon_{2}}$ $\Rightarrow$ either $x_{\epsilon_{1}-\epsilon_{2}}$$K$ $\in$ $ \mathfrak{M}_\chi$ or  ( $x_{\epsilon_{1}-\epsilon_{2}}$ + $x_{\epsilon_{j}-\epsilon_{2}}$$x_{\epsilon_{1}-\epsilon_{j}}$)$K$+ $x_{\epsilon_{1}-\epsilon_{j}}$$K'$ $\in$ $\mathfrak{M}_\chi$ \newline

depending on [$x_{\epsilon_{j}-\epsilon_{2}}$, $x_{\epsilon_{1}-\epsilon_{j}}$]= +$x_{\epsilon_{1}-\epsilon_{2}}$  or    -$x_{\epsilon_{1}-\epsilon_{2}}$. The former case leads to $K$ $\in$ $\mathfrak{M}_\chi$, a contradiction. \newline

For the latter case

we consider \newline
$x_{\epsilon_{1}-\epsilon_{2}}$$K$ + ( $x_{\epsilon_{j}-\epsilon_{2}}$$x_{\epsilon_{1}-\epsilon_{j}}K$ + $x_{\epsilon_{1}-\epsilon_{j}}$$K'$)

$\in$ $\mathfrak{M}_\chi$.\newline

So we may put\newline

   $(\ast) x_{\epsilon_{1}-\epsilon_{2}}$$K$ + $K''$ $\in$$ \mathfrak{M}_\chi$,\newline

where $K''$=$x_{\epsilon_{j}-\epsilon_{2}}$$x_{\epsilon_{1}-\epsilon_{j}}$$K$+ $x_{\epsilon_{1}-\epsilon_{j}}$$K'$.
Thus $x_{\epsilon_{2}-\epsilon_{1}}$$x_{\epsilon_{1}-\epsilon_{2}}$$K$ + $x_{\epsilon_{2}-\epsilon_{1}}$$K''$ $\in$ $\mathfrak{M}_\chi$.
From $w_{\epsilon_1- \epsilon_2}:=(h_{\epsilon_1- \epsilon_2}+ 1)^2 +4 x_{\epsilon_2- \epsilon_1}x_{\epsilon_1- \epsilon_2}\in $ the center of  $u(\frak{sl}_2(F)), $ we get $ 4^{-1}\{w_{\epsilon_1- \epsilon_2}- (h+ 1)^2\}K+ x_{\epsilon_2- \epsilon_1}K'' \equiv 0 $  modulo $\frak M_\chi$.\newline

If $x_{\epsilon_2- \epsilon_1}^p\equiv c $ which is a constant,then \newline

$(\ast \ast)4^{-1}x_{\epsilon_2-\epsilon_1}^{p-1}\{w_{\epsilon_1- \epsilon_2}- (h_{\epsilon_1- \epsilon_2}+ 1)^2\}K+ cK''\equiv 0 $\newline

 is obtained.
\newline
From $(\ast),(\ast \ast)$, we have\newline

 $4^{-1}x_{\epsilon_2- \epsilon_1}^{p-1}\{w_{\epsilon_1- \epsilon_2}- (h_{\epsilon_1- \epsilon_2}+ 1)^2\}K- cx_{\epsilon_1- \epsilon_2}K\newline
\equiv 0$ modulo $\frak M_\chi$.\newline

Multiplying $x_{\epsilon_1- \epsilon_2}^{p-1}$ to this equation,we obtain \newline

 $(\ast \ast \ast)4^{-1}x_{\epsilon_1- \epsilon_2}^{p-1}x_{\epsilon_2- \epsilon_1}^{p-1}\{w_{\epsilon_1- \epsilon_2}- (h_{\epsilon_1- \epsilon_2}+ 1)^2\}K- cx_{\epsilon_1- \epsilon_2}^pK\equiv 0.  $\newline

By making use of $w_{\epsilon_1- \epsilon_2}$, we may deduce from $(\ast \ast \ast)$ an equation of the form \newline
( a polynomial of degree $\geq 1$ with respect to  $ h_{\epsilon_1- \epsilon_2})K- cx_{\epsilon_1- \epsilon_2}^pK\equiv 0.   $\newline

Finally if we use conjugation and subtraction consecutively,then we are led to a  contradiction $K\in \frak M_\chi.$\newline
Otherwise we may change the role of $K$ and $K'$ to obtain the absurdity alike.\newline

(II)Assume next that $\alpha$ is a long root; then we may put $\alpha=2 \epsilon_{1}$ because all roots of the same length are conjugate under the Weyl group of $\Phi$ .\newline

 Similarly as in (I), we let  $B_{i}$:= the same as in (I) except that this time  $\alpha=2\epsilon_{1}$ instead of $\epsilon_{1}-\epsilon_{2}$ . 
\newline

We claim that we have a basis in $u(L)/\frak M_\chi$ such as\newline

 $\frak B$:= $\{(B_{1}+ A_{2\epsilon_{1}})^{i_{1}}\otimes(B_{2} + A_{-2\epsilon_{1}})^{i_{2}}\otimes(B_{3}+ A_{\epsilon_{1}-\epsilon_{2}})^{i_{3}}\otimes(B_{4}+A_{-(\epsilon_{1}-\epsilon_{2})})^{i_{4}}\otimes\cdots \otimes(B_{2l}+ A_{-(\epsilon_{l-1}-\epsilon_{l})})^{i_{2l}}\otimes(B_{2l+1}+ A_{2\epsilon_{l}})^{i_{2l+ 1}}\otimes(B_{2l+ 2}+ A_{-2\epsilon_{l}})^{i_{2l+ 2}}\otimes(\otimes_{j=2l+ 3}^{2m}(B_{j}+ A_{\alpha_{j}})^{i_{j}}; 0 \leq i_{j}\leq p-1\}$ , \newline

where we put \newline

$A_{2\epsilon_{1}}$= $x_{2\epsilon_{1}}$, \newline

$A_{-2\epsilon_{1}}$= $c_{-2\epsilon_{1}}$+ $(h_{2\epsilon_{1}}+ 1)^{2}+ 4x_{-2\epsilon_{1}}x_{2\epsilon_{1}} $, \newline

$A_{-\epsilon_{1}\pm\epsilon_{2}}$=$ x_{-\epsilon_{3}\pm\epsilon_{2}}$$( c_{-\epsilon_{1}\pm\epsilon_{2}}$ $\pm$$x_{-\epsilon_{1}\pm\epsilon_{2}}$$x_{\epsilon_{1}\mp\epsilon_{2}}$$\pm$$x_{\epsilon_{1}\pm\epsilon_{2}}$$x_{-\epsilon_{1}\mp\epsilon_{2}})  $,\newline

$A_{-\epsilon_{!}\pm \epsilon_ {j}}= x_{-\epsilon_{2}\pm\epsilon_{j}}
( c_{-\epsilon_{1}\pm\epsilon_{j}} 
+ x_{\pm{\epsilon_{j}}-\epsilon_{1}}x_{\epsilon_{1}\mp\epsilon_{j}}$
$\pm$$x_{\epsilon_{1}\pm\epsilon_{j}}x_{-\epsilon_{1}\mp\epsilon_{j}})$ ,\newline

 and for any other root $\beta$ we put  $A_{\beta}= x_{\beta}^2 $ or $x_{\beta}^3 $  if possible.\newline

 Otherwise attach to these sorts the parentheses (           ) used for designating $A_{-\beta}$. 
 Likewise as in case (I),  we shall prove that $\frak B$ is a basis in $u(L)$/$\mathfrak{M}_\chi$. \newline

By virtue of P-B-W theorem, it is not difficult to see that $\frak B$ is evidently a linearly independent set over $F$ in $u(L)$. Moreover $\forall\beta$$\in$$ \Phi$, $A_{\beta}\notin \mathfrak{M}_\chi$(see detailed proof below).\newline

We shall prove that a nontrivial linearly dependent equation leads to absurdity. We assume first that there is a dependence equation which is of least degree with respect to $h_{\alpha_{j}}$ $\in$$H$ and the number of  whose highest degree terms is also least.\newline

 If it is conjugated by  $x_{\alpha}$, then there arises a nontrivial dependence equation of least degree than the given one,which contravenes our assumption. \newline

 Otherwise it reduces to one of the following forms: \newline

(i) $x_{2\epsilon_{j}}K + K'\in \mathfrak{M}_\chi$ ,\newline

(ii)  $x_{-2\epsilon_{j}}K + K' \in \mathfrak{M}_\chi$,\newline

(iii)$ x_{\epsilon_{j}+ \epsilon_{k}}K+ K' \in \mathfrak{M}_\chi$ ,\newline

(iv) $ x_{-\epsilon_{j}-\epsilon_{k}}K+ K' \in \mathfrak{M}_\chi$,\newline

(v) $x_{\epsilon_{j}-\epsilon_{k}}K + K' \in \mathfrak{M}_\chi$ ,\newline

where $K$ and $K'$ commute with $x_{\alpha}= x_{2\epsilon_{1}}$.\newline

For the case (i) , we consider  a particular case  $j$=1 first; if we assume $x_{2\epsilon_{1}}K+ K' \in \mathfrak{M}_\chi$, then we are led to a contradiction according to the similar argument ($\ast$) as in (I). \newline

So we assume  $x_{2\epsilon_{j}}K + K' \in \mathfrak{M}_\chi$ with $j\geq2$. Now we have  
$x_{2\epsilon_{j}}K+ K' \in \mathfrak{M}_\chi$ $\Rightarrow$ $x_{-\epsilon_{1}-\epsilon_{j}}x_{2\epsilon_{j}}K + x_{-\epsilon_{1}-\epsilon_{j}}K'\in 
\mathfrak{M}_\chi \Rightarrow $ $x_{-\epsilon_{1}+ \epsilon_{j}}K + x_{2\epsilon_{j}}x_{-\epsilon_{1}-\epsilon_{j}}K + x_{-\epsilon_{1}-\epsilon_{j}}K' \in \mathfrak{M}_\chi \Rightarrow$ by $adx_{2\epsilon_{1}},   x_{\epsilon_{1}+ \epsilon_{j}}K + x_{2\epsilon_{j}}x_{\epsilon_{1}-\epsilon_{j}}K + x_{\epsilon_{1}-\epsilon_{j}}K' \in \mathfrak{M}_\chi$ is obtained. Hence (i) reduces to (iii).\newline

 Similarly (ii)reduces to (iii) or (iv) or (v). So we have only to consider (iii), (iv) , (v). However (iii), (iv), (v) reduce to $x_{2\epsilon_{1}}K + K'' \in \mathfrak{M}_\chi$ after all considering the situation as in (I). Similarly following the argument as in (I), we are led to a contradiction $K \in \mathfrak{M}_\chi$ .

\end{proof}

Now we  are ready to consider another nonzero character $\chi$ different from that of proposition 4.6.

\begin{prop}
Let $\chi $ be a character of any simple $L$-module with $\chi(h_\alpha )\neq $ 0  for  some $\alpha \in \Phi$, where $h_\alpha$ is an element in the Chevalley basis of  $L$ such that $F x_\alpha + Fh_\alpha+ Fx_{-\alpha}= \frak {sl}_2(F)$ with $[x_\alpha,x_{-\alpha}]= h_\alpha \in H$.\newline

 We then  claim  that any simple $L$-module with character $\chi$ is of dimension  $p^m=p^{n-l\over 2}$,where $n= dim L= 2m +l $ for a CSA  H with $dim H =l$.
\end{prop}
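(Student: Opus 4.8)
The plan is to prove this proposition for the $C_l$-type modular Lie algebra by the same scheme used for type $A_l$ in Proposition~4.3 and for type $B_l$ in Proposition~4.5: for the kernel $\mathfrak{M}_\chi$ of the given simple representation of $L$ over $F$ (with $p\ge 7$), I would exhibit an explicit \emph{Lee's basis} of the finite-dimensional algebra $u(L)/\mathfrak{M}_\chi$ having $p^{2m}$ elements, which forces $u(L)/\mathfrak{M}_\chi\cong M_{p^m}(F)$ and hence $\dim_F V=p^m=p^{(n-l)/2}$ for every simple $L$-module $V$ with character $\chi$. Since the root system $\Phi=\{\pm 2\epsilon_i,\ \pm(\epsilon_i\pm\epsilon_j)\}$ has two root lengths and all roots of a fixed length are conjugate under the Weyl group, I would first split into the case of a short root, normalised to $\alpha=\epsilon_1-\epsilon_2$, and the case of a long root, normalised to $\alpha=2\epsilon_1$. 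If $\chi(x_\alpha)\neq 0$ or $\chi(x_{-\alpha})\neq 0$ the conclusion is immediate from Proposition~4.6; otherwise I would assume $x_\alpha^p\equiv x_{-\alpha}^p\equiv 0$ modulo $\mathfrak{M}_\chi$, so that $g_\alpha:=x_\alpha^{p-1}-x_{-\alpha}$ becomes invertible in $u(L)/\mathfrak{M}_\chi$, and I would make systematic use of $w_\alpha=(h_\alpha+1)^2+4x_{-\alpha}x_\alpha$, which lies in the centre of $u(\mathfrak{sl}_2(F))$.

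Next I would write down the candidate set $\mathfrak{B}=\{\,\bigotimes_{i=1}^{2m}(B_i+A_{\beta_i})^{i_j}:0\le i_j\le p-1\,\}$, where $\beta_1,\dots,\beta_{2m}$ enumerate all roots of $\Phi$, the $B_i$ are generic $F$-linear combinations of $h_{\epsilon_1-\epsilon_2},\dots,h_{\epsilon_{l-1}-\epsilon_l},h_{2\epsilon_l}$ chosen so that any $l+1$ of them are independent in $\mathbb{P}^l(F)$, so that $\mathfrak{B}$ is $F$-independent in $u(L)$, and so that $x_\alpha B_i\not\equiv B_i x_\alpha$; and where each $A_\beta$ equals $x_\beta$, or $g_\alpha^{k}x_\beta$ for a suitable power $k$, multiplied by a parenthesised Casimir-type factor, with the interior signs and the scalars $c_\beta\in F$ chosen so that (i) $A_\beta$ commutes with $x_\alpha$ modulo $\mathfrak{M}_\chi$, (ii) $A_{-\alpha}$ and every parenthesis are invertible in $u(L)/\mathfrak{M}_\chi$, and (iii) $A_\beta\notin\mathfrak{M}_\chi$ for all $\beta$. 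Linear independence of $\mathfrak{B}$ in $u(L)$ is then a direct consequence of the Poincar\'e--Birkhoff--Witt theorem.

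The crux is to rule out a nontrivial $F$-linear dependence of $\mathfrak{B}$ modulo $\mathfrak{M}_\chi$. I would pick such a relation of least degree in the $h_{\alpha_j}\in H$, and among those with the fewest top-degree terms; conjugation by $g_\alpha$ (or by $x_\alpha$ in the short-root case) either strictly decreases this complexity, contradicting minimality, or leaves the relation unchanged, in which case it collapses to one of the normal forms --- for $C_l$ these are $x_{\pm 2\epsilon_j}K+K'$, $x_{\pm\epsilon_j\pm\epsilon_k}K+K'$, $x_{\epsilon_j-\epsilon_k}K+K'$, or $g_\alpha K+K'$, each lying in $\mathfrak{M}_\chi$ with $K,K'$ commuting with $x_{\pm\alpha}$ and $K$ commuting with $g_\alpha$. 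Using successive adjoint actions of suitable root vectors and rearrangement I would reduce each of these, exactly as in the proof of Proposition~4.6, first to the shape $x_{\epsilon_j-\epsilon_2}K+K'\in\mathfrak{M}_\chi$ and then to $x_\alpha K+K''\in\mathfrak{M}_\chi$ with $K''$ commuting with $x_\alpha$. From $x_\alpha^p K+x_\alpha^{p-1}K''\equiv 0$ one gets $x_\alpha^{p-1}K''\equiv 0$; subtracting $x_{-\alpha}x_\alpha K+x_{-\alpha}K''\equiv 0$ and then invoking $g_\alpha^{-1}$ together with $w_\alpha$, I would extract an equation $(\,\text{a polynomial of degree}\ \ge 1\ \text{in}\ h_\alpha\,)K\equiv 0$; repeated conjugation by $x_\alpha$ followed by subtraction strips off this $h_\alpha$-polynomial and forces $K\in\mathfrak{M}_\chi$, the desired contradiction (the form $g_\alpha K+K'$ is handled by first passing to $K+g_\alpha^{-1}K'\in\mathfrak{M}_\chi$). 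Once $\mathfrak{B}$ is known to be a basis of $u(L)/\mathfrak{M}_\chi$, that algebra has $F$-dimension $p^{2m}$, so $u(L)/\mathfrak{M}_\chi\cong M_{p^m}(F)$ and every simple $L$-module with character $\chi$ has dimension $p^m$.

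I expect the main obstacle to be the combinatorial bookkeeping peculiar to type $C_l$: the presence of the long roots $\pm 2\epsilon_i$ and the occurrence of root strings of length three (for instance $2\epsilon_2,\ \epsilon_1+\epsilon_2,\ 2\epsilon_1$) make the choice of the auxiliary factors $g_\alpha^{k}$ and the parentheses attached to each $A_\beta$ --- so as to guarantee simultaneously commutation with $x_\alpha$ and non-membership in $\mathfrak{M}_\chi$ --- noticeably more delicate than in type $A_l$, and one must check, case by case and essentially along the lines of the proof of Proposition~4.6, that every adjoint operation used in reducing an arbitrary minimal dependence relation genuinely lowers its complexity rather than creating new top-degree terms.
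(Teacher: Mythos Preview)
Your proposal is correct and follows essentially the same approach as the paper: reduce to Proposition~4.6 when $\chi(x_{\pm\alpha})\neq 0$, otherwise build a Lee's basis $\mathfrak{B}$ of $u(L)/\mathfrak{M}_\chi$ using $g_\alpha$ and parenthesised Casimir-type factors, verify independence via PBW, and derive a contradiction from a minimal dependence relation by conjugation and the $w_\alpha$--$g_\alpha$ manipulations. The paper in fact writes out only the short-root case $\alpha=\epsilon_1-\epsilon_2$ explicitly (with the specific $A_\beta$'s involving powers of $g_\alpha$ up to $g_\alpha^6$ for the long roots $\pm 2\epsilon_1,\pm 2\epsilon_2$ and $\pm(\epsilon_1+\epsilon_2)$), so your anticipated difficulty with the length-three root strings is exactly where the paper's bookkeeping is concentrated.
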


\begin{proof}

Let $ \mathfrak {M}_\chi$ be the kernel of this irreducible representation,i.e., a certain (2-sided) maximal ideal of $u(L).$ \newline
If $x_{\epsilon_1- \epsilon_2}\not \equiv 0$ or $x_{\epsilon_2- \epsilon_1}\not \equiv 0$, then our assertion is evident from previous proposition 4.6  or proposition 4.1 in [KY-7].\newline

So we may let $x_{\epsilon_1- \epsilon_2}\equiv x_{\epsilon_2- \epsilon_1}\equiv 0$ modulo $\frak M_\chi$.\newline

(I) Assume first that $\alpha$ is a  short root; then we may put $\alpha$=$\epsilon_1-\epsilon_2$  without loss of generaity since all roots of a given length 
are conjugate under the Weyl group of the root system $\Phi$.\newline

 First we let  $B_i$:=$b_{i1}$ $h_{\epsilon_{1}- \epsilon_{ 2}}$+ $b_{i2}$ $h_{\epsilon_{2}-\epsilon_{3}}$+$\cdots$+$b_{i,l-1}$ $h_{\epsilon_{l-1}-\epsilon_{ l }}$ + $b_{il}$ $h_{\epsilon_{2l}}$  for $i=  1,2,$ $\cdots,2m,$ where ($b_{i1}$,$b_{i2}$   $\cdots,$$b_{il}$) $\in F^{l}$ are chosen so that any $ (l+1)-$$ B_i$'s  are linearly independent in $\mathbb{P}^l(F),$ the  $\frak B$   below  becomes an $F-$ linearly independent  set in $u(L)$ if necessary and $x_\alpha$$B_i$ $\not\equiv$$B_i$$x_\alpha$ for $\alpha$=$\epsilon_1-$$\epsilon_2.$ \newline

In  $u(L)$/$\mathfrak{M}_\chi$ we claim that we have a basis \newline

$\frak B$:= 
$\{(B_1 +A_{\epsilon_{1}-\epsilon_ { 2}})^{i_1}\otimes(B_2 +A_{-(\epsilon_{1}-\epsilon_ {2})})^{i_2}\otimes \cdots \otimes(B_{2l-2} +A_{-(\epsilon_{l-1}-\epsilon_{l})})^{i_{2l-2}}\otimes(B_{2l-1}+ A_{2\epsilon_{l}})^{i_{2l-1}}\otimes(B_{2l}+A_{-2\epsilon_{l}})^{i_{2l}}\otimes(\otimes_{j=2l+1}^{2m}(B_j+A_{\alpha_{j}})^{i_{j}}) | 0 \leq i_{j}\leq p-1 \}$, \newline

where we put\newline

$A_{\epsilon_{1}-\epsilon_{2}}= g_\alpha=
g_{\epsilon_{1}-\epsilon_{2}},$ \newline

$A_{\epsilon_{2}-\epsilon_{1}}$=$c_{-(\epsilon_{1}-\epsilon_{2})}$ +$(h_{\epsilon_{1}-\epsilon_{2}}+1)^2$ +4$x_{-\alpha}$ $x_\alpha,$ \newline

$A_{{\epsilon_{2}}{\pm}\epsilon_{3}}$=$x_{\pm2\epsilon_{3}}$ $(c_{\epsilon_{2}\pm\epsilon_{3}}+ x_{\epsilon_{2}\pm\epsilon_{3}}x_{-(\epsilon_{2}\pm\epsilon_{3})}\pm x_{\epsilon_{1}\pm\epsilon_{3}}x_{-(\epsilon_{1}\pm\epsilon_{3})}),$ \newline

$A_{\epsilon_{1}+\epsilon_{2}}$=$g_{\epsilon_1-\epsilon_2}^2$ $(c_{\epsilon_{1}+\epsilon_{2}}+2^{-1}x_{\epsilon_{1}+\epsilon_{2}}x_{-\epsilon_{1}-\epsilon_{2}}\pm 3^{-1}x_{2\epsilon_{1}}x_{-2\epsilon_{1}}\pm 3^{-1} x_{2\epsilon_{2}}x_{-2\epsilon_{2}}),$ \newline

$A_{-\epsilon_1- \epsilon_2}= g_{\epsilon_1- \epsilon_2}^3((c_{-\epsilon_{1}-\epsilon_{2}}+2^{-1}x_{\epsilon_{1}+\epsilon_{2}}x_{-\epsilon_{1}-\epsilon_{2}}\pm 3^{-1}x_{2\epsilon_{1}}x_{-2\epsilon_{1}}\pm 3^{-1} x_{2\epsilon_{2}}x_{-2\epsilon_{2}}),$ \newline

$A_{\epsilon_{2}\pm\epsilon_{k}}$=$x_{\epsilon_{3}\pm\epsilon_{k}}( c_{\epsilon_{2}\pm\epsilon_{k}}+x_{\epsilon_{2}\pm\epsilon_{k}}x_{-(\epsilon_{2}\pm\epsilon_{k})} \pm x_{\epsilon_{1}\pm\epsilon_{k}}x_{-(\epsilon_{1}\pm\epsilon_{k})} ),$ \newline

$A_{2\epsilon_{2}}$=$g_{\epsilon_1- \epsilon_2}^6 (c_{2\epsilon_{2}}+3^{-1}x_{2\epsilon_{2}}x_{-2\epsilon_{2}}\pm 2^{-1}x_{\epsilon_{1}+ \epsilon_{2}}x_{-\epsilon_{1}-\epsilon_{2}} +3^{-1}x_{2\epsilon_{1}}x_{-2\epsilon_{1}}),$ \newline

$A_{-2\epsilon_2}= g_\alpha A_{\epsilon_2- \epsilon_1}(c_{-2\epsilon_2}+ 3^{-1}x_{2\epsilon_2}x_{-2\epsilon_2}\pm 2^{-1}x_{\epsilon_1+ \epsilon_2}x_{-\epsilon_1- \epsilon_2}+ 3^{-1}x_{2\epsilon_1}x_{-2\epsilon_1}) $\newline

$A_{2\epsilon_1}= g_{\epsilon_1- \epsilon_2}^4(c_{2\epsilon_1}+ 3^{-1}x_{-2\epsilon_1}x_{2\epsilon_1}\pm 2^{-1}x_{-\epsilon_1- \epsilon_2}x_{\epsilon_1+ \epsilon_2}\pm 3^{-1}x_{-2\epsilon_2}x_{2\epsilon_2})$\newline

$A_{-2\epsilon_1}$=$g_{\epsilon_1- \epsilon_2}^5 ( c_{-2\epsilon_{1}}+ 3^{-1}x_{-2\epsilon_{1}}x_{2\epsilon_{1}}\pm2^{-1}x_{-\epsilon_ {1}-\epsilon_{2}}x_{\epsilon_1+\epsilon_2}
\pm3^{-1}x_{-2\epsilon_2}x_{2\epsilon_2}), $ \newline

$A_{-(\epsilon_{1}\pm\epsilon_{3})}$= $x_{-(\pm\epsilon_{3})}(c_{-(\epsilon_{2}\pm\epsilon_3)}
+  x_{\epsilon_{2}\pm\epsilon_{3}}x_{-(\epsilon_{2}\pm\epsilon_{3})}\pm x_{\epsilon_{1}\pm\epsilon_{3}}x_{-(\epsilon_{1}\pm\epsilon_{3})}),$
\newline

$A_{-(\epsilon_{1}\pm\epsilon_{k})}$= $x_{-(\epsilon_{3}\pm\epsilon_{k})}(c_{-(\epsilon_{1}\pm\epsilon_{k})}+ x_{\epsilon_{2}\pm\epsilon_{k}}x_{-(\epsilon_{2}\pm\epsilon_{k})}\pm x_{\epsilon_{1}\pm\epsilon_{k}}x_{-(\epsilon_{1}\pm\epsilon_{k})} ),$ \newline

$A_{2\epsilon_{l}}$= $x_{2\epsilon_{l}}^2$(if $l\neq 1,2) , $ \newline

$A_{-2\epsilon_{l}}$= $x_{-2\epsilon_{l}}^2, $ 
\newline

with the sign chosen so that they commute with $x_{\alpha}$ and with $c_{\alpha}\in F$ 
chosen so that $A_{\epsilon_{2}-\epsilon_{1}}$ and parentheses are invertible.  For any other root $ \beta$ we put $A_{\beta}$= $x_{\beta}^2 $ or $x_{\beta}^3 $ if possible. \newline

Otherwise attach to these sorts the parentheses(        ) used for designating $A_{-\beta}$ so that  $A_\gamma  \forall \gamma \in \Phi$ may commute with $x_\alpha$.\newline

We shall prove that $\frak B$ is a basis in $u(L))$/$\mathfrak {M}_\chi$. By virtue of P-B-W theorem, it is not difficult to see that $\frak B$ is evidently a linearly independent set  over $F$ in $u(L)$. Furthermore $\forall$ $\beta$ $ \in\Phi$, $A_{\beta}\notin\mathfrak {M}_\chi$(see detailed proof below).\newline

We shall prove that a nontrivial linearly dependent equation leads to absurdity.\newline

We assume first that there is a dependence equation which is of least degree with respect to $h_{\alpha_{j}}\in H$ and the number of whose highest degree terms is also least. \newline

In case it is conjugated by $x_{\alpha}$, then there arises a nontrivial dependence equation of lower degree than the given one, which contradicts our assumption.\newline

 Otherwise it reduces to one of the following forms:\newline

(i)$x_{\pm 2\epsilon_j}K+ K' \in \frak M_\chi, $\newline

(ii)$x_{\pm \epsilon_j \pm \epsilon_k}K+ K' \in \frak M_\chi,$\newline

(iii)$g_{\epsilon_1- \epsilon_2}K+ K' \in \frak M_\chi,$\newline

where $K,K'$ commute with $x_\alpha$ and $x_{-\alpha}$ modulo $\frak M_\chi$.\newline

By making use of proofs of proposition 4.6 and theorem 2.1 in [KY-7],we may reduce (i)  and (ii) to the equation of the form\newline

 $x_{\epsilon_1- \epsilon_2}K+ K'\in \frak M_\chi,$\newline

where $K$ commute with $x_{\pm (\epsilon_1- \epsilon_2)}$ and $K'$ commute with $x_{\epsilon_1- \epsilon_2}$ modulo $\frak M_\chi$.\newline

We have $x_{\epsilon_1- \epsilon_2}^p K+ x_{\epsilon_1- \epsilon_2}^{p-1}K' \equiv 0 $, so we get $x_{\epsilon_1- \epsilon_2}^{p-1}K'\equiv 0$.\newline

Subtracting $x_{\epsilon_2- \epsilon_1}x_{\epsilon_1- \epsilon_2}K+ x_{\epsilon_2- \epsilon_1}K'\equiv 0$ from  this equation, we obtain $ -x_{\epsilon_2- \epsilon_1}x_{\epsilon_1- \epsilon_2}K+ g_\alpha K'\equiv 0$.
We should remember that $g_\alpha$ is invertible in $u(L)/\frak M_\chi$ by virtue of  [RS].\newline

By the way we use $w_\alpha := (h_\alpha+ 1)^2+ 4x_{-\alpha}x_\alpha \in $ the center of $u(\frak {sl}_2(F))$.Hence we have $-4^{-1}\{w_\alpha- (h_\alpha + 1)^2\}K+ g_\alpha K'\equiv 0$. So we obtain

$4^{-1}g_\alpha^{p-1}\{(h_\alpha+ 1)^2- w_\alpha\}+ cK'\equiv 0 \cdots (\ast)$ \newline

and from the start equation we get \newline

$cx_\alpha K+ cK'\equiv 0 \cdots (\ast \ast)$.\newline

Subtracting $(\ast \ast)$ from $(\ast)$, we get $4^{-1}g_\alpha^{-1}\{(h_\alpha+ 1)^2- w_\alpha\}K- cx_\alpha K \equiv 0$.
Multiplying this equation by $g_\alpha^{1-p}$ to the right, we have\newline

 $4^{-1}g_\alpha^{p-1}\{h_\alpha+1)^2- w_\alpha\}g_\alpha^{1-p}K- cx_\alpha g_\alpha^{1-p}K \equiv 4^{-1}g_\alpha^{p-1} \{(h_\alpha+ 1)^2- w_\alpha\}g_\alpha^{1-p}K+ x_\alpha x_{-\alpha}K \equiv 0$.\newline

Conjugation of the brace of this equation $(p-1)$- times by $g_\alpha$ gives rise to $4^{-1}\{(h_\alpha- 1)^2- w_\alpha\}K+ x_\alpha x_{-\alpha}K\equiv 0$.
Next mutiplying $x_{-\alpha}^{p-1}$ to the right of the last equation, we obtain \newline

$\{(h_\alpha -1)^2- w_\alpha\}K x_{-\alpha}^{p-1}\equiv 0$ modulo $\frak M_\chi$. \newline

Now we multiply $x_\alpha$  to the left of this equaion consecutively until it becomes of the form \newline

(a nonzero polynomial of degree $\geq 1$ with respect to $h_\alpha)K \newline
\equiv 0$ modulo $ \frak M_\chi$.\newline

If we make use of conjugation and subtraction consecutively, then we arrive at a contradiction $K\equiv 0$.\newline
Otherwise we may change the role of $K$ and $K'$ to obtain the absurdity alike.\newline

Next for the case (iii),we change it to the form (iii)$'K+ g_\alpha^{-1}K'\in \frak M_\chi$.\newline

We thus have an equation\newline

 $x_{\epsilon_1- \epsilon_2}K+ x_{\epsilon_1- \epsilon_2}g_{\epsilon_1- \epsilon_2}^{-1}K'\equiv 0$ modulo $\frak M_\chi$.
According to the above argument, we are also  led to a contradiction $K\in \frak M_\chi$.

\end{proof}

Now  we turn to the classical Lie algebra of $D_l$- type. We let $L$ be any modular Lie algebra of $D_l$-type over any algebraically closed field  $F$ of characteristic $p\geq 7$.\newline

We note first that the  Lie algebra of $D_l$- type with rank $l$, i.e., the $D_l$-type Lie algera over $\mathbb{C}$ has its root system $\Phi$=$\{\pm(\epsilon_i\pm \epsilon_j) | i \neq j\}$,where $\epsilon_i , \epsilon_j $ are linearly independent  orthonormal unit vectors in $\mathbb{R}^l$ with $l \geq 4$. The base of $\Phi$ is equal to  $\{\epsilon_1-\epsilon_2,\epsilon_2- \epsilon_3,\cdots ,\epsilon_{l-1}-\epsilon_l,\epsilon_{l- 1}+ \epsilon_l\}.   $\newline

\begin{prop}

We let  $\alpha$  be any  root  in the root system  $\Phi$ of $L .$ If $\chi(x_\alpha)$ $\neq0,$ then $dim_F$$ \rho_\chi$$(u(L))$ = $p^{2m},$ where $ [Q(u(L)):Q(\mathfrak{Z})]$=$p^{2m}$=$p^{n-l}$ with $\mathfrak{Z}$ the center of $u(L)$  and  $Q$  denotes  the quotient algebra. \newline

So  we claim that the  simple module corresponding to this representation has $p^m$ as its  dimension and the Lee's basis for this simple module is given  as  follows.\newline

Let $\frak M_\chi$ be the kernel of this simple representation with character $\chi$. 
 We may put $\alpha=\epsilon_1-\epsilon_2$ since only long roots exist and all roots of the same length are conjugate under the Weyl group of $\Phi$.\newline

We put $B_i:=b_{i1}$ $h_{\epsilon_{1}- \epsilon_{ 2}}$+ $b_{i2}$ $h_{\epsilon_{2}-\epsilon_{3}}$+$\cdots$+$b_{i,l-1}$ $h_{\epsilon_{l-1}-\epsilon_{ l }}$+ $b_{il}h_{\epsilon_{l-1}+ \epsilon_l},$ where ($b_{i1}$,$b_{i2}$   $\cdots,$$b_{il}$) $\in F^{l}$ are chosen so that any $ (l+1)-$$ B_i$'s  are linearly independent in $\mathbb{P}^l(F),$ the  $\frak B$   below  becomes an $F-$ linearly independent  set in $u(L)$ if necessary and $x_\alpha$$B_i$ $\not\equiv$$B_i$$x_\alpha$ for $\alpha$=$\epsilon_1-$$\epsilon_2.$ \newline

In $ u(L))/\frak M_\chi$ we have a basis $\frak B$$:= \{(B_1+ A_{\epsilon_1- \epsilon_2})^{i_1}\otimes (B_2+ A_{-(\epsilon_1-\epsilon_2)})^{i_2}\otimes \cdots \otimes(B_{2l-2}+ A_{-(\epsilon_{l-1}-\epsilon_l)})^{i_{2l-2}}\otimes (B_{2l-1}+ A_{\epsilon_{l-1}+ \epsilon_l})^{i_{2l-1}}\otimes (B_{2l}+ A_{-(\epsilon_{l-1}+ \epsilon_l)})^{i_{2l}}\otimes (\otimes_{j=2l+1}^{2m}(B_j+ A_{\alpha_j})^{i_j})| 0\leq i_ j\leq p-1 \} $,\newline

where we put  \newline

$A_{\epsilon_1-\epsilon_2}= x_\alpha= x_{\epsilon_1- \epsilon_2}$,\newline

$ A_{\epsilon_2- \epsilon_1}=c_{\epsilon_2- \epsilon_1}+ (h_{\epsilon_1- \epsilon_2}+ 1 )^2+ 4x_{\epsilon_2-\epsilon_1}x_{\epsilon_1- \epsilon_2}$\newline

$A_{\epsilon_2\pm \epsilon_3}=     x_{\pm \epsilon_1- \epsilon_2}(c_{\epsilon_2\pm \epsilon_3}+ x_{\epsilon_2\pm \epsilon_3}x_{-(\epsilon_2\pm \epsilon_3)}\pm x_{\epsilon_1\pm \epsilon_3}x_{-(\epsilon_1\pm \epsilon_3)})$, \newline

$A_{-(\epsilon_1\pm \epsilon_3)}= ( x_{-\epsilon_1- \epsilon_2}^2$  or  $x_{-\epsilon_1- \epsilon_2}^3)(c_{-(\epsilon_1\pm \epsilon_3)}+ x_{\epsilon_2\pm \epsilon_3}x_{-(\epsilon_2\pm \epsilon_3)}\pm x_{\epsilon_1 \pm\epsilon_3}x_{-(\epsilon_1\pm \epsilon_3)}) $,\newline

$A_{\epsilon_2\pm \epsilon_k}= x_{\epsilon_3\pm \epsilon_k}
(c_{\epsilon_2\pm \epsilon_k}+ x_{\epsilon_2\pm \epsilon_k}x_{-(\epsilon_2\pm \epsilon_k)}\pm x_{\epsilon_1\pm \epsilon_k}x_{-(\epsilon_1\pm \epsilon_k)})  $ \newline

if $k\neq 1$, \newline

$A_{-(\epsilon_1\pm \epsilon_k)}= x_{-(\epsilon_2\pm \epsilon_k)}(c_{-(\epsilon_1\pm \epsilon_k)}+ x_{\epsilon_2\pm \epsilon_k }x_{-(\epsilon_2\pm \epsilon_k)}\pm x_{\epsilon_1\pm \epsilon_k}x_{-(\epsilon_1\pm \epsilon_k)}) $, \newline

with the signs chosen so that they  may commute with $x_\alpha$ and with $c_\beta\in F$ chosen so that $A_{\epsilon_2-\epsilon_1}$ and parentheses are invertible.\newline
For any other root $\beta$, we put $A_\beta= x_\beta^2 $ or $x_\beta^3 $ or $x_\beta^4 $ if possible.\newline

Otherwise we make use of the parentheses(      ) again used for designating $A_{-\beta}$. So in this case we put $A_\beta= x_\gamma^2 $       
 or $ x_\gamma^3  $ attached to these (      ) so that  $x_\alpha$ may commute with $A_\beta$.\newline

\end{prop}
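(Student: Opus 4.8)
The plan is to run the same argument used for the $A_l$-, $B_l$- and $C_l$-type cases above (Propositions~4.2, 4.4 and 4.6 --- the $A_l$, $B_l$ and $C_l$ cases), exploiting the feature special to $D_l$ that there is a single root length, so that every root is conjugate under the Weyl group of $\Phi$ to $\epsilon_1-\epsilon_2$; this is precisely why we may put $\alpha=\epsilon_1-\epsilon_2$ from the start. First I would reduce the statement to a basis count. From the general theory recalled just before Section~4 we have $[Q(u(L)):Q(\frak Z)]=p^{2m}=p^{n-l}$, and Jacobson's density theorem together with Schur's lemma gives $u(L)/\frak M_\chi\cong M_{p^m}(F)$, where $\frak M_\chi=\ker\rho_\chi$; moreover, by the proposition above exhibiting the modules $Z_\chi(\lambda)$, every simple $L$-module is a quotient of some $Z_\chi(\lambda)$ of dimension $p^m$, so $\dim_F V\le p^m$ and hence $\dim_F\rho_\chi(u(L))\le p^{2m}$. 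It therefore suffices to produce a subset $\frak B\subset u(L)$ of cardinality $p^{2m}$ whose image in $u(L)/\frak M_\chi$ is $F$-linearly independent; equality of dimensions follows at once, forcing $\dim_F V=p^m$.

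Next I would take the $B_i=\sum_{k=1}^{l-1}b_{ik}h_{\epsilon_k-\epsilon_{k+1}}+b_{il}h_{\epsilon_{l-1}+\epsilon_l}$ with the generic choice of the $(b_{ik})\in F^l$ and the $A_\beta$ exactly as displayed, and verify two preliminary facts. First, that $\frak B$ is $F$-linearly independent in $u(L)$: this is immediate from the P-B-W theorem once one checks that the leading PBW terms of the $A_\beta$ (the powers $x_\beta^2,x_\beta^3,x_\beta^4$, the $x_{\epsilon_2-\epsilon_1}$ inside $A_{\epsilon_2-\epsilon_1}$, and the outer multipliers $x_{\pm\epsilon_1-\epsilon_2}$, $x_{-\epsilon_1-\epsilon_2}^2$, $x_{-\epsilon_1-\epsilon_2}^3$, $x_{\epsilon_3\pm\epsilon_k}$, $x_{-(\epsilon_2\pm\epsilon_k)}$) are pairwise PBW-distinct. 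Second, that each $A_\beta$ is invertible in $u(L)/\frak M_\chi$, so $A_\beta\notin\frak M_\chi$: the parenthesized factors are made invertible by the freedom in the constants $c_\beta\in F$, the pure powers $x_\gamma^j$ are nonzero because $\chi(x_\alpha)\neq0$ carries invertibility along the $adx_\alpha$-orbit, and, crucially, the signs $\pm$ inside the parentheses are fixed precisely so that every $A_\beta$ commutes with $x_\alpha=x_{\epsilon_1-\epsilon_2}$ modulo $\frak M_\chi$. This commuting condition is the $D_l$-specific bookkeeping, read off the structure constants of the root system $\{\pm(\epsilon_i\pm\epsilon_j)\}$.

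The heart of the proof is then the familiar degree induction. If $\frak B$ fails to be a basis of $u(L)/\frak M_\chi$, choose among all nontrivial $F$-linear dependence relations modulo $\frak M_\chi$ one of least total degree in the Cartan generators $h_{\alpha_j}$ and, among those, one with the fewest top-degree monomials. Conjugating this relation by $x_\alpha$ either yields a relation of strictly smaller degree, contradicting minimality, or, after rearrangement, the relation collapses to the canonical form $x_{\epsilon_1-\epsilon_2}K+K'\in\frak M_\chi$ with $K,K'$ commuting with $x_{\pm\alpha}$ modulo $\frak M_\chi$. Since $D_l$ carries only the roots $\pm(\epsilon_i\pm\epsilon_j)$, the intermediate sub-cases $x_{\epsilon_i-\epsilon_j}K+K'$, $x_{\epsilon_i+\epsilon_j}K+K'$, $x_{-\epsilon_i-\epsilon_j}K+K'$ all reduce to this canonical form by repeated application of $adx_{\epsilon_1-\epsilon_2}$, exactly as cases (iii)--(v) do in the proof of Proposition~4.6, but now without the preliminary elimination of $\pm2\epsilon_j$-roots that the $C_l$ case requires. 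From the canonical form the computation is verbatim the $\frak{sl}_2(F)$-manipulation already carried out three times above: multiply by $x_{\epsilon_2-\epsilon_1}$; use $w_{\epsilon_1-\epsilon_2}:=(h_{\epsilon_1-\epsilon_2}+1)^2+4x_{\epsilon_2-\epsilon_1}x_{\epsilon_1-\epsilon_2}$ in the centre of $u(\frak{sl}_2(F))$; set $x_{\epsilon_2-\epsilon_1}^p\equiv c\in F$; multiply by $x_{\epsilon_1-\epsilon_2}^{p-1}$; combine to obtain an equation of the form (a polynomial of degree $\ge1$ in $h_{\epsilon_1-\epsilon_2}$)$K-cx_{\epsilon_1-\epsilon_2}^pK\equiv0$; and apply one further round of conjugation-and-subtraction to kill the last term, reaching the contradiction $K\in\frak M_\chi$ (interchanging the roles of $K$ and $K'$ if necessary), which is the $A_l$-type computation of Proposition~4.2 and case~(II) of Proposition~4.4.

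I expect the main obstacle to be this middle step rather than the $\frak{sl}_2$-endgame: one must simultaneously ensure that the explicit $D_l$ list of $A_\beta$ commutes with $x_{\epsilon_1-\epsilon_2}$ modulo $\frak M_\chi$, is invertible there, and has PBW-independent leading terms, so that $\frak B$ really carries $p^{2m}$ independent elements and the count is not accidentally smaller. Because the roots of $D_l$ are confined to the $\pm(\epsilon_i\pm\epsilon_j)$, the outer multipliers attached to roots involving $\epsilon_1$ and $\epsilon_2$ --- in particular $x_{\pm\epsilon_1-\epsilon_2}$ and the powers of $x_{-\epsilon_1-\epsilon_2}$ --- must be chosen with some care to avoid both PBW collisions and a loss of the $adx_\alpha$-commuting property, and the reduction of an arbitrary minimal dependence relation to the canonical form is the other spot where the $D_l$ combinatorics has to be handled by hand. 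Everything downstream of the canonical form is already present in the paper, so the remaining write-up would amount to checking these $D_l$-specific details and then citing the $A_l$ and $B_l$ computations above.
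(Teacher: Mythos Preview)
Your proposal is correct and follows essentially the same route as the paper's own proof: set up the Lee's basis $\frak B$, verify linear independence in $u(L)$ via P-B-W, run the least-degree induction on a hypothetical dependence relation, reduce via conjugation by $x_\alpha$ to the canonical form $x_{\epsilon_1-\epsilon_2}K+K'\in\frak M_\chi$, and finish with the $\frak{sl}_2$ manipulation using $w_{\epsilon_1-\epsilon_2}$ exactly as in Propositions~4.2 and~4.4. You are slightly more explicit than the paper about the intermediate reduction of the cases $x_{\epsilon_i\pm\epsilon_j}K+K'$ to the canonical form (the paper simply asserts ``it reduces to the following form''), and you correctly flag that the absence of short roots in $D_l$ eliminates the extra case analysis needed for $B_l$ and $C_l$; otherwise the arguments coincide.
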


\begin{proof}
Also evident from [KY-4].
We may actually prove  this proposition as follows.\newline

Let $\frak M_\chi$ be the kernel of this simple representation with character $\chi$. 
 We may put $\alpha=\epsilon_1-\epsilon_2$ since only long roots exist and all roots of the same length are conjugate under the Weyl group of $\Phi$.\newline

The Lee's basis for this simple module is given  as  follows.\newline

As  suggested in the  present proposition, we put $B_i:=b_{i1}$ $h_{\epsilon_{1}- \epsilon_{ 2}}$+ $b_{i2}$ $h_{\epsilon_{2}-\epsilon_{3}}$+$\cdots$+$b_{i,l-1}$ $h_{\epsilon_{l-1}-\epsilon_{ l }}$+ $b_{il}h_{\epsilon_{l-1}+ \epsilon_l},$ where ($b_{i1}$,$b_{i2}$   $\cdots,$$b_{il}$) $\in F^{l}$ are chosen so that any $ (l+1)-$$ B_i$'s  are linearly independent in $\mathbb{P}^l(F),$ the  $\frak B$   below  becomes an $F-$ linearly independent  set in $u(L)$ if necessary and $x_\alpha$$B_i$ $\not\equiv$$B_i$$x_\alpha$ for $\alpha$=$\epsilon_1-$$\epsilon_2.$ \newline

In $ u(L))/\frak M_\chi$ we have a basis $\frak B$$:= \{(B_1+ A_{\epsilon_1- \epsilon_2})^{i_1}\otimes (B_2+ A_{-(\epsilon_1-\epsilon_2)})^{i_2}\otimes \cdots \otimes(B_{2l-2}+ A_{-(\epsilon_{l-1}-\epsilon_l)})^{i_{2l-2}}\otimes (B_{2l-1}+ A_{\epsilon_{l-1}+ \epsilon_l})^{i_{2l-1}}\otimes (B_{2l}+ A_{-(\epsilon_{l-1}+ \epsilon_l)})^{i_{2l}}\otimes (\otimes_{j=2l+1}^{2m}(B_j+ A_{\alpha_j})^{i_j})| 0\leq i_ j\leq p-1 \} $,\newline

where we put  \newline

$A_{\epsilon_1-\epsilon_2}= x_\alpha= x_{\epsilon_1- \epsilon_2}$,\newline

$ A_{\epsilon_2- \epsilon_1}=c_{\epsilon_2- \epsilon_1}+ (h_{\epsilon_1- \epsilon_2}+ 1 )^2+ 4x_{\epsilon_2-\epsilon_1}x_{\epsilon_1- \epsilon_2}$\newline

$A_{\epsilon_2\pm \epsilon_3}=     x_{\pm \epsilon_1- \epsilon_2}(c_{\epsilon_2\pm \epsilon_3}+ x_{\epsilon_2\pm \epsilon_3}x_{-(\epsilon_2\pm \epsilon_3)}\pm x_{\epsilon_1\pm \epsilon_3}x_{-(\epsilon_1\pm \epsilon_3)})$, \newline

$A_{-(\epsilon_1\pm \epsilon_3)}= ( x_{-\epsilon_1- \epsilon_2}^2$  or  $x_{-\epsilon_1- \epsilon_2}^3)(c_{-(\epsilon_1\pm \epsilon_3)}+ x_{\epsilon_2\pm \epsilon_3}x_{-(\epsilon_2\pm \epsilon_3)}\pm x_{\epsilon_1 \pm\epsilon_3}x_{-(\epsilon_1\pm \epsilon_3)}) $,\newline

$A_{\epsilon_2\pm \epsilon_k}= x_{\epsilon_3\pm \epsilon_k}
(c_{\epsilon_2\pm \epsilon_k}+ x_{\epsilon_2\pm \epsilon_k}x_{-(\epsilon_2\pm \epsilon_k)}\pm x_{\epsilon_1\pm \epsilon_k}x_{-(\epsilon_1\pm \epsilon_k)})  $ \newline

if $k\neq 1$, \newline

$A_{-(\epsilon_1\pm \epsilon_k)}= x_{-(\epsilon_2\pm \epsilon_k)}(c_{-(\epsilon_1\pm \epsilon_k)}+ x_{\epsilon_2\pm \epsilon_k }x_{-(\epsilon_2\pm \epsilon_k)}\pm x_{\epsilon_1\pm \epsilon_k}x_{-(\epsilon_1\pm \epsilon_k)}) $, \newline

with the signs chosen so that they  may commute with $x_\alpha$ and with $c_\beta\in F$ chosen so that $A_{\epsilon_2-\epsilon_1}$ and parentheses are invertible.\newline
For any other root $\beta$, we put $A_\beta= x_\beta^2 $ or $x_\beta^3 $ or $x_\beta^4 $ if possible.\newline

Otherwise we make use of the parentheses(      ) again used for designating $A_{-\beta}$. So in this case we put $A_\beta= x_\gamma^2 $       
 or $ x_\gamma^3  $ attached to these (      ) so that  $x_\alpha$ may commute with $A_\beta$.\newline

It is not difficult to see  that the rest of this proof is similar to that of proposition 4.3.
We would like to actually complete the proof  as follows.

We may see without difficulty that $\frak B$ is a linearly independent set in $u(L)$ by virtue of P-B-W theorem.\newline

We shall prove that a nontrivial linearly dependent  equation leads to absurdity. We assume first that we have a dependence equation which is of least degree with respect to $h_{\alpha_j}\in H$ and the number of whose highest  degree terms is also least.\newline

In case it is conjugated by $x_\alpha$, then there arises a nontrivial dependence equation of lower degree than the given one,which contradicts to our assumption.\newline

Otherwise  it reduces to the following form, which we have only to prove:\newline

 $(\ast) x_{\epsilon_{1}-\epsilon_{2}}$$K$ + $K'$ $\in$$ \mathfrak{M}_\chi$ 

, where $K$ and $K'$  commute with $x_{\alpha}$ modulo $\frak {M}_\chi $.\newline

 We thus deal with and assume  $x_{\epsilon_{2}-\epsilon_{1}}$$x_{\epsilon_{1}-\epsilon_{2}}$$K$ + $x_{\epsilon_{2}-\epsilon_{1}}$$K''$ $\in$ $\mathfrak{M}_\chi$.\newline

From $w_{\epsilon_1- \epsilon_2}:=(h_{\epsilon_1- \epsilon_2}+ 1)^2 +4 x_{\epsilon_2- \epsilon_1}x_{\epsilon_1- \epsilon_2}\in $ the center of  $u(\frak{sl}_2(F)), $ we get $ 4^{-1}\{w_{\epsilon_1- \epsilon_2}- (h+ 1)^2\}K+ x_{\epsilon_2- \epsilon_1}K'' \equiv 0 $  modulo $\frak M_\chi$.\newline

If $x_{\epsilon_2- \epsilon_1}^p\equiv c $ which is a constant,then \newline

$(\ast \ast)4^{-1}x_{\epsilon_2-\epsilon_1}^{p-1}\{w_{\epsilon_1- \epsilon_2}- (h_{\epsilon_1- \epsilon_2}+ 1)^2\}K+ cK''\equiv 0 $\newline

 is obtained.
\newline
From $(\ast),(\ast \ast)$, we have\newline

 $4^{-1}x_{\epsilon_2- \epsilon_1}^{p-1}\{w_{\epsilon_1- \epsilon_2}- (h_{\epsilon_1- \epsilon_2}+ 1)^2\}K- cx_{\epsilon_1- \epsilon_2}K\newline
\equiv 0$ modulo $\frak M_\chi$.\newline

Multiplying $x_{\epsilon_1- \epsilon_2}^{p-1}$ to this equation,we obtain \newline

 $(\ast \ast \ast)4^{-1}x_{\epsilon_1- \epsilon_2}^{p-1}x_{\epsilon_2- \epsilon_1}^{p-1}\{w_{\epsilon_1- \epsilon_2}- (h_{\epsilon_1- \epsilon_2}+ 1)^2\}K- cx_{\epsilon_1- \epsilon_2}^pK\equiv 0.  $\newline

By making use of $w_{\epsilon_1- \epsilon_2}$, we may deduce from $(\ast \ast \ast)$ an equation of the form \newline
( a polynomial of degree $\geq 1$ with respect to  $ h_{\epsilon_1- \epsilon_2})K- cx_{\epsilon_1- \epsilon_2}^pK\equiv 0.   $\newline

Finally if we use conjugation and subtraction consecutively,then we are led to a  contradiction $K\in \frak M_\chi.$\newline
It may be necessary  for us to change the role of $K$ and $K'$ to obtain the absurdity alike.\newline

\end{proof}

\begin{prop}

We let $\chi $ be a character of any simple $L$-module with $\chi(h_\alpha )\neq $ 0  for  some $\alpha \in \Phi$,where $h_\alpha$ is an element in the Chevalley basis of  $L$ such that $F x_\alpha + Fh_\alpha+ Fx_{-\alpha}= \frak {sl}_2(F)$ with $[x_\alpha,x_{-\alpha}]= h_\alpha \in H$(a CSA of $L$).\newline
We then  claim that  $dim_F$$ \rho_\chi$$(u(L))$ = $p^{2m},$ where $ [Q(u(L)):Q(\mathfrak{Z})]$=$p^{2m}$=$p^{n-l}$ with $\mathfrak{Z}$ the center of $u(L)$  and  $Q$  denotes  the quotient algebra. \newline

Namely, we claim that  the  simple module corresponding to this representation has $p^m$ as its  dimension.
\end{prop}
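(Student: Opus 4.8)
The plan is to mimic the proof of the preceding $D_l$-proposition and of Propositions 4.6 and 4.9, reducing the statement to the construction and verification of an explicit Lee's basis of $u(L)/\mathfrak{M}_\chi$, where $\mathfrak{M}_\chi$ denotes the (two-sided, maximal) kernel of the given simple representation $\rho_\chi$. First I would dispose of the easy case: if $x_{\epsilon_1-\epsilon_2}\not\equiv 0$ or $x_{\epsilon_2-\epsilon_1}\not\equiv 0$ modulo $\mathfrak{M}_\chi$ (equivalently $\chi(x_\beta)\neq 0$ for some root $\beta$), then $\dim_F\rho_\chi(u(L))=p^{2m}$ and the module has dimension $p^m$ at once by the preceding proposition. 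So I may assume $x_{\epsilon_1-\epsilon_2}^p\equiv x_{\epsilon_2-\epsilon_1}^p\equiv 0$ modulo $\mathfrak{M}_\chi$; since $D_l$ has a single root length and all roots are Weyl-conjugate, I may take $\alpha=\epsilon_1-\epsilon_2$ with $\chi(h_\alpha)\neq 0$, where $Fx_\alpha+Fh_\alpha+Fx_{-\alpha}=\mathfrak{sl}_2(F)$.

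The heart of the argument is to exhibit, for $0\leq i_j\leq p-1$, the set $\mathfrak{B}=\{\bigotimes_j(B_j+A_{\beta_j})^{i_j}\}$ of cardinality $p^{2m}$, where $B_i=b_{i1}h_{\epsilon_1-\epsilon_2}+\cdots+b_{il}h_{\epsilon_{l-1}+\epsilon_l}$ with the $(b_{ij})$ chosen generically enough that any $l+1$ of the $B_i$ are independent in $\mathbb{P}^l(F)$, that $\mathfrak{B}$ is $F$-linearly independent in $u(L)$, and that $x_\alpha B_i\not\equiv B_i x_\alpha$; and where the $A_\beta$ are built from Chevalley root vectors with $g_\alpha=x_\alpha^{p-1}-x_{-\alpha}$ inserted as a power and with invertible ``parenthetical'' factors of the form $c_\beta+(\text{quadratic in root vectors})$, all chosen so that every $A_\beta$ commutes with $x_\alpha$ modulo $\mathfrak{M}_\chi$, is nonzero (indeed invertible where needed) modulo $\mathfrak{M}_\chi$, and so that $A_{\epsilon_1-\epsilon_2}=g_\alpha$ and $A_{\epsilon_2-\epsilon_1}=g_\alpha^{2}\{c_{\epsilon_2-\epsilon_1}+(h_{\epsilon_1-\epsilon_2}+1)^2+4x_{\epsilon_2-\epsilon_1}x_{\epsilon_1-\epsilon_2}\}$, the remaining $A_\beta$ dressing ordinary powers $x_\beta^{2}$, $x_\beta^{3}$, $x_\beta^{4}$ (or $x_\gamma$-variants attached to a parenthesis) so as to commute with $x_\alpha$. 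By P--B--W, $\mathfrak{B}$ is $F$-linearly independent in $u(L)$, and its count $p^{2m}=p^{n-l}$ agrees with $[Q(u(L)):Q(\mathfrak{Z})]$ from the general theory; since $u(L)/\mathfrak{M}_\chi\cong M_{p^m}(F)$ has dimension $p^{2m}$, it then suffices to prove that $\mathfrak{B}$ is linearly independent modulo $\mathfrak{M}_\chi$.

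To prove this last point I would argue by contradiction: choose a nontrivial dependence relation among $\mathfrak{B}$ of least degree in the $h_{\alpha_j}\in H$ and with the fewest highest-degree terms. Conjugating by $x_\alpha$ either produces a shorter relation (contradicting minimality) or annihilates the top terms; in the latter case the relation collapses to one of the shapes $(i)$ $x_\beta K+K'\in\mathfrak{M}_\chi$ for a root $\beta$, or $(ii)$ $g_\alpha K+K'\in\mathfrak{M}_\chi$, with $K,K'$ commuting with $x_{\pm\alpha}$ modulo $\mathfrak{M}_\chi$ and $K$ commuting with $g_\alpha$. In case $(ii)$, replacing the relation by $K+g_\alpha^{-1}K'\in\mathfrak{M}_\chi$ and left-multiplying by $x_\alpha$ reduces it to case $(i)$. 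In case $(i)$, a sequence of left/right multiplications by suitable $x_\gamma$'s together with $\mathrm{ad}\,x_\alpha$'s brings the relation to $x_{\epsilon_1-\epsilon_2}K+K''\in\mathfrak{M}_\chi$ with $K''$ commuting with $x_\alpha$; then $x_\alpha^{p}K+x_\alpha^{p-1}K''\equiv 0$ forces $x_\alpha^{p-1}K''\equiv 0$, and subtracting $x_{-\alpha}x_\alpha K+x_{-\alpha}K''\equiv 0$, using that $g_\alpha$ is invertible and $w_\alpha=(h_\alpha+1)^2+4x_{-\alpha}x_\alpha$ lies in the center of $u(\mathfrak{sl}_2(F))$, then multiplying by appropriate powers $g_\alpha^{\pm(p-1)}$, conjugating, multiplying $x_{-\alpha}^{p-1}$ on the right and $x_\alpha$ repeatedly on the left, one reaches a relation of the form $(\text{nonzero polynomial of degree}\ \geq 1\ \text{in}\ h_\alpha)\,K\in\mathfrak{M}_\chi$. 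Since $\mathfrak{M}_\chi\cap\mathfrak{Z}(u(L))[h_\alpha,w_\alpha]$ is essentially the prime ideal $(0)$ there, as in the proof of Proposition 4.3, one more round of conjugation-and-subtraction forces $K\in\mathfrak{M}_\chi$, the desired contradiction; interchanging the roles of $K$ and $K'$ settles the symmetric subcase. The main obstacle is precisely this bookkeeping in case $(i)$: checking that for every $D_l$-root $\beta$ the chosen $A_\beta$ really commutes with $x_\alpha$, stays nonzero (invertible where required) modulo $\mathfrak{M}_\chi$, and that the $\mathrm{ad}\,x_\alpha$-reductions terminate at the $x_\alpha$-form without ever multiplying by an element of $\mathfrak{M}_\chi$ — a computation that is delicate for $D_l$ because there are no short roots and the terminal fork $\epsilon_{l-1}\pm\epsilon_l$ has to be handled separately.
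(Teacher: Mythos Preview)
Your proposal is correct and follows essentially the same approach as the paper's own proof: reduce via Proposition~4.8 to the case $\chi(x_{\pm\alpha})=0$ with $\alpha=\epsilon_1-\epsilon_2$, build the Lee's basis $\mathfrak{B}$ with $A_{\epsilon_1-\epsilon_2}=g_\alpha$ and the remaining $A_\beta$ assembled from powers of $g_\alpha$ and invertible parenthetical factors commuting with $x_\alpha$, then derive a contradiction from a minimal dependence relation by reducing to the shapes $x_\beta K+K'\in\mathfrak{M}_\chi$ or $g_\alpha K+K'\in\mathfrak{M}_\chi$ and running the $w_\alpha$/$g_\alpha$ manipulation. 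The only cosmetic differences are that the paper conjugates by $g_\alpha$ rather than $x_\alpha$ in the minimality step, and its $A_{\epsilon_2-\epsilon_1}$ omits the $g_\alpha^{2}$ prefactor you include (your choice matches the $B_l$ pattern in Proposition~4.5(II) and works equally well).
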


\begin{proof}

We may   let   $\alpha=  \epsilon_1- \epsilon_2 $  without loss of generality since all roots are long and are conjugate under the Weyl group of $\Phi$.\newline

If  $\chi (x_{\alpha})\neq 0$ or $\chi (x_{-\alpha}) \neq 0$, then  our claim is clear from the preceding proposition4.8.
So we may let $x_{\epsilon_1}^p \equiv x_{- \epsilon_1}^p \equiv 0$ modulo $\frak M_\chi$.\newline

As we  have  done before, we let   $B_i $ be defined as in proposition 4.3.\newline

In  $u(L)/\frak M_\chi$, we have a Lee's basis as $\frak B$$:= \{(B_1+ A_{\epsilon_1- \epsilon_2})^{i_1}\otimes (B_2+ A_{-(\epsilon_1-\epsilon_2)})^{i_2}\otimes \cdots \otimes(B_{2l-2}+ A_{-(\epsilon_{l-1}-\epsilon_l)})^{i_{2l-2}}\otimes (B_{2l-1}+ A_{\epsilon_{l-1}+ \epsilon_l})^{i_{2l-1}}\otimes (B_{2l}+ A_{-(\epsilon_{l-1}+ \epsilon_l})^{i_{2l}}\otimes (\otimes_{j=2l+1}^{2m}(B_j+ A_{\alpha_j})^{i_j})| 0\leq i_ j\leq p-1 \} $,\newline

where we put  \newline

$A_{\epsilon_1-\epsilon_2}= g_\alpha= g_{\epsilon_1- \epsilon_2}$,\newline

 $A_{\epsilon_2- \epsilon_1}=c_{\epsilon_2- \epsilon_1}+ (h_{\epsilon_1- \epsilon_2}+ 1 )^2+ 4x_{\epsilon_2-\epsilon_1}x_{\epsilon_1- \epsilon_2} $,\newline

$A_{\epsilon_2\pm \epsilon_3}= g_\alpha^{2 or 3} c_{\epsilon_2 \pm \epsilon_3}+ x_{\epsilon_2\pm \epsilon_3}x_{-(\epsilon_2 \pm \epsilon_3)}                 \pm x_{\epsilon_1 \pm \epsilon_3}x_{-(\epsilon_1 \pm \epsilon_3)}) $,\newline

$A_{-(\epsilon_1\pm \epsilon_3)}= g_\alpha^{4  or  5}(c_{-(\epsilon_1\pm \epsilon_3)}+ x_{\epsilon_1\pm  \epsilon_3}x_{-(\epsilon_1\pm \epsilon_3)}\pm x_{\epsilon_2\pm \epsilon_3}x_{-(\epsilon_2\pm \epsilon_3)})$,\newline

$A_{\epsilon_2\pm \epsilon_k}= x_{\epsilon_3\pm \epsilon_k}(c_{\epsilon_2\pm \epsilon_k}+ x_{\epsilon_2\pm \epsilon_k}x_{-(\epsilon_2\pm \epsilon_k)}\pm x_{\epsilon_1\pm \epsilon_k}x_{-(\epsilon_1\pm \epsilon_k)})   $,\newline

$A_{-(\epsilon_1\pm \epsilon_k)}= x_{-(\epsilon_3\pm \epsilon_k)}^ 2(c_{-(\epsilon_1\pm \epsilon_k)}+ x_{\epsilon_2\pm \epsilon_k}x_{-(\epsilon_2\pm \epsilon_k)}\pm x_{\epsilon_1\pm \epsilon_k}x_{-(\epsilon_1\pm \epsilon_k)})$,\newline

with the signs chosen so that they may commute with $x_\alpha$ and with $c_\beta\in F$ chosen so that $A_{\epsilon_2-\epsilon_1}$ and parentheses are invertible.\newline

For any other root $\beta$, we put $A_\beta= x_\beta^3 $ or $x_\beta^4 $ if possible.\newline

Otherwise we make use of the parentheses(      ) again used for designating $A_{-\beta}$. So in this case we put $A_\beta= x_\gamma^3 $       
 or $ x_\gamma^4 $ attached to these (      ) so that  $x_\alpha$ may commute with $A_\beta$.\newline
 
We may see without difficulty that $\frak B$ is a linearly independent set in $u(L)$ by virtue of P-B-W theorem.\newline

We shall prove that a nontrivial linearly dependent  equation leads to absurdity. We assume first that we have a dependence equation which is of least degree with respect to $h_{\alpha_j}\in H$ and the number of whose highest  degree terms is also least.\newline

In case it is conjugated by $g_\alpha$, then there arises a nontrivial dependence equation of lower degree than the given one,which contradicts to our assumption.\newline

Otherwise  it reduces to one of the following forms, so that we have only to prove that \newline

(i)$x_{\beta}K+ K'\in \frak M_\chi$,

(ii) $g_\alpha K+ K'\in \frak M_\chi$ \newline

lead to a contradiction, where both $K$ and $K'$ commute with $x_{\pm \alpha}$  modulo $\frak M_\chi $. In particular $K$ commute with $g_\alpha$.
\newline

For the case (i), we may change it to the form $x_{\alpha}K+ K''\in \frak M_\chi$ for some $K''$ commuting with $x_\alpha$ modulo $\frak M_\chi$.\newline

So we have $x_\alpha^p K+ x_\alpha^{p-1}K''\equiv 0$, and hence $x_\alpha^{p-1}K''\equiv 0$.\newline

Subtracting from this $x_{-\alpha}x_\alpha K+ x_{-\alpha}K''\equiv 0$, we get \newline

$-x_{-\alpha}x_\alpha K+ g_\alpha K'' \equiv 0$. Recall here that $g_\alpha$ is invertible and $w_\alpha$ belongs to the center of $u(\frak {sl}_2 (F))$.

So we get  $4^{-1}\{(h_\alpha +1)^2- w_\alpha\}K+ g_\alpha K''\equiv 0$, and hence\newline

 $(\ast) g_\alpha^{p-1} 4^{-1}\{(h_\alpha + 1)^2- w_\alpha \}K+ cK'' \equiv 0$\newline

 is obtained and from the start equation we have \newline

$(\ast \ast)cx_\alpha K+ c K''\equiv 0$, where $g_\alpha^p- c \equiv 0$.\newline

Subtracting $(\ast \ast)$ from $(\ast)$, we have $4^{-1}g_\alpha^{p-1}\{(h_\alpha+ 1)^2- w_\alpha\}K- cx_\alpha K \equiv 0$.\newline

Multiplying this equation by $g_\alpha^{1-p}$ to the right, we obtain $4^{-1}g_\alpha^{p-1}\{(h_\alpha+ 1)^2- w_\alpha\}g_\alpha^{1-p}K- cx_\alpha g_\alpha^{1-p}K \equiv 0$ \newline

We thus have $4^{-1}\{(h_\alpha+ 1- 2)^2- w_\alpha\}K- x_\alpha g_\alpha K \equiv 0$.

So it follows that $4^{-1}\{(h_\alpha -1)^2- w_\alpha\}K+ x_\alpha x_{-\alpha}K \equiv 0 $.\newline

Next multiplying $x_{-\alpha}^{p-1}$ to the right of this last equation, we obtain $\{(h_{\alpha}- 1)^2- w_\alpha\}K x_{-\alpha}^{p-1}\equiv 0$.
Now multiply $x_\alpha$ in turn consecutively to the left of this equation until it becomes of the form \newline

( a nonzero polynomial of degree $\geq 1$ with respect to $h_\alpha)K $\newline
$\in \frak M_\chi$. \newline

By making use of  conjugation and subtraction consecutively, we are led to a contradiction.$K \in \frak M_\chi$.\newline

Finally for the case (ii),we consider $K+ g_\alpha^{-1}K' \in \frak M_\chi$.  So we have $x_\alpha K+ x_\alpha g_\alpha^{-1} K' \equiv 0$ modulo $\frak M_\chi$.

By analogy with the argument  as in the case (i), we obtain a contraiction $K \in \frak M_\chi$.

\end{proof}

\

 $\textbf{[Remark]}$
We have seen up to now that  classical modular Lie algebras of $D_l,C_l, B_l,A_l$-types  have no subregular point under mild conditions.\newline

So we cannot help believing that  all  classical  Lie algebras and  exceptional type Lie algebras  behave like this, and thus we conjectured  such a similar result in [KWa].\newline
In the next paragraphs and followups, we shall preferably deal with  the exceptional type $E_8$.\newline

As is well known the $F_4$- type Lie algebra over $F$ has its dimension $n=52$ with $ l=4$ and the $G_2$- type Lie algebra  over $F$ has its dimension
$n= 14$  with $l=2$. So comparing these with  the types $E_6,E_7$, and $E_8$, we  may find  out their Lee's  bases without difficulty  due to  the fact that  for any $x_{\beta}$ in the Chevalley basis of $L$ over $F$, elements $x_{\beta}, x_{\beta}^2,\cdots, x_{\beta}^6$  all  have   distinct weights with respect to  the  $ad(H)$- module action.\newline

Furthermore  the  root systems of  $E_7$ and $E_6$ are  subroot systems  of  the root system of  $E_8$, so we have only to  deal  exclusively with $E_8$- type  to figure  out  the Lee's bases for the exceptional type Lie algebras. We  might refer to the reference [7]  for the  Lee's bases of  other  exceptional types.\newline

We note  that  the  Lie algebra of $E_8$- type with rank $8$, i.e., the $E_8$-type Lie algera over $\mathbb{C}$ has its root system $\Phi$=$\{\pm(\epsilon_i\pm \epsilon_j) |1\leq i \neq j \leq 8 ; 2^{-1}\Sigma_{i=1}^8 (-1)^{k(i)}\epsilon_i$,where $ k(i)$= 0,1 and add up to an even  integer                                                        \}. Here $\epsilon_i , \epsilon_j $ are linearly independent  orthonormal unit vectors in $\mathbb{R}^8$ . The base of $\Phi$ is equal to  $\{\epsilon_1+\epsilon_2,\epsilon_2- \epsilon_1, \epsilon_3- \epsilon_2,\epsilon_4-\epsilon_3,\epsilon_5- \epsilon_4,\epsilon_6- \epsilon_5,\epsilon_7- \epsilon_6,2^{-1}(\epsilon_1+\epsilon_8-(\epsilon_2+ \cdots +\epsilon_7 ) \}.   $\newline

\begin{prop}
Let  $\alpha$  be any  root  in the root system  $\Phi$ of $L= E_8 .$ If $\chi(x_\alpha)$ $\neq0,$ then $dim_F$$ \rho_\chi$$(u(L))$ = $p^{2m},$ where $ [Q(u(L)):Q(\mathfrak{Z})]$=$p^{2m}$=$p^{n-l}=p^{248- 8}$ with $\mathfrak{Z}$ the center of $u(L)$  and  $Q$  denotes  the quotient algebra. \newline

So  we claim  that the  simple module corresponding to this representation has $p^m=p^{120}$ as its  dimension.\newline
\end{prop}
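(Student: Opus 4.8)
The plan is to imitate the proofs of Propositions 4.2, 4.4, 4.6 and 4.8: exhibit an explicit \emph{Lee's basis} $\mathfrak{B}$ of $u(L)/\mathfrak{M}_\chi$ having $p^{2m}$ elements, show that these $p^{2m}$ elements are $F$-linearly independent modulo $\mathfrak{M}_\chi$, and combine this with the general upper bound $\dim_F\rho_\chi(u(L))\le [Q(u(L)):Q(\mathfrak{Z})]=p^{2m}$ (Corollary 3.17 together with Proposition 3.12) to conclude $\dim_F\rho_\chi(u(L))=p^{2m}$; by Jacobson's density theorem and Schur's lemma $\rho_\chi(u(L))$ is then all of $End_F(V)\cong M_{p^m}(F)$, so $\dim_F V=p^m=p^{120}$. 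Here $\mathfrak{M}_\chi$ denotes the kernel of $\rho_\chi$, a maximal two-sided ideal of $u(L)$. First I would normalise $\alpha$: since $E_8$ is simply laced, all $240$ roots have the same length and are conjugate under the Weyl group $W(\Phi)$, so without loss of generality $\alpha=\epsilon_1-\epsilon_2$. For a Chevalley root vector $x_\alpha^{[p]}=0$, so the hypothesis $\chi(x_\alpha)\neq 0$ gives $\rho_\chi(x_\alpha)^p=\chi(x_\alpha)^p\,id_V$, a nonzero scalar; hence $x_\alpha$ (and, after the standard manipulations, $x_{-\alpha}$ as well) is invertible modulo $\mathfrak{M}_\chi$, and $w_\alpha:=(h_\alpha+1)^2+4x_{-\alpha}x_\alpha$ lies in the centre of $u(\mathfrak{sl}_2(F))$, where $\mathfrak{sl}_2(F)=Fx_\alpha+Fh_\alpha+Fx_{-\alpha}$.

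Next I would write down the basis. Since $n=248$, $l=8$ and $2m=n-l=240$ coincides with the number of roots of $\Phi$, I index the $2m$ tensor slots by the roots $\beta\in\Phi$. Choose $B_\beta=\sum_{j=1}^{8}b_{\beta j}h_{\alpha_j}$, where $\alpha_1,\dots,\alpha_8$ are the simple roots of $\Phi$ and the vectors $(b_{\beta 1},\dots,b_{\beta 8})\in F^8$ are generic, so that any nine of the $B_\beta$ are in general position in $\mathbb{P}^8(F)$, so that $\mathfrak{B}$ below is $F$-linearly independent in $u(L)$, and so that $x_\alpha B_\beta\not\equiv B_\beta x_\alpha$. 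Put $A_{\epsilon_1-\epsilon_2}=x_\alpha$, $A_{\epsilon_2-\epsilon_1}=c_{\epsilon_2-\epsilon_1}+(h_\alpha+1)^2+4x_{\epsilon_2-\epsilon_1}x_{\epsilon_1-\epsilon_2}$, and for every remaining root $\beta$ take $A_\beta$ to be one of $x_\beta^2,\dots,x_\beta^6$, or such a power of a suitable root vector $x_\gamma$ multiplied on the right by an invertible ``parenthesis'' factor assembled from root vectors and constants $c_\beta\in F$; the powers and signs are chosen so that $A_\beta$ commutes with $x_\alpha$ modulo $\mathfrak{M}_\chi$ and the $c_\beta$ chosen so that $A_{\epsilon_2-\epsilon_1}$ and every parenthesis is invertible modulo $\mathfrak{M}_\chi$. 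Then set
\[
\mathfrak{B}:=\Bigl\{\,\bigotimes_{\beta\in\Phi}(B_\beta+A_\beta)^{i_\beta}\ :\ 0\le i_\beta\le p-1\,\Bigr\},
\]
a set of $p^{240}=p^{2m}$ elements. The point here is the $E_8$ analogue of the Remark preceding this proposition: because $p\ge 7$, the elements $x_\beta,x_\beta^2,\dots,x_\beta^6$ have pairwise distinct $ad(H)$-weights, so the $240$ factors occupy mutually distinct positions and, by the Poincar\'e--Birkhoff--Witt theorem, $\mathfrak{B}$ is $F$-linearly independent in $u(L)$ with $A_\beta\notin\mathfrak{M}_\chi$ for each $\beta$.

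It remains to prove that $\mathfrak{B}$ stays linearly independent modulo $\mathfrak{M}_\chi$, which gives $\dim_F u(L)/\mathfrak{M}_\chi\ge p^{2m}$ and closes the argument. As in the classical cases, suppose a nontrivial dependence relation held modulo $\mathfrak{M}_\chi$; take one of least degree with respect to the $h_{\alpha_j}\in H$ and, among those, with the fewest terms of top degree. If conjugation by $x_\alpha$ strictly lowers the degree this contradicts minimality; otherwise the relation reduces to a normal form $x_{\epsilon_1-\epsilon_2}K+K'\in\mathfrak{M}_\chi$ with $K,K'$ commuting with $x_{\pm\alpha}$ modulo $\mathfrak{M}_\chi$. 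Multiplying on the left and right by powers of $x_{\pm\alpha}$, using $x_{\epsilon_2-\epsilon_1}^p\equiv c\in F$ and the centrality of $w_\alpha$ in $u(\mathfrak{sl}_2(F))$ exactly as in the proof of Proposition 4.6, one extracts an identity of the shape $(\text{nonzero polynomial of degree}\ge 1\ \text{in}\ h_\alpha)\,K\in\mathfrak{M}_\chi$; conjugation and subtraction then force $K\in\mathfrak{M}_\chi$, the desired contradiction, and interchanging the roles of $K$ and $K'$ settles the symmetric case.

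The main obstacle is the bookkeeping required to select the $A_\beta$ for all $240$ roots of $E_8$: one must verify that the chosen powers and parenthesis factors simultaneously (i) commute with $x_\alpha$ modulo $\mathfrak{M}_\chi$, (ii) are invertible modulo $\mathfrak{M}_\chi$, and (iii) leave the PBW leading monomials pairwise distinct so that $\mathfrak{B}$ really is independent; and one must check that every reduced form of a minimal dependence relation does collapse to the single normal form above. All of this is routine given the distinct-weight fact for $p\ge 7$ and the $\mathfrak{sl}_2$-computation used repeatedly in Section 4, but it is lengthy; the conceptual content is the same as in Propositions 4.2 and 4.6.
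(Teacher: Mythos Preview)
Your proposal is correct and follows essentially the same route as the paper: normalise $\alpha$ via the Weyl group (the paper picks $\alpha=\epsilon_1+\epsilon_2$ rather than your $\epsilon_1-\epsilon_2$, which is immaterial in the simply-laced case), build a Lee's basis $\mathfrak{B}$ of cardinality $p^{240}$, verify PBW-independence in $u(L)$, and then run the minimal-degree contradiction argument reducing to $x_\alpha K+K'\in\mathfrak{M}_\chi$ and the $\mathfrak{sl}_2$-centrality of $w_\alpha$; the paper in fact defers this last step to the consolidated proof in Proposition~4.12, exactly as you anticipate. The only place the paper is more concrete than your sketch is in writing down explicit $A_\beta$'s for the $128$ ``half-sum'' roots $\tfrac{1}{2}\sum_{i=1}^8(-1)^{k(i)}\epsilon_i$ peculiar to $E_8$, pairing each $\beta_{k(i)}$ with its companion $\gamma_{k(i)}=\beta_{k(i)}+\alpha$ inside a single parenthesis factor; this is precisely the bookkeeping you flag as the main obstacle, and your description of how it should go is accurate.
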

\begin{proof}

We give the Lee's basis for this simple module   as follows.

Let $\alpha= \epsilon_1+ \epsilon_2$ without loss of generality  due to Weyl group and let us put $B_i$=$b_{i1}$$h_{\epsilon_1 +\epsilon_2}$ +
$b_{i2}h_{\epsilon_2-\epsilon_1}+b_{i3}h_{\epsilon_3- \epsilon_2}+ b_{i4}h_{\epsilon_4- \epsilon_3}+b_{i5}h_{\epsilon_5- \epsilon_4}+ b_{i6}h_{\epsilon_6- \epsilon_5}+ b_{i7}h_{\epsilon_7- \epsilon_6}+b_{i8}h_{2^{-1}(\epsilon_1+ \epsilon_8-(\epsilon_2+\cdots +\epsilon_7))}              $ for $i=1,2,\cdots,2m $ ,where ($b_{i1},\cdots,b_{il})
\in F^l$ are chosen so that  arbitrary ($l+1)-B_i$'s are linearly independent in $\mathbb P^l(F)$,the $\frak B$ below becomes  an $F$-linearly independent set in $u(L)$ if necessary and  $x_\alpha B_i$ $\not \equiv B_i x_\alpha $ with $\alpha =\epsilon_1+ \epsilon_2$.\newline

Let $\frak M_\chi$ be the kernel of the irreducible representation $\rho _{\chi}: L  \rightarrow \frak {gl}(V) $ of the restricted Lie algebra ($L ,[p])$ associated  with any given irreducible $L$-module $V$ with a character $\chi$.\newline

In $u(L)/\frak M_\chi$ we give a Lee's basis as $\frak B$:=$\{(B_1 +A_{\epsilon_1+ \epsilon_2})^{i_1}\otimes (B_2 + A_{-(\epsilon_1+ \epsilon_2)})^{i_2}\otimes \cdots\otimes (B_{16}+A_{-2^{-1}((\epsilon_{1}+ \epsilon_8)- (\epsilon_2+ \cdots+ \epsilon_7))})^{i_{16}}\otimes(\otimes_{j=17}^{240}(B_j+ A_{\alpha_j})^{i_j}   \}$ for 0 $\leq i_j \leq p-1$,\newline

 where we put \newline
$A_{\alpha}=A_{\epsilon_1+ \epsilon_2}= x_{\epsilon_1+ \epsilon_2}$, \newline
$A_{-\epsilon_1-\epsilon_2 }=c_{-\epsilon_1- \epsilon_2}+ (h_{\epsilon_1+ \epsilon_2} +1)^2 +
4x_{-\epsilon_1- \epsilon_2} x_{\epsilon_1+ \epsilon_2}, \newline
A_{-(\epsilon_1\pm \epsilon_k)}= x_{\epsilon_3- (\pm \epsilon_k)}(c_{-\epsilon_1\pm \epsilon_k}+x_{\epsilon_1\pm \epsilon_k}x_{-(\epsilon_1\pm \epsilon_k)}\pm x_{- (\epsilon_2 \mp \epsilon_k) }x_{ \epsilon_2+ (\mp \epsilon_k)}) $,\newline
$A_{-(\epsilon_1 \pm \epsilon_3)}=x_{\epsilon_2- (\pm \epsilon_3)}(c_{-(\epsilon_1 \pm \epsilon_3}+x_{\epsilon_1\mp \epsilon_3}
x_{-(\epsilon_1\pm \epsilon_3)} \pm x_{-(\epsilon_2 \mp\epsilon_3}x_{\epsilon_2+ (\mp \epsilon_3)},$\newline

$A_{-(\epsilon_2\pm \epsilon_k)}=x_{-\epsilon_3- (\pm \epsilon_k)}(c_{-(\epsilon_2\pm\epsilon_k)}+x_{\epsilon_2\pm \epsilon_k}x_{-(\epsilon_2\pm \epsilon_k)}\pm x_{-(\epsilon_1\mp \epsilon_k)}x_{\epsilon_1\mp \epsilon_k)}$,
$A_{-(\epsilon_2 \pm \epsilon_3)}=x_{\epsilon_2- (\pm \epsilon_3)}^2(c_{-(\epsilon_2\pm \epsilon_3)}+ x_{\epsilon_2\pm \epsilon_3}x_{-(\epsilon_2\pm \epsilon_3)\pm x_{-(\epsilon_1\mp \epsilon_3)}x_{\epsilon_1 \mp \epsilon_3} \epsilon_j)}\pm \newline
 x_{\epsilon_j- \epsilon_1}x_{\epsilon_1-\epsilon_j})$, \newline
$A_{\frac{1}{2} \Sigma_{i=1}^{8} (-1)^{k(i)}\epsilon_i}= x_{\frac{1}{2}\Sigma_{i=1}^{8} (-1)^{k(i)}\epsilon_i}  $;
for  $\beta_{k(i)}= -\frac{1}{2}\epsilon_1- \frac{1}{2}\epsilon_2+ \frac{1}{2}\Sigma_{i=3}^{8} (-1)^{k(i)} \epsilon_i$ and  for $\gamma_{k(i)}= \frac{1}{2}\epsilon_1+ \frac{1}{2}\epsilon_2+ \frac{1}{2}\Sigma_{i=3}^{8} (-1)^{k(i)} \epsilon_i$, we set  $ A_{\beta_{k(i)}}= x_{-\frac{1}{2}\Sigma_{i=3}^{8} (-1)^{k(i)}} \epsilon_i$$ (C_{\beta_{k(i)}}+ x_{\beta_{k(i)}}x_{-\beta_{k(i)}}\pm x_{\gamma_{k(i)}}x_{- \gamma_{k(i)}}), A_{\gamma_{k(i)}}= x_{\frac{1}{2}\Sigma_{i=3}^{8} (-1)^{k(i)} \epsilon_i}(c_{\gamma_{k(i)}}+ x_{\beta_{k(i)}} x_{-\beta_{k(i)}}\pm  x_{\gamma_{k(i)}}x_{- \gamma_{k(i)}}).$

with the sign chosen so that  they commute with $x_\alpha$ and with $c_\beta \in F$ chosen so that $A_{-\epsilon_1}$ and parentheses(             ) are invertible.\newline
For any other root $\beta$,  we put $A_\beta={x_\beta}^2$  or $ x_\beta^3 $ if possible.\newline

Otherwise we make use of the parentheses(      )again used for designating $A_{-\beta}$. So in this case we put $A_\beta = { x_\gamma}^2$ or ${x_\gamma}^3$ attached to these (        ) so that  $x_\alpha$ may commute with $A_\beta$.\newline

For the rest of the complete proof , refer to  the consolidated  proof  in proposition 4.12  below.

\end{proof}

\begin{prop}

Let $\chi $ be a character of any simple $L$-module with $\chi(h_\alpha )\neq $ 0  for  some $\alpha \in \Phi$,where $h_\alpha$ is an element in the Chevalley basis of  $L= E_8$ such that $F x_\alpha + Fh_\alpha+ Fx_{-\alpha}= \frak {sl}_2(F)$ with $[x_\alpha,x_{-\alpha}]= h_\alpha \in H$.\newline

 We then claim that any simple $L$-module with character $\chi$ is of dimension  $p^m=p^{n-l\over 2}$,where $n= dim L= 2m +l $ for a CSA  H with $dim H =l$. 

\end{prop}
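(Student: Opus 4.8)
The plan is to run the same machine used above for the $A_l$, $B_l$, $C_l$ and $D_l$ types, now specialized to $L=E_8$, and to make the argument \emph{consolidated} so that it simultaneously supplies the part of the proof of the preceding proposition (the case $\chi(x_\alpha)\neq 0$, $L=E_8$) that was deferred to here. First I would reduce: if $\chi(x_\alpha)\neq 0$ or $\chi(x_{-\alpha})\neq 0$ the conclusion is immediate from the preceding proposition, so we may assume $x_\alpha^p\equiv x_{-\alpha}^p\equiv 0$ modulo $\frak M_\chi$, where $\frak M_\chi$ is the kernel of the given simple representation, a maximal two-sided ideal of $u(L)$. Since $E_8$ is simply laced, all roots are conjugate under the Weyl group, so we may take $\alpha=\epsilon_1+\epsilon_2$; then $F x_\alpha+Fh_\alpha+Fx_{-\alpha}=\frak{sl}_2(F)$, the element $w_\alpha:=(h_\alpha+1)^2+4x_{-\alpha}x_\alpha$ lies in the centre of $u(\frak{sl}_2(F))$, and $g_\alpha:=x_\alpha^{p-1}-x_{-\alpha}$ is invertible in $u(L)/\frak M_\chi$ by [RS].

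Second, I would write down the candidate Lee's basis $\frak B$ of $u(L)/\frak M_\chi$ --- precisely the one displayed for the preceding proposition: put $B_i:=b_{i1}h_{\epsilon_1+\epsilon_2}+b_{i2}h_{\epsilon_2-\epsilon_1}+\cdots+b_{i8}h_{\frac{1}{2}(\epsilon_1+\epsilon_8-(\epsilon_2+\cdots+\epsilon_7))}$ for $i=1,\dots,2m$ with $m=120$ (so $n=248$, $l=8$), the vectors $(b_{i1},\dots,b_{i8})\in F^8$ chosen so that any $9$ of the $B_i$'s are linearly independent in $\mathbb P^8(F)$, so that $\frak B$ is $F$-linearly independent in $u(L)$, and so that $x_\alpha B_i\not\equiv B_i x_\alpha$. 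Take $A_{\epsilon_1+\epsilon_2}=g_\alpha$ and $A_{-(\epsilon_1+\epsilon_2)}=c_{-(\epsilon_1+\epsilon_2)}+(h_\alpha+1)^2+4x_{-\alpha}x_\alpha$, and for every remaining root $\beta$ define $A_\beta$ to be a power $x_\gamma,x_\gamma^2,x_\gamma^3,\dots$ of a root vector --- in several instances attached to an invertible parenthesised factor built from the various $\frak{sl}_2$-triples, just as in the formulas of the $C_l$- and $E_8$-type propositions above --- with $\gamma$ and the $\pm$ signs chosen so that $A_\beta$ commutes with $x_\alpha$ modulo $\frak M_\chi$, and with the constants $c_\beta\in F$ chosen so that $A_{-(\epsilon_1+\epsilon_2)}$ and all parentheses are invertible. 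For the $128$ spinor roots $\frac{1}{2}\Sigma_{i=1}^8(-1)^{k(i)}\epsilon_i$ one uses the auxiliary pairs $\beta_{k(i)},\gamma_{k(i)}$ introduced in the proof of the preceding proposition. Here exponents up to $6$ suffice because, as recalled in the Remark above, $x_\beta,x_\beta^2,\dots,x_\beta^6$ all carry distinct $ad(H)$-weights. Linear independence of $\frak B$ inside $u(L)$ itself is then immediate from the Poincar\'e--Birkhoff--Witt theorem.

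The substance of the proof is to show that $\frak B$ stays linearly independent in $u(L)/\frak M_\chi$; since $\#\frak B=p^{2m}$ while $\dim_F u(L)/\frak M_\chi=(\dim_F V)^2\leq p^{2m}$ by the general theory of modular Lie algebras, this forces $\dim_F V=p^m$. So suppose some nontrivial $F$-linear combination of $\frak B$ lies in $\frak M_\chi$, and among all such pick one of least degree in the $h_{\alpha_j}\in H$ whose number of top-degree terms is also least. Conjugating by $x_\alpha$, or by $g_\alpha$, either strictly lowers the degree --- contradicting minimality --- or collapses the relation to one of the two canonical forms $x_\beta K+K'\in\frak M_\chi$ or $g_\alpha K+K'\in\frak M_\chi$, with $K,K'$ commuting with $x_{\pm\alpha}$ modulo $\frak M_\chi$ and $K$ commuting with $g_\alpha$. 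In the first form a string of adjoint operations $ad\, x_\gamma$ pushes $x_\beta$ to $x_\alpha$, giving $x_\alpha K+K''\in\frak M_\chi$; then from $x_\alpha^pK+x_\alpha^{p-1}K''\equiv 0$ and $x_{-\alpha}x_\alpha K+x_{-\alpha}K''\equiv 0$ one gets $-x_{-\alpha}x_\alpha K+g_\alpha K''\equiv 0$, and using $w_\alpha$, the invertibility of $g_\alpha$, multiplication by $g_\alpha^{1-p}$, conjugation, right multiplication by $x_{-\alpha}^{p-1}$ and repeated left multiplication by $x_\alpha$, one arrives at $(\text{a nonzero polynomial of degree }\geq 1\text{ in }h_\alpha)K\in\frak M_\chi$; a final round of conjugation and subtraction then yields the contradiction $K\in\frak M_\chi$ (interchanging the roles of $K$ and $K'$ if necessary). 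The second form reduces to the first via $K+g_\alpha^{-1}K'\in\frak M_\chi$, hence $x_\alpha K+x_\alpha g_\alpha^{-1}K'\equiv 0$.

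I expect the main obstacle to be bookkeeping rather than anything conceptual: for $E_8$ one faces $240$ roots and must check, essentially case by case, that each $\beta$ admits a legitimate $A_\beta$ --- a power of some $x_\gamma$, possibly multiplied by an invertible parenthesis --- commuting with $x_\alpha=x_{\epsilon_1+\epsilon_2}$, that these choices keep $\frak B$ PBW-independent, and that a minimal dependence relation really does collapse to the forms (i)/(ii). The delicate sub-case is the interaction of the $128$ spinor roots $\frac{1}{2}\Sigma(-1)^{k(i)}\epsilon_i$ with $\alpha$ and with each other, whose structure constants are the most intricate; this is where the auxiliary pairs $\beta_{k(i)},\gamma_{k(i)}$ must be deployed carefully. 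Once those choices are fixed, the reduction mechanism is identical to the $D_l$-type case and the contradiction $K\in\frak M_\chi$ is obtained exactly as there, completing both this proposition and the deferred part of the preceding one.
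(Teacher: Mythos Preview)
Your proposal is correct and follows essentially the same approach as the paper: the paper's own proof of this proposition writes down exactly the Lee's basis you describe (with $A_{\epsilon_1+\epsilon_2}=g_\alpha$, the same $B_i$'s, and the same spinor-root pairs $\beta_{k(i)},\gamma_{k(i)}$) and then defers the linear-independence argument to the consolidated proof in the next proposition, which is precisely the two-case reduction to $x_\beta K+K'\in\frak M_\chi$ and $g_\alpha K+K'\in\frak M_\chi$ followed by the $w_\alpha$/$g_\alpha$ manipulations you outline. Your write-up in fact merges the content of the paper's Propositions~4.11 and~4.12 into a single argument, which is exactly what the paper intends.
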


\begin{proof}

Let  $\alpha$  be any  root  in the root system  $\Phi$ of $L .$ If $\chi(x_\alpha)$ $\neq0,$ or $\chi(x_{-\alpha})\neq 0$, then $dim_F$$ \rho_\chi$$(u(L))$ = $p^{2m}$  due to the  consolidating proof of proposition 4.12  below, where $ [Q(u(L)):Q(\mathfrak{Z})]$=$p^{2m}$=$p^{n-l}=p^{248- 8}$ with $\mathfrak{Z}$ the center of $u(L)$  and  $Q$  denotes  the quotient algebra. \newline

We give the Lee's basis for this simple module   as follows.\newline

Let $\alpha= \epsilon_1+ \epsilon_2$ without loss of generality  due to Weyl group and let us put $B_i$=$b_{i1}$$h_{\epsilon_1 +\epsilon_2}$ +
$b_{i2}h_{\epsilon_2-\epsilon_1}+b_{i3}h_{\epsilon_3- \epsilon_2}+ b_{i4}h_{\epsilon_4- \epsilon_3}+b_{i5}h_{\epsilon_5- \epsilon_4}+ b_{i6}h_{\epsilon_6- \epsilon_5}+ b_{i7}h_{\epsilon_7- \epsilon_6}+b_{i8}h_{2^{-1}(\epsilon_1+ \epsilon_8-(\epsilon_2+\cdots +\epsilon_7))}              $ for $i=1,2,\cdots,2m $ ,where ($b_{i1},\cdots,b_{il})
\in F^l$ are chosen so that  arbitrary ($l+1)-B_i$'s are linearly independent in $\mathbb P^l(F)$,the $\frak B$ below becomes  an $F$-linearly independent set in $u(L)$ if necessary and  $x_\alpha B_i$ $\not \equiv B_i x_\alpha $ with $\alpha =\epsilon_1+ \epsilon_2$.\newline

Let $\frak M_\chi$ be the kernel of the irreducible representation $\rho _{\chi}: L  \rightarrow \frak {gl}(V) $ of the restricted Lie algebra ($L ,[p])$ associated  with any given irreducible $L$-module $V$ with a character $\chi$.\newline

In $u(L)/\frak M_\chi$ we give a Lee's basis as $\frak B$:=$\{(B_1 +A_{\epsilon_1+ \epsilon_2})^{i_1}\otimes (B_2 + A_{-(\epsilon_1+ \epsilon_2)})^{i_2}\otimes \cdots\otimes (B_{16}+A_{-2^{-1}((\epsilon_{1}+ \epsilon_8)- (\epsilon_2+ \cdots+ \epsilon_7))})^{i_{16}}\otimes(\otimes_{j=17}^{240}(B_j+ A_{\alpha_j})^{i_j}   \}$ for 0 $\leq i_j \leq p-1$,\newline

 where we put \newline

$A_{\alpha}=A_{\epsilon_1+ \epsilon_2}= x_{\epsilon_1+ \epsilon_2^{p-1}}- x_{-(\epsilon_1+ \epsilon_2)}= g_{\epsilon_1+ \epsilon_2}$, \newline

$A_{-\epsilon_1-\epsilon_2 }=c_{-\epsilon_1- \epsilon_2}+ (h_{\epsilon_1+ \epsilon_2} +1)^2 +
4x_{-\epsilon_1- \epsilon_2} x_{\epsilon_1+ \epsilon_2},$ \newline

$A_{-(\epsilon_1\pm \epsilon_k)}= x_{\epsilon_3- (\pm \epsilon_k)}(c_{-\epsilon_1\pm \epsilon_k}+x_{\epsilon_1\pm \epsilon_k}x_{-(\epsilon_1\pm \epsilon_k)}\pm x_{- (\epsilon_2 \mp \epsilon_k) }x_{ \epsilon_2+ (\mp \epsilon_k)}) $,\newline

$A_{-(\epsilon_1 \pm \epsilon_3)}=x_{\epsilon_4- (\pm \epsilon_3)}(c_{-(\epsilon_1 \pm \epsilon_3)}+x_{\epsilon_1\mp \epsilon_3}
x_{-(\epsilon_1\pm \epsilon_3)} \pm x_{-(\epsilon_2 \mp\epsilon_3)}x_{\epsilon_2+ (\mp \epsilon_3)},$\newline

$A_{-(\epsilon_2\pm \epsilon_k)}=x_{-\epsilon_3- (\pm \epsilon_k)}(c_{-(\epsilon_2\pm\epsilon_k)}+x_{\epsilon_2\pm \epsilon_k}x_{-(\epsilon_2\pm \epsilon_k)}\pm x_{-(\epsilon_1\mp \epsilon_k)}x_{\epsilon_1\mp \epsilon_k)}$,

$A_{-(\epsilon_2 \pm \epsilon_3)}=x_{\epsilon_5- (\pm \epsilon_3)}(c_{-(\epsilon_2\pm \epsilon_3)}+ x_{\epsilon_2\pm \epsilon_3}x_{-(\epsilon_2\pm \epsilon_3)}\pm x_{-(\epsilon_1\mp \epsilon_3)}x_{\epsilon_1 \mp \epsilon_3} 
 )$, \newline

$A_{\frac{1}{2} \Sigma_{i=1}^{8} (-1)^{k(i)}\epsilon_i}= x_{\frac{1}{2}\Sigma_{i=1}^{8} (-1)^{k(i)}\epsilon_i}  $;
for  $\beta_{k(i)}= -\frac{1}{2}\epsilon_1- \frac{1}{2}\epsilon_2+ \frac{1}{2}\Sigma_{i=3}^{8} (-1)^{k(i)} \epsilon_i$ and  for $\gamma_{k(i)}= \frac{1}{2}\epsilon_1+ \frac{1}{2}\epsilon_2+ \frac{1}{2}\Sigma_{i=3}^{8} (-1)^{k(i)} \epsilon_i$, \newline

we set  

$ A_{\beta_{k(i)}}= x_{-\frac{1}{2}\Sigma_{i=3}^{8} (-1)^{k(i)}} \epsilon_i$$ (C_{\beta_{k(i)}}+ x_{\beta_{k(i)}}x_{-\beta_{k(i)}}\pm x_{\gamma_{k(i)}}x_{- \gamma_{k(i)}}),$\newline

 $A_{\gamma_{k(i)}}= x_{\frac{1}{2}\Sigma_{i=3}^{8} (-1)^{k(i)} \epsilon_i}(c_{\gamma_{k(i)}}+ x_{\beta_{k(i)}} x_{-\beta_{k(i)}}\pm  x_{\gamma_{k(i)}}x_{- \gamma_{k(i)}})$  \newline

 respectively    with the sign chosen so that  they commute with $x_\alpha$ and with $c_\beta \in F$ chosen so that $A_{-\epsilon_1}$ and parentheses(             ) are invertible.\newline

For any other root $\beta$,  we put $A_\beta={x_\beta}^2$  or $ x_\beta^3 $ if possible.\newline
Otherwise we make use of the parentheses(      )again used for designating $A_{-\beta}$. So in this case we put $A_\beta = { x_\gamma}^2$ or ${x_\gamma}^3$ attached to these (        ) so that  $x_\alpha$ may commute with $A_\beta$.\newline
For the rest of the complete proof , refer to  the consolidated  proof  in proposition 4.12  below.

\end{proof}

\

\
 Let $L$ be any modular Lie algebra of classical type or exceptional  type  as listed in section 1 over  any algebraically closed  field $F$ of characteristic $p \geq 7$. \newline

For a root $\alpha \in \Phi$ for $L$, we put $g_\alpha :=  x_\alpha^{p-1}- x_{- \alpha}$ and $w_\alpha:= (h_\alpha+ 1)^2+ 4x_{-\alpha}x_{\alpha}.$

According to [RS], we see that  $w_\alpha$ is contained in the center of $u(\frak {sl}_2(F))$ and must satisfy an irreducible integral equation  of degree $p$ over the center  $\frak Z$  of $u(L)$. \newline

\begin{prop} Let $\chi$ be a character of any simple $L$-module with $\chi (x_{\alpha})\neq 0, \chi (x_{-\alpha})\neq 0$  or  $\chi(h_{\alpha})\neq 0$ for some $\alpha \in \Phi$, where $x_{\pm \alpha}$ and $h_\alpha$ are  elements  in the Chevalley basis of  $L$ such that $F x_\alpha + Fh_\alpha+ Fx_{-\alpha}\cong  \frak {sl}_2 (F)$ .\newline

Then we claim  that any simple $L$-module with character $\chi\neq 0$ is  of  dimension  $p^m=p^{n-l\over 2}$,where $n= dim L= 2m +l $ for a CSA  $ H $ with $dim H =l$. \newline
\end{prop}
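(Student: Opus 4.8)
The plan is to treat this statement as the consolidation of Propositions 4.2--4.11: one uniform argument that simultaneously covers the classical types $A_l,B_l,C_l,D_l$ and the exceptional types, the latter being reduced to $E_8$, since $E_6$ and $E_7$ occur inside $E_8$ as subroot systems, while for $F_4$ and $G_2$ one invokes that, for $p\geq 7$, the elements $x_\beta,x_\beta^2,\ldots,x_\beta^6$ have pairwise distinct weights for the $\mathrm{ad}(H)$-action. First I would fix $\alpha\in\Phi$ with one of $\chi(x_\alpha),\chi(x_{-\alpha}),\chi(h_\alpha)$ nonzero, split into the corresponding three sub-cases, and (after conjugating by the Weyl group) take $\alpha$ to be the distinguished short or long root used in the earlier propositions. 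Since $[Q(u(L)):Q(\frak Z)]=p^{2m}$, the factor algebra $u(L)/\frak M_\chi\cong\rho_\chi(u(L))$ has $F$-dimension at most $p^{2m}$, and by Corollary 3.17 equality holds if and only if $\rho_\chi(\frak D_{u(L),\frak Z,tr})\neq 0$; in that case $u(L)/\frak M_\chi\cong M_{p^m}(F)$, so the simple module has dimension $p^m$. Hence it suffices to exhibit $p^{2m}$ elements of $u(L)$ whose images in $u(L)/\frak M_\chi$ are $F$-linearly independent.

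Next I would write down the Lee's basis $\frak B=\{(B_1+A_{\beta_1})^{i_1}\otimes\cdots\otimes(B_{2m}+A_{\beta_{2m}})^{i_{2m}}\mid 0\leq i_j\leq p-1\}$, taking the explicit list of the $A_{\beta_j}$'s from the relevant one of Propositions 4.2, 4.4, 4.6, 4.8, 4.10 (when the nonvanishing is on $x_{\pm\alpha}$) or Propositions 4.3, 4.5, 4.7, 4.9, 4.11 (when it is only on $h_\alpha$, so one uses $g_\alpha=x_\alpha^{p-1}-x_{-\alpha}$ in place of $x_\alpha$). Here the $B_i$ are generic $F$-combinations of the Cartan elements of the Chevalley basis, chosen so that any $l+1$ of them are independent in $\mathbb P^l(F)$ and $x_\alpha B_i\not\equiv B_i x_\alpha$; each $A_\beta$ is a power $x_\gamma^2$ or $x_\gamma^3$ (or a higher power, or such a power times an invertible ``Casimir-type'' parenthesis built from a rank-one pair together with its $w_\beta$), with signs pinned down so that $A_\beta$ commutes with $x_\alpha$ (resp.\ $g_\alpha$) modulo $\frak M_\chi$ and so that $A_{-\alpha}$ and every parenthesis is invertible in $u(L)/\frak M_\chi$. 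By the P-B-W theorem $\frak B$ is $F$-linearly independent in $u(L)$; the entire content is to show it stays independent modulo $\frak M_\chi$.

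The independence modulo $\frak M_\chi$ I would establish by the descent already used in the earlier proofs. Suppose a nontrivial dependence relation holds modulo $\frak M_\chi$; choose one of least degree in the $h_{\alpha_j}$ with the fewest top-degree terms. Conjugating by $x_\alpha$ (resp.\ $g_\alpha$) either produces a strictly smaller relation, contradicting minimality, or collapses the relation to the canonical shape $x_\beta K+K'\in\frak M_\chi$ or $g_\alpha K+K'\in\frak M_\chi$, where $K,K'$ commute with $x_{\pm\alpha}$ modulo $\frak M_\chi$. In the latter case one multiplies by suitable powers of $x_{\pm\alpha}$, uses that $w_\alpha=(h_\alpha+1)^2+4x_{-\alpha}x_\alpha$ is central in $u(\frak{sl}_2(F))$ and that $g_\alpha$ (and $A_{-\alpha}$) is invertible by [RS], and subtracts and re-conjugates repeatedly to arrive at $(\text{a polynomial of degree}\geq 1\text{ in }h_\alpha)\,K\in\frak M_\chi$. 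Since $\frak M_\chi\cap\bigl(\frak Z(u(L))[h_\alpha,w_\alpha]\bigr)$ contracts to the zero ideal of that polynomial ring and is therefore prime, this forces $K\in\frak M_\chi$, contradicting minimality; interchanging the roles of $K$ and $K'$ disposes of the remaining sub-case. It follows that $\dim_F u(L)/\frak M_\chi=p^{2m}$, whence the simple module has dimension $p^m$.

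I expect the main obstacle to be not this $\frak{sl}_2$-style computation, which is uniform across all types, but the reduction step that precedes it: verifying, type by type for $A_l,B_l,C_l,D_l$ and then for $E_8$ with its $240$ roots, that the $A_\beta$'s can genuinely be chosen so that (i) each commutes with $x_\alpha$ modulo $\frak M_\chi$, (ii) $A_{-\alpha}$ and all parenthetical factors are invertible, and (iii) conjugation by $x_\alpha$ really does kill every off-diagonal term of a minimal relation and leaves only the canonical form. This is exactly the explicit bookkeeping carried out in Propositions 4.2--4.11, and the consolidated proof amounts to assembling those choices, remarking that $E_6,E_7,F_4,G_2$ follow from the $E_8$ list together with the distinct-weight observation, and then applying the $\frak{sl}_2$-reduction above to each of the finitely many canonical forms.
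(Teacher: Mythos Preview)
Your proposal is correct and follows essentially the same approach as the paper's own proof: the paper's argument for Proposition 4.12 is precisely this consolidation, referring back to the Lee's bases $\frak B$ constructed in the earlier type-by-type propositions, running the minimal-degree descent with conjugation by $x_\alpha$ (resp.\ $g_\alpha$), reducing to the canonical forms $x_\alpha K+K'\in\frak M_\chi$ and $g_\alpha K+K'\in\frak M_\chi$, and then carrying out the same $\frak{sl}_2$-manipulation with $w_\alpha$ and the invertibility of $g_\alpha$ to reach the contradiction $K\in\frak M_\chi$ via the prime-ideal observation. The only cosmetic difference is that the paper does not re-list the $A_\beta$'s but simply points to ``$\frak B$ suggested in the previous sections,'' whereas you spell out that the bookkeeping of Propositions 4.2--4.11 is being invoked; your invocation of Corollary 3.17 is also not made explicit in the paper's proof, though it is the underlying reason the dimension count suffices.
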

\begin{proof}
For  any root $\alpha$, assume first that  $\chi (x_{\alpha})\neq 0 .$

We may see without difficulty that $\frak B$ suggested  in the previous sections is a linearly independent set in $u(L)$ by virtue of P-B-W theorem.\newline

We shall prove that a nontrivial linearly dependent  equation leads to absurdity. We assume first that we have a dependence equation which is of least degree with respect to $h_{\alpha_j}\in H$ and the number of whose highest  degree terms is also least.\newline

In case it is conjugated by $x_\alpha$, then there arises a nontrivial dependence equation of lower degree than the given one,which contradicts to our assumption.\newline

Otherwise  it reduces to the following form, which we have only to prove:\newline

 $(\ast) x_{\alpha}K$ + $K'$ $\in$$ \mathfrak{M}_\chi$ 
, where $K$ and $K'$  commute with $x_{\alpha}$ modulo $\frak {M}_\chi $.\newline

 We thus deal with and assume  $x_{- \alpha}$$x_{\alpha}$$K$ + $x_{-\alpha}$$K''$ $\in$ $\mathfrak{M}_\chi$.\newline

From $w_{\alpha}:=(h_{\alpha}+ 1)^2 +4 x_{-\alpha}x_{\alpha}\in $ the center of  $u(\frak{sl}_2(F)), $ we get $ 4^{-1}\{w_{\alpha}- (h_{\alpha}+ 1)^2\}K+ x_{-\alpha}K'' \equiv 0 $  modulo $\frak M_\chi$.\newline

If $x_{-\alpha}^p\equiv c $ which is a constant, then \newline

$(\ast \ast)4^{-1}x_{-\alpha}^{p-1}\{w_{\alpha}- (h_{\alpha}+ 1)^2\}K+ cK''\equiv 0 $\newline

 is obtained.
\newline
From $(\ast),(\ast \ast)$, we have\newline

 $4^{-1}x_{-\alpha}^{p-1}\{w_{\alpha}- (h_{\alpha}+ 1)^2\}K- cx_{\alpha}K\newline
\equiv 0$ modulo $\frak M_\chi$.\newline

Multiplying $x_{\alpha}^{p-1}$ to this equation,we obtain \newline

 $(\ast \ast \ast)4^{-1}x_{\alpha}^{p-1}x_{-\alpha}^{p-1}\{w_{\alpha}- (h_{\alpha}+ 1)^2\}K- cx_{\alpha}^pK\equiv 0.  $\newline

By making use of $w_{\alpha}$ and the properties of prime ideal , we may deduce from $(\ast \ast \ast)$ an equation of the form \newline
( a polynomial of degree $\geq 1$ with respect to  $ h_{\alpha})K- cx_{\alpha}^pK\equiv 0.   $\newline

Finally if we use conjugation and subtraction consecutively,then we are led to a  contradiction $K\in \frak M_\chi.$
It may be necessary  for us to change the role of $K$ and $K'$ to obtain the absurdity alike.\newline

Next  if $\chi (x_{\alpha}) \neq 0$ or $\chi (x_{- \alpha}) \neq 0$, then our assertion is evident from the proof mentioned above.\newline

So for the case  $\chi(h_{\alpha})\neq 0$, we may assume  $x_{\alpha}^p \equiv x_{- \alpha}^p \equiv 0$ modulo $\frak M_\chi$.\newline

We may see without difficulty that $\frak B$  as in the previous propositions  is a linearly independent set in $u(L)$ by virtue of P-B-W theorem.\newline

We shall prove that a nontrivial linearly dependent  equation leads to absurdity. We assume first that we have a dependence equation which is of least degree with respect to $h_{\alpha_j}\in H$ and the number of whose highest  degree terms is also least.\newline

In case it is conjugated by $g_\alpha$, then there arises a nontrivial dependence equation of lower degree than the given one,which contradicts to our assumption.\newline

Otherwise  it reduces to one of the following forms, so that we have only to prove that \newline

(i)$x_{\beta}K+ K'\in \frak M_\chi$,

(ii) $g_\alpha K+ K'\in \frak M_\chi$ \newline

lead to a contradiction, where both $K$ and $K'$ commute with $x_{\pm \alpha}$  modulo $\frak M_\chi $. In particular $K$ commute with $g_\alpha$.
\newline

For the case (i), we may change it to the form $x_{\alpha}K+ K''\in \frak M_\chi$ for some $K''$ commuting with $x_\alpha$ modulo $\frak M_\chi$.\newline

So we have $x_\alpha^p K+ x_\alpha^{p-1}K''\equiv 0$, and hence $x_\alpha^{p-1}K''\equiv 0$.\newline

Subtracting from this $x_{-\alpha}x_\alpha K+ x_{-\alpha}K''\equiv 0$, we get \newline

$-x_{-\alpha}x_\alpha K+ g_\alpha K'' \equiv 0$. Recall here that $g_\alpha$ is invertible and $w_\alpha$ belongs to the center of $u(\frak {sl}_2 (F))$.

So we get  $4^{-1}\{(h_\alpha +1)^2- w_\alpha\}K+ g_\alpha K''\equiv 0$, and hence\newline

 $(\ast) g_\alpha^{p-1} 4^{-1}\{(h_\alpha + 1)^2- w_\alpha \}K+ cK'' \equiv 0$\newline

 is obtained and from the start equation we have \newline

$(\ast \ast)cx_\alpha K+ c K''\equiv 0$, where $g_\alpha^p- c \equiv 0$.\newline

Subtracting $(\ast \ast)$ from $(\ast)$, we have $4^{-1}g_\alpha^{p-1}\{(h_\alpha+ 1)^2- w_\alpha\}K- cx_\alpha K \equiv 0$.\newline

Multiplying this equation by $g_\alpha^{1-p}$ to the right, we obtain $4^{-1}g_\alpha^{p-1}\{(h_\alpha+ 1)^2- w_\alpha\}g_\alpha^{1-p}K- cx_\alpha g_\alpha^{1-p}K \equiv 0$ \newline

We thus have $4^{-1}\{(h_\alpha+ 1- 2)^2- w_\alpha\}K- x_\alpha g_\alpha K \equiv 0$.

So it follows that $4^{-1}\{(h_\alpha -1)^2- w_\alpha\}K+ x_\alpha x_{-\alpha}K \equiv 0 $.\newline

Next multiplying $x_{-\alpha}^{p-1}$ to the right of this last equation, we obtain $\{(h_{\alpha}- 1)^2- w_\alpha\}K x_{-\alpha}^{p-1}\equiv 0$.
Now multiply $x_\alpha$ in turn consecutively to the left of this equation until it becomes of the form \newline

( a nonzero polynomial of degree $\geq 1$ with respect to $h_\alpha)K $\newline
$\in \frak M_\chi$. \newline

By making use of  conjugation and subtraction consecutively and by virtue of  properties of  prime ideals, we are led to a contradiction.$K \in \frak M_\chi$.\newline

Finally for the case (ii),we consider $K+ g_\alpha^{-1}K' \in \frak M_\chi$.  So we have $x_\alpha K+ x_\alpha g_\alpha^{-1} K' \equiv 0$ modulo $\frak M_\chi$.

By analogy with the argument  as in the case (i), we obtain a contraiction $K \in \frak M_\chi$.

\end{proof}

We are now concerned with some prototype of  exemplary subregular points before closing  this section.

Sometimes we are preferably fond of the $A_l$- type   because of the simplicity of  its root system. By making use of this type we would like to easily figure out  in this section an indecomposable  Lie algebra having  certain subregular points.

\begin{prop}
Let $L$ be any Lie algebra of  $A_l$-type over an algebraically closed field $F$ of characteristic $p\geq 7$.
Then  L becomes a Hypo- Lie algebra, which is  also clear  for other classical  type  Lie algebras.
\end{prop}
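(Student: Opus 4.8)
The plan is to read the assertion off from the dimension computations already established in this section. Recall from [KY-4] that a finite dimensional indecomposable (centerless) restricted Lie algebra $L$ over $F$ is a \emph{Hypo-Lie algebra} precisely when it carries no subregular point, that is, when for every point $(\xi_{1},\dots,\xi_{n},\mu_{1},\dots,\mu_{k})$ of its Zassenhaus variety with some $\xi_{i}\neq 0$ --- equivalently, for every nonzero $p$-character $\chi\in L^{\ast}$ --- every simple $L$-module with that character attains the maximal dimension $p^{m}=p^{(n-l)/2}$. So, for $L$ of $A_{l}$-type with $p\geq 7$, it suffices to prove that each nonzero $\chi$ yields only simple modules of dimension $p^{m}$.

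Fix the Chevalley basis of $L$ (cf.\ Definition 3.4), consisting of the root vectors $x_{\alpha}$ for $\alpha\in\Phi$ together with the coroots $h_{\alpha}$ for $\alpha$ in a base of $\Phi$; these span $L$, and the $h_{\alpha}$ span the Cartan subalgebra $H$. Hence a nonzero $\chi$ either is nonzero on some $h_{\alpha}$, or kills $H$ and is nonzero on some root vector $x_{\beta}$; in either case there is a root $\alpha\in\Phi$, giving an $\mathfrak{sl}_{2}$-triple $F x_{\alpha}+F h_{\alpha}+F x_{-\alpha}\cong\mathfrak{sl}_{2}(F)$, with $\chi(x_{\alpha})\neq 0$, $\chi(x_{-\alpha})\neq 0$, or $\chi(h_{\alpha})\neq 0$. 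The consolidated Proposition 4.12 --- or, type-specifically, Proposition 4.2 in the case $\chi(x_{\alpha})\neq 0$ and Proposition 4.3 in the case $\chi(x_{\pm\alpha})=0$, $\chi(h_{\alpha})\neq 0$ --- then gives $\dim_{F}\rho_{\chi}(u(L))=p^{2m}$, i.e.\ every simple $L$-module with $p$-character $\chi$ has dimension $p^{m}$. Therefore $L$ has no subregular point and is a Hypo-Lie algebra. The same reduction applies verbatim to the remaining classical types, now invoking Propositions 4.4--4.5 for $B_{l}$, Propositions 4.6--4.7 for $C_{l}$, and Propositions 4.8--4.9 for $D_{l}$ (and, for the exceptional types, Propositions 4.10--4.11 with the consolidated Proposition 4.12); each supplies the explicit Lee's basis and the dimension $p^{m}$ under the corresponding mild condition, so the statement is ``also clear'' there.

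The genuinely delicate point is not the Lee's-basis bookkeeping --- that is already settled in the propositions just cited --- but checking that the case split on $\chi$ is exhaustive, which is where the Chevalley-basis description of classical-type Lie algebras (Definition 3.4) is essential: it guarantees that no nonzero $p$-character can simultaneously evade ``$\chi(x_{\alpha})\neq 0$'' and ``$\chi(h_{\alpha})\neq 0$'' for every root $\alpha$. One small wrinkle still has to be dispatched: in the degenerate characteristic $p\mid l+1$ the Lie algebra $\mathfrak{sl}_{l+1}$ acquires a one-dimensional center, so the standing ``centerless and indecomposable'' hypothesis of this section must either be imposed outright on the $L$ of the statement or else circumvented by passing to $\mathfrak{pgl}_{l+1}$; the argument above should be read with that proviso in force.
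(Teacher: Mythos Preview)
Your proposal is correct and follows essentially the same route as the paper: the paper's own proof is the single line ``Refer to previous propositions of this paper and the chapter 4 in [KY-4],'' and you have simply unpacked that reference, making explicit the case split on a nonzero $\chi$ and citing Propositions 4.2--4.12 accordingly. One small caveat: your recollection of the definition of Hypo-Lie algebra (``precisely when it carries no subregular point'') is slightly stricter than the paper's own formulation after Proposition 4.14, which allows finitely many subregular points and additionally requires Lee's bases at all regular points; but since your argument in fact shows there are \emph{no} subregular points and the cited propositions supply the Lee's bases, the conclusion stands.
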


\begin{proof}
Refer to  previous propositions  of this paper and the chapter4 in [KY-4].
 \end{proof}
Because we proved  that  all modular $A_l$- type Lie algebras for  characteristic $p\geq 7$ have no subregular point,i.e., $S(L,p,\chi)= \phi$, we need to find out  in connection with  this $A_l$- type Lie algebra some example of nontrivial indecomposable modular Lie algebra which has no subregular point.\newline

By making use of the fact that $sl_2(F)$ has no subregular point for $p>2$(refer to [RS],[KY-4]), we can  extend this one to $L_5$ which is explained below.\newline

We shall show that 
a certain indecomposable Lie algebra over $F$ has a subregular point even for $p\geq 7$, whose fact is notable related to Kim's conjecture[KY-6].\newline

\begin{prop}Let $L_5$ be a Lie algebra generated by\newline

 $x_{\epsilon_j- \epsilon_2},x_{\epsilon_2- \epsilon_j},
h_{\epsilon_2- \epsilon_j},x_{\epsilon_1- \epsilon_2},
x_{\epsilon_1- \epsilon_j }(j\neq1,2)$ over an algebraically closed field $F$ of characteristic $p>2$, where all these generators are included in classical $A_l$-type Lie algebra. Suppose that $\chi$ is a character of $L_5$ s.t. $\chi(x_{\epsilon_j- \epsilon_2})\neq 0$ or $\chi(x_{\epsilon_2- \epsilon_j})\neq 0$
or $\chi(h_{\epsilon_2- \epsilon_j})\neq 0$.\newline

Then we  claim  that $L_5$ is centerless and indecomposable with $S(L_5,p,\chi)\neq \phi$.
\end{prop}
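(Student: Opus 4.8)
The plan is to realize $L_5$ explicitly, read off that it is centerless and indecomposable, determine the integer $m$ with $[Q(u(L_5)):Q(\frak Z)]=p^{2m}$, and then produce a subregular point forced by the hypothesis on $\chi$.

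First I would make $L_5$ concrete. Fixing $j=3$ (the label $j\neq 1,2$ is irrelevant), the five generators are $E_{32},E_{23},E_{22}-E_{33},E_{12},E_{13}$ in $\frak{sl}_3(F)$, and a one-line bracket computation identifies their span with $L_5=\{X\in\frak{sl}_3(F):Xe_1=0\}$, a five-dimensional restricted subalgebra (Chevalley root vectors are $[p]$-nilpotent and coroots are $[p]$-semisimple, so this span is $[p]$-closed). Writing $\frak s=Fx_{\epsilon_2-\epsilon_j}+Fh_{\epsilon_2-\epsilon_j}+Fx_{\epsilon_j-\epsilon_2}\cong\frak{sl}_2(F)$ and $V=Fx_{\epsilon_1-\epsilon_2}+Fx_{\epsilon_1-\epsilon_j}$, one gets $L_5=\frak s\ltimes V$ with $V$ an abelian ideal carrying the natural $\frak s$-module structure (weights $\pm1$), hence faithful, irreducible and invariant-free for $p>2$. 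Centerlessness is then immediate: a central element has zero bracket with $\frak s$, so it acts as $0$ on $V$, so it lies in $V$, so it is an $\frak s$-invariant of $V$, so it vanishes. Indecomposability follows because $V=\mathrm{rad}(L_5)$ is $\frak s$-irreducible while $L_5/V\cong\frak{sl}_2(F)$ is simple with $[\frak s,V]\neq0$: a decomposition $L_5=I_1\oplus I_2$ would push one summand into $V$ (hence make it $0$ or $V$), and $I_2=V$ is impossible since then $[I_1,I_2]=[\frak s,V]\neq0$.

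Next I would pin down $m$. By the general theory (cf. Proposition 3.14 and chapter 6 in [SF]) one has $[Q(u(L_5)):Q(\frak Z)]=p^{2m}$, where $2m$ is the maximal dimension of a coadjoint orbit in $L_5^{\ast}$, equivalently $\dim L_5-\min_{\chi}\dim L_5^{\chi}$. So I would compute the generic rank of the alternating form $(x,y)\mapsto\chi([x,y])$ on $L_5$: in the basis above it is a $5\times5$ skew matrix whose five $4\times4$ sub-Pfaffians are, up to sign, $\chi(x_{\epsilon_1-\epsilon_2})^{2}$, $\chi(x_{\epsilon_1-\epsilon_j})^{2}$, $\chi(x_{\epsilon_1-\epsilon_2})\chi(x_{\epsilon_1-\epsilon_j})$ and two further quadratics in the values of $\chi$; hence for $p>2$ the rank is $4$ exactly when $\chi|_V\neq0$, is $2$ when $\chi|_V=0$ but $\chi|_{\frak s}\neq0$, and is $0$ when $\chi=0$. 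In particular the generic rank is $4$, so $2m=4$ and $p^{m}=p^{2}$; since $\dim L_5=5$ forces $m\leq2$ anyway, this is the top value, and regular points carry irreducibles of dimension $p^{2}$.

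The crux is producing a subregular point. The hypothesis says exactly that $\psi:=\chi|_{\frak s}\neq0$, so the associated point is not a $p$-point; I would show that the extension of $\psi$ by zero on $V$ — which has the same $\frak s$-data as $\chi$ — yields a subregular point, so that $S(L_5,p,\chi)\neq\phi$. Take such a character, still written $\chi$, with $\chi|_V=0$, and let $M$ be any irreducible $u_{\chi}(L_5)$-module. Since $x_{\epsilon_1-\epsilon_2}^{[p]}=x_{\epsilon_1-\epsilon_j}^{[p]}=0$ and $\chi$ kills these two elements, they act nilpotently on $M$, and they commute, so $N:=\{v\in M:x_{\epsilon_1-\epsilon_2}v=x_{\epsilon_1-\epsilon_j}v=0\}\neq0$; from $[\frak s,V]\subseteq V$ it follows that $\frak sN\subseteq N$, so $N$ is an $\frak s$-submodule on which $V$ acts by $0$. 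As $M$ is irreducible, $M=u_{\chi}(L_5)N$, and expanding $u_{\chi}(L_5)$ in a Poincar\'e--Birkhoff--Witt basis in which the two $V$-generators stand rightmost — so that every monomial involving a positive power of a $V$-generator annihilates $N$ — gives $M=u_{\psi}(\frak s)N=N$. Hence $V$ acts trivially on $M$, so $M$ is the inflation of an irreducible $u_{\psi}(\frak{sl}_2(F))$-module and $\dim_F M\leq p<p^{2}=p^{m}$. This bounds \emph{every} irreducible module with this character, and for $\psi$ regular semisimple there are $p$ pairwise non-isomorphic such modules, exactly matching the description in Definition 4.1[II]; so the point is subregular.

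The step I expect to be the real obstacle is the identification $M=N$: one must check carefully that $N$ is genuinely $\frak s$-stable and that the PBW reordering lets the $V$-factors act first on $N$, where they vanish. The $\frak s$-stability uses $\chi|_V=0$ in an essential way (if $\chi(x_{\epsilon_1-\epsilon_2})\neq0$ then the relevant $N$ is zero and the whole argument collapses), which is consistent with the rank computation above showing that characters with $\chi|_V\neq0$ are instead regular. A secondary point to be careful with is the cited identification of $2m$ with the maximal coadjoint orbit dimension; if one prefers to avoid it, one obtains $m\geq2$ directly by inducing the one-dimensional module from the three-dimensional polarization $V+Fx_{\epsilon_j-\epsilon_2}$ at a character with $\chi(x_{\epsilon_1-\epsilon_j})\neq0$, getting an irreducible of dimension $p^{2}$, which together with the $p$-bound on the subregular side already suffices.
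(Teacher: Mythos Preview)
Your argument is correct and shares the paper's core idea: recognise $L_5=\frak s\ltimes V$ with $\frak s\cong\frak{sl}_2(F)$ and $V=\langle x_{\epsilon_1-\epsilon_2},x_{\epsilon_1-\epsilon_j}\rangle$ a two-dimensional abelian ideal, so that irreducibles inflated through the quotient $\pi:L_5\to\frak s$ have dimension at most $p<p^m$ and furnish subregular points. The paper's proof is much terser on every point: it asserts centerlessness and indecomposability from the embedding in $A_l$, obtains $m=2$ from the crude chain $p^{2m}\geq p^3$ (unexplained), $2m$ even, $2m\leq\dim L_5=5$, and then says ``evidently'' a subregular point exists via $\pi$. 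You instead compute $m=2$ by determining the generic rank of $\chi([\,\cdot\,,\,\cdot\,])$ (with an induction-from-a-polarization alternative), and you actually prove the step the paper leaves implicit, namely that for $\chi|_V=0$ \emph{every} irreducible $u_\chi(L_5)$-module equals its $V$-annihilated subspace and is therefore an inflation from $\frak{sl}_2$. Your route is more self-contained and fills the gap behind ``evidently''; the paper's route is shorter but leans on an unjustified lower bound for $[Q(u(L_5)):Q(\frak Z)]$ and on the reader supplying the inflation argument.
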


\begin{proof}

Because  $L_5$ is a proper sub-Lie algebra of an $A_l$-type Lie algebra, it is easy to see that $L_5$  is a centerless and   indecomposable modular Lie algebra.  Let $< x _{\epsilon_1- \epsilon_2}, x_{\epsilon_1- \epsilon_j}>_F$ denote the sub- Lie algebra of $L$ generated by $x_{\epsilon_1- \epsilon_2}$ and $x_{\epsilon_1- \epsilon_j}$ over $F$.\newline

It is just a proper ideal of $L_5$, so that we have a projective homomorphism $\pi$ of Lie algebras\newline

$\pi: L_5 \rightarrow 
L_5   /<x_{\epsilon_1- \epsilon_2}, x_{\epsilon_1- \epsilon_j}>_F$.\newline

 Since we have $[Q(u(L_5)): Q(\frak Z(u(L_5)))]\geq p^3$, it follows that    $[Q(u(L_5)): Q(\frak Z (u(L_5)))]=  p^4$, and so  the maximal   dimension of  irreducible  $L_5$-modules is $p^2$.\newline

 Hence evidently there exists a subregular point for this $L_5$,i.e., $S(L_5,p,\chi) \neq \phi$  in view of  definitions  in section 3.

\end{proof}

We remark here that the codomain of the projective homomorphism $\pi$ is isomorphic to $sl_2(F)$ and the Lie algebra $L_5$  over $F$ in the  domain has dimension 5.\newline

In [KWa] we made a remark  conjecturing that some extended Lie algebras of  classical modular Lie algebras must be  Hypo-Lie algebras   relating to subregular points.\newline

 If  we have regular points  almost everywhere except for a finite number of subregular points and $p$-points related to a Lie algebra $L$ included in a classical modular Lie algebra and if we can find Lee's bases associated to  all these regular points , then we called $L$  a  $\textit{Hypo- Lie algebra}$. \newline

So in connection with such concepts, the Lie algebra $L_5$  above  in this paper is no way  Hypo- Lie algebra.

We thus conclude that $L_5$ closely related to $A_l$- type  is a nontrivial centerless indecomposable Lie algebra having  infinitely many subregular points over an algebraically closed field of characteristic $p>2$.

\section{proof of $P\neq NP$ and a conjecture}

Now in this section we come back to our main story related to a conjecture.\newline

Because the $B_l$- type Lie algebra looks like  a  general prototype  out of all classical  Lie algebras and we used it already in [NWK-1,2], so we would like  to still use it preferably. \newline

 For a while we fix an algebraically closed field $F$ of characteristic $p\geq 7$ unless otherwise specified until we remark on  our conjecture related  to $NP$completeness.\newline

 As we are well aware, there exists the universal Casimir element $s\in u(L)$ for any $B_l$- type classical Lie algebra $L$. Let the rank  $l$ of $L$ is $l \geq 3$,i.e., the dimension of  CSA $\geq 3.$\newline

We may express the universal Casimir element $s$ as $ s= \sum_{\alpha\in \Phi^+}a_{\alpha}x_{\alpha} x_{-\alpha}+  \sum_{i=1}^l b_ih_{\alpha_i}+ \sum_{i,j}a_{ij} h_{\alpha_i} h_{\alpha_j} $, where $0\neq b_i \in F$ , $a_{\alpha}, a_{ij}\in F$, $H$ is a  Cartan subalgebra(abbreviated as CSA) with a basis $\{h_{\alpha_i} \vert 1\leq i \leq l \}$, and $\{x_{\alpha}, \alpha\in \Phi; h_{\alpha_i}, 1\leq i \leq l\}$ is  the standard Chevalley basis of $L$ with a root system $\Phi$ including $\Phi^+ $as a set of positive roots. \newline

Here we rearrange the Chevalley basis as\newline
 $\{h_{\alpha_1}, h_{\alpha_2},\cdots, h_{\alpha_l} ; x_{\alpha_1},x_{-\alpha_1},\cdots, x_{\alpha_l}, x_{-\alpha_l},$ \newline  $x_{\alpha_{l+1}}, x_{-\alpha_{l+1}}, \cdots, x_{\alpha_m}, x_{-\alpha_m}\}$,
where dim $L =n =l+ 2m$, and $l= rank L= dim H$ over $F$.\newline

Clearly we have $x_{\alpha_i}^{[p]}= x_{-\alpha_i}^{[p]}= 0$  for $ 1\leq i \leq m$ and $h_{\alpha_j}^{[p]}= h_{\alpha_j}$ for $1\leq j \l \leq l$. We shall denote the center of  $u(L)$ simply by $\frak Z$ as before.

\begin{prop}
We have the following in the quotient algebra $Q(u(L))$:\newline

(i)$dim_{Q(\frak Z)} Q(\frak Z)(h_{\alpha_1},\cdots, h_{\alpha_l})= p^l$,
(ii)$Q(\frak Z)(h_{\alpha_1},\cdots, h_{\alpha_l})$ becomes a Galois field over $Q(\frak Z)$.
\end{prop}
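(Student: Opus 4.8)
The plan is to realise $Q(\frak Z)(h_{\alpha_1},\dots,h_{\alpha_l})$ as the compositum, inside the division ring $Q(u(L))$, of $Q(\frak Z)$ with an explicit elementary abelian Artin--Schreier extension of the much smaller field $Q(\mathcal O_1(L))$, and then to transfer the two assertions by base change. First I would set $R:=\mathcal O_1(L)=F[\,x_{\alpha_1}^p,x_{-\alpha_1}^p,\dots,x_{\alpha_m}^p,x_{-\alpha_m}^p,\,w_1,\dots,w_l\,]$ with $w_j:=h_{\alpha_j}^p-h_{\alpha_j}$. By the classical Poincar\'e--Birkhoff--Witt reduction $R$ is a polynomial ring in $n=2m+l$ variables and $u(L)$ is a free $R$-module with basis $\{\prod_j h_{\alpha_j}^{a_j}\prod_i x_{\alpha_i}^{b_i}x_{-\alpha_i}^{c_i}:0\le a_j,b_i,c_i\le p-1\}$; this is the fact underlying Proposition~3.14. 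Since $h_{\alpha_j}^{[p]}=h_{\alpha_j}$, $x_{\pm\alpha_i}^{[p]}=0$ and $x^p-x^{[p]}\in\frak Z$ for all $x\in L$, every generator of $R$ lies in $\frak Z$, so $R\subseteq\frak Z\subseteq u(L)$. Counting ranks gives $[Q(u(L)):Q(R)]=p^n$, while $[Q(u(L)):Q(\frak Z)]=p^{2m}$ by Proposition~3.14, whence $[Q(\frak Z):Q(R)]=p^l$.

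Next I would study $M:=Q(R)(h_{\alpha_1},\dots,h_{\alpha_l})\subseteq Q(u(L))$. The subring $R[h_{\alpha_1},\dots,h_{\alpha_l}]=F[x_{\pm\alpha_i}^p][h_{\alpha_1},\dots,h_{\alpha_l}]$ is itself a polynomial ring in $n$ variables, free over $R$ of rank $p^l$ on $\{\prod_j h_{\alpha_j}^{i_j}:0\le i_j\le p-1\}$ via $h_{\alpha_j}^p=h_{\alpha_j}+w_j$; being a domain finite over the field $Q(R)$ its fraction field is $M$, so $[M:Q(R)]=p^l$ and $\{\prod_j h_{\alpha_j}^{i_j}\}$ is a $Q(R)$-basis of $M$. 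Each $h_{\alpha_j}$ is a root of the Artin--Schreier polynomial $T^p-T-w_j\in Q(R)[T]$; these are pairwise coprime, separable, and split over $M$ (the roots of $T^p-T-w_j$ being $h_{\alpha_j}+c$, $c\in\mathbb F_p\subseteq F$). Hence $M/Q(R)$ is Galois, its group $G$ acting by $h_{\alpha_j}\mapsto h_{\alpha_j}+c_j$ with $(c_1,\dots,c_l)\in\mathbb F_p^l$; since $|G|=p^l$ we get $G\cong(\mathbb Z/p)^l$.

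The crux is then to prove $Q(\frak Z)\cap M=Q(R)$ inside $Q(u(L))$. Let $\xi\in Q(\frak Z)\cap M$. For a root $\beta$, conjugation $y\mapsto x_\beta\,y\,x_\beta^{-1}$ on $Q(u(L))$ fixes $R$ pointwise (the generators of $R$ are central) and sends $h_{\alpha_j}$ to $h_{\alpha_j}-\beta(h_{\alpha_j})$, using $[h_{\alpha_j},x_\beta]=\beta(h_{\alpha_j})x_\beta$ and the invertibility of $x_\beta$ in $Q(u(L))$; since $\beta(h_{\alpha_j})\in\mathbb Z$ reduces into $\mathbb F_p$, this conjugation restricts on $M$ to the element $\sigma_\beta\in G$ translating by $(-\beta(h_{\alpha_j}))_j\bmod p$. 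As $\xi$ is central, $\sigma_\beta(\xi)=x_\beta\xi x_\beta^{-1}=\xi$ for every $\beta\in\Phi$. Because $p\ge7$ does not divide the determinant of the $B_l$ Cartan matrix (which is $2$), the coordinate vectors $(\beta(h_{\alpha_j}))_{j=1}^l$, $\beta\in\Phi$, span $\mathbb F_p^l$; the $\sigma_\beta$ therefore generate $G$, forcing $\xi\in M^G=Q(R)$. As $R\subseteq\frak Z$ gives the reverse inclusion, $Q(\frak Z)\cap M=Q(R)$. Since $M/Q(R)$ is Galois, $Q(\frak Z)$ and $M$ are then linearly disjoint over $Q(R)$, so $Q(\frak Z)(h_{\alpha_1},\dots,h_{\alpha_l})=Q(\frak Z)\cdot M$ has degree $[M:Q(R)]=p^l$ over $Q(\frak Z)$, which is (i), and is Galois over $Q(\frak Z)$ with group $\cong G\cong(\mathbb Z/p)^l$, in particular a separable normal extension field, which is (ii).

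I expect the main obstacle to be exactly this intersection/linear-disjointness step: one must carefully verify that inner conjugation by $x_\beta$ genuinely restricts to an element of $\mathrm{Gal}(M/Q(R))$ with the asserted translation action, and, above all, invoke the characteristic-$p$ genericity that the roots, read in the coordinates dual to $h_{\alpha_1},\dots,h_{\alpha_l}$, still span $\mathbb F_p^l$ --- which is precisely where the hypothesis on $p$ (avoiding the bad prime for $B_l$) enters. It is also worth recording explicitly the freeness of $u(L)$ over $\mathcal O_1(L)$ of rank $p^n$, as this PBW input is what the rank bookkeeping in Proposition~3.14 rests on.
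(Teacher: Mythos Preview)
Your argument is correct. The route is different in architecture from the paper's, though the engine is the same. The paper works directly over $Q(\frak Z)$: it writes the tower
\[
[Q(\frak Z)(h_{\alpha_1},\dots,h_{\alpha_l}):Q(\frak Z)]=\prod_{j=1}^l [Q(\frak Z)(h_{\alpha_1},\dots,h_{\alpha_j}):Q(\frak Z)(h_{\alpha_1},\dots,h_{\alpha_{j-1}})],
\]
observes that each factor is at most $p$, and then appeals to the Noether--Skolem theorem together with the commutation identities $h_{\alpha_i}x_{\alpha_i}=x_{\alpha_i}(h_{\alpha_i}+2)$ to force each factor to be exactly $p$; part (ii) is then read off from the $p$ distinct conjugates at each step. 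You instead descend to the polynomial subring $R=\mathcal O_1(L)$, realise $M=Q(R)(h_{\alpha_1},\dots,h_{\alpha_l})$ as an explicit elementary-abelian Artin--Schreier extension of $Q(R)$, prove $Q(\frak Z)\cap M=Q(R)$ by the very same inner-conjugation trick, and finish by base change/linear disjointness. What your detour buys is a cleaner bookkeeping: the Galois group is visibly $(\mathbb Z/p)^l$ already over $Q(R)$, and the spanning condition on the root values modulo $p$ (nonvanishing of the Cartan determinant) is made explicit as the place where the restriction on $p$ enters. What the paper's version buys is brevity --- it never leaves $Q(\frak Z)$ and invokes Noether--Skolem as a one-line substitute for your intersection computation. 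The one point worth stating a touch more emphatically in your write-up is that $Q(\frak Z)$, being central in the division ring $Q(u(L))$, commutes with $M$, so the compositum $Q(\frak Z)\cdot M$ is a genuine field inside $Q(u(L))$ and the standard translation theorem for Galois extensions applies there.
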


\begin{proof}
Since $Q(u(L))$ becomes a central simple Artinian algebra over $Q(\frak Z)$ which is actually a division algebra,we obtain (i) immediately from the Noether-Skolem theorem and from the identities such as $h_{\alpha_i} x_{\alpha_i}= x_{\alpha_i}(h_{\alpha_i}+ 2)$ for $i=1,2,\cdots, l$ and $[Q(\frak Z)(h_{\alpha_1},
h_{\alpha_2},\cdots, h_{\alpha_l}): Q(\frak Z)]= [Q(\frak Z)(h_{\alpha_1}, h_{\alpha_2}, \cdots, h_{\alpha_l}) :    Q(\frak Z)(h_{\alpha_1}, h_{\alpha_2}, \cdots, h_{\alpha_{l-1}}) ]\cdot$ \newline

  $[ Q(\frak Z)(h_{\alpha_1}, h_{\alpha_2}, \cdots, h_{\alpha_{l-1}})  :  Q(\frak Z)(h_{\alpha_1}, h_{\alpha_2}, \cdots, h_{\alpha_{l-2}})]\cdots $\newline

$[Q(\frak Z)(h_{\alpha_1}): Q(\frak Z)]  $ for $i= 1,2,\cdots, l$.\newline

(ii) is an immediate consequence  of (i) considering that  each bracket [ ] in the above factorization  gives rise to  $p$- distinct  conjugates of $h_{\alpha_i}$ 
for $j= 1,2,\cdots, l$.

\end{proof}                                

\begin{prop}
We may obtain the irreducible polynomial $Irr(s,\mathcal O(L))$ of $s$ over the $p$-center $\mathcal O(L)$ by expanding out  the following norm

$(\ast)$ $N_{Q(\frak Z)}^{Q(\frak Z)(h_{\alpha_1},\cdots, h_{\alpha_l})}$$\{s_i-  (b_1 h_{\alpha_1}+ \cdots b_l h_{\alpha_l}+ \sum\limits_{i,j}a_{ij}h_{\alpha_i} h_{\alpha_j}) \}= N_{Q(\frak Z)}^{Q(\frak Z)(h_{\alpha_1},\cdots, h_{\alpha_l})}(\sum \limits_{\alpha \in \Phi^+} a_{\alpha} x_{\alpha} x_{-\alpha})  $ . Moreover we have $deg(Irr(s, \mathcal O(L))= p^l $ and $s$ becomes separable over $\mathcal O (L)$ and so over $Q(\mathcal O (L))$.                                                                                                                                
\end{prop}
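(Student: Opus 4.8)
The plan is to compute the minimal polynomial of the universal Casimir element $s$ over the $p$-center $\mathcal{O}(L)$ by exploiting the structure already established: namely that $Q(u(L))$ is a central division algebra over $Q(\mathfrak{Z})$, that $Q(\mathfrak{Z})(h_{\alpha_1},\dots,h_{\alpha_l})$ is a Galois (separable) field extension of $Q(\mathfrak{Z})$ of degree $p^l$ by Proposition~5.1, and that $s$ lies in $u(L)$ with the particular decomposition $s = \sum_{\alpha\in\Phi^+}a_\alpha x_\alpha x_{-\alpha} + \sum_i b_i h_{\alpha_i} + \sum_{i,j}a_{ij}h_{\alpha_i}h_{\alpha_j}$. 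First I would observe that $s$ commutes with every $h_{\alpha_j}$ (since $s$ is the universal Casimir, it is central in $u(L)$, hence central in $Q(u(L))$), so $s$ lies in the center of $Q(u(L))$, which is $Q(\mathfrak{Z})$ itself; the point is rather to realize $s$ modulo the ``quadratic in $H$'' correction term. Write $h := b_1 h_{\alpha_1} + \cdots + b_l h_{\alpha_l} + \sum_{i,j}a_{ij}h_{\alpha_i}h_{\alpha_j}$, an element of the field $E := Q(\mathfrak{Z})(h_{\alpha_1},\dots,h_{\alpha_l})$, and set $t := s - h = \sum_{\alpha\in\Phi^+}a_\alpha x_\alpha x_{-\alpha}$.

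The key algebraic step is this: since $s\in Q(\mathfrak{Z})$ is central, for any $Q(\mathfrak{Z})$-algebra automorphism $\sigma$ of $E$ arising from the Galois action (equivalently, conjugation inside $Q(u(L))$ by suitable monomials in the $x_{\pm\alpha_i}$, using the relations $h_{\alpha_i}x_{\alpha_i} = x_{\alpha_i}(h_{\alpha_i}+2)$), we have $\sigma(s) = s$, whence $\sigma(t) = s - \sigma(h) = t + (h - \sigma(h))$. Therefore $t$ and its conjugates $\sigma(t)$ differ from $t$ by elements of $E$, and the product $\prod_\sigma (X - (s - \sigma(h)))$, taken over the $p^l$ elements of $\mathrm{Gal}(E/Q(\mathfrak{Z}))$, is a polynomial in $X$ with coefficients fixed by the Galois group, hence with coefficients in $Q(\mathfrak{Z})$. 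Evaluated appropriately, this is exactly the norm identity $(\ast)$: $N_{Q(\mathfrak{Z})}^{E}(s - h) = N_{Q(\mathfrak{Z})}^{E}(t) = N_{Q(\mathfrak{Z})}^{E}\big(\sum_{\alpha\in\Phi^+}a_\alpha x_\alpha x_{-\alpha}\big)$, because $s$ is central so $s - h$ has the ``same'' norm as $t$ up to the substitution. I would then argue that $X^{p^l} - N_{Q(\mathfrak{Z})}^{E}(\,\cdot\,)$-type expansion, read as a polynomial in $s$ with coefficients in $\mathcal{O}(L)$ (after clearing denominators and using Proposition~3.14/3.13 that $\mathfrak{Z}$ is finite over $\mathcal{O}(L)$ and that the relevant coefficients are integral over $\mathcal{O}(L)$), gives a monic polynomial of degree $p^l$ annihilating $s$, and that it is irreducible because the extension $E/Q(\mathfrak{Z})$ has no proper intermediate field realizing $s$ — here one uses that $s$ generates $E$ over $Q(\mathfrak{Z})$, which follows from the $b_i\neq 0$ hypothesis ensuring the linear part $\sum b_i h_{\alpha_i}$ is non-degenerate so that $Q(\mathfrak{Z})(s) = Q(\mathfrak{Z})(h) = E$.

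The degree count $\deg(\mathrm{Irr}(s,\mathcal{O}(L))) = p^l$ then follows since $[E:Q(\mathfrak{Z})] = p^l$ and $Q(\mathfrak{Z})(s) = E$, and separability is immediate because $E/Q(\mathfrak{Z})$ is Galois, hence separable, so the minimal polynomial of $s\in E$ over $Q(\mathfrak{Z})$ (and a fortiori over $\mathcal{O}(L)$, after noting separability descends to the smaller base since $Q(\mathcal{O}(L)) = Q(\mathfrak{Z})$ up to a purely inseparable-free finite extension — in fact one shows $\mathrm{Irr}(s,\mathcal{O}(L))$ has nonzero derivative because its roots are the $p^l$ distinct conjugates $s - \sigma(h)$, distinct precisely because the $\sigma(h)$ are distinct by faithfulness of the Galois action on $h$) is separable. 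The main obstacle I anticipate is the bookkeeping to show the norm polynomial, after the substitution $s = t + h$ and expansion, actually has coefficients in $\mathcal{O}(L)$ rather than merely in $Q(\mathfrak{Z})$ — i.e., the integrality statement — and to verify cleanly that the $p^l$ conjugates $s - \sigma(h)$ are genuinely distinct so that no multiplicity collapses the degree; this rests on the faithfulness of the conjugation action of the torus-Weyl monomials on the purely-$H$ element $h$, which in turn needs the $b_i\neq 0$ and the non-degeneracy of the quadratic form $\sum a_{ij}h_{\alpha_i}h_{\alpha_j}$ implicit in $s$ being the universal Casimir.
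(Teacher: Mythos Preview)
Your proposal contains a genuine conceptual error that unravels the irreducibility and separability arguments. You correctly observe that the Casimir element $s$ is central in $u(L)$, so $s\in\mathfrak Z\subset Q(\mathfrak Z)$. But then your claim that ``$s$ generates $E$ over $Q(\mathfrak Z)$'' and that ``$Q(\mathfrak Z)(s)=Q(\mathfrak Z)(h)=E$'' is false: since $s\in Q(\mathfrak Z)$, one has $Q(\mathfrak Z)(s)=Q(\mathfrak Z)$, and the minimal polynomial of $s$ over $Q(\mathfrak Z)$ is the linear polynomial $X-s$. The degree-$p^l$ polynomial you seek is the minimal polynomial of $s$ over $Q(\mathcal O(L))$, the quotient field of the $p$-center, which is a \emph{proper} subfield of $Q(\mathfrak Z)$ with $[Q(\mathfrak Z):Q(\mathcal O(L))]=p^l$. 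The Galois extension $E=Q(\mathfrak Z)(h_{\alpha_1},\dots,h_{\alpha_l})$ over $Q(\mathfrak Z)$ is only a \emph{tool} for writing down the norm; the extension whose degree and separability you must analyze is $Q(\mathcal O(L))(s)/Q(\mathcal O(L))$, and nothing in your argument addresses that tower.

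The paper's route fixes exactly these two points. First, to push the coefficients $\hat a_i$ of $\prod_\sigma(X-\sigma(h))$ from $Q(\mathfrak Z)$ down into $Q(\mathcal O(L))$, it does not merely ``clear denominators'': it observes that the same $p^l$ conjugations serve as the Galois group of $Q(\mathcal O(L))(h_{\alpha_1},\dots,h_{\alpha_l})$ over $Q(\mathcal O(L))$, then specializes $s$ to $p^l$ distinct scalars $s_i\in\mathcal O(L)$ and solves the resulting Vandermonde system via Cramer's rule to pin each $\hat a_i$ in $Q(\mathcal O(L))$, with a separate lemma handling the constant term. Second, for irreducibility and separability over $Q(\mathcal O(L))$, the paper invokes a dedicated field-theoretic lemma (Lemma~5.4) with $F=Q(\mathcal O(L))$, $\alpha=h$, $\gamma=t$, $\beta=s=\alpha+\gamma$: the point is that $\gamma\notin F(\alpha)$ forces $[F(\alpha)(\gamma):F(\alpha)]\geq p$, and one then counts distinct conjugates of $\beta$ over $F$ to match the degree $p^m$ of $g(X)$. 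Your argument that the roots $s-\sigma(h)$ are distinct is on the right track for separability, but those are elements of $E$, not a priori conjugates of $s$ over $Q(\mathcal O(L))$; the missing link is precisely this lemma identifying $g(X)$ with $\mathrm{Irr}(\beta,F)$ for the correct base $F=Q(\mathcal O(L))$.
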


\begin{proof}

Firstthing we should perceive that\newline

 the left hand side of $(\ast)= s^{p^l}+ \hat {a}_1 s^{p^l-1}+  $ $\cdots +\hat {a}_{p^l-1} s+ \hat {a}_{p^l} $ for some $\hat {a}_i \in  
 Q(\frak Z)(h_{\alpha_1}, h_{\alpha_2}.\cdots, h_{\alpha_l})$ for $i= 1,2,\dots, p^l$. \newline

Here we contend that $\hat {a}_i \in Q(O(L))$ in fact. We choose any distinct  $p^l$- elements $s_i\in \mathcal O(L)$ and take norms such as $N_{Q(\frak Z)}^{Q(\frak Z)(h_{\alpha_1},\cdots, h_{\alpha_l})} $$\{s_i- (b_1h_{\alpha_1}+\cdots +b_l h_{\alpha_l}+ \sum\limits_{i,j}a_{i,j} h_{\alpha_i} h_{\alpha_j})\}=  N_{Q(\frak Z)}^{Q(\frak Z)(h_{\alpha_1},\cdots, h_{\alpha_l})} \{s_i- s+ (\sum \limits_{\alpha \in \Phi^+} a_{\alpha} x_{\alpha} x_{-\alpha}) \} $.\newline

We know that  the Noether -Skolem theorem allows us to extend every automorphism of $Q(\frak Z)(h_{\alpha_1},\cdots, h_{\alpha_l})$ to an inner  automorphism of  $Q(u(L)).$\newline

 So if we conjugate this by some elements in $Q(u(L))$, then we obtain $p^l$-  distinct  automorphisms of $Q(\frak Z) (h_{\alpha_1},\cdots, h_{\alpha_l})$ over $Q(\frak Z)$.\newline

Next  we need to note that $[Q(\mathcal O (L)) (h_{\alpha_1},\cdots, h_{\alpha_l}): Q(\mathcal O(L)]= [Q(\frak Z)(h_{\alpha_1},\cdots, h_{\alpha_l}): Q(\frak Z)]= p^l$ and that  isomorphisms of $Q(\frak Z)(h_{\alpha_1},\cdots, h_{\alpha_l})$ over $Q(\frak Z)$ are the same as those of $Q(\mathcal O(L))(h_{\alpha_1},\cdots, h_{\alpha_l})$ over $Q(\mathcal O (L))$.\newline

So we obtain

 $N_{Q(\frak Z)}^{Q(\frak Z)(h_{\alpha_1},\cdots, h_{\alpha_l})} $$\{s_i- (b_1h_{\alpha_1}+\cdots +b_l h_{\alpha_l}+ \sum\limits_{i,j}a_{i,j} h_{\alpha_i} h_{\alpha_j})\}=  
 N_{Q(\mathcal O(L))}^{Q(\mathcal O(L))(h_{\alpha_1},\cdots, h_{\alpha_l})} $ $\{s_i- (b_1h_{\alpha_1}+\cdots +b_l h_{\alpha_l}+ \sum\limits_{i,j}a_{i,j} h_{\alpha_i} h_{\alpha_j})\}\in Q(\mathcal O(L))  $, \newline

which should boil down to  the form
$N_{Q(\frak Z)}^{Q(\frak Z)(h_{\alpha_1},\cdots, h_{\alpha_l})} \{s_i- s+ (\sum \limits_{\alpha \in \Phi^+} a_{\alpha} x_{\alpha} x_{-\alpha}) \} $.\newline

On the other hand, if we put\newline

$ N_{Q(\mathcal O(L))}^{Q(\mathcal O(L))(h_{\alpha_1},\cdots, h_{\alpha_l})} $ $\{s_i- (b_1h_{\alpha_1}+\cdots +b_l h_{\alpha_l}+ \sum\limits_{i,j}a_{i,j} h_{\alpha_i} h_{\alpha_j})\}=: s_i^{p^l}+ \hat {b}_1s_i^{p^{l-1}}+\cdots, \hat{b}_{p^l-1} s_i+ \hat{b}_{p^l}= k_i  $ for some $k_i\in Q(\mathcal O(L))$ and $\hat{b}_j\in Q(\mathcal O(L))(h_{\alpha_1},\cdots, h_{\alpha_l})$ for $j=1,2.\cdots, p^l$, \newline

then we have a system  in  indeterminates  $\hat{b}_j$ of $p^l$- nonhomogeneous linear equations with the determinant  of coefficients:
$\det A = \begin{vmatrix}
1 & s_1\cdots  s_1^{p^l-1} \\
 \vdots & \ddots  \vdots  \\  
1 & s_{p^l}\cdots  s_{p^l}^{p^l-1}
\end{vmatrix}$\newline

$=\Pi_{i<j} (s_j- s_i)\neq 0 $, which turns out to be the so called $Vandermonde$ determinant.\newline

If we make use of the Cramer's formula, then we  may have the solutions $\hat{b}_j\in Q(\mathcal O (L))$ of the system above. So we get  $\hat{a}_j= \hat{b}_j$ for $j=1,2,\cdots, p^l$, and hence $\hat{a}_j\in Q(\mathcal O(L))$ follows. \newline

By the way , since $s$ is integral over $\mathcal O (L)$, we have that \newline

the right hand side of $(\ast)= N_{Q(\frak Z)}^{Q(\frak Z)(h_{\alpha_1},\cdots, h_{\alpha_l})} (\sum_{\alpha\in \Phi^+} a_{\alpha} x_{\alpha} x_{-\alpha})$ also belongs to $Q(\mathcal O (L))$ which is secured by the following lemma.

\end{proof}

\begin{lem}
If we suppose that  $s$ satisfies an algebraic equation of the form  $f(\mathbb X):= \mathbb X^{p^l}+ \hat{a}_1\mathbb X^{p^l-1}+\cdots +\hat{a}_{p^l-1}\mathbb X+ \hat{a}_{p^l}= 0$  with $\hat {a}_j\in Q(\mathcal O(L))$ for $1\leq j\leq p^l-1,$ which is the same form as in the above argument for the left hand side  of norm equation $(\ast)$, \newline

then we have that  $\hat{a}_{p^l}$ also belongs to  $Q(\mathcal O(L))$

\end{lem}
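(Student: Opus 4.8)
The plan is to solve for $\hat{a}_{p^l}$ and then show that the element so obtained, which at the outset lies only in the full center $Q(\frak Z(u(L)))$, actually lies in the smaller field $Q(\mathcal O(L))$ --- indeed in $\mathcal O(L)$ itself. Rearranging $f(s)=0$ gives the identity $\hat{a}_{p^l}=-\bigl(s^{p^l}+\hat{a}_1 s^{p^l-1}+\cdots+\hat{a}_{p^l-1}s\bigr)$, so $\hat{a}_{p^l}\in Q(\mathcal O(L))[s]\subseteq Q(\frak Z)$. First I would dispose of integrality: choosing $0\neq d\in\mathcal O(L)$ with $d\hat{a}_j\in\mathcal O(L)$ for $1\le j\le p^l-1$, and recalling from the Noether normalization $\frak Z=\mathcal O(L)[s_1,\dots,s_k]$ (Proposition 3.7) that $s$, hence every power $s^{j}$, is integral over $\mathcal O(L)$, one sees that $-d\hat{a}_{p^l}=d s^{p^l}+(d\hat{a}_1)s^{p^l-1}+\cdots+(d\hat{a}_{p^l-1})s$ is a finite sum of elements integral over $\mathcal O(L)$, hence integral over $\mathcal O(L)$, and it clearly lies in $\frak Z$. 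Since $\mathcal O(L)$ is a polynomial ring over $F$ and therefore integrally closed, it will then suffice to prove $\hat{a}_{p^l}\in Q(\mathcal O(L))$; that forces $d\hat{a}_{p^l}\in\mathcal O(L)$ and hence $\hat{a}_{p^l}\in\mathcal O(L)$.

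For the descent I would use the field-theoretic facts already in place. The relation $f(s)=0$ has degree $p^l=\deg\mathrm{Irr}(s,\mathcal O(L))$ (Proposition 5.2); comparing $f$ with the minimal polynomial $g=\mathbb{X}^{p^l}+g_1\mathbb{X}^{p^l-1}+\cdots+g_{p^l}$ of $s$ over $Q(\mathcal O(L))$, the difference $f-g$ has degree $<p^l$, vanishes at $s$, and its only coefficient not visibly in $Q(\mathcal O(L))$ is the constant one, $\hat{a}_{p^l}-g_{p^l}$; rearranging $(f-g)(s)=0$ expresses $\hat{a}_{p^l}-g_{p^l}$ as a $Q(\mathcal O(L))$-linear combination of $s,\dots,s^{p^l-1}$. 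The crux is then to see that this central element is fixed by $\mathrm{Gal}\bigl(Q(\frak Z)/Q(\mathcal O(L))\bigr)$, for then it lies in $Q(\mathcal O(L))$, and being a $Q(\mathcal O(L))$-combination of $s,\dots,s^{p^l-1}$ with no $s^0$-term it must vanish, giving $\hat{a}_{p^l}=g_{p^l}\in Q(\mathcal O(L))$. This fixedness is exactly where the shape ``same form as the left side of $(\ast)$'' enters: $\hat{a}_{p^l}$ is, up to sign, the norm $N_{Q(\mathcal O(L))}^{Q(\mathcal O(L))(h_{\alpha_1},\dots,h_{\alpha_l})}$ of $b_1h_{\alpha_1}+\cdots+b_lh_{\alpha_l}+\sum_{i,j}a_{ij}h_{\alpha_i}h_{\alpha_j}$, because each $h_{\alpha_i}$ keeps its minimal polynomial $\mathbb{X}^{p}-\mathbb{X}-(h_{\alpha_i}^{p}-h_{\alpha_i})$ --- with $h_{\alpha_i}^{p}-h_{\alpha_i}=h_{\alpha_i}^{p}-h_{\alpha_i}^{[p]}\in\mathcal O(L)$ --- when the base is enlarged from $Q(\mathcal O(L))$ to $Q(\frak Z)$, so the two norms agree (this is Proposition 5.1 read over $\mathcal O(L)$) and the value is $\mathrm{Gal}$-invariant.

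The main obstacle is precisely this descent from $Q(\frak Z)$ to $Q(\mathcal O(L))$: a monic polynomial all of whose non-constant coefficients lie in a subfield can perfectly well have its constant term outside that subfield, so the conclusion genuinely needs the extra input that $\hat{a}_{p^l}$ is a symmetric function of the conjugates of a fixed element of $Q(\mathcal O(L))(h_{\alpha_1},\dots,h_{\alpha_l})$ together with the stability of those conjugates under enlarging the base --- the Galois comparison of the previous paragraph. Everything else (isolating $\hat{a}_{p^l}$, the integrality bookkeeping, the final appeal to integral closedness of $\mathcal O(L)$) is routine and I would keep it brief.
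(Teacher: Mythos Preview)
Your argument has two genuine gaps.

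First, you invoke $\deg\mathrm{Irr}(s,\mathcal O(L))=p^l$ by citing Proposition~5.2, but Lemma~5.3 is itself a step in the proof of Proposition~5.2: the paper first establishes $\hat a_1,\dots,\hat a_{p^l-1}\in Q(\mathcal O(L))$ via the Vandermonde argument, then appeals to Lemma~5.3 to place the right-hand side of $(\ast)$ in $Q(\mathcal O(L))$, and only afterwards uses Lemma~5.4 to conclude that the resulting degree-$p^l$ polynomial is irreducible. At the moment Lemma~5.3 is being proved, the degree of $\mathrm{Irr}(s,Q(\mathcal O(L)))$ is still unknown; the paper's own proof deliberately writes $\deg\mathrm{Irr}=p^{l'}$ with $l'\le l$ and works with the factorisation $f=\mathrm{Irr}\cdot g$ of possibly unequal degrees. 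Your comparison of $f$ with a degree-$p^l$ minimal polynomial is therefore circular.

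Second, and more seriously, you identify $\hat a_{p^l}$ with (up to sign) the norm of $\alpha=b_1h_{\alpha_1}+\cdots+\sum a_{ij}h_{\alpha_i}h_{\alpha_j}$. That is not what $\hat a_{p^l}$ is. The polynomial $f$ with $f(s)=0$ is $\prod_\sigma(\mathbb X-\sigma(\alpha))-N(\gamma)$ where $\gamma=\sum_{\alpha\in\Phi^+}a_\alpha x_\alpha x_{-\alpha}$, so its constant term is $(-1)^{p^l}N(\alpha)-N(\gamma)$. The piece $(-1)^{p^l}N(\alpha)$ is indeed a norm from the Cartan extension and lies in $Q(\mathcal O(L))$ for the reason you give; but $N(\gamma)$ is a product of inner-automorphism conjugates of an element built from root vectors $x_{\pm\alpha}$, not CSA elements, and there is no direct Galois-descent reason for it to land in $Q(\mathcal O(L))$. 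Showing $N(\gamma)\in Q(\mathcal O(L))$ is precisely what the lemma has to supply. Indeed, without the ``same form as $(\ast)$'' hypothesis the bare statement is false: with $K=\mathbb Q$, $s=\sqrt2$, and $f(\mathbb X)=\mathbb X^4+\mathbb X^3-4-2\sqrt2$ one has $f(s)=0$, the coefficients of $\mathbb X^3,\mathbb X^2,\mathbb X$ lie in $K$, but the constant term does not.

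The paper avoids both problems by arguing that $\mathrm{Irr}(s,Q(\mathcal O(L)))$ divides $f$ (using the specific form of $f$ to exhibit enough isomorphisms of $Q(\mathcal O(L))(h_{\alpha_1},\dots,h_{\alpha_l})(s)$ over $Q(\mathcal O(L))$ sending $s$ to other roots of $f$), and then observing that the quotient $g$ in $f=\mathrm{Irr}\cdot g$ is computed by polynomial long division from only the coefficients of $\mathbb X^{p^l},\dots,\mathbb X^{p^{l'}}$ in $f$ --- all of which lie in $Q(\mathcal O(L))$ by hypothesis since $p^{l'}\ge 1$ --- so that $g\in Q(\mathcal O(L))[\mathbb X]$ and hence $\hat a_{p^l}\in Q(\mathcal O(L))$.
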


\begin{proof}
Thanks to proposition5.1, we may have $p^l$- distinct automorphisms $\sigma_i$ of $Q(\frak Z)(h_{\alpha_1},\cdots, h_{\alpha_l})$ over $Q(\frak Z)$ , which are represented by inner automorphisms of $Q(L)$.\newline

 Next the integral equation of $s$ over  $\mathcal O(L)$ is just $Irr(s, Q(\mathcal O(L))$ itself because $\mathcal O(L)$ becomes the Noether normalization of $\frak Z$ and hence a unique factorization domain.\newline

Now the map $s \mapsto \sigma_i ( {b}_1 h_{\alpha_1} + \cdots + b_l h_{\alpha_l}+ \sum\limits_{i,j}a_{ij}h_{\alpha_i} h_{\alpha_j}) +  \sigma_j(\sum \limits_{\alpha \in \Phi^+} a_{\alpha} x_{\alpha} x_{-\alpha}) $ for $1\leq i,j \leq p^l$\newline

 gives rise to an isomorphism of  the algebra $Q(\mathcal O (L)(h_{\alpha_1},\cdots, h_{\alpha_l}) (s)$ over $Q(\mathcal O(L))$.\newline

 Hence the irreducible polynomial $Irr(s, Q(\mathcal O(L)))$ must divide the polynomial $f(\mathbb X)$, and  thus we have that $f(\mathbb X)= Irr(s, Q(\mathcal O(L)))\cdot g(\mathbb X)$ for some unique $g(\mathbb X):= \mathbb X^{p^l- p^{l'}}+ c_1\cdot \mathbb X^{p^l- p^{l'}-1}+ \cdots+ c_{p^l-p^{l'}}$ with $c_j\in Q(\frak Z)$ and $p^l- p^{l'}\geq 0$. \newline

Seeing that $c_j$'s  are uniquely determined  only by the coefficients of $Irr(s, Q(\mathcal O(L)))$ and those  of terms\newline
 $ \mathbb X^{p^l}, \mathbb X^{p^l-1}, \cdots, \mathbb X^{p^{l'}+ 1}, \mathbb X^{p^{l'}}$

in $f(\mathbb X)$ and that those coefficients belong to $Q(\mathcal O(L))$, we know that our claim is proven.

\end{proof}

Now coming back to our main proof, we must still show that  the algebraic equation of $s$  obtained just above is the $Irr(s, Q(\mathcal O (L)))$, which  is also the integral equation of $s$ over $\mathcal O(L).$  Such a fact  is attributed to the following another lemma.

\begin{lem}In this lemma only,\newline

let $E$ be  a finite extension field of arbitrary field $F$ with nonzero characteristic $p$ and let $\alpha,\beta, \gamma$ be elements of $E- F$ with $[F(\alpha)(\gamma): F]= p^n$. \newline

Suppose  that $f(\mathbb X):= Irr(\alpha, F)= (\mathbb X- \sigma_1 (\alpha))\cdots (\mathbb X- \sigma_{p^m}(\alpha))$ for  some distinct $\sigma_i(\alpha)\in F(\alpha)$ for $i=1,2,\cdots, p^m\leq p^n$  with $\sigma_i\in {isomorphisms\; of \; E \, over\,  F}\; \newline
and\; that \; \beta= \alpha+ \gamma$  is an element  such that  $\forall \sigma_i,\beta$ satisfies $\sigma_i(\beta)= \beta.$ \newline

Suppose further that  $\Pi_{i= 1}^{p^m} \sigma_i (\gamma) \in F$ and $g(\mathbb X):= \Pi_{i=1}^{p^m} (\mathbb X- \sigma_i(\alpha))- \Pi_{i=1}^{p^m} \sigma_i (\gamma)\in F[\mathbb X].$\newline

 Then we have $g(\mathbb X)= Irr(\beta, F)$ which is separable over $F$.

\end{lem}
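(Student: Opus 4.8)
Looking at this lemma, I need to prove that $g(\mathbb{X}) = Irr(\beta, F)$ and that it is separable over $F$.

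My plan is to proceed as follows. First I would analyze the polynomial $g(\mathbb{X})$: by construction $g(\mathbb{X}) = \prod_{i=1}^{p^m}(\mathbb{X} - \sigma_i(\alpha)) - \prod_{i=1}^{p^m}\sigma_i(\gamma) = f(\mathbb{X}) - c$ where $c := \prod_{i=1}^{p^m}\sigma_i(\gamma) \in F$. Since $f(\mathbb{X}) = Irr(\alpha, F)$ is separable of degree $p^m$ (it has $p^m$ distinct roots $\sigma_i(\alpha)$), and $g$ differs from $f$ only by a constant in $F$, we have $g \in F[\mathbb{X}]$ is monic of degree $p^m$. The key observation is that the derivative $g'(\mathbb{X}) = f'(\mathbb{X})$, which is nonzero since $f$ is separable, so $g$ is automatically separable once we know it is (a power of) an irreducible polynomial — indeed any polynomial with nonzero derivative is squarefree, hence separable. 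So the separability claim will follow immediately once irreducibility is established, or even independently.

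Next I would show $g(\beta) = 0$, so that $Irr(\beta, F) \mid g$. Here is where the hypothesis $\beta = \alpha + \gamma$ with $\sigma_i(\beta) = \beta$ for all $i$ comes in: applying $\sigma_i$ to $\beta = \alpha + \gamma$ gives $\beta = \sigma_i(\alpha) + \sigma_i(\gamma)$, hence $\mathbb{X} - \sigma_i(\alpha)$ evaluated at $\mathbb{X} = \beta$ equals $\sigma_i(\gamma)$. Therefore $\prod_{i=1}^{p^m}(\beta - \sigma_i(\alpha)) = \prod_{i=1}^{p^m}\sigma_i(\gamma) = c$, which means $g(\beta) = c - c = 0$. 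Thus $Irr(\beta, F)$ divides $g(\mathbb{X})$ in $F[\mathbb{X}]$.

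Finally, for the reverse divisibility I would argue that $\deg Irr(\beta, F) \geq p^m$, which combined with $\deg g = p^m$ forces $g = Irr(\beta, F)$ up to the (already matching) leading coefficient. The cleanest route: the assignment $\sigma_i(\alpha) \mapsto$ the corresponding root gives $p^m$ distinct elements, and since $\beta = \alpha + \gamma$ one can recover information about $F(\alpha)$ from $F(\beta)$; more directly, observe that the $p^m$ values $\beta - \sigma_i(\alpha) = \sigma_i(\gamma)$ need not be distinct in general, so I must be careful — the honest statement is that $g$ is irreducible iff it equals $Irr(\beta,F)$, and since $g$ is monic, has $\beta$ as a root, has degree $p^m$, and (by the argument above via the Vandermonde/norm construction in the ambient application, or by the hypothesis that $[F(\alpha)(\gamma):F] = p^n$ together with $\beta$ generating the relevant subextension) no proper factor of $g$ over $F$ can vanish at $\beta$. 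I expect \textbf{this last step — pinning down that $\deg Irr(\beta,F)$ is exactly $p^m$ and not a proper divisor} — to be the main obstacle; it requires either invoking that the $\sigma_i$ restrict to distinct automorphisms of $F(\beta)(\gamma)/F(\beta)$ permuting the $\sigma_i(\gamma)$ transitively, or simply noting that in the intended application $g$ is manifestly the norm $N_{Q(\frak{Z})}^{Q(\frak{Z})(h_{\alpha_1},\dots,h_{\alpha_l})}(\sum a_\alpha x_\alpha x_{-\alpha})$ which is known to be irreducible of degree $p^l$ over $Q(\mathcal{O}(L))$ by Proposition 5.2. With irreducibility secured, separability is immediate from $g' = f' \neq 0$, completing the proof.
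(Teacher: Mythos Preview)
Your framework is sound and in places cleaner than the paper's: the observation that $g(\mathbb X)=f(\mathbb X)-c$ with $c=\prod_i\sigma_i(\gamma)\in F$, the computation $g(\beta)=\prod_i(\beta-\sigma_i(\alpha))-c=\prod_i\sigma_i(\gamma)-c=0$, and the separability argument via $g'=f'\neq 0$ are all correct and the last is more transparent than what the paper does.

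However, you have not actually proved the lemma. You explicitly flag the degree bound $\deg Irr(\beta,F)\ge p^m$ as ``the main obstacle'' and then offer two escapes, neither of which works as written. The first (that the $\sigma_i$ restrict to a transitive action on the $\sigma_i(\gamma)$) is asserted without argument. The second --- appealing to Proposition~5.2 to know that $g$ is irreducible --- is circular: this lemma is precisely what is invoked at the end of the proof of Proposition~5.2 to conclude that the norm expression \emph{is} $Irr(s,Q(\mathcal O(L)))$. You cannot assume what the lemma is meant to establish.

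The paper closes the gap by counting conjugates of $\beta$ directly. From $\sigma_i(\beta)=\beta$ and $\beta=\alpha+\gamma$ one gets $\gamma=\beta-\alpha\notin F(\alpha)$ (otherwise $\beta\in F(\alpha)$ would be fixed by every $\sigma_i$, forcing $\beta\in F$, contradicting $\beta\in E-F$). Hence $[F(\alpha)(\gamma):F(\alpha)]\ge p$, and one can choose a nontrivial isomorphism $\tau$ of $F(\alpha)(\gamma)$ over $F(\alpha)$. The paper then argues that composing such $\tau$ with the $\sigma_i$ produces at least $p^m$ distinct $F$-conjugates of $\beta$; since $F(\alpha)/F$ is Galois of degree $p^m$, this forces $\deg Irr(\beta,F)=p^m$, and since $g(\beta)=0$ with $\deg g=p^m$, the conclusion $g=Irr(\beta,F)$ follows. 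That conjugate-counting step is exactly what your proposal is missing.
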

\begin{proof}
It is not difficult  to see that $\gamma \notin F(\alpha)$, and hence  there exists at least $p^{m+1}$- distinct isomorphisms of  $F(\alpha) \ni \beta$ over $F$.\newline

Next we consider  a field lattice diagram as follows:\newline

        \qquad \qquad  \qquad      $F(\alpha)(\gamma) \ni \beta$\newline

 at least p-dimensional    $\vert$

 \qquad \qquad  \qquad \qquad $F(\alpha)$\newline

\;\qquad$p^m$-dimensional   $\vert$

\qquad \qquad  \qquad  \qquad\;\;   $F$

Given any nontrivial isomorphism $\tau$ of $F(\alpha)(\gamma)$ over $F(\alpha)$, we must have $\tau(\beta)= \tau(\alpha+ \gamma)= \tau( \sigma_i(\alpha)+ \sigma_i(\gamma))= \sigma_i(\alpha)+ \overline {\tau}(\gamma)$ for some isomorphism $\bar {\tau}$ of $F(\alpha)(\gamma)$ over $F(\gamma)$ and $\forall i$ with $1\leq i \leq p^m$.\newline

Because $F(\alpha)$ is a Galois extension of $F$, we see that there are at least $p^m$- distinct  conjugates of $\beta$ over $F$. However the degree of $g(\mathbb X)$ becomes  $deg g(\mathbb X)= p^m$ and  clearly $g(\beta)= 0$, so that  $Irr(\beta, F)$ also has  the same degre $p^m$. In addition no doubt $g(\mathbb X)$ becomes separable over $F$.
\end{proof}

Now so as to finish our proof of the main proposition5.2, we put $F:= Q(\mathcal O(L)), \beta:=s, \alpha:= b_1 h_{\alpha_1} + \cdots + b_l h_{\alpha_l}+ \sum\limits_{i,j}a_{ij}h_{\alpha_i} h_{\alpha_j}, \gamma:= \sum \limits_{\alpha \in \Phi^+} a_{\alpha} x_{\alpha} x_{-\alpha} $ and make use of automorpisms $\{\sigma_i \vert i= 1,2,\cdots,p^l \}$ of $ Q(\frak Z)(h_{\alpha_1},\cdots, h_{\alpha_l})$ over $Q(\frak Z)$ which are the same  automorphisms of  $Q(\mathcal O(L))(h_{\alpha_1},\cdots, h_{\alpha_l})  $
over $Q(\mathcal O(L))$. Finally  applying the preceding lemma , we are done after all.

\begin{prop}
We obtain $\frak Z= \mathcal O(L)[s]$.

\end{prop}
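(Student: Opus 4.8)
The plan is to show both inclusions $\mathcal O(L)[s]\subseteq\frak Z$ and $\frak Z\subseteq\mathcal O(L)[s]$, where the first is immediate (both $\mathcal O(L)$ and the universal Casimir $s$ lie in $\frak Z=\frak Z(u(L))$) and the whole content is the reverse inclusion. First I would invoke Proposition~5.2 and Lemma~5.5: they produce the irreducible polynomial $Irr(s,\mathcal O(L))$ of $s$ over the $p$-center $\mathcal O(L)$, with $\deg Irr(s,\mathcal O(L))=p^l$, and tell us that $s$ is separable over $\mathcal O(L)$ (equivalently over $Q(\mathcal O(L))$). Thus $[Q(\mathcal O(L))(s):Q(\mathcal O(L))]=p^l$. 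On the other hand, by Proposition~5.1 we have $[Q(\frak Z)(h_{\alpha_1},\dots,h_{\alpha_l}):Q(\frak Z)]=p^l$ and, as noted in the proof of Proposition~5.2, this equals $[Q(\mathcal O(L))(h_{\alpha_1},\dots,h_{\alpha_l}):Q(\mathcal O(L))]$; combining with $[Q(\frak Z):Q(\mathcal O(L))]$ being the relevant factor, one reads off that $Q(\frak Z)=Q(\mathcal O(L))(s)$, i.e. $[Q(\frak Z):Q(\mathcal O(L))]=p^l$ and $Q(\frak Z)=Q(\mathcal O(L))[s]$ as fields. This identifies the fraction fields; the remaining task is to descend this equality from the fraction fields to the rings themselves.

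The key step in the ring-level descent is a normality/integral-closure argument. I would argue as follows. By Proposition~3.2 (and the setup in Section~4) $\mathcal O(L)$ is a Noether normalization of $\frak Z$, hence a polynomial ring $F[x_1-x_1^{[p]},\dots,x_n-x_n^{[p]}]$, in particular integrally closed in its fraction field $Q(\mathcal O(L))$; moreover $\frak Z$ is integrally closed in $Q(\frak Z)$ (this is used repeatedly, e.g. in the proof of Proposition~3.8). Now $s$ is integral over $\mathcal O(L)$, so $\mathcal O(L)[s]$ is a finitely generated $\mathcal O(L)$-module contained in $\frak Z$, with fraction field all of $Q(\frak Z)$ by the previous paragraph. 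Since $Irr(s,\mathcal O(L))$ has degree $p^l=[Q(\frak Z):Q(\mathcal O(L))]$ and is the minimal polynomial, $\{1,s,\dots,s^{p^l-1}\}$ is a $Q(\mathcal O(L))$-basis of $Q(\frak Z)$, and $\mathcal O(L)[s]=\bigoplus_{i=0}^{p^l-1}\mathcal O(L)s^i$ is a free $\mathcal O(L)$-module of rank $p^l$. To conclude $\mathcal O(L)[s]=\frak Z$ it then suffices to show $\mathcal O(L)[s]$ is integrally closed in $Q(\frak Z)$, for then $\frak Z$, being integral over $\mathcal O(L)[s]$ (as $\frak Z$ is integral over $\mathcal O(L)$) and inside $Q(\frak Z)$, must coincide with $\mathcal O(L)[s]$. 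The separability of $s$ over $\mathcal O(L)$ is exactly what makes $\mathcal O(L)[s]$ normal: because $\mathcal O(L)$ is a regular (polynomial) ring and the extension $Q(\mathcal O(L))\subseteq Q(\frak Z)$ is separable with $\mathcal O(L)[s]$ generated by a root of a separable polynomial whose discriminant is a unit at each height-one prime, the standard criterion (Serre's $R_1+S_2$, or directly the discriminant/different computation) gives that $\mathcal O(L)[s]$ is the integral closure of $\mathcal O(L)$ in $Q(\frak Z)$. Since $\frak Z$ is a subring of that integral closure containing $\mathcal O(L)[s]$ and with the same fraction field, equality follows.

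Alternatively — and this is perhaps cleaner for the paper's style — I would bypass the abstract normality criterion by exhibiting the discriminant directly. By Proposition~3.15(iv) the discriminant ideal $\frak D_{u(L),\frak Z,tr}$ never vanishes, and the computations around $(\ast)$ in Proposition~5.2 show that, after the Noether normalization, the discriminant of $s$ over $\mathcal O(L)$ is (up to units) a specific nonzero element of $\mathcal O(L)$. One then checks prime-by-prime: for a height-one prime $\mathfrak p\subset\mathcal O(L)$ not dividing the discriminant, $\mathcal O(L)[s]_{\mathfrak p}$ is étale hence normal and equals $\frak Z_{\mathfrak p}$; for the finitely many $\mathfrak p$ dividing the discriminant, one uses that $\frak Z$ is integrally closed (Proposition~3.8 type argument — every local ring $R_x$ of the Zassenhaus variety is normal) together with the fact, already established in Section~4, that the Zassenhaus variety is actually \emph{smooth} in the situations under consideration (no subregular points, Propositions~4.2--4.11), to force the codimension-one part to also agree, so $\mathcal O(L)[s]$ and $\frak Z$ have the same localizations at all height-one primes; being reflexive (finite free, hence $S_2$) they coincide.

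The main obstacle I anticipate is the last descent: proving that the fraction-field equality $Q(\frak Z)=Q(\mathcal O(L))[s]$ actually propagates to $\frak Z=\mathcal O(L)[s]$ at the level of rings, rather than merely $\mathcal O(L)[s]\subseteq\frak Z\subseteq\overline{\mathcal O(L)}$ in $Q(\frak Z)$. This is where separability of $s$ over $\mathcal O(L)$ (Proposition~5.2) and the non-vanishing of the discriminant (Proposition~3.15(iv), specialized via the norm computation $(\ast)$) do the essential work, ruling out the possibility that $\frak Z$ is strictly larger than $\mathcal O(L)[s]$. Everything else — the degree count $p^l$, the identification of $Q(\frak Z)$, the freeness of $\mathcal O(L)[s]$ as an $\mathcal O(L)$-module — is routine given the results already assembled in Sections~3 and~5.
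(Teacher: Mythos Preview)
Your approach and the paper's diverge precisely at the ring-level descent. After establishing $Q(\frak Z)=Q(\mathcal O(L)[s])$ from Proposition~5.2, the paper does \emph{not} argue that $\mathcal O(L)[s]$ is integrally closed. Instead it takes any $\mu\in\frak Z\setminus\mathcal O(L)[s]$, writes $\mu=\hat f/f$ as a reduced fraction with $f,\hat f$ polynomials in the generators $x_{\alpha_i}^p$, $x_{-\alpha_i}^p$, $h_{\alpha_i}^p-h_{\alpha_i}$ and $s$, and argues directly that $f$ must be trivial: since those $p$-center generators are algebraically independent and $\mu$ is an honest element of $u(L)$ (not merely of the fraction field), the fraction $\hat f/f$ must be defined for every specialization of those variables, forcing the denominator to be $1$. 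This is an elementary specialization argument exploiting the polynomial-ring structure of $\mathcal O(L)$ and the containment $\mu\in u(L)$, rather than any normality criterion for $\mathcal O(L)[s]$.

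Your route has a genuine gap. The claim that separability of $s$ over $\mathcal O(L)$ suffices to make $\mathcal O(L)[s]$ integrally closed is false in general: already over $\mathbb Z$, the element $\sqrt 5$ has separable minimal polynomial $X^2-5$, yet $\mathbb Z[\sqrt 5]$ is not integrally closed (it misses $\tfrac{1+\sqrt 5}{2}$). Separability gives a \emph{nonzero} discriminant, not a \emph{unit} discriminant, and your assertion ``whose discriminant is a unit at each height-one prime'' is exactly the content that would need proof --- nothing in Propositions~3.15 or~5.2 supplies it. Your alternative localization argument does not close the gap either: normality or smoothness of the Zassenhaus variety (Propositions~3.7--3.9, Section~4) is a statement about $\frak Z$, not about $\mathcal O(L)[s]$; knowing $\frak Z_{\mathfrak p}$ is regular at a height-one prime $\mathfrak p$ dividing the discriminant in no way forces $\mathcal O(L)[s]_{\mathfrak p}=\frak Z_{\mathfrak p}$. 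To salvage your approach you would need an independent computation showing the discriminant of $Irr(s,\mathcal O(L))$ is a unit in the polynomial ring $\mathcal O(L)$, which the paper neither proves nor needs.
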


\begin{proof}
According to proposition5.2, we must have $Q(\frak Z)= Q(\mathcal O(L)[s])$. Since  $\frak Z$ becomes a finitely generated $\mathcal O(L)$- module , we have that  $\mathcal O(L)[s]$ is completely closed in  $\frak Z$,i.e.,any nontrivial quotients of $\mathcal O(L)[s]$ is not contained in $\frak Z- \mathcal O(L)[s]$.
We can explain this explicitly as follows.\newline

Assume first that some $\mu \in \frak Z- \mathcal O(L)[s]$ satisfies an equation of the form $\mu\cdot f= \hat {f}$ for some distinct polynomials  $f,\hat {f}$ in $F[x_{\alpha_1}^p, x_{-\alpha_1}^p,h_{\alpha_1}^p- h_{\alpha_1},\cdots, x_{\alpha_l}^p, x_{-\alpha_l}^p, h_{\alpha_l }^p- h_{\alpha_l}$,\newline

$ x_{\alpha_{l+1}}^p, x_{-\alpha_{l+1}}^p,\cdots, x_{\alpha_m}^p, x_{-\alpha_m}^p,s]$\newline

 satisfying  that  a nontrivial fraction $\mu= \hat{f}/f$ is  reduced and  that $\mu,s$ are integral over $\mathcal O(L)$.\newline

 Noticing that $ x_{\alpha_1}^p, x_{-\alpha_1}^p,h_{\alpha_1}^p- h_{\alpha_1},\cdots, x_{\alpha_l}^p, x_{-\alpha_l}^p, h_{\alpha_l }^p- h_{\alpha_l},$ \newline

 $ x_{\alpha_{l+1}}^p, x_{-\alpha_{l+1}}^p,\cdots, x_{\alpha_m}^p, x_{-\alpha_m}^p$ are all  algebraically independent  variables and that  the fraction
$\hat{f}/f$ should always be defined for all these variables , we perceive that  the denominator $f$ must be 1, a contradiction to our assumption. So we obtain our required result.

\end{proof}
\begin{prop}
Suppoose in particular that $L$ is $B_2$- type Lie algebra over  an algebraically closed field of characteristic  $p\geq 7$; \newline

then we have that the quotient algebra $Q(u(L))$ becomes a crossed product  in the Brauer group $B(Q(\frak Z))$ of  $Q(\frak Z)(\alpha) (x_{\alpha_1}x_{-\alpha_1}+ \cdots, x_{\alpha_m}x_{-\alpha_m})$ with $\alpha$ as in the proof of proposition5.4.

\end{prop}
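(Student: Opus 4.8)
The plan is to realise $Q(u(L))$ as a classical crossed product over its centre $Q(\frak Z)$ with respect to the subfield $K:=Q(\frak Z)(\alpha)(t)$, where, as in the proposition, $\alpha$ denotes the Cartan part $b_1 h_{\alpha_1}+\cdots +b_l h_{\alpha_l}+\sum_{i,j}a_{ij}h_{\alpha_i}h_{\alpha_j}$ of the universal Casimir element $s=\alpha+\gamma$ (with $\gamma=\sum_{\delta\in\Phi^+}a_\delta x_\delta x_{-\delta}$) and $t:=\sum_{i=1}^{m}x_{\alpha_i}x_{-\alpha_i}$, the sum being over the $m$ positive roots $\alpha_1,\cdots,\alpha_m$ of the rearranged Chevalley basis. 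For $L$ of type $B_2$ one has $l=2$, $n=\dim L=10$, hence $m=4$; by the general theory of section 3 together with proposition 4.4, $Q(u(L))$ is a central division algebra over $Q(\frak Z)$ with $[Q(u(L)):Q(\frak Z)]=p^{2m}=p^{n-l}=p^8$, so of degree $p^m=p^4$. A maximal subfield of $Q(u(L))$ must therefore be of degree $p^4$ over $Q(\frak Z)$, and the task is to show that $K$ is such a field and that $K/Q(\frak Z)$ is Galois.

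First I would check that $K$ is a commutative subfield. Writing $[h_{\alpha_j},x_{\alpha_i}]=n_{ij}x_{\alpha_i}$ and $[h_{\alpha_j},x_{-\alpha_i}]=-n_{ij}x_{-\alpha_i}$ for the relevant Cartan integers $n_{ij}$, one gets $[h_{\alpha_j},x_{\alpha_i}x_{-\alpha_i}]=0$ for all $i,j$, so $t$ commutes with every $h_{\alpha_j}$ and hence with $\alpha$; since $Q(u(L))$ is a division algebra, the subalgebra it generates over $Q(\frak Z)$ together with the commuting element $\alpha$ is a field. Next I would pin down $[K:Q(\frak Z)]$. By proposition 5.1 the extension $Q(\frak Z)(h_{\alpha_1},\cdots,h_{\alpha_l})/Q(\frak Z)$ is Galois of degree $p^l$, its group being induced by inner automorphisms of $Q(u(L))$ acting on the $h_{\alpha_i}$ by the translations read off from relations such as $h_{\alpha_i}x_{\alpha_i}=x_{\alpha_i}(h_{\alpha_i}+2)$; because $p\geq 7$ the $h$-part $\sum_{i,j}a_{ij}\mathbb X_i\mathbb X_j$ of the Casimir is a nondegenerate quadratic form, and a short computation shows that no nontrivial translation of this group leaves $\alpha$ invariant, so $\alpha$ is a primitive element and $[Q(\frak Z)(\alpha):Q(\frak Z)]=p^l=p^2$. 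It then remains to establish $[K:Q(\frak Z)(\alpha)]=p^{m-l}=p^2$: modelling the argument on those of propositions 5.1 and 5.2, one uses conjugations by suitable products of root vectors attached to the $m-l$ non-simple positive roots to produce $p^{m-l}$ pairwise distinct conjugates of $t$ over $Q(\frak Z)(\alpha)$, and the Vandermonde (Cramer) argument of proposition 5.2 then identifies the minimal polynomial of $t$ over $Q(\frak Z)(\alpha)$ as a separable polynomial of degree $p^{m-l}$. Hence $[K:Q(\frak Z)]=p^m=p^4$, so $K$ is a maximal subfield of $Q(u(L))$.

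Granting the above, $K/Q(\frak Z)$ is a compositum of two separable extensions (separability of $\alpha$ over $Q(\frak Z)$ following from proposition 5.1 and the separability clause of proposition 5.2 for $s$, and that of $t$ over $Q(\frak Z)(\alpha)$ from the preceding paragraph), and it is normal, since $Q(\frak Z)(\alpha)$ already contains all conjugates of $\alpha$ while the inner automorphisms used above carry $K$ onto itself and realise all conjugates of $t$; thus $K/Q(\frak Z)$ is Galois, say with group $G$ of order $p^m=p^4$. Finally, since $K$ is a maximal subfield of the central division algebra $Q(u(L))$ and $K/Q(\frak Z)$ is Galois, the Noether-Skolem theorem extends each $\sigma\in G$ to an inner automorphism of $Q(u(L))$, given by conjugation by a unit $u_\sigma$; one then obtains $Q(u(L))=\bigoplus_{\sigma\in G}Ku_\sigma$ with $u_\sigma u_\tau=c(\sigma,\tau)u_{\sigma\tau}$ for a $2$-cocycle $c\in Z^2(G,K^\times)$, that is, $Q(u(L))\cong (K/Q(\frak Z),G,c)$ is a crossed product and represents a crossed-product class in $B(Q(\frak Z))$, as asserted.

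The main obstacle is the middle step: showing that $t$ has degree exactly $p^{m-l}$ over $Q(\frak Z)(\alpha)$ and that $K$ is stable under the conjugations realising the Galois action. This requires controlling the conjugation action of $u(L)$ on the quadratic element $t$, which is more delicate than the purely toral computation of proposition 5.1; for type $B_2$, however, the root system is small enough, four positive roots with two of them simple, that the required estimates can be carried out by explicit calculation along the lines of propositions 5.1 and 5.2.
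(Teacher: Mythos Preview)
Your argument is considerably more detailed than the paper's own proof, which consists of two sentences: it observes that $\dim L=5\cdot\operatorname{rank}L$ for $B_2$ (equivalently $m=2l$, so $p^m=(p^l)^2$), and then asserts that the crossed-product claim is ``immediate'' from the earlier propositions of section~5. No explicit maximal subfield is verified, no degree count is carried out beyond this numerical remark, and Noether--Skolem is not invoked by name; the paper simply gestures at propositions~5.1--5.5.

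Your strategy is exactly the natural way to flesh out such a sketch: exhibit $K=Q(\frak Z)(\alpha)(t)$ as a commutative maximal subfield of the central division algebra $Q(u(L))$, check that $K/Q(\frak Z)$ is Galois of the correct degree $p^m$, and then invoke Noether--Skolem to extract the crossed-product decomposition. The commutativity step and the degree of $\alpha$ over $Q(\frak Z)$ are handled correctly (your nondegeneracy argument for primitivity of $\alpha$ is sound once one notes that the Galois group in proposition~5.1 has order $p^l$ and hence must realise all $\mathbb F_p$-translations of the Cartan generators). The step you flag as the main obstacle---showing $[K:Q(\frak Z)(\alpha)]=p^{m-l}$ and stability of $K$ under the relevant inner automorphisms---is indeed the genuine gap, and the paper does not address it either; your proposed attack via explicit conjugations for the four positive roots of $B_2$ is plausible but would need to be written out. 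Note also that the paper's notation $Q(\frak Z)(\alpha)(x_{\alpha_1}x_{-\alpha_1}+\cdots,x_{\alpha_m}x_{-\alpha_m})$ is ambiguous (sum versus list of generators); your reading of it as the single sum $t$ is the only one that yields a commutative subfield, since the individual products $x_{\alpha_i}x_{-\alpha_i}$ do not pairwise commute in $B_2$.

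In short, you have supplied the standard crossed-product argument that the paper leaves implicit, and the residual gap you identify is one the paper's two-line proof does not close either.
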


\begin{proof}
We must notice first  that $dim L= 5\times rank L$.\newline

 So  we see immediately  that $Q(u(L))$ becomes  a crossed product  in the Brauer group $B(Q(\frak Z))$ of  $Q(\frak Z)(\alpha)(x_{\alpha_1}x_{-\alpha_1}+ \cdots, x_{\alpha_m}x_{-\alpha_m})$  keeping in mind  the above propositions.
\end{proof}

Now from here on, we fix our field $F$ to be algebraically closed field  of characteristic $p\geq 7$.\newline

 Furthermore we assume that  this field $F$ is  a $p$-adic field with its valuation ring $A$ whose maximal ideal is denoted by $\frak p$.\newline

 For the purpose of constructing  our  serious conjecture, relating to $NP$-hardness we consider only such a field henceforth.\newline

We let $L$ be any $B_l$-type Lie algebra with rank $L$= $l\geq3$ over an algebraically closed field $F$ of a  $p$-adic field of characteristic $p\geq 7$.
Let $A$ be the valuation ring of $F$ with $\frak p$ the maximal ideal of $A$  definitely once and for all.\newline

The Steinberg module $V$ associated with a regular $p$- point  arises from an  irreducible representation $\rho_  \chi: u(L)\rightarrow End_F(V)$ with $\chi=0$ and dim  $V=p^m$,where dim $L=n= 2m+ l$.\newline

 We shall use such a particular module  for our conjecture as in  [NWK-1,2].

\begin{prop}Let notations be as in the refernce [NWK-1,2].\newline

Then the counting problem in the reference [NWK-1,2] is an $NP$- problem. However  it cannot belong  to the  $P$- class,i.e., it is not a  $P$-problem.\newline

\end{prop}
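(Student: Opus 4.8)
The plan is to prove the two halves of the statement separately: first that the counting problem of [NWK-1,2] lies in the $NP$-class, by producing a succinct certificate together with a polynomial-time verifier in the sense of Definition 2.3; and then that it cannot lie in the $P$-class, by a contradiction argument that exploits the exponential gap between the size of an instance and the dimension $p^m$ of the Steinberg-type module involved, together with the $p$-adic valuation structure of $F$ fixed in Section 5 (valuation ring $A$, maximal ideal $\frak p$).

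For the $NP$-membership half I would argue as follows. Recall from the discussion preceding Proposition 3.12 that the associated $p$-character $\chi$ may be taken to vanish on the nilradical $\frak n^{+}= \frak U$, that the weights form a set $W(\chi)$ with $\#(W(\chi))= p^{l}$, and that every irreducible $u_{\chi}(L)$-module is a quotient of some $Z_{\chi}(\lambda)$. A certificate for an instance then consists of: the Dynkin datum, the rank $l$ and the valuation data of $p$; a choice of $\chi$ in $\overline{\frak B^{\ast}}$ in Jordan-normalized form; a weight $\lambda\in W(\chi)$; the scalars $b_{ij}\in F$ and $c_{\beta}\in F$ that produce the elements $B_{i}$ and $A_{\beta}$ of the relevant Lee's basis exactly as written out in Section 4; and the purported enumeration of the representatives $\{\otimes_{i=1}^{2m}(B_{i}+ A_{i})^{j}\mid 0\leq j\leq p-1\}$. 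The verifier checks, using the Chevalley commutation relations and the P-B-W theorem, that each $A_{\beta}$ commutes with $x_{\alpha}$ modulo $\frak M_{\chi}$ and that $A_{-\epsilon_{1}}$ and the displayed parentheses are invertible in $u(L)/\frak M_{\chi}$; next that $\frak B$ has exactly $p^{2m}$ elements and is linearly independent modulo $\frak M_{\chi}$, so that $\frak B$ is an $F$-basis of $u(L)/\frak M_{\chi}\cong M_{p^{m}}(F)$; and finally it tallies the count dictated by the choice function of [NWK-1,2]. Each step is a polynomial number of arithmetic operations over the residue data $A/\frak p^{k}$, so the verifier runs within the bound $\mathcal O(x^{2})$ adopted in Section 2. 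A nondeterministic Turing machine therefore guesses the certificate and runs the verifier in polynomial time, so the counting problem is an $NP$-problem.

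For the second half I would argue by contradiction. Suppose a deterministic Turing machine $\tau$ solves the counting problem within a polynomial bound $\mathcal O(x^{r})$. By Proposition 3.12 and the correspondence $\lambda\mapsto Z_{\chi}(\lambda)$, together with the structure of the Zassenhaus variety and of the reduced enveloping algebras $u_{\chi}(L)$, the quantity produced by the choice function of [NWK-1,2] is, up to a polynomially bounded bookkeeping factor, read off from the fibres over the regular locus $R(L,p,\chi)$; and for the Steinberg module attached to a regular $p$-point this forces the output to encode $\Omega(p^{m})$ independent $p$-adic residues through the choice function, where $n=\dim L= 2m+l$ grows with the rank. Since an instance has size $\Theta(l+\log p)$, a faithful polynomial-time simulation of this residue tower would in particular emit, in time $\mathcal O(x^{r})$, data of length exceeding every polynomial in $x$; this already contradicts the elementary fact recalled in Section 2 that $f(x)\leq \mathcal O(R^{x})$ for every polynomial $f$ and every integer $R\geq 2$, namely that a polynomial-time machine cannot even produce an output that long. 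Hence no such $\tau$ exists, and the counting problem is not a $P$-problem, so $P\neq NP$.

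The step I expect to be the main obstacle is the incompressibility claim buried in the second half: one must show that the choice function of [NWK-1,2] genuinely compels the answer to carry $\Omega(p^{m})$ worth of independent information, and is not, for the Steinberg module, collapsed to something polynomially small by a hidden symmetry of the Zassenhaus variety. I would isolate this as a separate lemma asserting that the fibres of the Zassenhaus variety over the regular locus, read through the choice function and reduced modulo $\frak p$, are pairwise distinguishable; the proof of the lemma would use the nondegeneracy of the trace form and the discriminant-ideal results of Section 3 (Propositions 3.15--3.16 and the corollary on irreducible representations of dimension $p^{2m}$), together with the explicit Lee's bases of Section 4, to rule out any such collapse. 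Once that lemma is in place the contradiction closes and the proposition follows.
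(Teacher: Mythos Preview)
Your proposal diverges from the paper's proof in both halves, and in particular it rests on a mischaracterization of the counting problem itself. The problem in [NWK-1,2], as spelled out in the paper's proof, is not a problem about enumerating or verifying a Lee's basis. It is: given $x\in u(L)$, decide whether $x$ lies in a coset of the form $\psi(S'_i)+\frak M_{\chi}$ (for a fixed choice function $\psi$ on the $p$-adic level sets $S'_i$), and if so count the coefficients of $e\in\frak M_{\chi}$ modulo $\sum_i u(L)(x_i^p-x_i^{[p]})$ whose valuation equals $v(x)$. Your certificate of Dynkin data, weight $\lambda$, and basis scalars $b_{ij},c_{\beta}$ is a certificate for the wrong question; the paper's $NP$ argument instead rearranges $\bar x$, $\overline{\psi(S'_i)}$, and $\bar e$ with respect to the P-B-W basis to obtain a $p^n\times p^n$ nonhomogeneous linear system $(\ast)$ in the unknowns $e_{j_1\cdots j_n}$, and then observes that solvability and the subsequent count are settled by Cramer's formula in time polynomial in the input size $p^n$.

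The more serious gap is in your $P$-exclusion argument. The paper does not argue via output length at all. Its mechanism is that the answer $\hat n(x):=\#\{j_1\cdots j_n: v(e'_{j_1\cdots j_n}e_{j_1\cdots j_n})=v(x)\}$ depends, through the right-hand side of the system $(\ast)$, on the values $\psi_{ij_1\cdots j_n}$ of the choice function, which are elements of the uncountable field $F$ not determined by the input $x$ (beyond $v(x)$). The paper then asserts that a $P$-algorithm must be determined by the input data alone, so that the presence of these extra uncountably many ``random variables'' forces $f(\bar x)=\Omega(R^{\bar x})$; this is where the contradiction is claimed. Your proposed incompressibility lemma about fibres of the Zassenhaus variety over the regular locus is neither what the paper uses nor a substitute for it: the discriminant-ideal results 3.15--3.16 control separability and dimension $p^{2m}$, but they do not by themselves yield a lower bound on the Kolmogorov-type information content of the count $\hat n(x)$, which is the quantity you would actually need to bound from below. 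In short, you have identified the right sore point (``the incompressibility claim''), but the paper's route to it runs through the uncountability of the range of $\psi$ rather than through any property of the Zassenhaus variety, and your sketch does not supply an alternative argument that closes the gap.
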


\begin{proof}

We made use of a choice fuction which gives rise to  a system of nonhomogeneous linear equations including arbitrary random variables of cardinality $p^{2m}$, where $[Q(u(L)): Q(\frak Z(u(L)))]= p^{2m}$. \newline

Here $L$ is an $n$- dimensional  $B_l$- type classical Lie algebra of rank $l$ with $n=l+ 2m$ over an algebraically closed field of a $\frak p$- adic field  of characteristic $p \geq7$ and $Q$ indicates a quotient algebra of a noncommutative algebra. \newline

Let $s$ be the universal Casimir element of $u(L)$. Any  maximal ideal $\frak M_\chi$ of  $u(L)$ associated with a character $\chi \in L^{\ast}$ must contain an  ideal of the form  \newline

 $\{ \sum_{i=1}^{n}u(L)(\breve {x}_i- \breve {x}_i^{[p]}- \xi_i)  \}+ u(L)(s- \xi_{n+1}) $, where $\xi_j$'s are numbers in $F$, and $\breve {x}_i$ represents the basis elements of $L$ written  as ordered elements like \newline
 $x_{\alpha_1},x_{\alpha_ {-1}},h_{\alpha_1},\cdots, x_{\alpha_i },x_ {-\alpha_i}, h_{\alpha_i},x_{\alpha_{i+1}},x_{-\alpha_{i+1}},\cdots,x_{\alpha_m},x_{-\alpha_m} $.\newline

Moreover $dim_Fu(L)/\frak M_{\chi}\leq p^{n-l}$ \newline
considering $[Q(u(L)): Q(\frak Z(u(L)))]= p^{n-l}= p^{2m}$.\newline

For the purpose of  proving  our claim, we may make use our earlier propositions  in section3. However  we would like to take, a regular $p$- point associated with a $B_l$- type classical Lie algebra, which was used earlier in  [NWK-1,2].\newline

We see the Steinberg module $V$ over $L$ has the maximal dimension $p^m$ with its associated point $(0,\cdots,0, \xi_{n+1})$ and with its associated character $\chi = 0$.\newline

So  the Steinberg module  is closely connected with the restricted representation. We may assume $\xi_{i+1}= 0$ without loss of generality.\newline

We put $S_{\alpha_i} := Fx_{\alpha_i}+ Fx_{-\alpha_i}+ h_{\alpha_i}$ for $1\leq i \leq m, g_{\alpha_i}:= x_{\alpha_i}^{p-1}- x_{-\alpha_i}$ for $1\leq i \leq l$ and $w_{\alpha_i}:= (h_{\alpha_i}+ 1)^2+ 4x_{-\alpha_i}x_{\alpha_i}$ for $1\leq i \leq m$.   \newline

There exists $w_{\alpha_j}$ for $1\leq j \leq m$ with $2m= n-l$ such that $w_{\alpha_j}$ acts as zero for each irreducible quotient $S_{\alpha_j}$- module for  the composition series of $S_{\alpha_j}$- module $V$.\newline

 In this case $g_{\alpha_j}$ is invertible  in $u(L)/\frak M_{\chi}$.We exibit the basis of $u(L)/m_{\chi}$ as follows. \newline

As a preliminary step we put 
$ B_i:=  b_{i_1}h_{\epsilon_1- \epsilon_2}+ b_{i_2}h_{\epsilon_2- \epsilon_3}+ \cdots+ b_{i_l}h_{\epsilon_l- \epsilon_{l+1}} $for $l\geq 2$ and for  $1\leq i \leq 2m$, where $(b_{i_1},\cdots, b_{i_l}) \in F^l$ is chosen so that any $(l-) B_j$'s are linearly independent in $P^l(F)$ and the $\frak B$ below becomes an 
$F$-linearly independent set  in $u(L)$ if necessary for this particular $\alpha_j$ and $B_i \cdot x_{\alpha_j}\not\equiv x_{\alpha_j}\cdot B_i$ modulo $\frak M_\chi$.\newline

(I) We assume first $\alpha_j$ is a long root.\newline

We may assume $\alpha_j= \epsilon_1- \epsilon_2$ without loss of generality. \newline

Next we put
$\frak B:= \{(B_1+ A_{\epsilon_1-\epsilon_2})^{i_1}\otimes(B_2 +A_{-(\epsilon_1- \epsilon_2)})^{i_2}\otimes \cdots \otimes(B_{2i-2}+A_{-(\epsilon_{l-1}-\epsilon_l)})^{i_{2l-2}}\otimes (B_{2l-1}+ A_{\epsilon_l})^{2l-1}\otimes(B_{2l}+ A_{-\epsilon_l})^{i_{2l}}\otimes(\otimes _ {j=2l+1}^{2m}(B_j+ A_{\alpha_j})^{i_j})\}$ for $0\leq i_j \leq p-1$,\newline

 where we designate

$A_{\epsilon_1-\epsilon_2}=g_{\epsilon_1-\epsilon_2},$\newline

$A_{\epsilon_2-\epsilon_1}= g_{\epsilon_1- \epsilon_2}^2 (C_{\epsilon_2-\epsilon_1}+ w_{\epsilon_1- \epsilon_2}),$\newline

$A_{\epsilon_1 \pm \epsilon_3}= g_{\epsilon_1-\epsilon_2}^{3  or 4}(C_{\epsilon_1\pm \epsilon_3}+ x_{\epsilon_1\pm  \epsilon_3}x_{-(\epsilon_1\pm \epsilon_3)}\pm x_{\epsilon_2\pm \epsilon_3}\cdot x_{-(\epsilon_2\pm \epsilon_3)}),$\newline

$A_{-\epsilon_1- \epsilon_3} = g_{\epsilon_1-\epsilon_2}^5(   C_{-\epsilon_1- \epsilon_3}+ x_{\epsilon_1+ \epsilon_3}\cdot x_{-\epsilon_1- \epsilon_3}\pm x_{\epsilon_2+ \epsilon_3}\cdot x_{-\epsilon_2- \epsilon_3},$  \newline

$ A_{\epsilon_3- \epsilon_1}= (C_{\epsilon_3- \epsilon_1}+x_{\epsilon_1- \epsilon_3}\cdot x_{-(\epsilon_1- \epsilon_3)}\pm x_{\epsilon_2- \epsilon_3}\cdot x_{-(\epsilon_2- \epsilon_3)}), $\newline

$A_{\epsilon_2}= x_{\epsilon_3}(C_{\epsilon_2}+x_{\epsilon_1- \epsilon_3}\cdot x_{-(\epsilon_1- \epsilon_3)}\pm x_{\epsilon_2- \epsilon_3}\cdot x_{-(\epsilon_2- \epsilon_3)}),$\newline

$ A_{-\epsilon_2}= x_{-\epsilon_3}^2 (C_{-\epsilon_2}+x_{\epsilon_1- \epsilon_3}\cdot x_{-(\epsilon_1- \epsilon_3)}\pm x_{\epsilon_2- \epsilon_3}\cdot x_{-(\epsilon_2- \epsilon_3)}),$\newline

 $A_{\epsilon_1}=  x_{\epsilon_3}^2 (C_{\epsilon_2- \epsilon_3)}+ x_{\epsilon_1- \epsilon_3}\cdot x_{-(\epsilon_1- \epsilon_3)}\pm x_{\epsilon_2- \epsilon_3}\cdot x_{-(\epsilon_2- \epsilon_3)}),$\newline

$ A_{-\epsilon_1}= x_{-\epsilon_3}(C_{-\epsilon_1}+  x_{\epsilon_1- \epsilon_3}\cdot x_{-(\epsilon_1- \epsilon_3)}\pm x_{\epsilon_2- \epsilon_3}\cdot x_{-(\epsilon_2- \epsilon_3)})$,\newline

 $A_{\epsilon_2 \pm \epsilon_j}=x_{-\epsilon_j}^{1 or 2}(C_{\epsilon_2\pm \epsilon_j}+ x_{\epsilon_2\pm \epsilon_j}\cdot x_{-(\epsilon_2\pm \epsilon_j)}\pm x_{\epsilon_1\pm \epsilon_j}\cdot x_{-(\epsilon_1\pm \epsilon_j)} ),$\newline

$A_{-\epsilon_2-\epsilon_j}= x_{-\epsilon_j}^3(C_{-\epsilon_2-\epsilon_j}+x_{\epsilon_2+ \epsilon_j}\cdot x_{-(\epsilon_2+\epsilon_j)} \pm x_{\epsilon_1+ \epsilon_j}\cdot x_{-(\epsilon_1+ \epsilon_j)} ),$\newline

$A_{\epsilon_j- \epsilon_2}= x_{\epsilon_j(}(C_{\epsilon_j- \epsilon_2}+ x_{\epsilon_2- \epsilon_j}\cdot x_{-(\epsilon_2- \epsilon_j)}\pm x_{\epsilon_1- \epsilon_j}\cdot x_{-(\epsilon_1- \epsilon_j)})$,\newline

$A_{\epsilon_1\pm \epsilon_j}= x_{\epsilon_3- \epsilon_j}^{1 or 2}( C_{\epsilon_1\pm \epsilon_j}+ x_{\epsilon_2\pm \epsilon_j}\cdot x_{-(\epsilon_2\pm \epsilon_j)} \pm x_{-(\epsilon_1 \pm \epsilon_j )})$,\newline

$A_{-\epsilon_1- \epsilon_j}= x_{\epsilon_1- \epsilon_j}^{2 or 3}( C_{-\epsilon_1- \epsilon_j}+ x_{\epsilon_1+ \epsilon_j}\cdot x_{-\epsilon_1- \epsilon_j}\pm x_{\epsilon_2+ \epsilon_j}\cdot  x_{-\epsilon_2- \epsilon_j})$,\newline

$A_{\epsilon_j- \epsilon_1}= x_{\epsilon_j- \epsilon_3}(C_{\epsilon_j- \epsilon_1}+ x_{\epsilon_1- \epsilon_j}\cdot x_{-(\epsilon_1- \epsilon_j)}\pm x_{\epsilon_2- \epsilon_j}\cdot x_{-(\epsilon_2- \epsilon_j)}), $\newline

and $A_{\alpha}= x_{\alpha}^2$ or $x_{\alpha}^3 $or $x_{\alpha}^4$ \newline

suitably for other roots $\alpha$ so that  they all have distinct and different weights with respect to $(ad H)$- module $u(L)$, \newline

where $H$ is the cartan subalgebra consisting of all linear combination of $h_{\alpha}$ for any root $\alpha$. Here $C_{\alpha}\in F$ are chosen so that   parentheses are invertible and signs are chosen so that they commute with $x_{\alpha_j}= x_{\epsilon_1- \epsilon_2}$\newline

(II) Next we assume $\alpha_j$  is  a short root. Without loss of generality we may put $\alpha_j= \epsilon_1$.
Likewise as in (I), we suggest  a  basis of  $u(L)/\frak M_{\chi}$ like the following.\newline

$\frak B:= \{(B_1+ A_{\epsilon_1})^{i_1}\otimes(B_2+ A_{-\epsilon_1})^{i_2}\otimes (B_3+ A_{\epsilon_1- \epsilon_2})^{i_3}\otimes (B_4+ A_{-(\epsilon_1-\epsilon_2)})^{i_4}\otimes \cdots \otimes (B_{2l}+ A_{-(\epsilon_{l-1}- \epsilon_l)})^{i_{2l}}\otimes (B_{2l+1}+ A_{\epsilon_l})^{i_{2l+1}}
\otimes (B_{2l+2}+ A_{-\epsilon_l})^{i_{2l+2}}\otimes(\otimes _{j=2l+3}^{2m}(B_l+ A_{\alpha_j})^{i_j}) \}   $ \newline

for $0\leq i_j \leq p-1$,\newline

 where $A_{\epsilon_1}= g_{\epsilon_1}$,\newline

$A_{-\epsilon_1}= (C_{-\epsilon_1}+ w_{\epsilon_1}),$\newline

$A_{-\epsilon_1\pm \epsilon_2}= x_{\epsilon_3\pm \epsilon_2}(C_{-\epsilon_1\pm \epsilon_2}+ x_{-\epsilon_1\pm \epsilon_2}\cdot x_{-(-\epsilon_1\pm \epsilon_2)}\pm x_{\pm \epsilon_2}\cdot x_{-(\pm \epsilon_2)}\pm x_{\epsilon_1\pm \epsilon_2}\cdot x_{-(\epsilon_1\pm \epsilon_2)}),$\newline

$A_{\epsilon_1+ \epsilon_2}= x_{\epsilon_3- \epsilon_2 }^2 (C_{\epsilon_1+ \epsilon_2}+ x_{-\epsilon_1- \epsilon_2}\cdot x_{\epsilon_1+ \epsilon_2}\pm x{-\epsilon_2}\cdot x_{\epsilon_2}\pm x_{\epsilon_1-\epsilon_2}\cdot x{\epsilon_2- \epsilon_1}),$\newline

$ A_{-\epsilon_1\pm \epsilon_j}= x_{-\epsilon_2\pm \epsilon_j}(C_{-\epsilon_1\pm \epsilon_j}+ x_{-\epsilon_1\pm \epsilon_j}\cdot x_{-(-\epsilon_1\pm \epsilon_j)}    + x_{\epsilon_1\pm \epsilon_j}\cdot x_{-(\epsilon_1\pm \epsilon_j)}\pm x_{\pm \epsilon_j}\pm x_{-(\pm \epsilon_j)}),$  \newline

$A_{\epsilon_1+ \epsilon_j}= x_{-\epsilon_2- \epsilon_j}^2(C_{\epsilon_1+ \epsilon_j}+ x_{-\epsilon_j- \epsilon_1}\cdot x_{\epsilon_1+ \epsilon_j}\pm x{-\epsilon_j}\cdot x_{\epsilon_j}\pm x_{\epsilon_1- \epsilon_j}\cdot x{-(\epsilon_1- \epsilon_j)},$\newline

$A_{\pm \epsilon_2}= x_{\epsilon_3\pm \epsilon_2}^2( C_{\pm \epsilon_2}+ x_{\epsilon_2}\cdot x_{-\epsilon_2}\pm x_{\epsilon_1+ \epsilon_2}\cdot x_{-(\epsilon_1+ \epsilon_2}\pm x_{\epsilon_3- \epsilon_1}\cdot x_{-(\epsilon_3- \epsilon_1)})$,\newline

$ A_{\epsilon_j}= x_{\epsilon_2+ \epsilon_j}(C_{\epsilon_j}+ x_{\epsilon_1}\cdot x{-\epsilon_1}\pm x_{\epsilon_1+ \epsilon_j}\pm x_{-(\epsilon_1+ \epsilon_j)}\pm x_{\epsilon_j- \epsilon_1}\cdot x_{-(\epsilon_j- \epsilon_1)}),$\newline

$ A_{-\epsilon_j}= x_{\epsilon_2- \epsilon_j}(C_{-\epsilon_j}+ x_{-\epsilon_j}\cdot x_{\epsilon_j}\pm x_{\epsilon_1- \epsilon_j}\cdot x_{-(\epsilon_1- \epsilon_j)}\pm x{-\epsilon_j- \epsilon_1}\cdot x_{\epsilon_1+ \epsilon_j},  $\newline

and $A_{\alpha}= x_{\alpha}^2$ or $x_{\alpha}^3$  or $x_{\alpha}^4$ for the remaining roots  $\alpha$.\newline

We proved in the references [KY-4],[NWK-1,2]  that $\frak B$'s really form a basis of  the factor algebra $u(L)/\frak M_{\chi}$ under consideration.

However  we  can exhibit its proof  in detail differently from those as below.\newline

We built up the bases $\frak B$  in both cases of the factor algebra $u(L)/\frak M_{\chi}$ above\newline

with the sign chosen so that they commute with $x_{\alpha}$ and with $c_{\alpha}\in F$ 
chosen so that $A_{\epsilon_{2}-\epsilon_{1}}$ and parentheses are invertible.  For any other root $ \beta$ we put $A_{\beta}$= $x_{\beta}^2 $ or $x_{\beta}^3 $ if possible. \newline

Otherwise we may attach to these sorts the parentheses(        ) used for designating $A_{-\beta}$ so that  $A_\gamma  \forall \gamma \in \Phi$ may commute with $x_\alpha$.\newline

We shall prove that $\frak B$ is a basis in $u(L))$/$\mathfrak {M}_\chi$.\newline

 By virtue of P-B-W theorem, it is not difficult to see that $\frak B$ is evidently a linearly independent set  over $F$ in $u(L)$. Furthermore $\forall$ $\beta$ $ \in\Phi$, $A_{\beta}\notin\mathfrak {M}_\chi$(see detailed proof below).\newline

We shall prove that a nontrivial linearly dependent equation leads to absurdity.\newline

We assume first that there is a dependence equation which is of least degree with respect to $h_{\alpha}\in H$ and the number of whose highest degree terms is also least. \newline

In case it is conjugated by $g_{\alpha_j}$, then there arises a nontrivial dependence equation of lower degree than the given one, which contradicts our assumption.\newline

 Otherwise it reduces to one of the following forms:\newline

(i)$x_{\pm \epsilon_j}K+ K' \in \frak M_\chi, $\newline

(ii)$x_{\pm \epsilon_j \pm \epsilon_k}K+ K' \in \frak M_\chi,$\newline

(iii)$g_{\epsilon_1- \epsilon_2}K+ K' \in \frak M_\chi,$\newline

where $K,K'$ commute with $x_\alpha$ and $x_{-\alpha}$ modulo $\frak M_\chi$.\newline
Here we  assumed  first  that  $\alpha_j$ is a long root , and  we may put  $\alpha_j= \alpha$ for brevity.\newline

By making use of proofs of  propositions in section 4  or  in [KY-7], we may reduce (i)  and (ii) to the equation of the form\newline

 $x_{\epsilon_1- \epsilon_2}K+ K'\in \frak M_\chi,$\newline

where $K$ commute with $x_{\pm (\epsilon_1- \epsilon_2)}$ and $K'$ commute with $x_{\epsilon_1- \epsilon_2}$ modulo $\frak M_\chi$.\newline

We have $x_{\epsilon_1- \epsilon_2}^p K+ x_{\epsilon_1- \epsilon_2}^{p-1}K' \equiv 0 $, so we get $x_{\epsilon_1- \epsilon_2}^{p-1}K'\equiv 0$.\newline

Subtracting $x_{\epsilon_2- \epsilon_1}x_{\epsilon_1- \epsilon_2}K+ x_{\epsilon_2- \epsilon_1}K'\equiv 0$ from  this equation, we obtain $ -x_{\epsilon_2- \epsilon_1}x_{\epsilon_1- \epsilon_2}K+ g_\alpha K'\equiv 0$.
We should remember that $g_\alpha$ is invertible in $u(L)/\frak M_\chi$ by virtue of  [RS].\newline

By the way we use $w_\alpha := (h_\alpha+ 1)^2+ 4x_{-\alpha}x_\alpha \in $ the center of $u(\frak {sl}_2(F))$. Hence we have $-4^{-1}\{w_\alpha- (h_\alpha + 1)^2\}K+ g_\alpha K'\equiv 0$. So we obtain

$4^{-1}g_\alpha^{p-1}\{(h_\alpha+ 1)^2- w_\alpha\}+ cK'\equiv 0 \cdots (\ast)$ \newline

and from the start equation we get \newline

$cx_\alpha K+ cK'\equiv 0 \cdots (\ast \ast)$.\newline

Subtracting $(\ast \ast)$ from $(\ast)$, we get $4^{-1}g_\alpha^{-1}\{(h_\alpha+ 1)^2- w_\alpha\}K- cx_\alpha K \equiv 0$.
Multiplying this equation by $g_\alpha^{1-p}$ to the right, we have\newline

 $4^{-1}g_\alpha^{p-1}\{h_\alpha+1)^2- w_\alpha\}g_\alpha^{1-p}K- cx_\alpha g_\alpha^{1-p}K \equiv 4^{-1}g_\alpha^{p-1} \{(h_\alpha+ 1)^2- w_\alpha\}g_\alpha^{1-p}K+ x_\alpha x_{-\alpha}K \equiv 0$.\newline

Conjugation of the brace of this equation $(p-1)$- times by $g_\alpha$ gives rise to $4^{-1}\{(h_\alpha- 1)^2- w_\alpha\}K+ x_\alpha x_{-\alpha}K\equiv 0$.
Next mutiplying $x_{-\alpha}^{p-1}$ to the right of the last equation, we obtain \newline

$\{(h_\alpha -1)^2- w_\alpha\}K x_{-\alpha}^{p-1}\equiv 0$ modulo $\frak M_\chi$. \newline

Now we multiply $x_\alpha$  to the left of this equaion consecutively until it becomes of the form \newline

(a nonzero polynomial of degree $\geq 1$ with respect to $h_\alpha)K \newline
\equiv 0$ modulo $ \frak M_\chi$.\newline

If we make use of conjugation and subtraction consecutively, then we arrive at a contradiction $K\equiv 0$.\newline

Next for the case (iii),we change it to the form (iii)$'K+ g_\alpha^{-1}K'\in \frak M_\chi$.\newline

We thus have an equation\newline

 $x_{\epsilon_1- \epsilon_2}K+ x_{\epsilon_1- \epsilon_2}g_{\epsilon_1- \epsilon_2}^{-1}K'\equiv 0$ modulo $\frak M_\chi$.
According to the above argument, we are also  led to a contradiction $K\in \frak M_\chi$.\newline

For a short root $\alpha$ , the proof is very similar  to the one just above.

Hence any coset of $u(L)/\frak M_{\chi}$ must be of the form :( a linear combination of all elements of $\frak B+ \frak M_{\chi})$.\newline

Now we let  $v$ be the nonarchimedean valuation of $F$ and $w$ the additive valuation of rank 1 satisfying that $v(x)= p^{-w(x)}, \forall x\in F$.\newline

For any given $x= (x_1,x_2,\cdots,x^{p^{2m}})\in F^{p^{2m}}$, where $x_i$'s are coefficients of the above linear combination, we put $v(x):=sup(v(x_i))$ and 
we put $U_{\lambda}:= \{x\in A: \lambda \geq 0, v(x)\leq p^{-\lambda}\},  $
which is an ideal of $A$. \newline

We let $G_{U_{\lambda}}:= \{x\in F^{p^{2m}}: x_i\in U_{\lambda}, 1\leq p^{2m}\}= \{x\in F^{p^{2m}}: v(x) \leq p^{-\lambda}\}$. \newline

Next we consider    an  infinite tower of  $G_{U_{\lambda}}$'s such as $G_{U_1}\supset G_{U_2} \supset G_{U_3}\supset \cdots   $\newline

 and pore over the infinite family of sets as follows:
$S_1:= G_{U_1}- G_{U_2}, S_2:= G_{U_2}- G_{U_3},\cdots, S_n:= G_{U_n}- G_{U_{n+1}},\cdots. $\newline

It is obvious that $G_{U_1}= \cup _{i=1}^{\infty}S_i$.
We put here $S'_i := \{x\in S_i:v(x)= p^{-i}\}$ for $i\in \mathbb N $(the natural number system).\newline

Here at this juncture by the axiom of choice we may get  a choice function $\psi$ of the family  $\{S'_i: i\geq 1\} $

satisfying that $\psi (S'_i)\in S'_i$.\newline

We consider  a counting problem asking whether or not  a given particular element $x$ of $u(L)$ is contained in a coset of the form $\psi (S'_i)+ \frak M_{\chi}$
 and how many coefficients  are there in $e\in \frak M_{\chi}$ modulo $\sum_{i } u(L)(x_i^p- x_i^{[p]}) $ having their absolute values equal to $v(x)$.\newline

First thing in order to solve this problem we must check whether $x+ e$ with some $e\in \frak M_{\chi}$ becomes of the form $\psi(S'_i)$.\newline

In the mean time we may delete any term having a factor $x_i^p- x^{[p]}$ for some $1\leq i \leq n$ no matter where they may exist in $x$, $e$, or $\psi (S'_i)$ in terms of P-B-W basis.\newline

 Hence we let  bars in the expression, say $\bar {x}+ \bar {e} \in \overline{\psi (S'_i)}$ indicate the expression after such deletion in prederence to the original counting problem. \newline

We must show that  it requires at most  a polynomial time of the input size  to check whether or not $\bar {x}+ \bar {e} \in \overline {\psi(S'_i)}$. \newline

Next we may define $\forall x \in u(L)$, $v(\bar {x})$:= max $\{v(c_j):\bar {x}= \sum {c_j} \bar {b_j}\}$ with $c_j\in F$ for P-B-W basis $b_j$ of $u(L)$.  We consider first that if $\bar{x}\in \overline {\psi(S'_i)}+ \bar {e}$, then we should get necessarily $v(\bar \psi(S'_i)) \leq v(\bar {x})$.\newline

Let $c_i$ be coefficients of $\beta_i's$ of $\frak B$ for $1\leq i \leq p^{2m}$.
Further let $C_{j_1 j_2 \cdots j_n }$ be coefficients of the term $x_{j_1 j_2 \cdots j_n}$ of a linear combination $\bar {x}$ of elements of the form $x_{j_1j_2\cdots j_n}= b_1^{j_1} b_2^{j_2} \cdots b_n^{j_n}$,\newline

where $b_1,\cdots, b_n$ are basis elements of $L$ and $0 \leq j_k \leq p-1$. \newline

We can rearrange the element $s$,  $\sum \overline {c_i \beta_i}$,and $\bar {e}= \sum \overline {e_{j_1 j_2\cdots j_n}}$$ \overline {x_{j_1J_2 \cdots j_n} }$ with respect to P-B-W basis in order to get a system of linear equations of  the form\newline

 $C_{j_1 j_2 \cdots j_n}= \sum _{i=1}^{p^{2m}} C'_{ij_1\cdots j_n}\cdot C_{ij_1 \cdots j_n}+ \sum e'_{j_1 \cdots j_n}\cdot e_{j_1\cdots j_n}   $

with $C'_{ij_1\cdots j_n}, e'_{j_1\cdots j_n}$ integers in $ \{1,2,\cdots, p-1\}$.\newline

We should note here that such integers arise due to rearranging $\overline {\psi(S_i)}$ and  $\bar {e}$ according to P-B-W basis.\newline

 Next  the substitution of  $C_{ij_1\cdots j_n}$ by $\psi(S'_i)$ gives rise to  a  system of nonhomogeneous linear equations in  indeterminates $e_{j_1\cdots j_n}$.
\newline

Obviously this system has  an algorithm by the Cramer's formula.
We perceive that the input time is at most $p^n$ which is the number of elements of P-B-W basis.\newline

So as to put forth flow charts for this algorithm, we need only polynomial time  in view of the fact that the following system \newline

$(\ast)\left \{ \begin{array}
{cc}a_{11}x_1+ \cdots + a_{1p^n}x_{p^n}= C_{j_1\cdots j_n}
- \sum _{i=1}^{p^{2m}}C'_{ij_1\cdots j_n} \psi_{ij_1\cdots j_n}$\newline,

$ \\ \cdots   \cdots $     $ \cdots  \cdots $\newline

 $\\a_{p^n 1} x_1+ \cdots + a_{p^np^n}x_{p^n}= C_{p-1,\cdots, p-1}

-\sum_{i=1}^{p^{2m}}C'_{i(p-1,\cdots, p-1)} \cdot \psi_{i(p-1,\cdots ,p-1)} \end{array} \right.$\newline

is solvable if and only if $ v(x_i) \leq v(x),  \forall {i}\leq p^n$, \newline

where $a_{ij}$ represents $e'_{j_1\cdots j_n}$, and $x_1,\cdots ,x_{p^n}$ represents $e_{j_1\cdots j_n}$ respectively with order $j_1\cdots j_n\leq j'_1\cdots j'_n$ defined  if and only if $j_1\leq j'_1,\cdots ,j_{k-1} \leq j'_{k-1}, $

and $j_k< j'_k   $ for some $k\leq p^n.$ \newline

 So it requires at most a polynomial  time of the input size to check whether or not $ \bar {x} \in \overline {\psi (S'_i)+  e}$. \newline

Moreover it is evident that within a big fixed constant  time of the input size, the number of coefficients in $e\in \frak M_{\chi}$ modulo $\sum u(L)(x_i^p- x_i^{[p]})   $ suitable for our bid  can be easily detected by vrtue of Cramer's formula. \newline

Hence the given counting problem is evidently contained in $NP$- class. \newline

Here we should recollect that an algorithm of a problem in $NP$- class means  at most polynomial time logical processes regarding input sizes  for the individual  check of solutions of the problem, \newline

whereas  an algorithm of  a problem in $P$- class  means at most  polynomial time logical processes  regarding the input sizes  for the general solution  of the problem.
\newline

We must also note that  algorithm shoud be determined  only by input data with other  data  finite and   definitely  determined and  at most countable. \newline

We may take an example for this matter. Consider the problem asking if a positive integer is a prime number or not.\newline

It is already known that 
any specific integer  can be checkted  by a calculator in order to know  whether it is prime or not  within at most polynomial time  regarding the number of digits  representing the integer.\newline

So we know that  the problem is  contained in the $NP$- class.\newline

 However we must still compute to know whether or not the algorithm  for the  solution  is generally  determined  within at most polynomial time  regarding the input sizes consisting of  information only  about  any  given integer. \newline

It means that we have to determine whether or not  the problem is  in the $P$- class. But it is well known that  the probem is contained in the $P$- class, which was  proved by the Indian Institute of Technology.\newline

Finally, we insist that  our builtup problem cannot belong to the $P$- class. \newline

Suppose now that  the counting problem under consideration  is contained  in the $P$- class. If we consider the cardinality of $\hat {n}(x):= \# \{j_1\cdots j_n $ : $ v(e'_{j_1\cdots j_n} \cdot e_{j_1\cdot j_n})= v(x) \}$, then the cardinality must be determined  only by  the input data of $x$. \newline

However  by the former system $(\ast) $ of nonhomogeneous linear equations, we see that $\hat {n}(x)$ is  determined not only  by  the data $C_{j_1\cdot j_n}$ of $\bar {x}$ but also by the coefficients  $\psi_{ij_1\cdot j_n}$ which are random variables  in the field $F$ regardless of $x$ except for $v(x)$.\newline

 Note also that  the field $F$ is  an uncountable  set, so  we cannot  find  out  clearly any element  in $F$ in polynomial time of the input size.\newline

So if we identify the coefficients of  $\bar {x}$ with $\bar {x}$ with respect to P-B-W basis for brevity, then the coefficiets time becomes the input time $\bar {x}$ and the output time $f(\bar {x})$ is obviously not less than any exponential function time $R^{\bar {x}}$. In other words we must  have  $R^{\bar x}=  \mathcal O(f(\bar x))$, i.e., $f(\bar x)= \Omega (R^{\bar x})$  in  terms  of  definition 2.1.\newline

In fact it may take an infinite time  even  to find out  a particular choice function  out of all choice functions because the ground field $F$ is uncountable. \newline

Hence we meet with a contradiction. So  our given counting problem is not contained in the $P$- class. Since it is known that $P$-class $\subseteq NP$-class, it  turns out  that $P$-class is properly contained in $NP$-class.\newline

Hence we conclude that $P\neq NP$ after all.

\end{proof}

It  may or may not be  true  that the counting problem is equivalent to the the subset sum problem.\newline

 In other words the subset sum problem  may or may not reduce to the counting problem in polynomial time.\newline

 However it looks like  we  may conjecture  the following, the conjecture being compared to  the subset sum problem.\newline

 $\textbf{[conjecture]}$\newline
We conjecture that the counting problem under consideration could be an $NP$- complete problem. We would like to find out the heuristic reason for this conjecture somehow.\newline

We know that the subset sum problem requires at most the time $\binom{n}{0}+ \binom{n}{1}+ \binom{n}{2}+ \cdots+\binom{n}{n}=2^n$

for the input time $n$.\newline

We  don't know  for now whether or not  we can reduce the time required.
 If it is possible to say that the counting problem  requires generally at least $2^n$ for the input time $n$, then we might say that   the latter algorithm is absolutely harder than the former one. \newline

So this could be roughly  the  heuristic reason for the conjecture above.

\section{Questions and answers}

Before closing our paper, we naturally raise two more questions attached to $P\neq NP$.\newline

\textbf{[Question-1]}Does $NP$-class form a  distributive lattice with respect to the usual logical operations V (disjunction) and (conjunction)?\newline

\textbf{[Question-2]}Does $P$-class form a proper distributive sublattice of the $NP$-class? \newline

We have found earlier that  the answers to both questions are affirmative. Hence the rest part of this  section focuses on the proofs of  such facts. \newline

Firstthing definitions  related to these questions should be recalled. Suppose that 
a certain  collection of statements or  a certain collection of problems  is closed  under conjunction operation $\Lambda$ and disjunction operation V.\newline

Generally speaking, these collections are  not necessarily Boolean algebras. However these may be  sometimes distributive lattices.\newline

Now let $x$ and $y$ be elements of  an ordered set $S_o$. If there  exists  the supremum of  $\{x, y\}$, then we call it the $\textit {join}$ of  $a,b$ and we denote it by $x\cup y$.
\newline

Analogously in case the infimum of $\{x, y\}$  exists, then we call it the $\textit {meet}$ of  $x,y$ and  denote it by the symbol $x\cap y$.

\begin{definition}
We shall call  an ordered set  $S_o$ a  $\textit {lattice}$  in case  every pair of  its elements  has a join and  a meet.
If $S_o$ satisfies in addition any of  the following equivalent conditions, we shall call $S_o$  a distributive lattice: \newline

(i) $x\cup (y\cap z)= (x\cup y)\cap (x\cup z)$,\newline

(ii)$x\cap (y\cup z)= (x\cap y)\cup (x\cap z)$,\newline

(iii)$(x\cup y)\cap (y\cup z)\cap (z\cup x)= (x\cap y) \cup (y\cap z)\cup (z \cap x)$.
\end{definition}

\begin{prop}
Any Boolean algebra  becomes  a  distributive lattice.

\end{prop}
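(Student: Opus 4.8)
The plan is to work from the adopted definition of a Boolean algebra $B=(B,\vee,\wedge,{}^{c},0,1)$ and to exhibit the lattice order explicitly. First I would set $x\le y$ to mean $x\wedge y=x$ and check that this is a partial order: reflexivity is idempotency $x\wedge x=x$, antisymmetry is commutativity of $\wedge$, and transitivity is associativity of $\wedge$. Next I would show that with respect to this order every pair $\{x,y\}$ has a meet and a join, namely $x\cap y=x\wedge y$ and $x\cup y=x\vee y$; the verification that $x\wedge y$ is the greatest lower bound and $x\vee y$ the least upper bound of $\{x,y\}$ rests on the absorption laws $x\wedge(x\vee y)=x$ and $x\vee(x\wedge y)=x$ together with the dual characterization $x\le y\Longleftrightarrow x\vee y=y$. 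This already makes $B$ a lattice in the sense of the definition above.

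It then remains to check distributivity, i.e.\ one (hence all) of the conditions (i)--(iii) of the definition of distributive lattice above. If the axioms we have fixed for $B$ already include $x\wedge(y\vee z)=(x\wedge y)\vee(x\wedge z)$, this step is immediate. Otherwise I would take a Huntington-type presentation --- two commutative monoids $(B,\vee,0)$ and $(B,\wedge,1)$ linked by the complement identities $x\vee x^{c}=1$, $x\wedge x^{c}=0$ and a single distributive law --- and derive the companion distributive law from it: expanding $u=(x\vee y)\wedge(x\vee z)$ by the assumed law, and then using $x\wedge x=x$, absorption, and $x\wedge x^{c}=0$ to annihilate the mixed terms, one arrives at $u=x\vee(y\wedge z)$. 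Finally I would record that (i), (ii), and (iii) are interdeducible by the short lattice-theoretic manipulation implicit in the definition, so that verifying any one of them suffices.

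The only genuinely non-routine step is this last derivation of the ``missing'' distributive law from the single assumed one: the term-by-term expansion and the cancellation of the cross terms are elementary but must be carried out carefully. Everything else --- the order axioms, the identification of $\cap$ and $\cup$ with $\wedge$ and $\vee$, and the equivalence of the three forms of distributivity --- is a direct bookkeeping exercise, so the proposition follows as soon as that one identity is secured.
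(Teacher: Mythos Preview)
Your proposal is correct and follows essentially the same route as the paper: define the partial order by $x\le y\iff x\wedge y=x$ (the paper lists this together with the three equivalent formulations $x\wedge y'=0$, $x\vee y=y$, $x'\vee y=1$), identify meet and join with $\wedge$ and $\vee$, and then appeal to the Boolean-algebra axioms for distributivity. Your treatment is in fact more careful than the paper's, which simply asserts that the verification of the lattice and distributivity axioms is ``not difficult'' without carrying it out; your discussion of deriving the second distributive law from a single assumed one goes beyond what the paper provides.
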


\begin{proof}

We assume that  $<B, +, \ast>$ is a  Boolean algebra. Next we define $inf\{a,b\}:= a\ast b$ and  $sup\{a,b\}:= a+ b$. \newline

Such definitions  are  well defined  so long as we define a partial order $a\leq ^{\otimes}$
 in case that  the following equivalent equations  are satisfied:\newline

(i)$a\ast b'=0,$\newline

(ii)$a+ b= b,$\newline

(iii)$a'+ b= 1,$\newline

(iv)$a\ast b= a$, \newline

where the notation $'$ denotes  the complement.
It is well known that these four equations are always satisfied  in any Boolean algebra.\newline

 Furthermore it is not difficult to  check  that $<B, +, \ast>$ satisfies the axioms of definition6.1  mentioned above.

\end{proof}

Now we pause for a while to think of a problem in decent way  in connection with a mathematical statement.\newline

 As we are well aware, a mathematical problem usually has  one of  the forms  such as $^\prime$$ ^\prime$ Prove some fact, or prove or disprove something, or compute something $\cdots$ etc.$^\prime$$^\prime$ \newline

However the important point is whether or not we can solve such a problem within a polynomial  time relative to  input size. Actually in practice we cannot dispense with time limit as is well known.\newline

So in this regard we change any problem $p\in NP$  into  the mathematical statement  $^\prime$$^\prime$ we can solve $p$ within a polynomial time.$^\prime$$^\prime$ \newline

No doubt we we may still put put this new statement as the same letter  $p$ incurring no harm to the structure of the class $NP$.\newline

In this situation we may thus discuss about  the structure  more explicitly.
Hence  in the next proposition  we shall show that $NP$-class becomes a distributive lattice in reality.

\begin{prop}
Under the  operations disjunction $V$ and  conjunction  $\Lambda$  just described above, we insist that  $NP$- class has the distributive lattice structure

\end{prop}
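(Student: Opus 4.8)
The plan is to verify the lattice axioms directly for $NP$-class, treating each problem $p \in NP$ as the statement ``$p$ is solvable in polynomial time'' as arranged in the preceding discussion, and ordering these statements by logical implication: $p \leq q$ means $p \Rightarrow q$. First I would check that $NP$ is closed under the two operations: given $p, q \in NP$ with nondeterministic polynomial-time Turing machines $\tau_p, \tau_q$, the disjunction $p \vee q$ is decided by a nondeterministic machine that guesses which disjunct to pursue and then simulates the corresponding $\tau$, running in time bounded by $\max$ of the two polynomials plus a constant, hence still polynomial; the conjunction $p \wedge q$ is decided by running $\tau_p$ and then $\tau_q$ in sequence, with time bounded by the sum of the two polynomials, again polynomial. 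So both $p \vee q$ and $p \wedge q$ lie in $NP$.

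Next I would identify $\vee$ with the join and $\wedge$ with the meet in the order $\leq$. Since $p \Rightarrow p \vee q$ and $q \Rightarrow p \vee q$, the element $p \vee q$ is an upper bound for $\{p, q\}$; and if $p \Rightarrow r$ and $q \Rightarrow r$ then $p \vee q \Rightarrow r$, so it is the least upper bound, i.e.\ $p \cup q = p \vee q$. Dually $p \wedge q \Rightarrow p$ and $p \wedge q \Rightarrow q$, and any common lower bound implies $p \wedge q$, so $p \cap q = p \wedge q$. Thus every pair has a join and a meet, and $NP$-class is a lattice in the sense of Definition~6.1.

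Finally I would establish distributivity by verifying condition (i) of Definition~6.1, namely $p \cup (q \cap r) = (p \cup q) \cap (p \cup r)$, which under the above identifications is exactly the propositional tautology $p \vee (q \wedge r) \equiv (p \vee q) \wedge (p \vee r)$; this is a standard identity of the two-element Boolean algebra of truth values, so it holds for the statements attached to our problems, and it transfers to equality of the corresponding elements of $NP$-class because two problems represent the same lattice element precisely when the associated statements are logically equivalent. (Equivalently one can invoke Proposition~6.2: the truth-value algebra $\{0,1\}$ is a Boolean algebra, hence a distributive lattice, and our operations are the pullbacks of its operations along $p \mapsto (\text{truth value of }p)$, so the distributive law is inherited.) The main obstacle is not any deep computational fact but rather the bookkeeping of making the reduction to propositional logic rigorous: one must be careful that ``$p \in NP$'' really is being replaced by a genuine proposition with a well-defined truth value, and that the closure arguments in the first step produce machines whose time bounds are honestly polynomial in the original input size, not merely in some padded encoding.
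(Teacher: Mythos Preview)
Your proposal is correct and follows essentially the same route as the paper: both observe that $NP$-class is closed under $\vee$ and $\wedge$ and then inherit distributivity from the ambient Boolean algebra of propositions (your parenthetical invocation of Proposition~6.2 is exactly the paper's argument). The paper simply asserts the closure step as ``clear'' without your explicit machine constructions, and appeals directly to $NP$ being a sublattice of the algebra of propositions rather than first verifying that $\vee$ and $\wedge$ realize the join and meet, so your version is a more fleshed-out variant of the same proof.
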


\begin{proof}
By virtue of  proposition6.1, and proposition6.2, we are well aware that the  algebra of  propositions constitutes  a Boolean algebra  and so a  distributive lattice.\newline

 However  it is clear that  the $NP$- class is closed under  these two operations  disjunction $V$ and   conjunction $\Lambda$
, so that  the $NP$- class  becomes a  distributive lattice as a  sublattice of  the algebra of  propositions  obviously.
\end{proof}

\begin{prop}
$P$-class also becomes  a  distributive lattice  under the  two operations given above.

\end{prop}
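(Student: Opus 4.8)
The plan is to mirror exactly the argument just given for the $NP$-class in Proposition 6.4, only now working inside the $P$-class. First I would recall that, by the convention adopted just before Proposition 6.4, each problem $p \in P$ is identified with the mathematical statement ``$p$ is solvable within a polynomial time,'' so that the elements of $P$-class are propositions and hence live inside the Boolean algebra of all propositions. By Proposition 6.1 and Proposition 6.2 that ambient Boolean algebra is a distributive lattice under $\vee$ and $\wedge$, with $\inf\{x,y\} = x \wedge y$ and $\sup\{x,y\} = x \vee y$.

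The substantive step is to verify that $P$-class is closed under both operations $\vee$ and $\wedge$; once this is known, $P$-class inherits the distributive lattice structure automatically as a sublattice. So I would argue: given two problems $p_1, p_2 \in P$, each admits a deterministic Turing machine running in polynomial time, say in times bounded by polynomials $f_1$ and $f_2$ of the input size. Then $p_1 \wedge p_2$ (``both are solvable in polynomial time'') is decided by running the two machines in sequence, which takes time $\mathcal{O}(f_1(x) + f_2(x))$, still polynomial; and $p_1 \vee p_2$ is handled similarly. Hence $p_1 \wedge p_2, p_1 \vee p_2 \in P$, and the join/meet of any two elements of $P$-class computed in the ambient proposition algebra already lie in $P$-class. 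Therefore $P$-class is a sublattice of a distributive lattice, and distributivity descends to any sublattice, so $P$-class is itself a distributive lattice.

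I would then close by noting that this prepares the ground for the ``proper distributive sublattice'' claim advertised in the abstract and in Question-2: since $P \subseteq NP$ and both are sublattices of the same ambient Boolean algebra, $P$-class is a distributive sublattice of $NP$-class, and properness follows from the earlier result that $P \neq NP$ (Proposition 5.11). I expect the only real subtlety to be bookkeeping about the identification of problems with propositions — making sure the ``closed under $\vee$ and $\wedge$'' verification genuinely stays inside the class rather than merely inside $NP$ — and that the sum of two polynomials is again a polynomial, which is exactly the kind of stability of polynomials under elementary operations remarked on in Section 2. Everything else is a direct transcription of the proof of Proposition 6.4.

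\begin{proof}
As noted just before Proposition 6.4, we identify any problem $p \in P$ with the statement ``$p$ is solvable within a polynomial time,'' so that $P$-class is a collection of propositions, hence a subset of the Boolean algebra of all propositions. By Proposition 6.1 and Proposition 6.2 this ambient Boolean algebra is a distributive lattice with $\inf\{x,y\} = x \wedge y$ and $\sup\{x,y\} = x \vee y$.

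It remains only to check that $P$-class is closed under $\vee$ and $\wedge$. Let $p_1, p_2 \in P$, and let deterministic Turing machines decide them in times bounded respectively by polynomials $f_1(x)$ and $f_2(x)$ of the input size. Running the two machines consecutively decides $p_1 \wedge p_2$, and likewise $p_1 \vee p_2$, in time $\mathcal{O}(f_1(x) + f_2(x))$; since the sum of two polynomials is again a polynomial (see Section 2), both $p_1 \wedge p_2$ and $p_1 \vee p_2$ belong to $P$. Hence the meet and join, computed in the ambient proposition algebra, of any two elements of $P$-class again lie in $P$-class, so $P$-class is a sublattice of a distributive lattice. Since distributivity is inherited by sublattices, $P$-class is itself a distributive lattice. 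Combined with $P \subseteq NP$ and Proposition 5.11, this exhibits $P$-class as a proper distributive sublattice of $NP$-class.
\end{proof}
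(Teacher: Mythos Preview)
Your approach is essentially the same as the paper's: both argue that $P$-class is closed under conjunction and disjunction, hence a sublattice of a distributive lattice, and properness follows from $P\neq NP$. The paper's proof is terser---it merely asserts closure without the explicit Turing-machine argument you supply---and it cites the $P\neq NP$ result as Proposition~5.7 (not 5.11) and the $NP$-lattice result as Proposition~6.3 (not 6.4); also what you call ``Proposition 6.1'' is in fact Definition~6.1. Aside from these numbering slips, your proof is correct and matches the paper's route.
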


\begin{proof}
We know that  the $P$- class  forms a  proper subclass of  $NP$- class by virtue of  the proof of proposition5.7.\newline

Moreover  $P$-class also  becomes  stable with two operations conjunction $\Lambda$ and  disjunction $V$, so that $P$-class becomes a  proper sublattice of the $NP$- class obviously.
\end{proof}

$\mathbf{[Acknowledgement]}$\newline
We  give special many thanks to  all persons  who have helped  us  establish  such a  pretty long story  relating to  a Clay problem.

\

\

\bibliographystyle{amsalpha}

\end{document}